\newtheorem{lemma}{Lemma}[chapter]
\newtheorem{corollary}{Corollary}
\newtheorem{theorem}{Theorem}[chapter]
\newtheorem{observation}{Observation}[chapter]
\newcommand{\epitext}{\textsf }	
\newcommand{\epicharacter}{\textrm}	
\newcommand{\episource}{\textit}	
\newcommand{\epiauthor}{\textsc} 
\newcommand{\DASH}{\mathrm{DASH}}
\newcommand{\SDASH}{\mathrm{SDASH}}
\newcommand{\rem}{\mathrm{rem}}
\newcommand{\G}{\mathrm{G}}
\newcommand{\W}{\mathrm{W}}
\newcommand{\T}{\mathrm{T}}
\newcommand{\RT}{\mathrm{RT}}
\newcommand{\RTree}{\mathrm{Reconstruction\ Tree}}
\newcommand{\FT}{\mathrm{FT}}
\newcommand{\SubRT}{\mathrm{SubRT}}
\newcommand{\RTparent}{\mathrm{RTparent}}
\newcommand{\ID}{\mathrm{ID}}
\newcommand{\will}{\mathrm{will}}
\newcommand{\heir}{\mathrm{heir}}
\newcommand{\helper}{\mathrm{helper}}
\newcommand{\parent}{\mathrm{parent}}
\newcommand{\hparent}{\mathrm{hparent}}
\newcommand{\children}{\mathrm{children}}
\newcommand{\hchildren}{\mathrm{hchildren}}
\newcommand{\isreadyheir}{\mathrm{isreadyheir}}
\newcommand{\Degree}{\mathrm{degree}}
\newcommand{\diam}{\mathrm{diam}}
\newcommand{\depth}{\mathrm{depth}}
\newcommand{\delanc}{\mathrm{danc}}
\newcommand{\h}{\mathrm{h}}
\newcommand{\ishelper}{\mathrm{ishelper}}
\newcommand{\nexthparent}{\mathrm{nexthparent}}
\newcommand{\nexthchildren}{\mathrm{nexthchildren}}
\newcommand{\nextparent}{\mathrm{nextparent}}
\newcommand{\bypass}{\mathrm{bypass}}
\newcommand{\Empty}{\emph{EMPTY}}
\newcommand{\Endpoint}{\mathrm{endpoint}}
\newcommand{\BT}{\mathrm{BT}}
\newcommand{\PrRoots}{\mathrm{PrRoots}}
\newcommand{\representative}{\mathrm{Representative}}
\newcommand{\edge}{\mathrm{edge}}
\newcommand{\descendantcount}{\mathrm{desccount}}
\newcommand{\height}{\mathrm{height}}
\newcommand{\False}{\mathrm{FALSE}}
\newcommand{\True}{\mathrm{TRUE}}
\newcommand{\haft}{\mathrm{haft}}
\newcommand{\FG}{\mathrm{FG}}
\newcommand{\FTree}{\mathrm{ForgivingTree}}
\newcommand{\FGraph}{\mathrm{ForgivingGraph}}
\newcommand{\Strip}{\mathrm{Strip}}
\newcommand{\Merge}{\mathrm{Merge}}
\newcommand{\numchildren}{\mathrm{numchildren}}
\newcommand{\size}{\mathrm{size}}
\newcommand{\RTfragment}{\mathrm{RTfragment}}
\newcommand{\Nset}{\mathrm{Nset}}
\newcommand{\hleftchild}{\mathrm{hleftchild}}
\newcommand{\hrightchild}{\mathrm{hrightchild}}
\begin{document}

%

\frontmatter

\title{Algorithms for Self-Healing Networks}

\author{Amitabh Trehan}

\degreesubject{Ph.D., Computer Science}

\degree{Doctor of Philosophy \\ Computer Science}

\documenttype{Dissertation}

                 
                 \previousdegrees{ B.Sc., Biology, Punjab University, 1994\\ 
	\quad M.C.A., Indira Gandhi National Open University, 2000 \qquad  
                 M.Tech., Indian Institute of Technology Delhi,  2002}

\date{May, 2010}

\maketitle

\makecopyright

\begin{dedication}
   To the sun, the moon\\
   and the intrepid spirit,\\
   To my family \\
    who made this journey possible.\\[3ex]
    
    The question walks\\
    the length of pages,\\
    rain drops on roof.
\end{dedication}

\begin{acknowledgments}
   \vspace{1.1in}
   If this were an Oscar awards ceremony, my list of thank yous would have had the music director going crazy trying to hound me off the stage. There are many many to thank for the journey responsible for  this document.   My foremost gratitude goes towards my advisor, Professor Jared Saia, for his constant enthusiastic guidance. He has patiently ironed out a multitude of rough edges that I, as a scientist, have presented, and has taught the virtues of discipline and mathematical rigour. My academic collaborator and committee member Professor Thomas Hayes has been a source of constant inspiration. I am thankful to my dissertation committee (Professors Saia, Hayes,  Cris Moore and Tanya-Beger Wolf)  who have provided me with much insight and guidance.  I am thankful to all my close friends, who have been with me through good times and bad, especially  Navin Rustagi and Vaibhav Madhok (their  lively discussions have lit up many evenings!). I am thankful to the US educational system, for its support of quality graduate education and research. I owe a debt of gratitude to all my teachers and friends in India, and to the Art of Living foundation and it's founder Sri Sri Ravi Shankar, for Sudershan Kriya, the meditation and the satsangs . Finally, I have to thank my biggest inspiration: my mother, and my family: my late father, my step-father, my brothers, my sister-in-laws, my nieces and my nephew, without whose support and love I would never have been able to pursue the path around the world and in my academic world that I have.
   
\end{acknowledgments}

\maketitleabstract

\begin{abstract}
  Many modern networks are \emph{reconfigurable}, in the sense that the topology of the network can be changed by the
nodes in the network.  For example, peer-to-peer, wireless and ad-hoc networks are reconfigurable.  More generally, many
social networks, such as a company's organizational chart; infrastructure networks, such as an airline's transportation
network; and biological networks, such as the human brain, are also reconfigurable.
Modern reconfigurable networks have a complexity unprecedented in the history of engineering, resembling more a dynamic
and evolving living animal rather than a structure of steel designed from a blueprint. Unfortunately, our mathematical
and algorithmic tools have not yet developed enough to handle this complexity and fully exploit the flexibility of
these networks. \\

We believe that it is no longer possible to build  networks that are scalable and never have node failures. Instead, these networks should be able to admit small, and maybe, periodic failures
and still recover like skin heals from a cut.
This process, where the network can recover itself by maintaining  key invariants in response to attack
by a powerful adversary is what we call \emph{self-healing}. \\

Here, we present several fast and provably good distributed algorithms for self-healing in reconfigurable dynamic networks. Each of these algorithms have different properties, a different set of gaurantees and limitations. We also discuss future directions and theoretical questions we would like to answer.
\clearpage 
\end{abstract}

\tableofcontents
\listoffigures
\listoftables


\mainmatter

\chapter{Introduction}
\label{chapter: Intro}

\begin{epigraphs}
\qitem{ \epitext{Begin at the beginning and go on till you come to the end: then stop.}}{\epicharacter{The king of hearts}\\ \episource{Alice in Wonderland}}\\ 
\end{epigraphs}

Networks in the modern age have grown by leaps and bounds, both in size and complexity. The size of some networks spans nations and even the globe. Networks provide a multitude of services using a wide variety of protocols and  components to the extent that they have now begun to resemble self-governed living entities. The Internet is the obvious example but there are others too like cellular phone networks. There are networks which have always been around but which only now have been scrutinized  by tools of computer science, such as the social networks. Most networks are dynamic since nodes can enter the network or be removed by choice, failure or attack. We are also fortunate that we live in a time where we can observe and 
inßuence the evolution of a dynamic network like the Internet. Due to the scale and nature of design of modern networks, it may simply not be practical to build robustness into the individual nodes or into the structure of the initial network itself. 

Many important networks are also \emph{reconfigurable} in the sense that they can change their topology.  Often, individual nodes can initiate new connections or drop existing connections.
  For example, peer-to-peer, wireless and ad-hoc networks are reconfigurable.  Looking beyond computer networks, many social networks, such as a company's organizational chart, or friendship networks on social networking sites are reconfigurable.  Infrastructure networks, such as an airline's transportation
network are reconfigurable. Many biological networks, including the human brain, which shows such capacity  for learning and  adaptability, are also reconfigurable. From an engineering aspect,  modern reconfigurable networks have a complexity unprecedented in  history.  We are approaching scales of billions of components.
Such systems are less akin to a traditional engineering enterprise built from a blueprint 
such as a bridge, and more akin to a dynamic and evolving living organism in terms of
complexity.  A bridge must be designed so that key components never
fail, since there is no way for the bridge to automatically recover
from system failure.  In contrast, a living organism can not be
designed so that no component ever fails: there are simply too many
components.  For example, skin can be cut and still heal. Designing
skin that can heal is much more practical than designing skin that is
completely impervious to attack.  Unfortunately, current algorithms
ensure robustness in computer networks through hardening individual
components or, at best, adding lots of redundant components.  Such an
approach is increasingly unscalable.

Our mathematical and algorithmic tools have not yet developed enough to handle the complexity and fully exploit the flexibility of modern networks. As an example, on August 15, 2007 the Skype network crashed for about $48$ hours, disrupting service to approximately $200$ million users~\cite{fisher,malik, moore, ray, stone}.  Skype attributed this outage to failures in their ``self-healing mechanisms''~\cite{garvey}.  We believe that this outage is indicative of the much broader problems outlined earlier. 

In the following chapters, we will propose some algorithms for self-healing. Informally, we define self-healing to be maintenance of certain properties within desirable bounds by the nodes in a network suffering from failures or under attack. As the name implies, self-healing has to be initiated and executed by the nodes themselves. As such, the algorithms we have proposed here are fully distributed. Equivalenty we can say  that a self-healing system, when starting from a correct state, can only be temporarily out of a  correct state i.e. it recovers to a correct state, in presence of  attacks. Self-healing is one of the so called `Self-*' properties which systems such as autonomic systems may be required to have. Section~\ref{sec: Intro-Self-*} has a brief discussion on these  properties.

One approach towards self-healing is to  add additional capacity or rerouting in anticipation of failures. There has been plenty of work which has followed this approach. However, there are obvious limitations including wastage of resources and limitations on additional capacity. 
In this Dissertation, we have adopted a \emph{responsive} approach.  Our approach is responsive in the sense that it responds to an attack (or component failure) by changing the topology of the network.  This  approach works irrespective of the initial state of the network, and is thus orthogonal and complementary to traditional  non-responsive techniques.  

Informally, the model we adopt in this work is as follows. We assume that the network is initially a connected graph over $n$ nodes.  An adversary repeatedly attacks the network. This adversary knows the network topology and our algorithm, and it has the ability to delete arbitrary nodes from the  network or insert a new node in the system which it can connect to any subset of the nodes currently in the system.   However, we assume the adversary is constrained in that in any time step it can only delete or insert a single node. Following that, the self-healing algorithm has a short time to reconfigure and heal the network by adding edges between remaining nodes before the next act of the adversary. Our model  captures what can happen when a worm or software error propagates through the population of nodes. This model is described in more detail Section~\ref{sec: Intro-self-healingModel}.

\section{Naive self-healing}
Even in a very simple setting, we need to be smart about reconfiguring. Suppose we are trying to maintain a property such as connectivity of the network but our algorithm is not very sophisticated. Then, it may be very easy for the adversary to force the algorithm to cause high degree increase (which may lead to overload and eventual network breakdown) or increase in distances between nodes (which may lead to poor communication).  Figure~\ref{fig: naiveheal} shows a naive algorithm attempting to heal the network by using only a small number of edges at each timestep. However, node $v$ in the figure ends up increasing its degree by 3 over a course of 3 deletions. Thus, a naive algorithm could yield a degree increase as high as $\theta(n)$.

  \begin{figure}[h!]
\centering
\subfigure[First deletion]{
\includegraphics[scale=0.5]{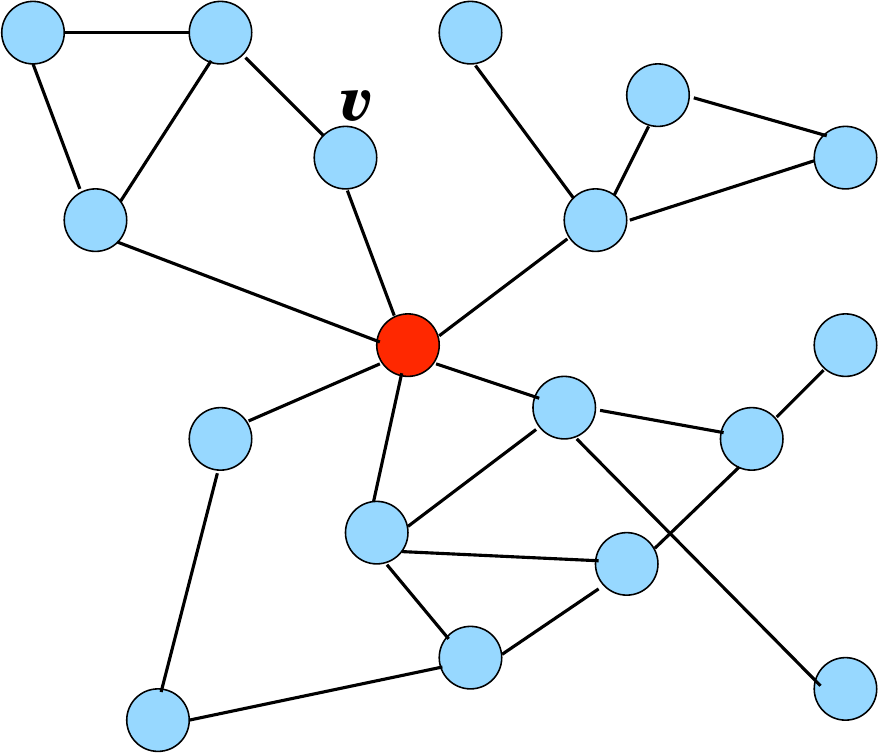}
 } 
\subfigure[Neighbors detect deletion]{
\includegraphics[scale=0.5]{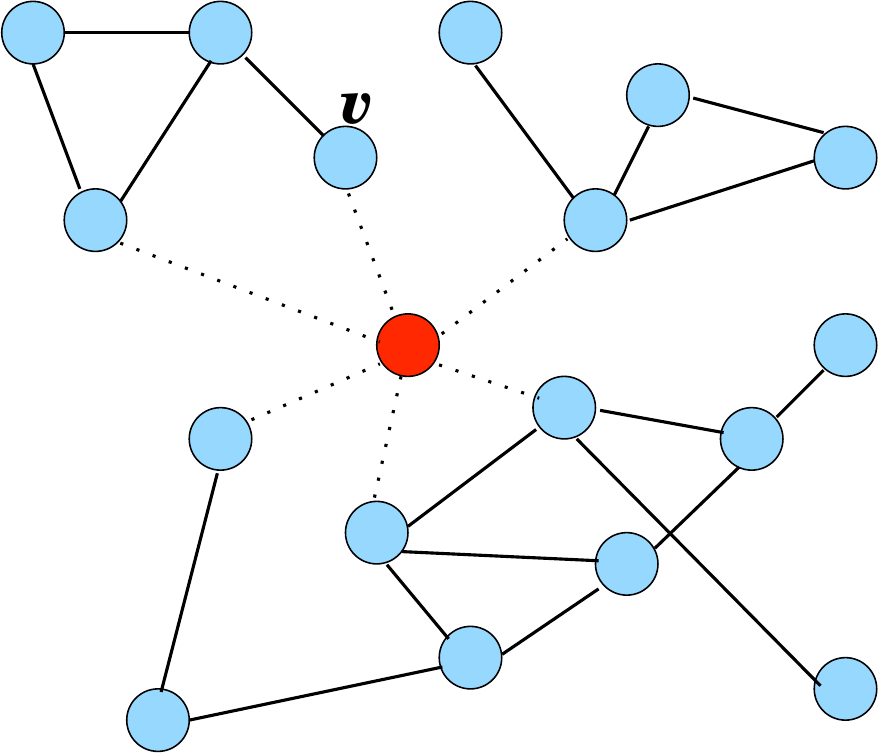}
 }
\subfigure[Reconnection: $v$ increases degree]{
\includegraphics[scale=0.5]{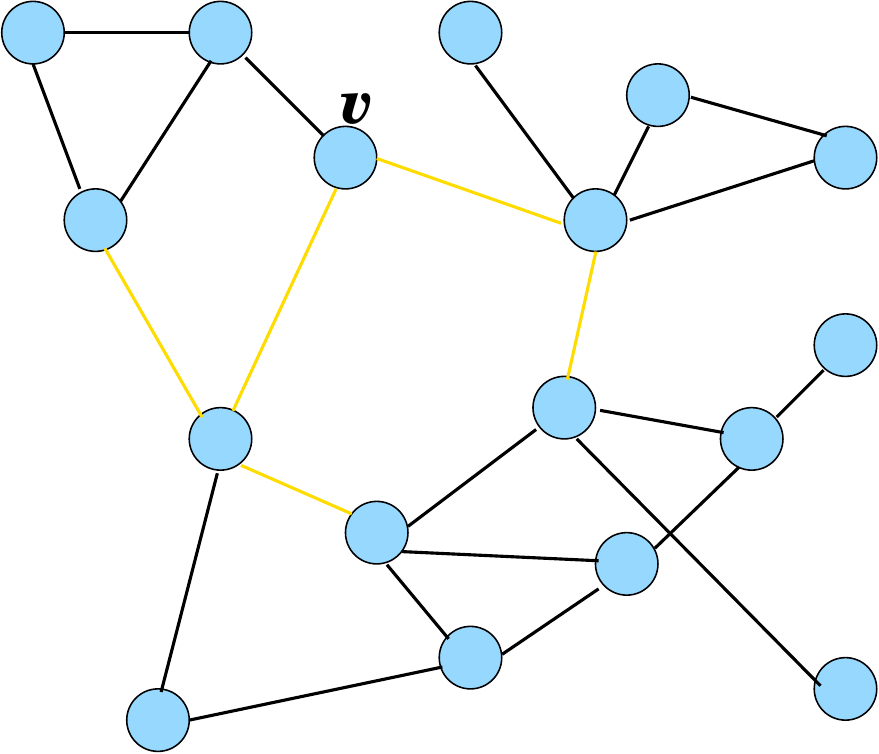}
}\\
\subfigure[Second deletion]{
\includegraphics[scale=0.5]{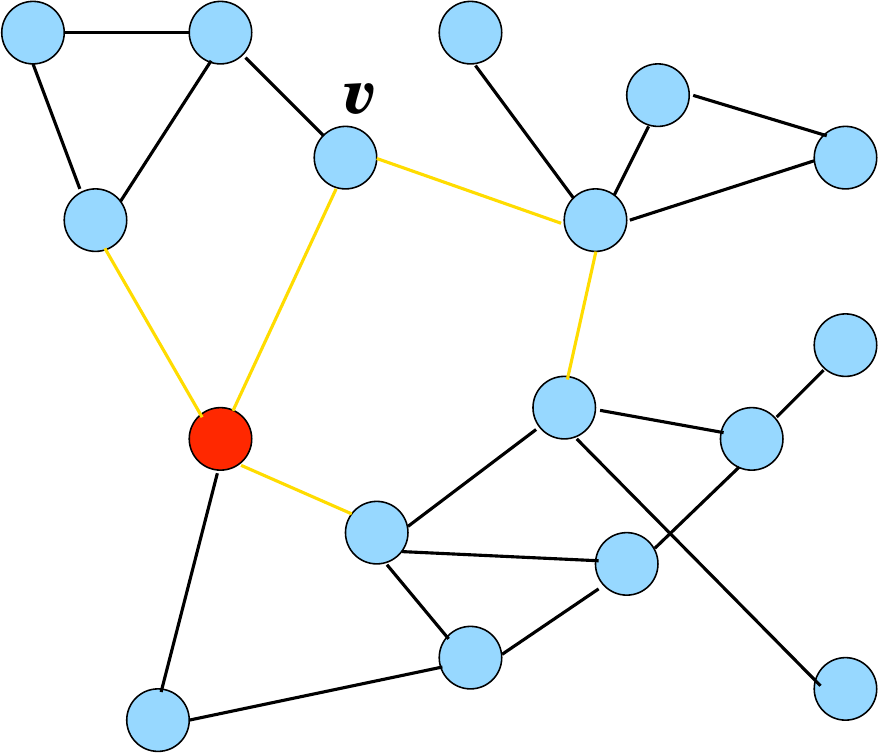}
 } 
\subfigure[Detection]{
\includegraphics[scale=0.5]{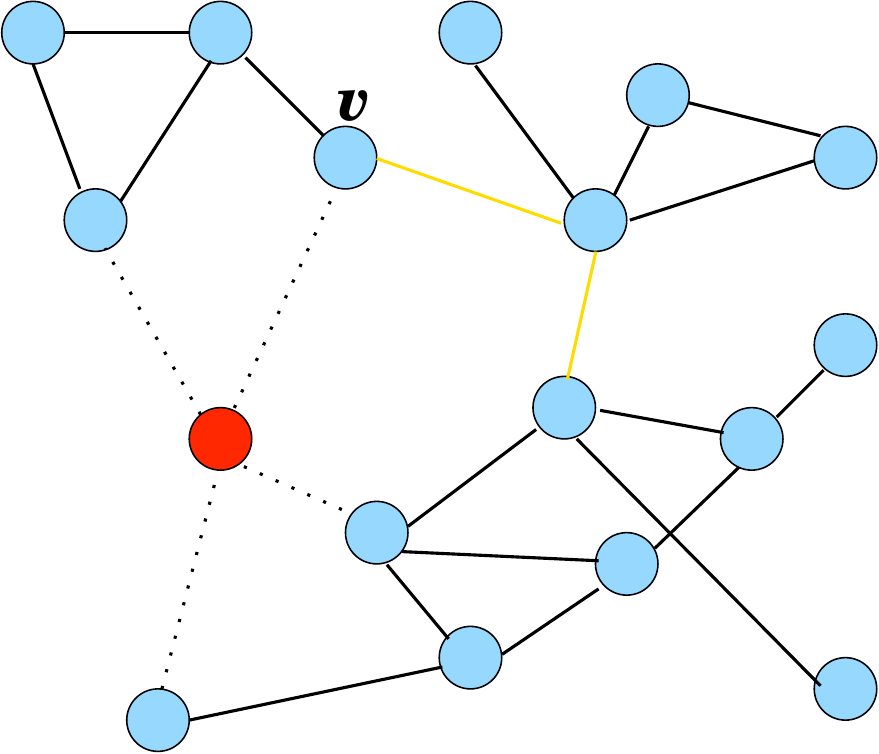}
 } 
 \subfigure[$v$'s degree increases by 2]{
\includegraphics[scale=0.5]{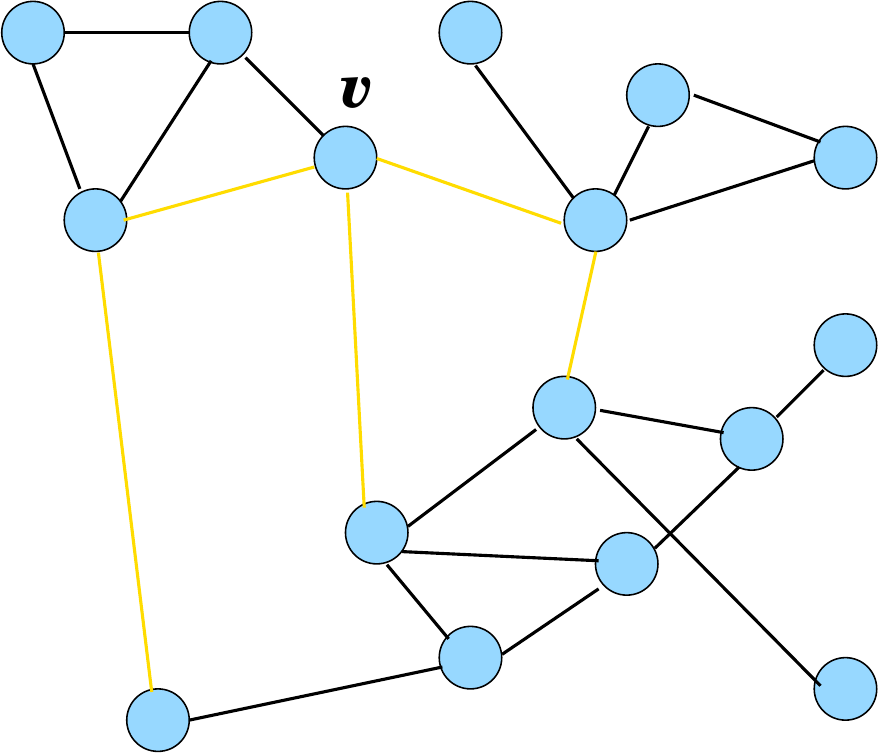}
 } \\
\subfigure[Third deletion]{
\includegraphics[scale=0.5]{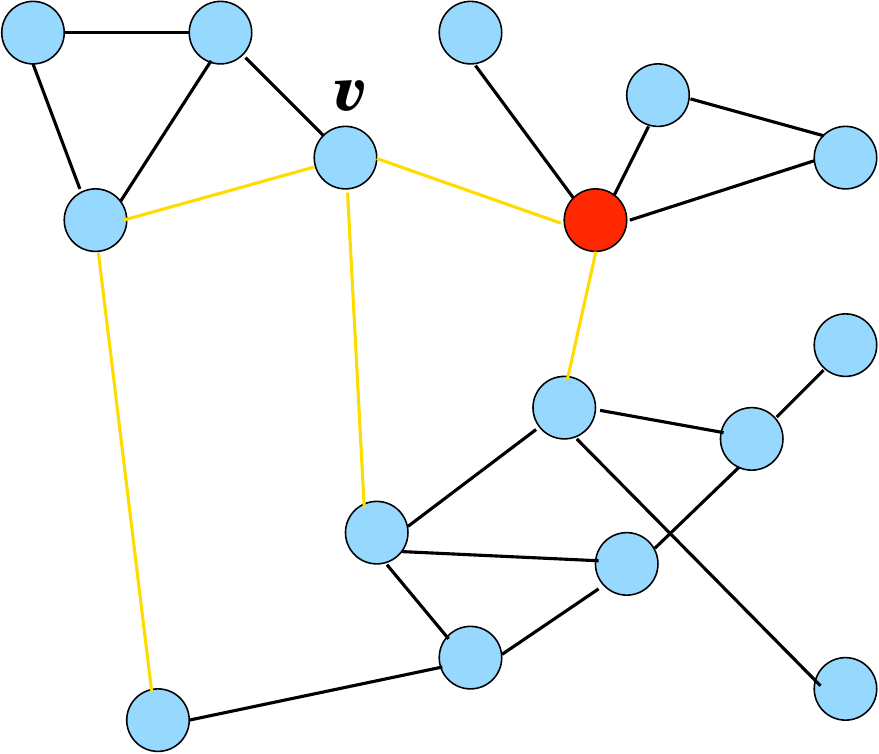}
 }
 \subfigure[Detection]{
\includegraphics[scale=0.5]{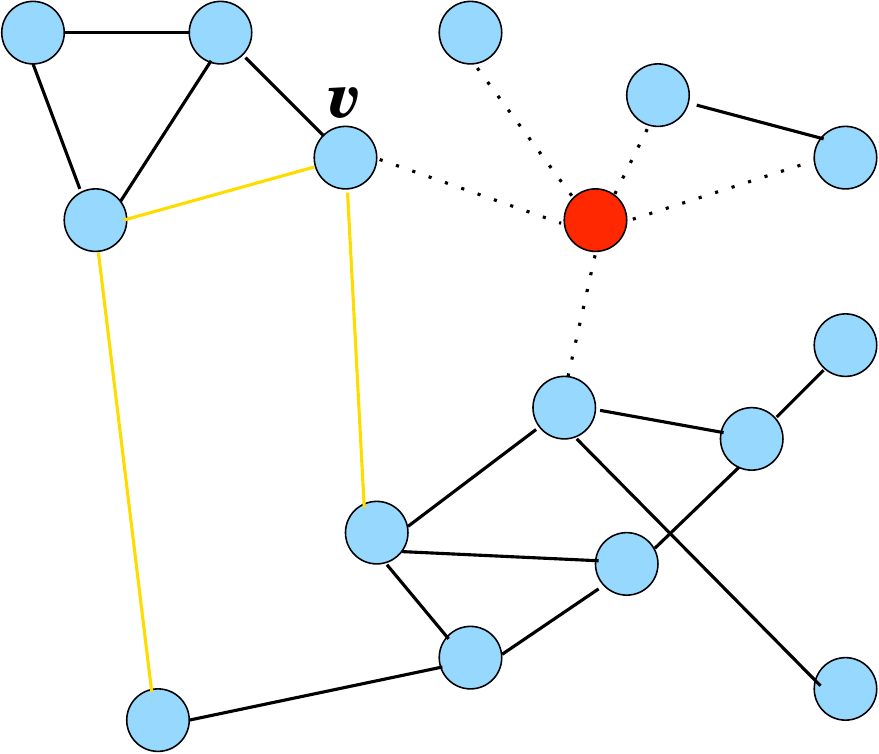}
 }
\subfigure[$v$'s degree increases by 3]{
\includegraphics[scale=0.5]{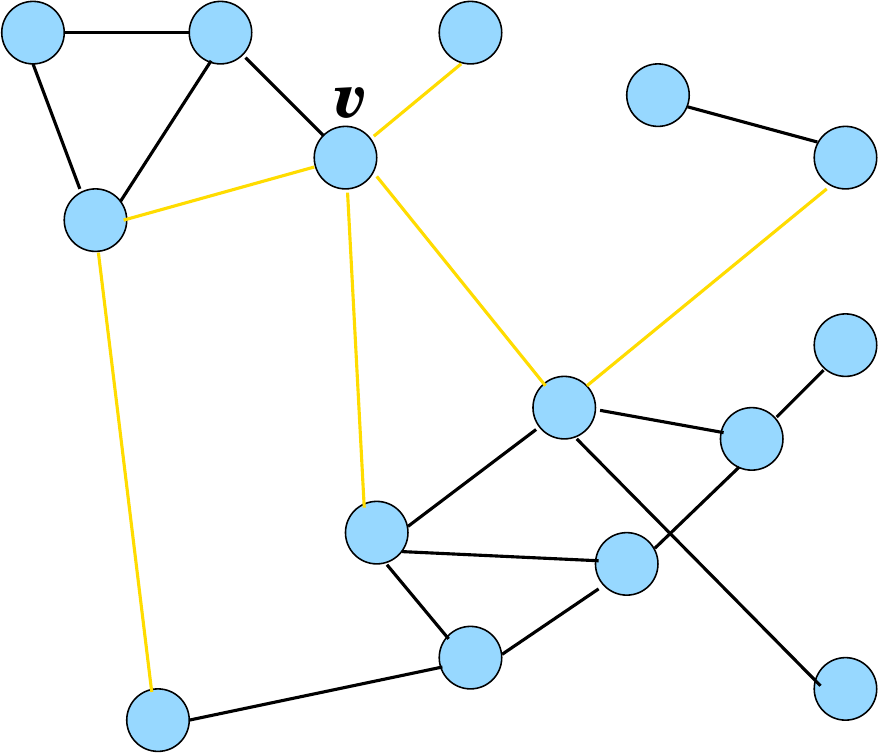}
}
\label{fig: naiveheal}
\caption{A sequence of 3 deletions and healings using a naive algorithm. A node marked red is deleted by the adversary. The neighbors of the deleted node reconnect  (golden edges) to maintain connectivity. Notice node $v$ increases its degree by 3.}
\end{figure}

 \section{Model of self-healing}
\label{sec: Intro-self-healingModel}


Our general model of self-healing is shown in Figure~\ref{algo: model-general}. The specific models used in our algorithms are special cases of this model, differing mainly in the way the success metrics of the graph properties are presented.
This model is very similar to the model described in Figure~\ref{algo: model-2}.  Let $G = G_0$ be an arbitrary graph on $n$ nodes, which represent processors in a distributed network.  In each step, the adversary either deletes or adds a node.  After each deletion, the algorithm gets to add some new edges to the graph, as well as deleting old ones.  At each insertion, the processors follow a protocol to update their information.
The algorithm's goal is to maintain the chosen graph properties within the desired bounds. At the same time, the algorithm wants to minimize the resources spent on this task.  Initially, each processor only knows its neighbors in $G_0$, and is unaware of the structure of the rest of $G_0$. After each deletion or insertion, only the neighbors of the deleted or inserted vertex are informed that the deletion or insertion has occured. After this, processors are allowed to communicate by sending a limited number of messages to their direct  neighbors.  We assume that these messages are always sent and received successfully.  The processors may also request new edges be added to the graph. The only synchronicity assumption we make is that no other  vertex is deleted or inserted until the end of this round of computation and communication has concluded. To make this assumption more reasonable, the per-node communication cost should be very small in $n$ (e.g. at most logarithmic).

We also allow a certain amount of pre-processing to be done before the first attack occurs.  This may, for instance,
be used by the processors to gather some topological information about $G_0$, or perhaps to 
coordinate a strategy.  Another success metric is the amount of computation and communication needed during this
preprocessing round.  For our success metrics, we compare the graphs at time $T$: the actual graph $G_T$ to the graph $G'_{T}$ which is the graph with only the original nodes (those at $G_0$) and insertions without regard to deletions and healing. This is the graph which would have been present if the adversary was not doing any deletions and (thus) no self-healing algorithm was active. This is the natural graph for comparing results.  
Figure~\ref{fig: Intro-ComparisonGraphs} shows an example of $G'_T$ and a corresponding $G_T$. The figure also shows, in  $G'_T$,  the nodes and edges inserted  and deleted,  and in $G_T$, the edges inserted by the healing algorithm, as the network evolved over time.

\begin{figure}[t]
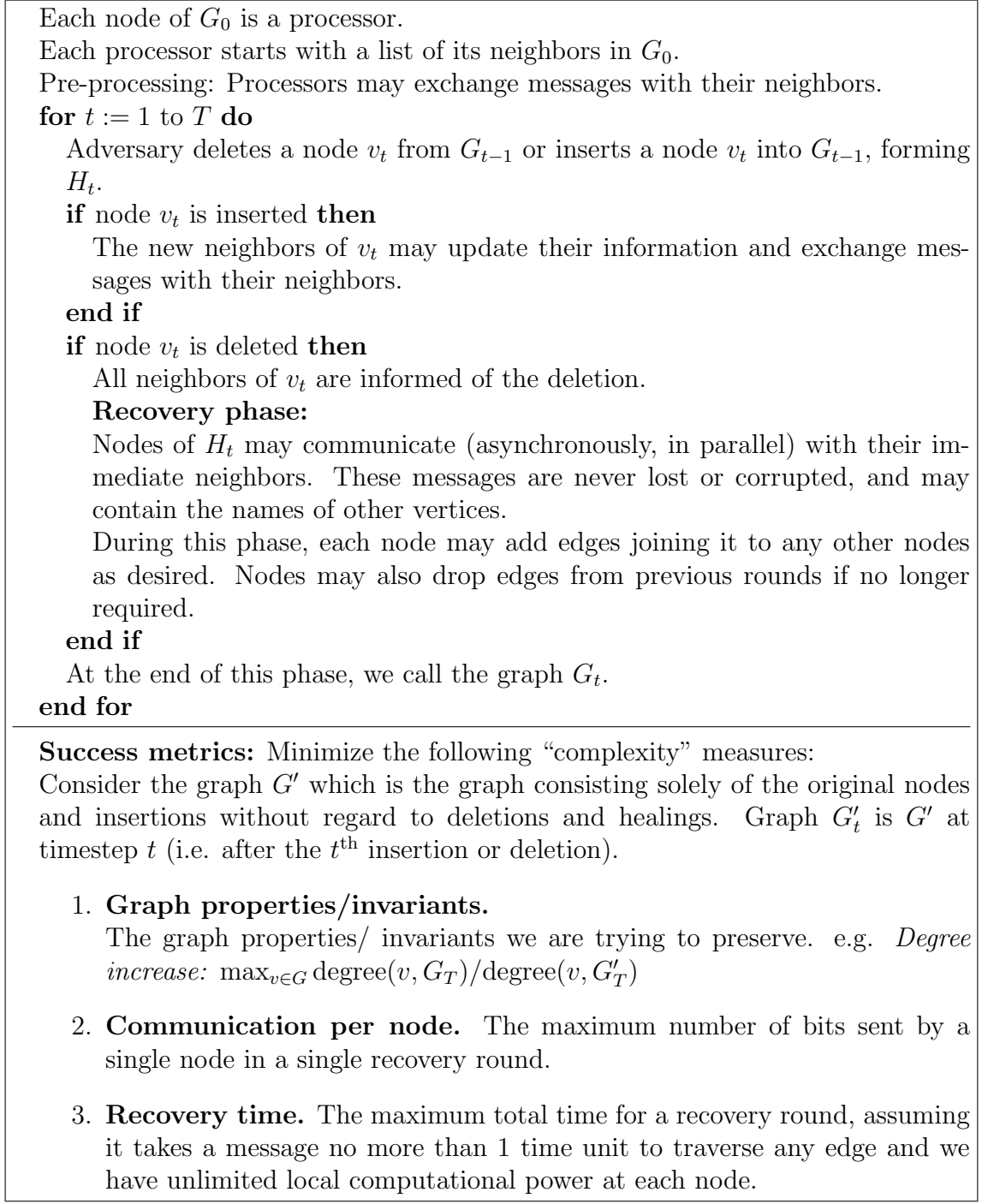

\caption{The general distributed Node Insert, Delete and Network Repair Model.}
\label{algo: model-general}
\begin{boxedminipage}{\textwidth}
\begin{algorithmic}
\STATE Each node of $G_0$ is a processor.  
\STATE Each processor starts with a list of its neighbors in $G_0$.
\STATE Pre-processing: Processors may exchange messages with their neighbors.
\FOR {$t := 1$ to $T$}
\STATE Adversary deletes a node $v_t$ from $G_{t-1}$ or inserts a node $v_t$ into $G_{t-1}$, forming $H_t$.
\IF{node $v_t$ is inserted} 
\STATE The new neighbors of $v_t$ may update their information and exchange messages with their neighbors.
\ENDIF
\IF{node $v_t$ is deleted} 
\STATE All neighbors of $v_t$ are informed of the deletion.
\STATE {\bf Recovery phase:}
\STATE Nodes of $H_t$ may communicate (asynchronously, in parallel) 
with their immediate neighbors.  These messages are never lost or
corrupted, and may contain the names of other vertices.
\STATE During this phase, each node may add edges
joining it to any other nodes as desired. 
Nodes may also drop edges from previous rounds if no longer required.
\ENDIF
\STATE At the end of this phase, we call the graph $G_t$.
\ENDFOR
\vspace{5pt}
\hrule
\STATE
\STATE {\bf Success metrics:} Minimize the following ``complexity'' measures:\\
Consider the graph  $G'$ which is the graph consisting solely of the original nodes and insertions without regard to
deletions and healings. Graph $G'_{t}$ is $G'$ at timestep $t$ (i.e. after the $t^{\mathrm{th}}$ insertion or deletion).
 \begin{enumerate}
\item{\bf Graph properties/invariants.}\\
The graph properties/ invariants we are trying to preserve. e.g.  \emph{Degree increase:} 
  $\max_{v \in G} \Degree(v,G_T) / \Degree(v,G'_T)$
\item{\bf Communication per node.} The maximum number of bits sent by a single node in a single recovery round.
\item{\bf Recovery time.} The maximum total time for a recovery round,
assuming it takes a message no more than $1$ time unit to traverse any edge and we have unlimited local computational power at each node.
\end{enumerate}
\end{algorithmic}
\end{boxedminipage}
\end{figure}

\clearpage

\begin{figure}[h!]
\centering
\subfigure[$G'_T$: Nodes in red (dark gray in grayscale) deleted, and nodes in green (patterned) inserted, by the adversary.]{ \label{sfig: GraphOrigInserts}
 \makebox[0.4\textwidth][c]{\includegraphics[scale=0.6]{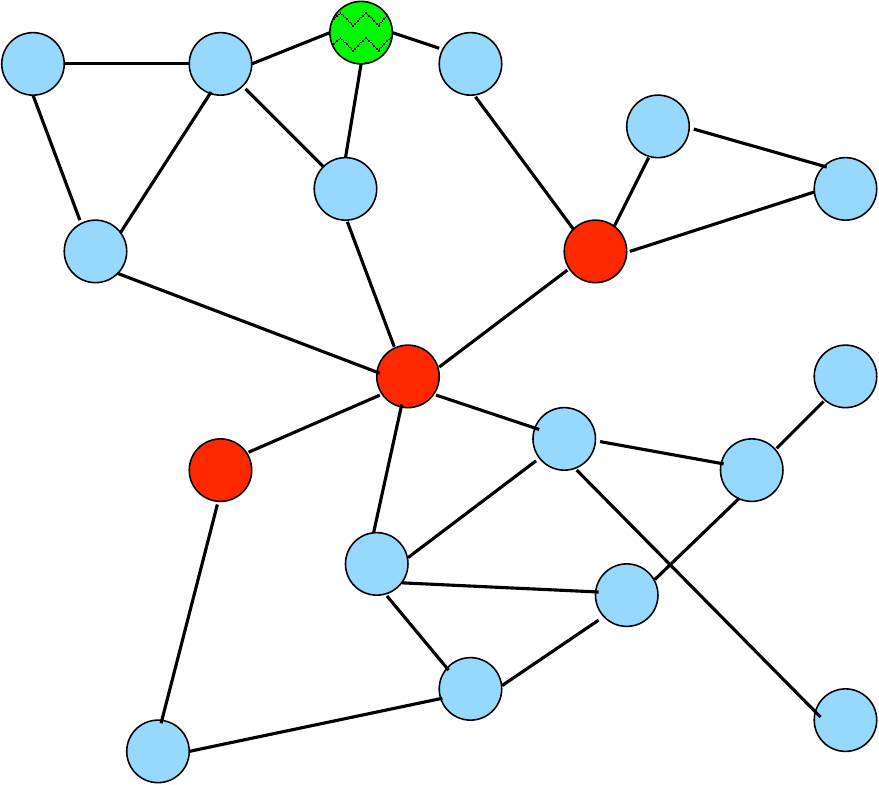}} }
\hspace{0.1 \textwidth}
\subfigure[$G_T$: The actual graph. Edges added by the healing algorithm shown in gold (light shaded in grayscale) color.]
{ \label{sfig: GraphHealed}
\makebox[0.4\textwidth][c]{ \includegraphics[scale=0.6]{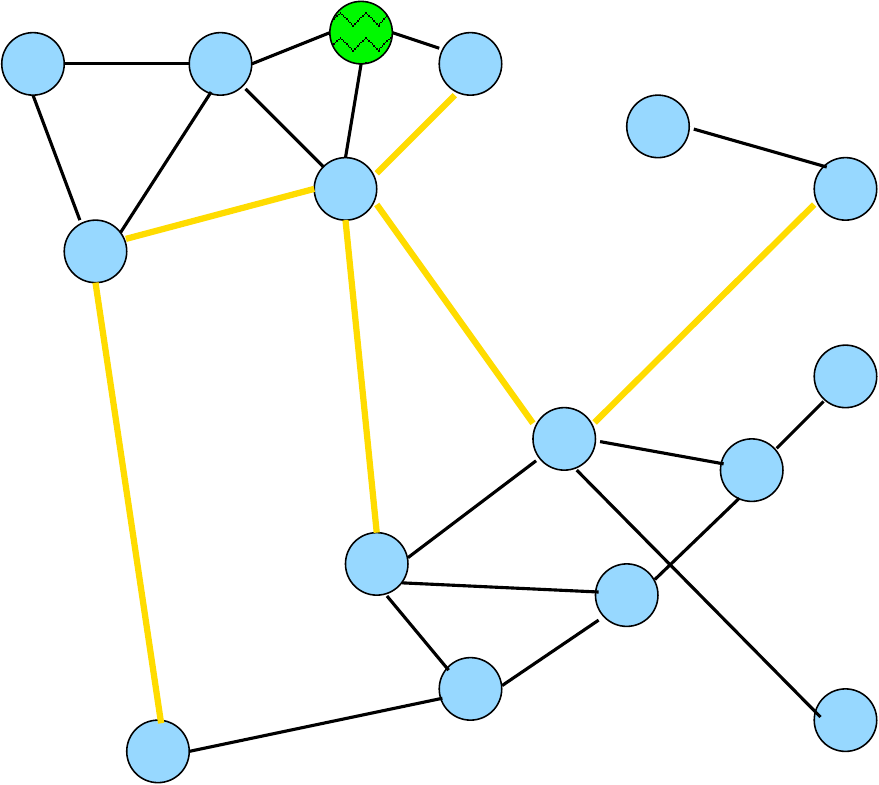} }}
\caption{ \emph{Graphs at time T}. $G'_T$: The graph of initial nodes and insertions over time, $G_T$: The actual healed graph.}
\label{fig: Intro-ComparisonGraphs}
\end{figure}

\section{Healing by Reconstruction Trees}
\label{sec: Intro-healingbyRT}

\begin{figure}[h!]
\centering
\includegraphics[scale=0.65]{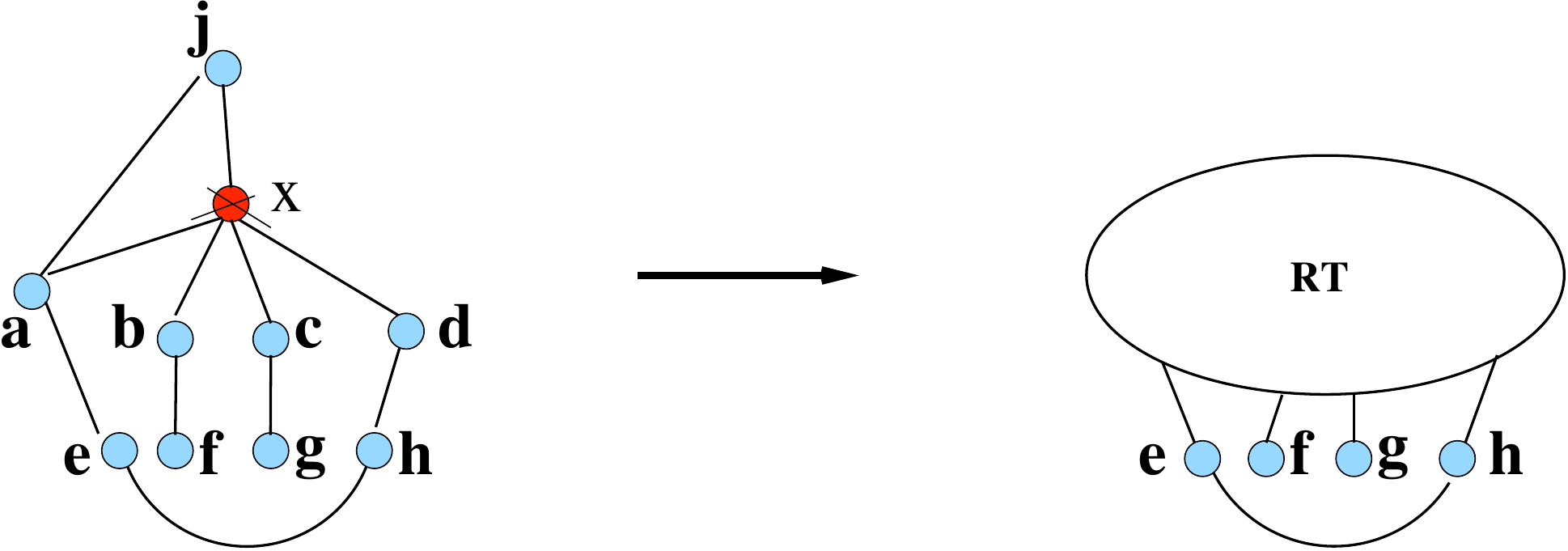}
\caption{Deleted node $x$ (in red, crossed) replaced by a Reconstruction Tree, which is a structure formed by its neighbors ($a, b, c, d, j$). }
 \label{fig: RTheal}
\end{figure}

Our algorithms ($\DASH, \FTree, \FGraph$) use the same basic principle: when a node is deleted, replace it by a tree based structure formed from its neighbors, as shown in Figure~\ref{fig: RTheal}. This structure we call the $\RTree$ ($\RT$), and thus, we can also call these algorithms $\RTree$ healing algorithms. It turns out that trees are a natural choice for the graph properties we have tried to maintain. A balanced tree is a structure which has low distance between nodes (at most $2 \log_2 n$ for a balanced binary tree) while each node has a small degree  (at most 3 for a binary tree). At the same time, coming up with the suitable $\RT$s and maintaining them over the run of the algorithm is quite a significant challenge.

\section{Our Results}
\label{sec: Intro-algorithms}
  In our algorithms, we have focused on some fundamentally important properties: maintaining connectivity, ensuring low degree increase for all nodes, and simultaneously, in later algorithms, ensuring low increase of diameter (or a stronger property, the stretch) of the network. Figure~\ref{fig: healtimeline} (repeated as Figure~\ref{fig: DASHhealtimeline}) shows a series of snapshots from a simulation of our algorithm called $\DASH$ (Chapter~\ref{chapter: DASH}). Notice that the network stays connected, and no individual node gets a large number of extra edges during healing.  
 
 \begin{figure}[h!]
\centering
\subfigure[single deletion]{
\includegraphics[scale=0.21]{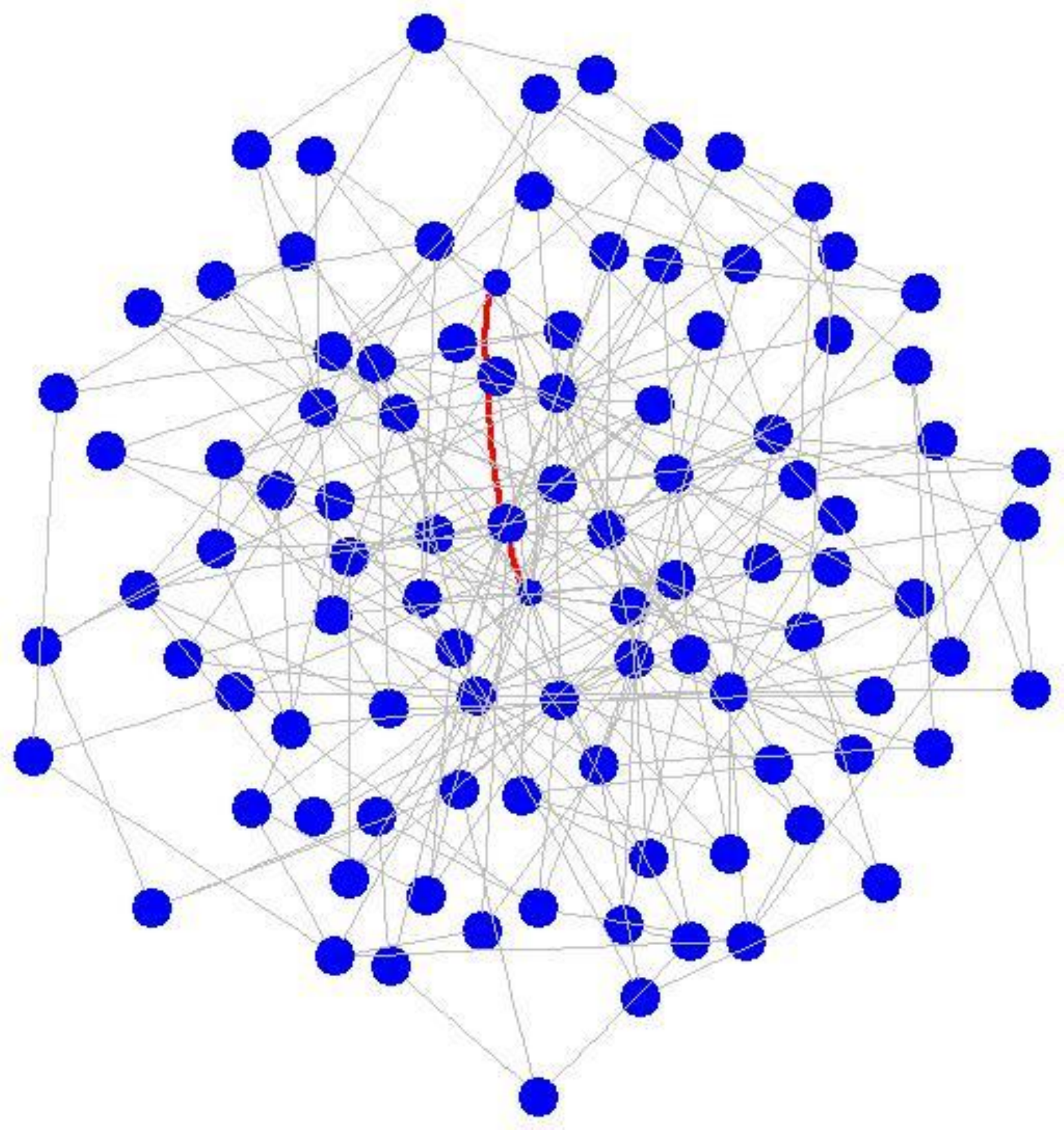}
 } 
\subfigure[10 deletions]{
\includegraphics[scale=0.22]{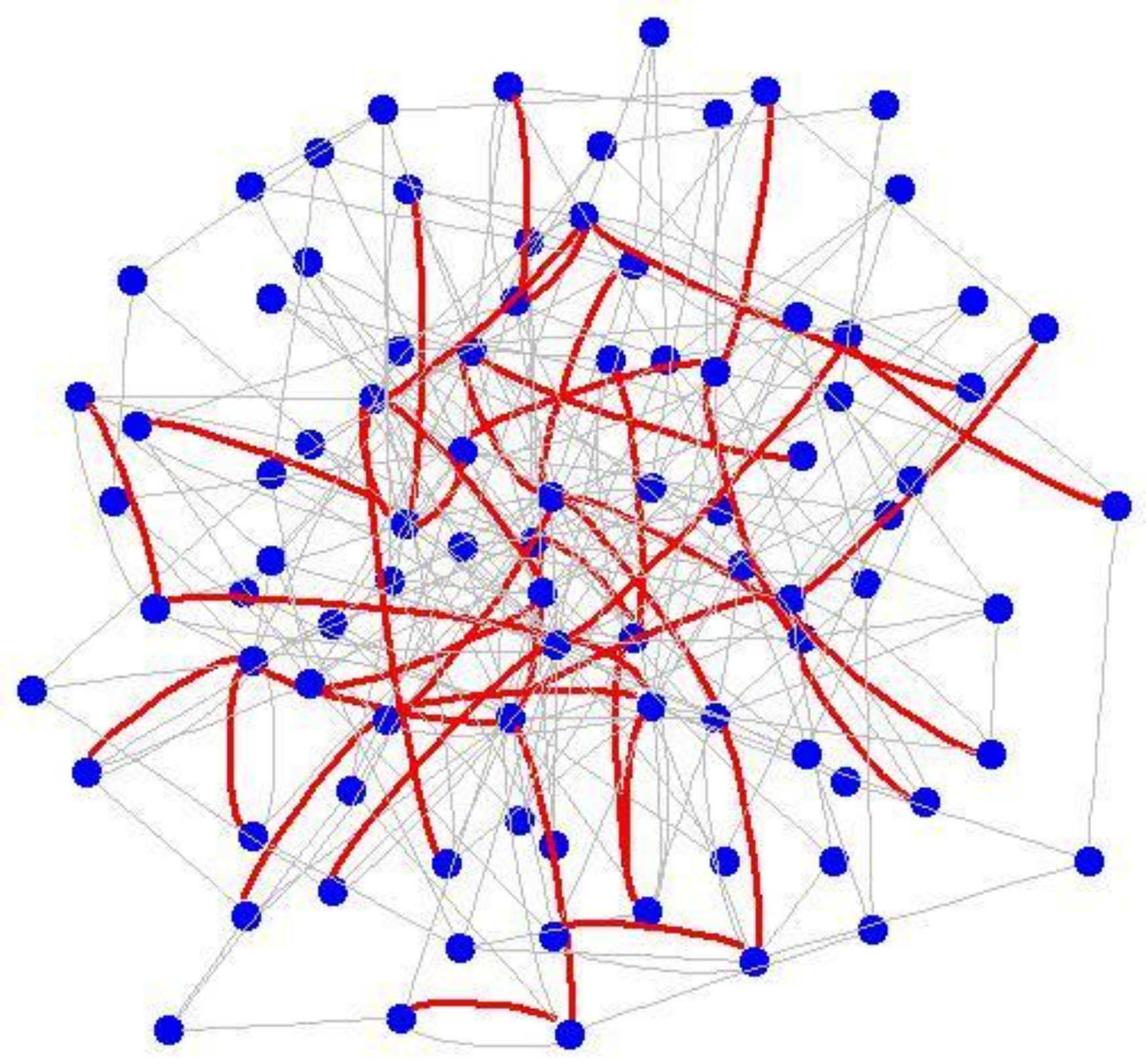}
 }
\subfigure[30 deletions]{
\includegraphics[scale=0.21]{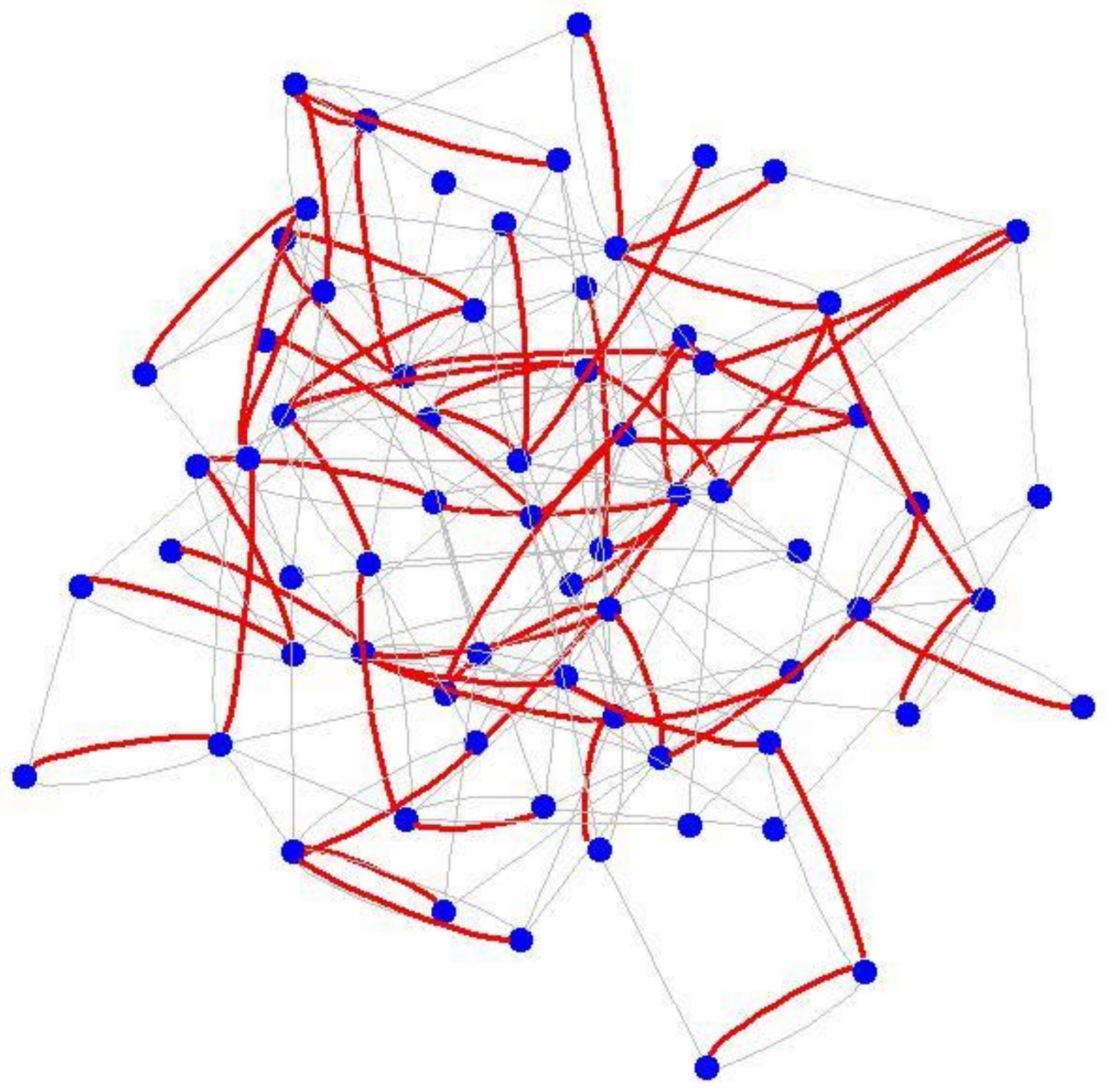}
}\\
\subfigure[40 deletions]{
\includegraphics[scale=0.18]{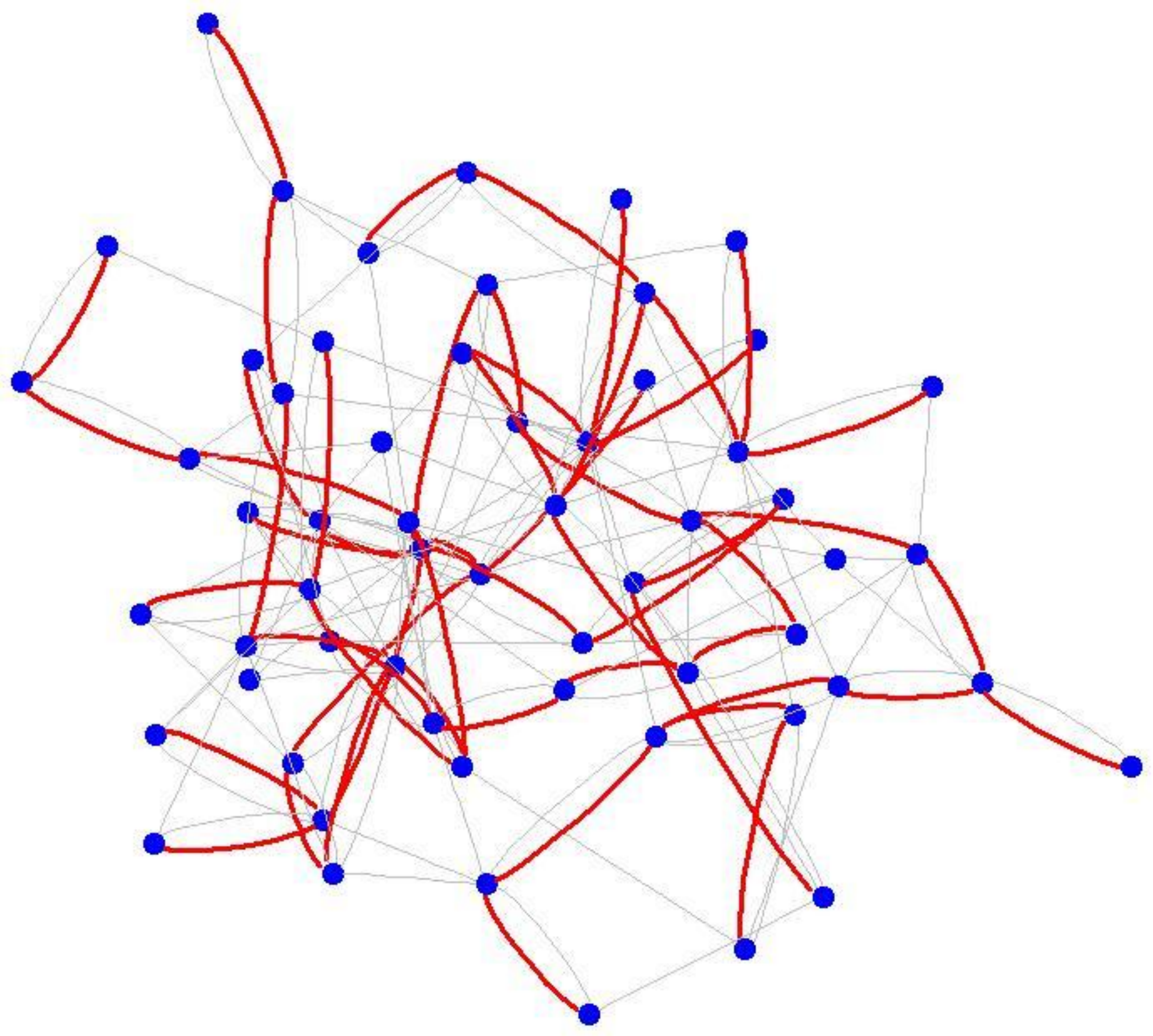}
 } 
\subfigure[50 deletions]{
\includegraphics[scale=0.17]{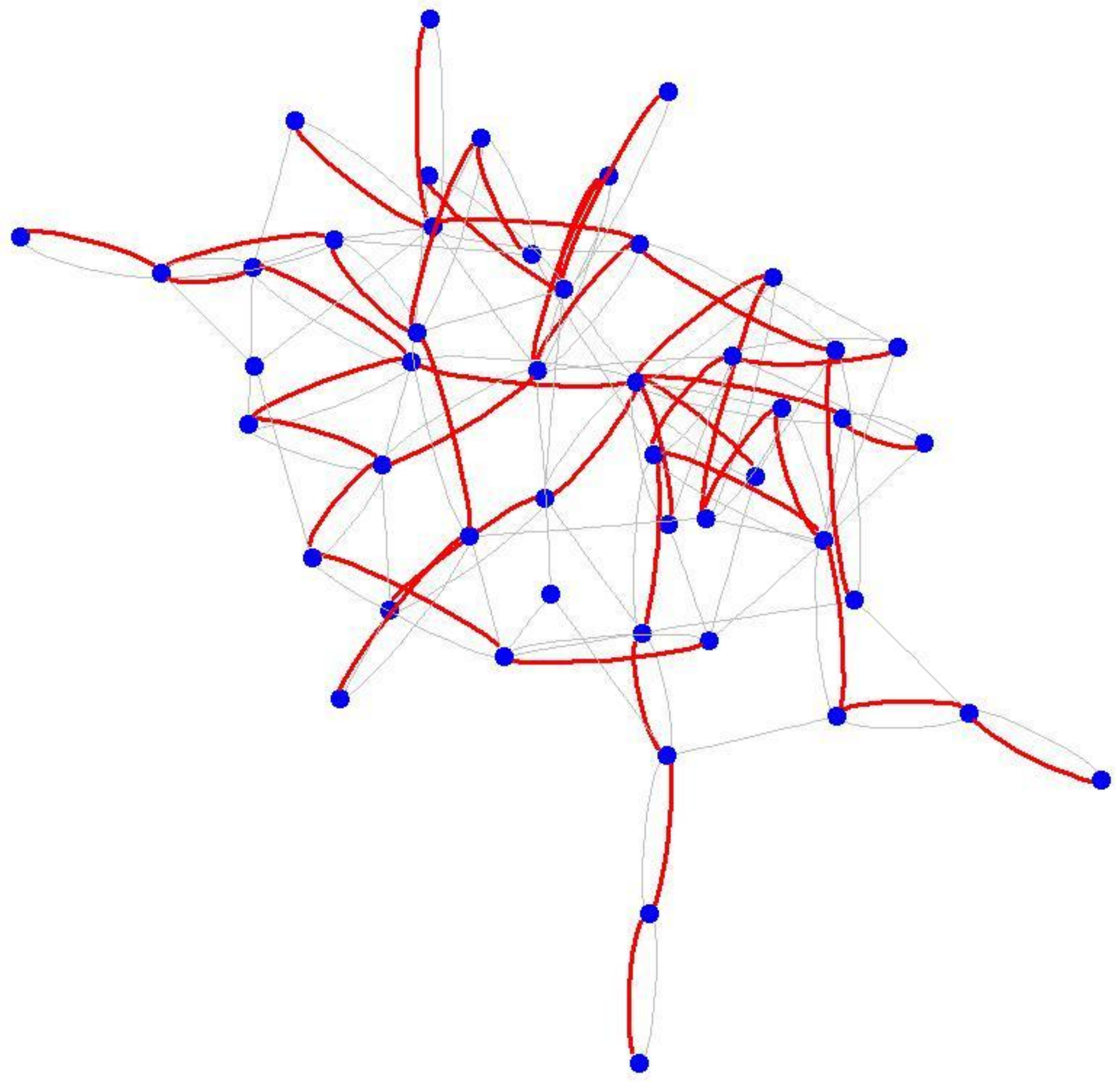}
 } 
 \subfigure[60 deletions]{
\includegraphics[scale=0.18]{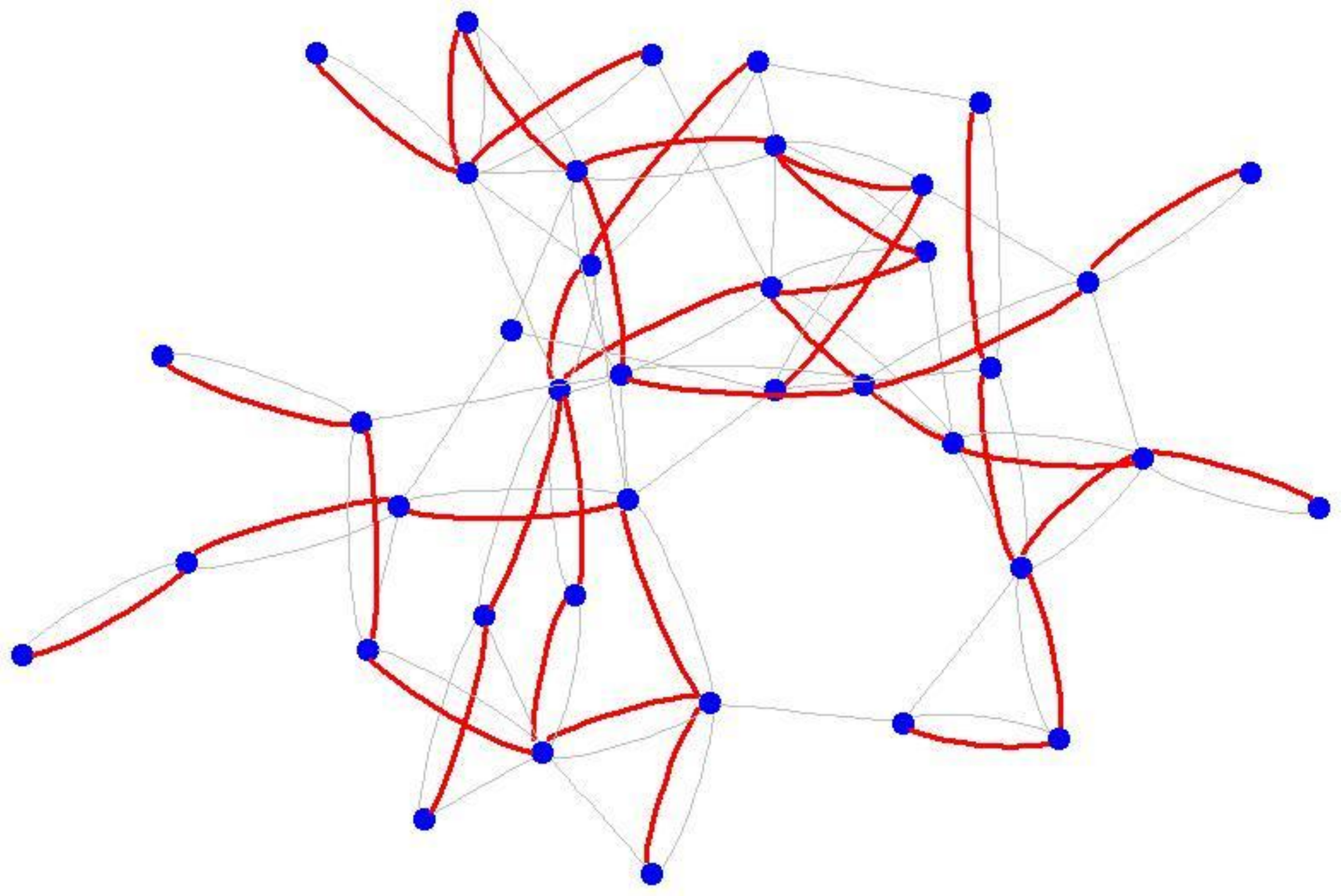}
 } \\
\subfigure[70 deletions]{
\includegraphics[scale=0.2]{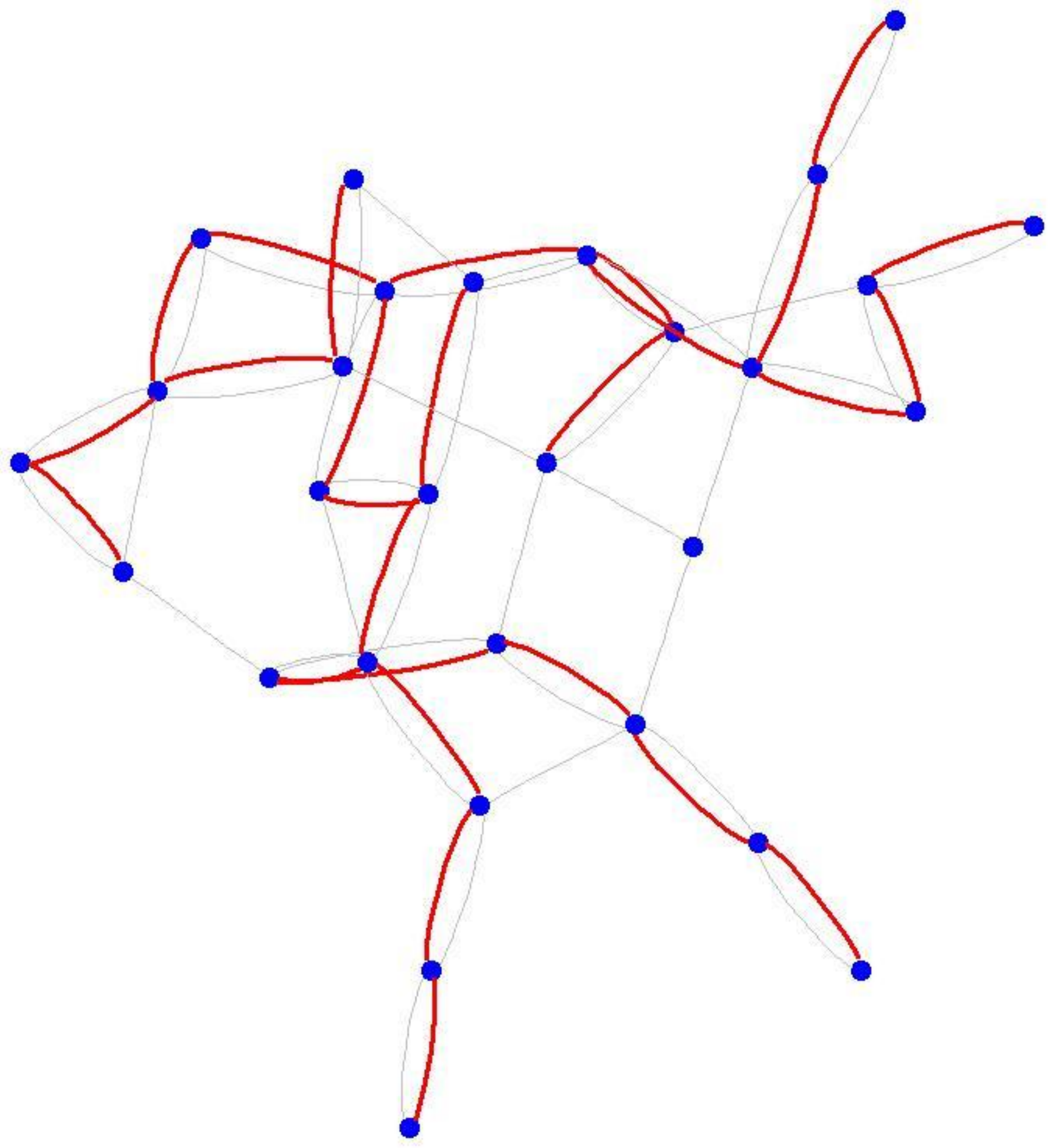}
 }
 \subfigure[80 deletions]{
\includegraphics[scale=0.19]{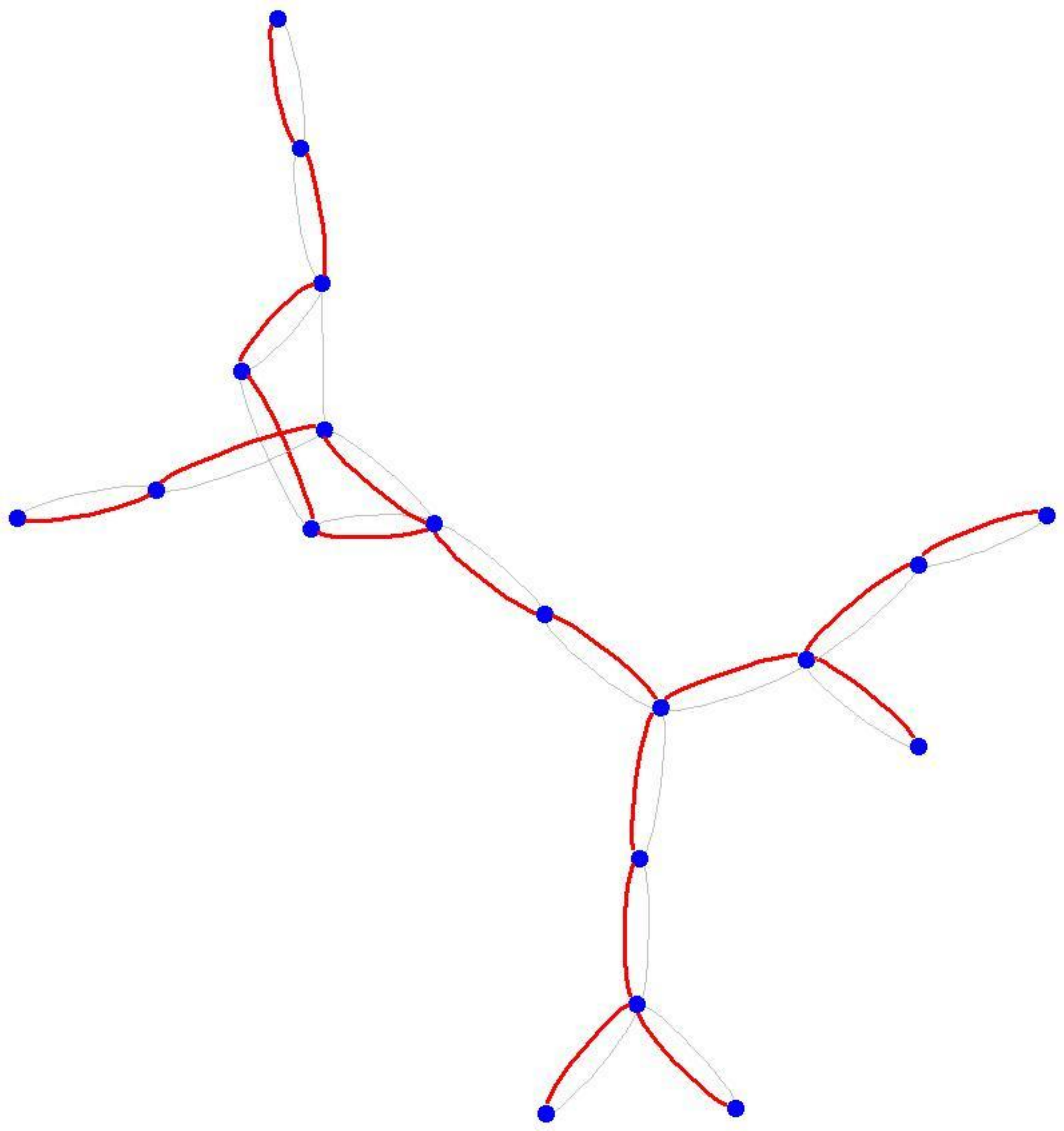}
 }
\subfigure[90 deletions]{
\includegraphics[scale=0.2]{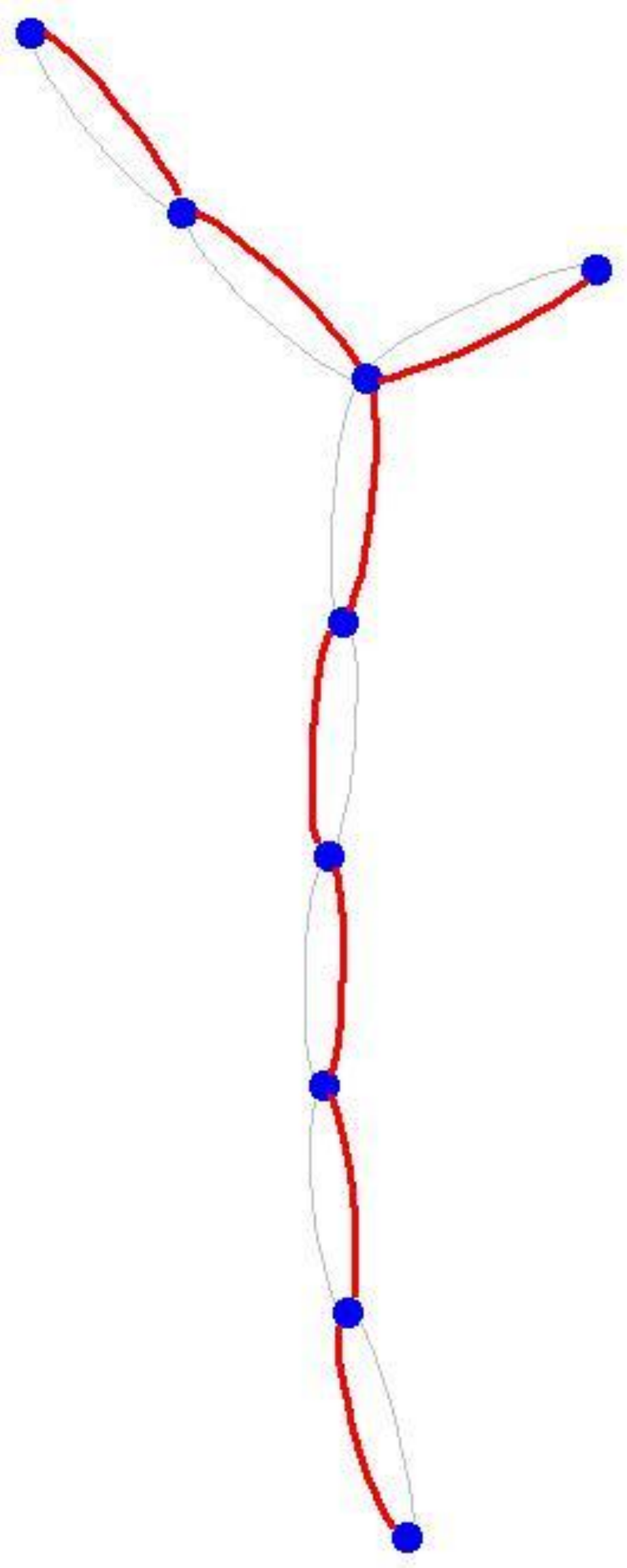}
}
\label{fig: healtimeline}
\caption{A timeline of deletions and self healing in a network with 100 nodes. The gray edges are the original edges and
the red edges are the new edges added by our self-healing algorithm.}
\end{figure}


We have developed three different distributed self-healing algorithms, whose results are optimal (i.e. with a matching lower bound) for their particular objectives. All of them fulfill the objectives of maintaining connectivity in the network in face of adverserial attacks, and low degree increase for individual nodes. These algorithms were presented at reputed conferences and have been well received by the academic community.
 These algorithms are:

 \begin{itemize}
   
 \item \noindent {\sc DASH:} \emph{Degree Assisted Self Healing}:
 $\DASH$ guarantees network connectivity and degree increase of at most $2 \log n$, where $n$ is the number of nodes initially in the network.   $\DASH$ is  locality-aware i.e. only the immediate neighbors of a deleted node are involved in reconstruction. Also, the healing algorithm always adds in less edges than the adversary has removed from the system.
   Empirical results show that   $\DASH$ performs well in practice on power-law networks. This is joint work with Jared Saia. An earlier version~\cite{SaiaTrehanIPDPS08} was presented at the conference \emph{IEEE International Parallel \& Distributed Processing Symposium 2008}.

\item \noindent{\sc ForgivingTree}:
This algorithm efficiently maintains a special spanning tree which guarantees at worst a constant additive degree increase and diameter increase of only a $\log \Delta$ factor, where $\Delta$ is the
maximum degree of a node in the original network, by a system of inheritance and wills. 
This is  work  jointly done with Tom Hayes, Navin Rustagi and Jared Saia.
 An earlier version~\cite{HayesPODC08} was   presented  at the conference \emph{ACM Principles of Distributed Computing 2008}.

\item \noindent{\sc ForgivingGraph}:
 This algorithm efficiently maintains a general graph of the network, handling both deletions and insertions, while guaranteeing at worst a constant multiplicative degree increase and the simultaneously challenging property of a low ($\log n$) factor stretch (maximum distance increase between any two nodes).  Also, we introduce a  novel mergable data structure called half-full trees(\emph{haft}) having a one-to-one correspondence with binary numbers, with the merge corresponding to binary addition. This is joint work with Tom Hayes and Jared Saia.
  An earlier version~\cite{HayesPODC09} was presented  at the conference \emph{ACM Principles of Distributed Computing 2009}.

\end{itemize}

\begin{table}[h!]
\centering
\begin{threeparttable}[b]
\begin{tabular}{|l|c|c|p{0.08\textwidth}|p{0.12\textwidth}|p{0.11\textwidth}|p{0.08\textwidth}|}
\hline
 & \multicolumn{2}{|c|}{Adversarial Attack} &  \multicolumn{4}{|c|}{Property bounded}\\
 \hline
 & Deletion & Insertion & Connec\-tivity & Degree \mbox{(orig: d)\tnote{\textasteriskcentered}} & Diameter \mbox{(orig: D)\tnote{\textasteriskcentered}} & Stretch\\
 \hline
 DASH & \checkmark & \checkmark& \checkmark &$d + 2\log n$ & \textemdash & \textemdash \\
 \hline
 Forgiving Tree & \checkmark&$\times $& \checkmark & $d + 3$ & $D\log \Delta$  & \textemdash\\
 \hline
  Forgiving Graph & \checkmark& \checkmark & \checkmark & $3d$ & $D\log n$ & $\log n$ \\
  \hline
 \end{tabular}
 
 \begin{tablenotes} 
 \item [\textasteriskcentered] `orig:'  the original value of the property in the graph (i.e. the value in the graph $\G'$ in our model)\\
\end{tablenotes} 
\end{threeparttable}

\begin{threeparttable}[b]
\begin{tabular}{|l|p{0.14\textwidth}|p{0.23\textwidth}|p{0.1\textwidth}|p{0.08\textwidth}|p{0.08\textwidth}|}
\hline
 & \multicolumn{3}{|c|}{Costs} &  \multicolumn{2}{|c|}{}\\
 \hline
 & Repair time & \# Msgs  per deletion& Msg size & match lower bound\tnote{$\ddagger$} & locality (hops)\tnote{$\sharp$} \\
 \hline
 DASH & $O(\log n)$ \tnote{$\dagger$}& $O(\delta\log n + \log^2 n)$ \tnote{$\dagger$}& $O(\log n)$ & \checkmark  & 1  \\
 \hline
 Forgiving Tree &$O(1)$& $O(\delta)$ &$O(\log n)$ & \checkmark & 2 \\
 \hline
  Forgiving Graph & $O(\log \delta \log n)$ &$O(\delta \log n)$ & $O(\log^2 n)$& \checkmark & $\log n$  \\
  \hline
 \end{tabular}
 
 \begin{tablenotes} 
 \item[$\dagger$] \emph{with high probability},  and amortized over $O(n)$ deletions.
 \item[$\ddagger$] The lower bounds differ according to the properties being bounded.
 \item[$\sharp$] Number of hops from the deleted node to nodes involved in repair.
\end{tablenotes} 
\end{threeparttable}

\caption{Comparison of our self-healing Algorithms. $d$ is the degree of an individual node,  $\Delta$ is the maximum degree of a node in the graph, and $\delta$ is the degree of the deleted node.}
\label{tab: AlgoCompare}
\end{table}

Table~\ref{tab: AlgoCompare} gives a comparison of these self-healing algorithms with regards to various criteria including methods of adverserial attack, properties maintained, and costs of the algorithm.
Many important open questions remain and there are  many promising directions towards which our work can be extended. Some of these are discussed in the last chapter (Chapter~\ref{chapter: FD}).

 \section{Related Work}  

There have been numerous papers that
discuss strategies for adding additional capacity or rerouting in
anticipation of failures \cite{ AwerbuchAdapt92, doverspike94capacity,
frisanco97capacity, iraschko98capacity, murakami97comparative,
caenegem97capacity, xiong99restore}.  Results that
are responsive in some sense include the following.
  M\'{e}dard, Finn, Barry, and Gallager
\cite{medard99redundant} propose constructing redundant trees to make
backup routes possible when an edge or node is deleted.  Anderson,
Balakrishnan, Kaashoek, and Morris \cite{anderson01RON} modify some
existing nodes to be RON (Resilient Overlay Network) nodes to detect
failures and reroute accordingly. Some networks have enough redundancy
built in so that separate parts of the network can function on their
own in case of an attack~\cite{goel04resilient}.  In all these past results, the network topology is fixed. In contrast, our algorithms add or deletes edges as node failures occur.  Moreover, our
algorithms do not dictate routing paths or specifically require
redundant components to be placed in the network initially. 


There has also been recent research in the physics community on
preventing cascading failures.  In the model used for these results,
each vertex in the network starts with a fixed capacity. When a vertex
is deleted, some of its ``load'' (typically defined as the number of
shortest paths that go through the vertex) is diverted to the
remaining vertices.  The remaining vertices, in turn, can fail if the
extra load exceeds their capacities. Motter, Lai, Holme, and Kim have
shown empirically that even a single node deletion can cause a
constant fraction of the nodes to fail in a power-law network due to
cascading failures\cite{holme-2002-65, motter-2002-66}. Motter and Lai
propose a strategy for addressing this problem by intentional removal
of certain nodes in the network after a failure begins
\cite{motter-2004-93}.  Hayashi and Miyazaki propose another strategy,
called emergent rewirings, that adds edges to the network after a
failure begins to prevent the failure from
cascading\cite{hayashi2005}.  Both of these approaches are
shown to work well empirically on many networks.  However, unfortunately, they
perform very poorly under adversarial attack.

A responsive approach was followed by the authors in \cite{IchingThesis,BomanSAS06},  which proposed a simple line algorithm for self-healing to maintain network connectivity. This algorithm has obvious drawbacks with regard to properties such as diameter maintenance but has served as a useful starting point for our research.

 \subsection{Self-healing and Self-* properties}
 \label{sec: Intro-Self-*}

  The importance of self-healing in systems is worth mentioning.  As an example, self-healing is one of the main components of IBM's autonomic systems initiative~\cite{IBMAutonomicManifesto,IBMAutonomicVision}. Autonomic computing itself is one of the building blocks of pervasive computing, an anticipated future computing model in which tiny - even invisible - computers will be all around us, communicating through increasingly interconnected networks~\cite{Whatis?AutonomicComputing}.  Self-healing forms one of the eight crucial elements in IBM's autonomic computing vision. Self-healing is one of the self-* properties that a system can possess, where the `*' in self-* is a wildcard character that can take on many different forms. IBM's vision often refers to an autonomic computing system  as a  self-managing system that has  the so-called self-CHOP properties: \emph{self-configuring, self-healing, self-optimizing,} and \emph{self-protecting}. Often, \emph{self-management} is a generic term which implies the system has at least one of the other self-* properties i.e. it has some desired autonomic behavior~\cite{Berns09DissectingSelf-*}.

   In the distributed systems world, perhaps the most well-known self-* property is \emph{self-stabilization}~\cite{Djikstra74SelfStabilizing, DolevBookSelfStabilization, Dolev09EmpireofColoniesSelf-stabilizing, GerardTelDistributedAlgosBook}. Self-stabilization was introduced by Djikstra in 1974~\cite{Djikstra74SelfStabilizing}. A self-stabilizing system is a system  which, starting from an arbitrary state and being affected by adversarial transient failures, can, in finite time, recover to a correct state. Often, self-stabilization does not take code corruption (byzantine behavior) or fail-stop failures (node crashes) into account. A self-healing system, when starting from a correct state, can only be temporarily out of a  correct state i.e. it recovers to a correct state, in presence of some adversarial attacks including node removal.  Other self-* properties, often broadly defined, include  \emph{self-scaling, self-repairing} (similar to self-healing), \emph{self-adjusting} (similar to self-managing), \emph{self-aware/self-monitoring, self-immune, self-containing}~\cite{Berns09DissectingSelf-*}.


\section{Structure of the document}
\label {sec: Intro-DocStructure}
The next three chapters are self-contained presentations of the three algorithms with an occasional reference to the Introduction. Chapter \ref{chapter: DASH} presents $\DASH$, chapter \ref{chapter: FT} describes $\FTree$,  chapter \ref{chapter: FG} presents $\FGraph$. Chapter \ref{chapter: FD}  sketches some open problems and possible directions. For chapter~\ref{chapter: DASH} of this dissertation, we gratefully acknowledge the help of Iching Boman,  Dr. Deepak Kapur and his class \emph {Introduction to Proofs, Logic and Term-rewriting}, and the UNM Computer Science Theory Seminar.  





\chapter{DASH}
\label{chapter: DASH}

\begin{epigraphs}
\qitem{\epitext{But he said what mattered most of all was the dash between those years}}%
{\episource{The Dash Poem}\\ \epiauthor{Linda Ellis} }
\end{epigraphs}

This chapter presents the first of our self-healing algorithms called $\DASH$ (short for \emph{Degree Assisted Self-Healing}, which first appeared at \emph{IEEE International Parallel \& Distributed Processing Symposium 2008}~\cite{SaiaTrehanIPDPS08} To recap, we consider the problem of self-healing in networks that are \emph{reconfigurable} in the sense that they can change their topology during an attack.  Our goal is to maintain connectivity in these networks, even in the presence of repeated adversarial node deletion, by carefully adding edges after each attack.  We present a new algorithm, $\DASH$ which provably ensures that: 1) the network stays connected even if an adversary deletes up to all nodes in the network; and 2)  no node ever increases its degree by more than $2 \log n$, where $n$ is the number of nodes initially in the network.  $\DASH$ is fully distributed; adds new edges only among neighbors of deleted nodes; and has average latency and bandwidth costs that are at most logarithmic in $n$.   $\DASH$ has these properties irrespective of the topology of the initial network, and is thus orthogonal and complementary to traditional topology-based approaches to defending against attack. The detailed model used in $\DASH$ and its relation to the general model we described in  Section~\ref{sec: Intro-self-healingModel} is given in Section~\ref{sec: DASH-Intro}. 

We also prove lower-bounds showing that $\DASH$ is asymptotically optimal in terms of minimizing maximum degree increase over multiple attacks.  Finally, we present empirical results on power-law graphs that show that $\DASH$ performs well in practice, and that it significantly outperforms naive algorithms in reducing maximum degree increase.

\section{Introduction}
\label{sec: DASH-Intro}


Earlier in Chapter~\ref{chapter: Intro}, we have made a case for better ``self-healing mechanisms''  and of the need for using \emph{responsive} approaches for maintaining robust networks. There are many desirable invariants to maintain in the face of an attack.  Here we focus only on the simplest and most fundamental invariants: maintaining network connectivity and ensuring low node degree increase.

\medskip
\noindent
{\bf Our Model:} 
\label{DASH-Model}
We now describe our model of attack and network response.  We assume that the network is initially a connected graph over $n$ nodes.  We assume that every node knows not only its neighbors in the network but also the neighbors of its neighbors i.e. neighbor-of-neighbor (NoN) information.  In particular, for all nodes $x$,$y$ and $z$
such that $x$ is a neighbor of $y$ and $y$ is a neighbor of $z$, $x$ knows $z$.  There are many ways that such
information can be efficiently maintained, see e.g.~\cite{MNW,  Naor04knowthy}.

We assume that there is an adversary that is attacking the network.  This adversary knows the network topology and our algorithm, and it has the ability to delete carefully selected nodes from the network.  However, we assume the adversary is constrained in that in any time step it can only delete a small number of nodes from the network\footnote{Throughout this chapter, for ease of exposition, we will assume that the adversary deletes only one node from the network before the algorithm responds.  However, our main algorithm, $\DASH$, can easily handle the situation where any number of nodes are removed, so long as the neighbor-of-neighbor graph remains connected.}.  We further assume that after the adversary deletes some node $x$ from the network, that the neighbors of $x$ become aware of this deletion and that they have a small amount of time to react.


When a node $x$ is deleted, we allow the neighbors of $x$ to react to this deletion by adding some set of edges amongst themselves.  We assume that these edges can only be between nodes which were previously neighbors of $x$.  This is to ensure that, as much as possible, edges are added which respect locality information in the underlying network.  We assume that there is very limited time to react to deletion of $x$ before the adversary deletes another node.  Thus, the algorithm for deciding which edges to add between the neighbors of $x$ must be fast and localized.

This model can be seen as a special case of our general model (Section~\ref{sec: Intro-self-healingModel}). We do not explicitly discuss node insertions in our further treatment but assume we begin with a connected graph of $n$ vertices.  $\DASH$ can easily handle insertions in a natural way, and thus, as long as the number of insertions are $O(n)$, our bounds hold. Also, for the same reason, for our bounds, we need only compare our graph properties in the present graph at timestep $t$ ($G_t$), to the initial graph $G_0$ which has $n$ vertices (notice $n$ is the maximum number of nodes the network will have in this model). 

\noindent {\bf Our Results:}  We introduce an algorithm for self-healing of reconfigurable networks, called  $\DASH$
(an acronym for \emph{Degree Assisted Self-Healing}). $\DASH$ is \emph{locality-aware} in that it uses only the
neighbors of the deleted node for reconnection.  We prove that $\DASH$ maintains connectivity in the network, and
that it increases the degree of any node by no more than $O(log n)$.  During reconnection of nodes, our algorithm uses
only local information, therefore, it is scalable and can be implemented in a completely distributed manner. Algorithm
 $\DASH$ is described as Algorithm \ref{algo: dash} in Section~\ref{sec: dash}.  The main characteristics of $\DASH$ are summarized in the following theorem that is proved in Section~\ref{sec: dash}.

\begin{theorem} 
$\DASH$ guarantees the following properties even if up to all the nodes in the network are deleted:
\begin{itemize}
\item The degree of any vertex is increased by at most $2 \log n $.
\item The number of messages any node of initial degree $d$ sends out and receives is no more than $2 (d + 2 \log n) \ln
n $
\emph{with high
probability}\footnote{Throughout this text, we use the phrase with high probability (w.h.p) to mean with probability at
least $1-1/n^{C}$ for any fixed constant $C$.} over all node deletions.
\item The latency to reconnect is $O(1)$ after attack; and the amortized latency to update the state of the network over $\theta(n)$ deletions 
 is $O(\log n)$ with high probability. 
\end{itemize}
\end{theorem}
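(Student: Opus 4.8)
The plan is to maintain, as an invariant preserved by every healing step, a precise description of the edges $\DASH$ adds, and then to read all three bounds off that invariant. The invariant is that the healing edges present in $G_t$ form a forest of balanced binary \emph{reconstruction trees}: when the adversary deletes a node $x$, the neighbors $x$ had at that moment are reconnected into a binary tree $T_x$ (their former copies as leaves, with the internal roles also filled by those same neighbors, each neighbor filling at most one internal role), and when such a neighbor is itself deleted later its tree is spliced into the new one, so the trees nest. The quantitative heart of the invariant is balance: a reconstruction tree whose leaves correspond to $s$ distinct original nodes has height at most $\lceil \log_2 s\rceil$, hence at most $\log n$ since $s\le n$. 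I would prove this by induction on $t$, the only delicate case being the merge performed on a deletion: $\delta=\Degree(x,G_{t-1})$ already-balanced trees get combined, and I would invoke the algorithm's balancing rule (attaching the subtrees to a fresh balanced skeleton on $\delta$ leaves, i.e.\ merging smaller trees into larger) to bound the new height by the maximum of the old heights plus $\lceil\log_2\delta\rceil$, which the size bound keeps below $\log n$. Connectivity of $G_t$ is then an immediate corollary --- any path through a deleted $x$ reroutes through $T_x$ --- although connectivity is not among the three bullets to be proved.

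For the degree bound I would track, for each node $v$, the sizes of the nested reconstruction trees in which $v$ has played an internal role. The claim to prove inductively is that $\Degree(v,G_t)-\Degree(v,G_0)$ is at most twice the number of distinct such ``levels'' $v$ has been promoted through, together with the fact that each promotion at least doubles the number of original nodes in the subtree $v$ is responsible for; since that number never exceeds $n$, there are at most $\log n$ promotions, and each contributes at most two new healing edges to $v$ (an internal node of a binary tree has degree $3$, of which at least one incident edge is charged to a lower level or replaces the lost edge to the deleted neighbor). This yields $\Degree(v,G_t)\le\Degree(v,G_0)+2\log n$.

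The cost bounds follow from the tree geometry plus the algorithm's randomness. A deletion of $x$ reconnects by having each of the $\delta$ former neighbors add a constant number of edges that it can name immediately from its neighbor-of-neighbor information, giving $O(1)$ reconnection latency; the subsequent bookkeeping pass only walks up and down the reconstruction trees touched by the splice, so it costs $O(\log n)$ time and $O(\log n)$-size messages per affected node. To bound the messages a node $v$ of initial degree $d$ handles over the whole run, I would observe that $v$ is charged work only when a current neighbor of $v$ is deleted and that $v$ always has at most $d+2\log n$ neighbors; a Chernoff bound over the algorithm's independent random choices of anchors --- in the style of a balls-in-bins estimate --- then shows that with high probability the number of deletions in which $v$ is actually charged is within an $O(\log n)$ factor of $d+2\log n$, which multiplied by the $O(\log n)$ per-involvement cost gives the stated $2(d+2\log n)\ln n$. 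Aggregating the same estimate over all nodes bounds the total bookkeeping work over $\Theta(n)$ deletions by $O(n\log n)$ with high probability, hence $O(\log n)$ amortized latency.

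I expect the degree bound to be the main obstacle: one must pin down the invariant so that it is genuinely preserved by the exact merge/splice the algorithm performs and is simultaneously tight enough to give the clean constant $2$ --- in particular, verifying that $v$ never gains more than two healing edges per level and never occupies more than $\log n$ levels even though the adversary may delete $v$'s neighbors in any order and arbitrarily often. The amortized and high-probability cost claims are secondary, but they hinge on correctly locating where $\DASH$'s randomness enters and showing it decouples the adversary's deletion pattern from any single node's workload.
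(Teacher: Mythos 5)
Your degree-bound argument contains the main gap. It rests on two claims you never establish and that, as stated, do not hold for $\DASH$: (i) that each ``promotion'' of $v$ at least doubles the number of original nodes in the subtree $v$ is responsible for, and (ii) that this count is bounded by $n$ and never decreases, so there are at most $\log n$ promotions. First, your picture of the healing step (neighbors as leaves with internal helper roles, ``merging smaller trees into larger'') is not $\DASH$ --- that is the ForgivingTree construction. $\DASH$ reconnects the set $UN(v,G)\cup N(v,G_h)$ into a single complete binary tree ordered by the nodes' accumulated degree increase $\delta$, so that the largest-$\delta$ nodes become leaves; this $\delta$-ordering rule is precisely what makes any doubling argument go through, since a node $v$ only gains degree when every node attached to it in the new tree has $\delta$ at least $\delta(v)$ and hence (by induction) hangs a component of weight at least $2^{\delta(v)/2}$ onto $v$'s tree. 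You never invoke this rule. Second, your size measure (original nodes currently below $v$) is not monotone: the adversary can delete inside $v$'s subtree and then force $v$ to be promoted again, so ``at most $\log n$ promotions'' does not follow. The paper avoids this by giving each vertex weight $1$, having a deleted vertex bequeath its weight to a surviving neighbor in $G_h$ (so component weight never decreases), and working with the potential $\rem(v)$ (total tree weight minus the heaviest branch), proving it is non-decreasing, at most $n$, and at least $2^{\delta(v)/2}$; the $2\log n$ bound is then immediate. Without a conserved weight and the $\delta$-ordering, your invariant is not preserved. (Your auxiliary claim that the whole healing forest stays balanced with height $\le\lceil\log s\rceil$ is also false over time, though it is not needed for any of the three bullets.)

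The cost bounds have a second, independent gap: you locate the randomness in ``independent random choices of anchors,'' but $\DASH$ has no random anchors; its only randomness is the random ID in $[0,1]$ assigned to each node, used to track connected components of $G_h$. The messages to be counted are the propagations of the minimum component ID after each reconnection, and the high-probability bound comes from the record-breaking problem: a given node's ID decreases at most about $2\ln n$ times w.h.p., and each change is sent to at most $d+2\log n$ neighbors (using the degree bound), giving the stated $2(d+2\log n)\ln n$; summing the $O(\log n)$ w.h.p.\ ID changes over all nodes gives $O(n\log n)$ total transmissions, hence $O(\log n)$ amortized latency over $\Theta(n)$ deletions. Your Chernoff-over-anchors sketch neither identifies these messages nor can it be carried out, since the deletion order is adversarial and the only independence available is in the initial IDs. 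The $O(1)$ reconnection-latency bullet is fine, for essentially the paper's reason: healing involves only one-hop communication among the deleted node's neighbors.
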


\noindent
We also prove (in Section~\ref{sec: lower}) the following lower bound that shows that  $\DASH$ is
asymptotically optimal.

\begin{theorem}
 Consider any locality-aware algorithm that increases the degree of any node after an attack by at most a
fixed constant.  Then there exists a graph and a strategy of deletions on that graph that will force the algorithm to
increase the degree of some node  by at least $\log n$.
\end{theorem}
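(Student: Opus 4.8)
The plan is to build, for every $k$, a graph $\Gamma_k$ on $n=(3^{k+1}-1)/2$ vertices together with an \emph{adaptive} deletion strategy that forces some surviving vertex to have degree $k=\Omega(\log n)$ larger than its original degree at the moment the adversary chooses to stop. Three features of the setting are used throughout: the algorithm is locality-aware, so after deleting $x$ it may only add edges between former neighbours of $x$; any self-healing algorithm must keep the graph connected (otherwise the statement is vacuous, since doing nothing never raises a degree); and since the success metric is evaluated at the adversary-chosen time $T$, it is enough to \emph{reach} a violating state — the algorithm's freedom to drop edges in \emph{later} rounds cannot undo damage already inflicted. The hypothesis that one recovery round raises any degree by at most a fixed constant is only a restriction on the algorithm, so keeping it in force can only help the construction.

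First I would isolate the local mechanism. Suppose the adversary deletes a vertex $u$ whose removal splits the current graph into exactly three components $C_a\ni a$, $C_b\ni b$, $C_c\ni c$, where $a,b,c$ are the only neighbours of $u$, one per component. Restoring connectivity using only edges among $\{a,b,c\}$ forces the algorithm to add at least two of the edges $ab,ac,bc$; whichever two are chosen (and also if all three are), some vertex of $\{a,b,c\}$ acquires two new incident edges while having lost only its edge to $u$, so its degree strictly increases. The algorithm chooses which vertex this ``pivot'' is, but it cannot avoid creating one. The variant I actually need is the degree-$4$ one: if in addition $u$ has a neighbour $p$ in a fourth component, the algorithm must install a spanning tree on $\{p,v_1,v_2,v_3\}$ ($v_i$ the neighbour in the $i$-th small component), and a short case check over the possible trees (and over any superfluous extra edges) shows that some vertex of $\{p,v_1,v_2,v_3\}$ gains degree and, afterwards, is adjacent to $p$ — it is either some $v_i$ that the reconnection joins to $p$, or $p$ itself, and the algorithm's best response (a balanced, path-shaped reconnection) still forces this, all other options only helping the adversary.

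The gadget $\Gamma_k$ is the complete ternary tree of height $k$: leaves have degree $1$, internal non-root vertices degree $4$, the root degree $3$. The adversary deletes internal vertices in post-order (a subtree is fully processed before its root is deleted) and never touches a leaf. I would prove by induction the invariant: after processing a height-$j$ subtree hanging off a parent vertex $p$, some surviving vertex of that subtree $\cup\{p\}$ has degree exceeding its original degree by $j$, is adjacent to $p$, and (if it lies in the subtree) is the only surviving subtree-vertex still joined to $p$. The base case $j=1$ is a claw with centre $u$ and leaves $a,b,d$ attached to $p$: deleting $u$ forces a spanning tree on $\{p,a,b,d\}$, and the path reconnection promotes one vertex by one and leaves it adjacent to $p$. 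For the inductive step, first process the three height-$(j-1)$ subtrees hanging off the subtree root $r$, obtaining $v_1,v_2,v_3$ adjacent to $r$, each already at degree-increase $j-1$; by locality no edge is ever added between two subtrees, so $r$ is still a cut vertex with these three pieces pendant off it. Now delete $r$ (degree $4$, neighbours $p,v_1,v_2,v_3$): by the local mechanism the algorithm is forced to promote one of these four, that vertex ends adjacent to $p$, and since $v_1,v_2,v_3$ were \emph{all} already at $j-1$ it necessarily reaches $j$. Running the argument at the top level (no $p$, root of degree $3$, hence a forced three-way pivot) yields a surviving vertex with degree-increase $k$, and $n=(3^{k+1}-1)/2$ gives $k=\log_3(2n+1)-1=\Omega(\log n)$ — the claimed bound up to the base of the logarithm, which can be tightened by optimizing the gadget.

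The step I expect to be the main obstacle is precisely that the \emph{algorithm}, not the adversary, picks the pivot in each forced reconnection. The recursion is engineered to make that choice irrelevant: whenever the adversary performs a promoting deletion, every candidate it might be forced to promote has already been driven to one short of the current target, so the adversary is indifferent to the algorithm's decision. This is exactly why the branching factor must be $3$ — the number of distinct candidates in a forced reconnection of a degree-$3$ cut vertex — and hence why $n$ has to grow exponentially in the guaranteed degree increase, matching $\DASH$'s $2\log n$ upper bound up to constants. The remaining work is bookkeeping: checking that ``the promoted vertex is adjacent to $p$'' survives every reconnection the algorithm might choose, and that locality keeps the subtrees mutually disconnected except through the current subtree-root, so that each scheduled deletion is genuinely the deletion of a cut vertex with the claimed three- or four-component structure.
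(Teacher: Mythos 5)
Your overall template — a tree gadget, bottom-up deletions, a ``forced promotion'' lemma for the deletion of a cut vertex, and an induction carrying the increase up one level per deleted ancestor — is the same as the paper's \textsc{LevelAttack} argument. But your construction has a genuine gap, and it sits exactly at the step you flag as the main obstacle. Deleting a degree-$4$ node of a tree forces a \emph{total} degree increase of only $d-2=2$ among its neighbours (this is the paper's Lemma~\ref{lemma: d+2}). An algorithm whose per-round cap is any constant $M\ge 2$ can therefore place the entire forced increase on the parent $p$ in a single round, e.g.\ by connecting $v_1,v_2,v_3$ directly to $p$ (a star at $p$): $p$ gains $+2$, while $v_1,v_2,v_3$ gain nothing. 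So your claim that ``every candidate it might be forced to promote has already been driven to one short of the current target'' is false for the parent candidate, and the star at $p$ is not an option that ``only helps the adversary'' — it is the algorithm's escape. Worse, in your post-order schedule the vertex playing the role of $p$ inside the recursion (the subtree root $r$) is precisely the next vertex deleted, so any increase dumped there is erased. Consequently the ternary gadget proves the theorem only for per-round cap $M\le 1$, not for an arbitrary fixed constant. This is exactly where the paper uses the degree-bounded hypothesis you explicitly set aside: it takes an $(M+2)$-ary tree, so the deleted node has $M+3$ neighbours, the forced total is $M+1$, and no single node — in particular not the parent — can absorb it in one round (Lemma~\ref{lemma: M+3up1}); hence at least one of the $M+2$ prepared children, each already at increase $D-(i+1)$, must be promoted. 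The arity of the gadget must scale with $M$; the paper's weaker $\log\log n$ bound for algorithms with no per-round cap (Theorem~\ref{thm: loglognbound}) is further evidence that the star-at-parent move cannot be neutralized by a fixed-arity tree alone.

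A second, related gap: your induction needs the deleted node $r$ to have exactly the four neighbours $p,v_1,v_2,v_3$ with each $v_i$ the unique surviving representative of its subtree, but the algorithm can defeat this bookkeeping with the same move, attaching \emph{several} low-increase subtree vertices directly to the future deletion target during earlier healings; when $r$ is then deleted, the forced increase has cheap recipients among these extra neighbours and need never touch a prepared $v_i$. The paper handles this with the \emph{Prune} step of \textsc{LevelAttack}: before deleting $v$ it removes all but the $M+2$ highest-increase children, together with their subtrees, by repeated leaf deletions — and a leaf deletion forces no reconnection and hence no degree increase — so the deleted node is guaranteed to face only high-increase children plus its parent. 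Your proposal has no analogue of this trimming, so the ``degree-$4$ cut vertex with the claimed four-component structure'' premise of your local mechanism is not actually maintained against an adversarial (from your point of view) healing algorithm.
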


We also present empirical results (in Section~\ref{sec: empirical}) showing that $\DASH$ performs well in practice and that
it significantly outperforms naive algorithms in terms of reducing the maximum degree increase. Finally (in Section~\ref{sec: empirical}) we describe $\SDASH$, a heuristic based on $\DASH$ that we show empirically both keeps node degrees small and also keeps shortest paths between nodes short.

In this chapter, we build on earlier work done in \cite{IchingThesis,BomanSAS06}, which proposed a simple line algorithm for self-healing to maintain network connectivity.

\medskip
\noindent
{\bf Table of Contents:}  
The rest of this chapter is organized as follows.  Section~\ref{sec: dash} describes the algorithm $\DASH$, and its
theoretical properties.  Section~\ref{sec: lower} gives a lower bound on locality-aware algorithms.  Section~\ref{sec: empirical} gives empirical results for $\DASH$, and several other simple
algorithms on random power-law networks. It also describes and gives results for $\SDASH$. We conclude and give areas for future work in Section~\ref{sec: conclusions}. 


\section{$\DASH$: An Algorithm for Self-Healing}
\label{sec: dash}
\setcounter{theorem}{0}

In this Section, we describe $\DASH$ and prove certain properties about it. In brief, when a deletion occurs, $\DASH$ asks the neighbors of the deleted node to reconnect themselves into a certain kind of complete binary tree. Then messages are propagated so that the nodes can keep track of which connected component they belong to. 

Let the actual network at a particular time step be $G(V,E)$. Let $E_h$ be the edges (i.e. \emph{healing edges}), that have been added by the algorithm up to that time step (note $E_h\subseteq E$). 
 Let $G_h=(V, E_h)$. We show that $G_h$ is a forest in Lemma \ref{lemma: forest}.

\subsection{$\DASH$: Degree Assisted Self-Healing}
\label{subsec: dash}

 As the acronym suggests, $\DASH$ employs information of previous degree increase to control further degree increase for a node. When a deletion occurs, we assume the neighbors of the deleted node are able to detect the deletion. Then they employ $\DASH$ to heal. To maintain connectivity, $\DASH$ connects the neighbors of a deleted node as a binary tree. The tree is structured so that the vertices which have incurred the maximum degree increase previously get to be leaves and thus not increase their degree in this round. Notice that at least half the vertices in a binary tree are leaves. The nodes maintain information about the virtual network and their connected component in this network. The algorithm tries to use only a single node from each component during reconnection and thus adds only a low number of new edges during healing.

 To describe $\DASH$ we give some definitions. Let $N(v,G)$ be the neighbors of vertex $v$ in the graph $G$ representing
the real network.  Let $N(v,G_h)$ be the neighbors of vertex $v$ in graph $G_h$ consisting of the edges added by the
healing algorithm. Let $\delta(v)$ be the degree increase of the vertex $v$ compared to its initial degree. Note that
this is not the same as the degree of  $v$ in $G_h$. 

When a node $v$ is deleted, 
partition  on the basis of their $ID$ all the neighbors of $v$ in $G$ (not having the same $ID$ as $v$). Let $UN(v,G)$
(\emph{Unique Neighbors})
be  the  set having one representative from each of the partitions. If there is  more than one node as a  possible
representative from a partition, we include the one with the lowest initial $ID$.

 Note that $UN(v,G) \cap N(v,G_h) = \phi$ and $UN(v,G) \cup N(v,G_h) \subseteq N(v,G)$ . The $ID$
of a node allows us to keep track of which connected component in $G_h$ it belongs to.  The lowest $ID$ of any node in
that component is broadcast and  all the nodes in the component take on this $ID$. 

\begin{algorithm}[h!]
\caption{\textbf{DASH:} Degree-Based Self-Healing}
\label{algo: dash}
\begin{algorithmic}[1]
\STATE \emph{Init:} for given network $G(V,E)$, Initialize each vertex with a random number $ID$ between [0,1] selected
uniformly at random. 
\WHILE {true}
\STATE \emph{If a vertex $v$ is deleted, do}
\STATE Nodes in $UN(v,G) \cup N(v,G_h)$ are reconnected into a \emph{complete binary tree}. To connect the tree, go left
to right, top down, mapping nodes to the \emph{complete binary tree} in increasing order of $\delta$ value.

\STATE Let $MINID$ be the minimum $ID$ of any node in $UN(v,G) \cup N(v,G_h)$.
 Propagate $MINID$ to all the nodes in the tree of $UN(v,G) \cup N(v,G_h)$ in $G_h$. All these nodes now set their $ID$ to
$MINID$.
\ENDWHILE
\end{algorithmic}
\end{algorithm}

Our main results about $\DASH$ are stated in Theorem~\ref{theorem: DASH}.

\begin{theorem} 
$\DASH$ is a distributed algorithm with the following properties:
\label{theorem: DASH}
\begin{itemize}
\item The degree of any vertex is increased by at most $2 \log n $.
\item The latency to reconnect is $O(1)$.
\item The number of messages any node of degree $d$ sends out and receives is no more than $(2d + 2 \log n) \ln n $ \emph{with high probability} over all node deletions.
\item The amortized latency for  $ID$ propagation is $O(log n)$ \emph{with high probability} over all node deletions. 
\end{itemize}
\end{theorem}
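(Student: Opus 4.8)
The four claims are of three rather different flavours, so I would prove them separately, using throughout that $G_h$ is always a forest (Lemma~\ref{lemma: forest}). Write $x$ for the deleted node, $S = UN(x,G)\cup N(x,G_h)$ for the set of participants, $k=|S|$, and $T$ for the complete binary tree built on $S$. \emph{Constant reconnection latency} is the easy one: when $x$ is deleted every former $G$-neighbour of $x$ is notified, and by the neighbour-of-neighbour assumption each of them already knows the identities of all the other former neighbours of $x$; hence in one round they exchange their $\ID$s, their current $\delta$-values, and the bit ``am I joined to $x$ in $G_h$'', and from this each participant computes, locally and identically, the set $S$ and the embedding of $S$ into $T$ (which is a deterministic function of the multiset of $\delta$-values together with $\ID$-tiebreaks). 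Each node then adds its at most three incident $T$-edges. This is $O(1)$ rounds.

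\emph{Degree increase $\le 2\log n$} is the heart of the argument, and I would prove it by an inductive invariant over the deletion steps. Two structural facts drive it. First, in a complete binary tree the internal nodes are exactly the first $\lfloor k/2\rfloor$ vertices in breadth-first order, so, since the embedding lists nodes in increasing order of $\delta$, the $\lfloor k/2\rfloor$ participants of \emph{smallest} $\delta$ become the internal nodes and the $\lceil k/2\rceil$ of largest $\delta$ become leaves. Second, a participant $w$ is adjacent to $x$ in $G$, so it loses exactly one edge (to $x$) when $x$ is deleted and gains at most $d_T(w)$ edges; hence $\delta(w)$ changes by at most $d_T(w)-1$, which is $0$ for a leaf, $\le 1$ for the root, and $\le 2$ for any other internal node. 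In particular only internal nodes see any increase, and only by at most $2$, and an internal node's old $\delta$ lay in the smaller half of $\{\delta(u):u\in S\}$. I would package this into an invariant of the form ``$2^{\delta(v)/2}\le g(v)$'', where $g(v)$ is a monotone size-measure of the part of $G_h$ ``supporting'' $v$ (morally, the number of nodes ever merged into $v$'s $G_h$-component), so that $g(v)\le n$ forces $\delta(v)\le 2\log n$; the inductive step uses that the reconstruction fuses the components of \emph{all} nodes of $S$ into one tree, so a node $w$ whose $\delta$ rises — being in the smaller half, it is joined to at least $\lceil k/2\rceil$ nodes each carrying $\delta\ge\delta(w)$ and hence a correspondingly large support — has its support multiplied enough to absorb the increase of at most $2$. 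The \emph{main obstacle} is precisely this step: deleting $x$ \emph{shatters} $x$'s old $G_h$-component before the reconstruction rebuilds it, so one must check that the reconstruction always re-merges ``enough'' support, and separately handle the corner case $UN(x,G)=\emptyset$, in which the new component is just the old one with $x$ removed. Getting the exact form of $g$ and of the invariant — strong enough to push through yet clean enough to yield $2\log n$ — is where I expect the real work to be; the rest is bookkeeping with the two facts above.

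\emph{Message count and amortized $\ID$-propagation latency} follow by combining the degree bound with a record-minimum analysis of the random $\ID$s. Since $\delta(v)\le 2\log n$ always, $v$ is ever incident to at most $d+2\log n$ edges, so every healing event in which $v$ takes part costs $v$ only $O(d+\log n)$ messages (tree reconnection, plus forwarding MINID along its at most $d+2\log n$ tree edges). To bound the number of such events: $v$'s $\ID$ is always the least initial $\ID$ in its current $G_h$-component, a non-increasing quantity that strictly drops only when $v$'s component merges with one holding a smaller initial $\ID$; as the initial $\ID$s are i.i.d.\ uniform, over the at most $n$ merges $v$ ever undergoes the number of such strict drops is $O(\log n)$ with high probability (the standard tail bound on left-to-right minima of a random permutation, plus a union bound over the $n$ nodes). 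Since a MINID broadcast need only descend into nodes whose $\ID$ actually changes, $v$ is actively touched $O(\log n)$ times w.h.p., giving the bound $O((d+2\log n)\log n)$ on messages sent and received by $v$. Finally, the latency of the propagation triggered by one deletion is at most the number of nodes whose $\ID$ changes in that round, so summing over $\theta(n)$ deletions the total is at most $\sum_v(\#\,\ID\text{-changes of }v)=O(n\log n)$ w.h.p., i.e.\ $O(\log n)$ amortized per deletion.
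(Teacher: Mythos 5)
Your handling of parts two through four is sound and essentially the paper's own: the reconnection latency is immediate from locality, and the message and amortized-latency bounds reduce, exactly as in the paper, to a record-breaking argument --- a node's component $\ID$ is a running minimum of i.i.d.\ uniform labels over at most $n$ merges, so it strictly drops only $O(\log n)$ times w.h.p., and each drop costs the node at most its current degree $d+2\log n$ in messages. The gap is in the degree bound, which is the heart of the theorem and which you explicitly leave open (``getting the exact form of $g$ \ldots is where I expect the real work to be''). Moreover, the candidate you gesture at --- a monotone measure of the mass ever merged into $v$'s $G_h$-component --- cannot be made to work. Consider an adversary that always deletes a node $x$ all of whose $G$-neighbors lie in $x$'s own $G_h$-tree: those neighbors then carry $x$'s component $\ID$, so $UN(x,G)=\emptyset$, the participants are exactly $N(x,G_h)$, and the healing merely re-wires a single component whose total weight is conserved (the deleted weight is bequeathed to a surviving $G_h$-neighbor). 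Yet internal nodes of the reconstruction tree still gain up to $2$ in degree in every such round, and the adversary can repeat this $\Theta(n)$ times inside one large component, so any invariant of the form $2^{\delta(v)/2}\le g(v)$ with $g$ bounded by component mass breaks after $O(\log n)$ rounds.

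The paper's potential sidesteps this by being a per-node, branch-excluding quantity rather than a per-component one: $\rem(v)=\sum_{u\in N(v,G_h)}\W(\T(u,v))-\max_{u\in N(v,G_h)}\W(\T(u,v))+w(v)$, the weight hanging off $v$ in its $G_h$-tree with the single heaviest branch discarded. This can grow even when the component does not: when $v$ becomes internal in the $\RT$, its new $G_h$-children have $\delta$ at least $\delta(v)$ (they sit lower in the $\delta$-ordered tree), hence by the inductive hypothesis together with Lemmas~\ref{lemma: WgreatRem} and~\ref{lemma: nondecreasing} each contributes a branch of weight at least $2^{\delta_{t-1}(v)/2}$; since $\rem$ excludes only one branch, it gains at least a factor $2$ when $\delta(v)$ rises by $2$ (and a factor $\sqrt{2}$ in the root and one-child cases), which is precisely the leaf/root/internal case analysis of Lemma~\ref{lemma: deglimit}. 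One also needs the separate monotonicity statement (Lemma~\ref{lemma: nondecreasing}) that $\rem(v)$ never decreases under deletions in which $v$ survives, which is where the ``shattering then re-merging'' worry you raise is actually discharged. Combined with weight conservation, which gives $\rem(v)\le n$ at all times, this yields $\delta(v)\le 2\log n$. So the structural facts you list (the smaller-$\delta$ half becomes internal, at most $+2$ per round) are correct but insufficient; to complete the argument you would have to replace your component-mass $g$ by a ``total minus heaviest branch at $v$'' quantity of this kind and prove its monotonicity and its lower bound on neighboring branch weights.
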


\subsection{Towards the proof of Theorem~\ref{theorem: DASH}}

For analysis, we use the following definitions:  
\def \max{\mathop{\rm max}\limits}%
\begin{itemize}
   \item Let $T(x,y)$ be the tree in $G_h-y$ that contains $x$.     
\item Each vertex $v$ will have a weight, $w(v)$. The weight of a vertex will start at 1 and may increase during the
algorithm. If $v$ is deleted, $w(v)$ is added to an arbitrarily chosen neighbor in $G_h$.
     \item Let $W(S) = \sum\limits_{v \in V} w(v)$, for a graph $S(V,E)$ i.e. the sum of the weights of all vertices in $S$.
   \item For vertex $v$, let $\rem(v)$ =
 \[
\sum_{u \in N(v,G_h)}\!\!\!\!\!\! W(T(u,v))\, - 
  \max_{u \in N(v,G_h)}\!\!\!\!\!\!( W(T(u,v)))\, + w(v).
\]
             We will show that as the degree of a vertex increases in our algorithm, so will the  $\rem$ value of that vertex. 
           Intuitively $\rem(v)$ is large when removing $v$ from its tree in $G_h$ gives rise to many connected components with large weight.

\end{itemize}

\begin{lemma}
\label{lemma: forest} The edges added by the algorithm, $E_h$, form a forest.
\end{lemma}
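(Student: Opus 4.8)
The plan is to argue by induction on the number of healing rounds that the graph $G_h = (V, E_h)$ is always a forest, i.e.\ acyclic. The base case is trivial: before any deletion, $E_h = \emptyset$, which is vacuously a forest. For the inductive step, suppose $G_h$ is a forest just before the adversary deletes some node $v$. Deleting $v$ and all edges incident to it in $G_h$ can only destroy cycles, never create them, so the graph $G_h - v$ is still a forest; moreover each former $G_h$-neighbor $u$ of $v$ now lies in its own tree $T(u,v)$ of this forest. The only thing that could introduce a cycle is the batch of healing edges $\DASH$ adds in this round, namely the edges of the complete binary tree built on the vertex set $S := UN(v,G) \cup N(v,G_h)$.

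The key observation is that the newly added edges connect vertices lying in \emph{distinct} trees of the forest $G_h - v$. Concretely, $N(v,G_h)$ contributes one vertex from each of the trees $T(u,v)$ (these are pairwise distinct trees by definition), and the vertices of $UN(v,G)$ come from components of $G_h$ that did not even touch $v$ (since $UN(v,G) \cap N(v,G_h) = \emptyset$), so they sit in yet other trees of $G_h - v$ --- here I would invoke the $ID$-partition bookkeeping: nodes in $UN(v,G)$ are chosen one per connected component of $G_h$, so no two of them, and none of them together with a $T(u,v)$, lie in a common tree. Contracting each tree of $G_h - v$ to a single super-node, the set $S$ maps injectively to a set of distinct super-nodes, and the healing edges form a complete binary tree on exactly those super-nodes. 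Adding the edges of a tree (the complete binary tree) across a collection of disjoint trees (the contracted components) yields again a tree on the union, hence no cycle is created: formally, $G_t$ restricted to $E_h$ is obtained from the forest $G_h - v$ by adding $|S| - 1$ edges that reduce the number of components by exactly $|S| - 1$, which forces acyclicity. Thus $G_h$ remains a forest after the round, completing the induction.

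The main obstacle --- and the step deserving the most care --- is justifying that the vertices of $S$ really do lie in pairwise distinct trees of $G_h - v$. This hinges on the correctness of the $ID$/component-tracking mechanism: one must be sure that the $MINID$ propagation of previous rounds has kept each $ID$-class equal to a connected component of $G_h$, so that $UN(v,G)$ never picks two representatives from the same tree and never picks a representative from a tree already represented in $N(v,G_h)$. I would state this component-invariant explicitly (``at all times, two vertices of $G_h$ share an $ID$ iff they are in the same tree of $G_h$'') and fold its maintenance into the same induction, since it is exactly what makes the disjointness argument go through; everything else is the routine fact that gluing trees along a tree-pattern yields a tree.
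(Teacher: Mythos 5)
Your proof is correct and follows essentially the same route as the paper's: induction on deletions, observing that the reconnected set $UN(v,G)\cup N(v,G_h)$ consists of one vertex per tree of the forest $G_h-v$, so adding a tree's worth of edges across distinct components cannot create a cycle. You are in fact somewhat more careful than the paper in flagging that the disjointness claim rests on the $ID$/component invariant maintained by the $MINID$ propagation, which is worth making explicit as you suggest.
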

\begin{proof}We prove this by induction on the number of nodes deleted.\\

\noindent\emph{Base Case:} Initially, $G_h$ is a forest because $E_h$ is empty.\\

\noindent We note that $E_h$ and $G_h$ change only when a deletion occurs. Consider the $i^{th}$ deletion and let $v$ be
the node deleted.\\
 Let $v$ belong to tree $T_v$ in $G_h$ just prior to the deletion of $v$.
Now, for all $ x, y  \in N(v,G_h)$\ x and y are not connected in $E_h$ since that would have implied the existence of a
cycle through $v$ contradicting the Inductive Hypothesis. Note also that for all $z \in UN(v,G), z \notin T_v$. Since we
select only 1 node from each tree $T_i$ in which $v$ had a neighbor, no pair of nodes in $UN(v,G) \cup N(v,G_h)$ are
connected in $G_h$. We reconnect all the nodes in $UN(v,G) \cup N(v,G_h)$ in a Binary Tree and propagate the minimum ID.
Since we are adding edges between nodes which previously were in separate connected components in $G_h$, no cycles are
introduced. Hence, $G_h$ remains a forest. 

\end{proof}

\pagebreak 

\begin{lemma}
\label{lemma: nondecreasing} For any vertex $v$, $rem(v)$ is non-decreasing over any vertex deletion where $v$ has not
been deleted.
\end{lemma}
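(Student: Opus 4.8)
The plan is to first recast $\rem(v)$ in a cleaner ``tree'' form and then follow it through a single deletion. Let $T_v$ denote the tree of $G_h$ containing $v$, and for $u\in N(v,G_h)$ recall that $T(u,v)$ is the component of $G_h-v$ containing $u$ --- call these the \emph{branches} of $v$. Since the branches of $v$ together with $v$ itself exhaust $T_v$, we have $\rem(v)=W(T_v)-h(v)$, where $h(v):=\max_{u\in N(v,G_h)}W\big(T(u,v)\big)$ is the weight of the heaviest branch. Also observe that a deletion never destroys weight: when $x$ is deleted, $w(x)$ is merely moved to a $G_h$-neighbour of $x$. Since $\rem(v)$ only changes when a deletion occurs, it suffices to fix a deletion of some node $x\neq v$ and show $\rem$ does not drop across it; write $\rem'$, $T_v'$, $h'$ for the post-deletion quantities. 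Let $\mathcal{T}:=UN(x,G)\cup N(x,G_h)$ be the set of nodes that $\DASH$ reconnects into a tree during this healing step. From (the proof of) Lemma~\ref{lemma: forest} I will use two facts: $G_h$ is always a forest, so $v$ never has two edges into one component and branch weights are well defined; and in $G_h-x$ the members of $\mathcal{T}$ lie in pairwise distinct components, so the healing step merges exactly these $|\mathcal{T}|$ components into a single tree and leaves every other component of $G_h-x$ untouched.

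The first case is $v\notin\mathcal{T}$. Then $v$ gains no healing edge and, since $v\notin N(x,G_h)$, loses no edge either, so $N(v,G_h)$ and $w(v)$ are unchanged. Each branch $T(u,v)$ is either untouched by the removal of $x$, or is shattered by it and then reglued, because all of $x$'s former $G_h$-neighbours lie in $\mathcal{T}$ and get reconnected; in every case the new branch contains the old one (possibly with extra merged-in components) together with $w(x)$ if it had been there, so its weight does not decrease. Now use the identity $\sum_i a_i-\max_i a_i=\min_j\sum_{i\neq j}a_i$, which is non-decreasing in each $a_i$ for a fixed index set, together with the unchanged $w(v)$, to conclude $\rem'(v)\ge\rem(v)$.

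The substantive case is $v\in\mathcal{T}$, so $v$ acquires up to three new branches from the reconstruction tree. Here I would split into $v\in UN(x,G)$ (then $x\notin T_v$, so removing $x$ leaves $T_v$ and $w(v)$ alone) and $v\in N(x,G_h)$ (then the edge $vx$ disappears and $v$'s branch through $x$ is shattered, but again reglued through $\mathcal{T}$). Using the two facts above and $UN(x,G)\cap T_x=\emptyset$, I would check in both sub-cases that: (a) the branches of $v$ surviving from $G_h$ keep their weights, since no member of $\mathcal{T}$ lies inside them; (b) $W(T_v')=W(T_v)+R$, where $R\ge 0$ is the total weight of the remaining components of $G_h-x$ that get attached to $v$'s component during healing; and (c) the new branches of $v$ together have weight at most $h(v)+R$ --- they consist only of $v$'s (rerouted) former branch through $x$, of weight at most $h(v)$, the newly merged-in components of total weight $R$, and possibly $w(x)$. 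From (a) and (c), $h'(v)\le\max\big(h(v),\,h(v)+R\big)=h(v)+R$, and hence with (b)
\[
\rem'(v)=W(T_v')-h'(v)\ \ge\ \big(W(T_v)+R\big)-\big(h(v)+R\big)=W(T_v)-h(v)=\rem(v).
\]

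I expect the bookkeeping in this last case to be the main obstacle: precisely describing $T_v'$ once $x$ has been removed (which may fragment one of $v$'s branches) and the reconstruction tree spliced in, verifying the weight accounting in (b), and tracking where the transferred weight $w(x)$ lands --- it always ends up inside $T_v'$, either on $v$ or on one of the new branches, which is what makes (b) and (c) consistent. The first case, by contrast, is essentially a one-line monotonicity argument once the forest structure and the ``distinct components'' fact are in hand.
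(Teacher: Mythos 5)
Your proposal is correct and follows essentially the same route as the paper's proof: both rewrite $\rem(v)=W(T_v)-\max_{u\in N(v,G_h)}W(T(u,v))$, invoke conservation of weight so $W(T_v)$ never decreases, and argue that any growth of the heaviest branch is matched by growth of $W(T_v)$. Your case analysis (whether or not $v$ lies in $UN(x,G)\cup N(x,G_h)$, with the accounting $W(T_v')=W(T_v)+R$ and $h'(v)\le h(v)+R$) is just a more careful elaboration of the step the paper compresses into the remark that the maximum subtree is contained in $T_v$, and the deferred bookkeeping you flag does go through using the facts you cite from Lemma~\ref{lemma: forest}.
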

\begin{proof}

By Lemma \ref{lemma: forest}, every vertex $v$ in $G_h$ belongs to some tree, which we will call $T_v$. For every $T_v$
in $G_h$,  $W(T_v)$ is the sum of the weights 
of all vertices in $T_v$.

By definition,
$\rem(v)$ =
 \[
\sum_{u \in N(v,G_h)}\!\!\!\!\!\! W(T(u,v))\, - 
  \max_{u \in N(v,G_h)}\!\!\!\!\!\!( W(T(u,v)))\, + w(v).
\]


Therefore,\\
 $rem(v) =  W(T_v) - \max_{u \in N(v,G_h)} W(T(u,v))$

Observe first that $W(T_v)$ cannot decrease even when there is a deletion in $T_v$ because the deleted vertex's weight
is not ``lost", but added to some member of $T_v$. 
 
 
 Since $W(T_v)$ cannot decrease, $rem(v)$ can only 
decrease if the maximum subtree weight increases more than $W(T_v)$. 
Since the maximum subtree is a subset of the tree, $T_v$, 
any increases or decreases in the maximum subtree is also counted in $W(T_v)$. Thus,  $rem(v)$ cannot decrease.

\end{proof}

\pagebreak 

\begin{lemma}
\label{lemma: WgreatRem}
 For any node $v$, for all nodes $q \in N(v,G_h)$ , $W(T(v,q)) \ge rem(v)$.
 \end{lemma}

\begin{proof}

\begin{figure}[h!]
\centering
\includegraphics[scale=0.7]{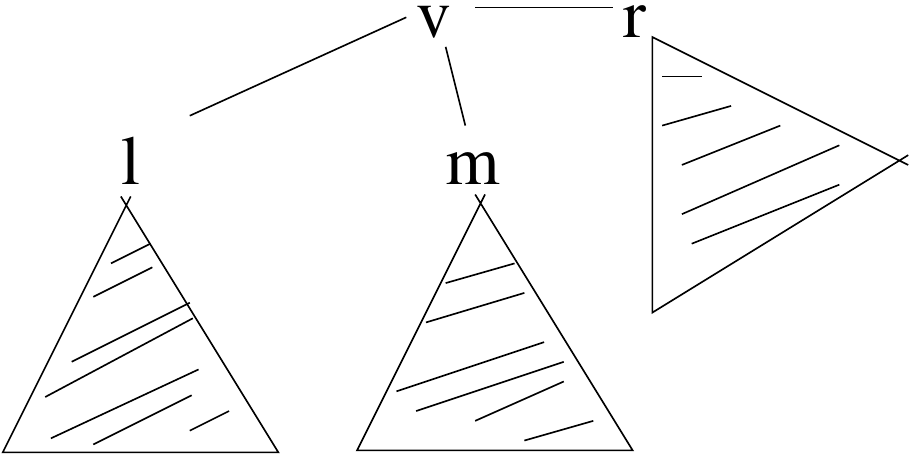}
\caption{  $W(T(v,m)) \ge rem(v)$.}
\label{fig: WgreatRem}
\end{figure}

\noindent For all nodes $q$,
\begin{eqnarray*}
   W(T(v,q)) & = &\sum\limits_{u \in N(v,G_h)\atop{u \neq q}} W(T(u,v)) + w(v)\\
    & \ge & \sum\limits_{u \in N(v,G_h)} W(T(u,v))\\
     & & - \max_{u \in N(v,G_h)} W(T(u,v)) + w(v)\\ 
    & = & rem(v)
 \end{eqnarray*}
   

For example, in figure \ref{fig: WgreatRem}, $W(T(V,M)) = W(T(L,V)) + W(T(R,V)) + w(v) \ge rem(v)$.
\end{proof}

\pagebreak 

\begin{lemma}

\label{lemma: deglimit}
  For any node v, $rem(v) \ge 2^{\delta(v)/2}$, where $\delta(v)$, as defined earlier, is the degree increase of the
vertex $v$ in $G$.
\end{lemma}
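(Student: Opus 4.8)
The plan is to prove the stronger statement, by induction on the number of deletions performed so far, that $rem(u)\ge 2^{\delta(u)/2}$ holds for \emph{every} surviving vertex $u$ at every point in time. The base case is trivial: $G_h$ starts edgeless, so $rem(u)=w(u)=1=2^{0}$. For the inductive step, let $v'$ be the vertex deleted in the current step, let $B$ be the complete binary tree into which $UN(v',G)\cup N(v',G_h)$ is reconnected, and fix a surviving vertex $v$. If $v$'s degree in $G$ is unchanged by this step — which covers $v\notin UN(v',G)\cup N(v',G_h)$ as well as the case where $v$ is placed in $B$ with $B$-degree at most $1$ — then $\delta(v)$ is unchanged and Lemma~\ref{lemma: nondecreasing} gives that $rem(v)$ does not decrease, so the invariant is preserved. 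The only substantive case is when $v$ is placed in $B$ with $B$-degree $2$ or $3$; write $\Delta\in\{1,2\}$ for the resulting increase in $\delta(v)$ and $d:=\delta^{\mathrm{old}}(v)$, so that we must show $rem(v)\ge 2^{(d+\Delta)/2}$ after the step.

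The heart of the argument is a decomposition of the post-step tree $T_v$ of $v$ in $G_h$ (a tree, by Lemma~\ref{lemma: forest}): using that the nodes of $B$ lay in distinct components of the pre-step $G_h$ (as in the proof of Lemma~\ref{lemma: forest}), the vertex set of $T_v$ is partitioned by the ``old pieces'' $P_y$, $y\in B$, where $P_y$ is the component of the pre-step $G_h$, after excising $v'$, that is attached at $y$. Hence, grouping the branches of $v$ inside $T_v$ as those towards its $B$-parent (if $v$ is not the root of $B$), towards each $B$-child, and towards each surviving old healing neighbour of $v$, we have $rem(v)=w(v)+(\text{sum of all branch weights})-(\text{largest branch weight})$. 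Writing $x:=2^{d/2}$, I would then record: (i) every node $y$ of $B$ at or after $v$ in the $\delta$-ordering has $\delta^{\mathrm{old}}(y)\ge d$, so by the inductive hypothesis and Lemma~\ref{lemma: WgreatRem} its old piece satisfies $W(P_y)\ge x$ — in particular every $B$-child-branch of $v$ has weight $\ge x$; (ii) every old-neighbour-branch of $v$ has weight $\ge rem^{\mathrm{old}}(v)\ge x$, again by Lemma~\ref{lemma: WgreatRem} and the inductive hypothesis; (iii) if $v$ is not the root of $B$, then having a $B$-child it has a $B$-sibling $s$, and the $B$-parent-branch of $v$ contains either $P_s$ (when $v$ is a left child, since then $s$ comes after $v$ in the ordering) or the old pieces of $s$'s two children (when $v$ is a right child, since then $s$ is internal and both its children come after $v$), so this branch also has weight $\ge x$; (iv) if $v$ has no surviving old healing neighbour, then $v$ had at most one healing edge before the step (the one to $v'$), hence $d\le 1$.

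By (i)--(iii) \emph{every} branch of $v$ has weight $\ge x$. If $v$ has at least three branches — which fails only when $\Delta=1$ and $v$ has no surviving old healing neighbour — then after dropping the heaviest branch at least two remain, so $rem(v)\ge w(v)+2x\ge 2x=2^{(d+2)/2}\ge 2^{(d+\Delta)/2}$. In the remaining case, (iv) gives $d\le 1$, hence $\delta(v)=d+1\le 2$, while $v$ has exactly two branches, each a tree of weight $\ge 1$, so $rem(v)\ge w(v)+1\ge 2\ge 2^{\delta(v)/2}$. This closes the induction, and the lemma follows.

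I expect the main obstacle to be the ``root'' subcase: a node placed at the root of the reconstruction tree gains only $1$ in degree rather than $2$, so naively one would need $rem(v)$ to grow only by a factor $\sqrt2$, yet there may be no third heavy branch available. The resolution is observation (iv) — precisely in that subcase $v$ could have had at most one healing edge, which forces $\delta(v)\le 2$ and makes the target $2^{\delta(v)/2}\le 2$ reachable from the crude bound $rem(v)\ge w(v)+1$. The other point requiring care is the sibling-position bookkeeping in (iii): one must separate the left-child and right-child cases and verify that a node possessing a $B$-child sits deep enough in the complete binary tree that its sibling (and, when it is a right child, its sibling's two children) occupy strictly later positions in the $\delta$-ordering.
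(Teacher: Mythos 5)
Your induction is, in substance, the paper's: same invariant, same use of Lemmas~\ref{lemma: forest}, \ref{lemma: nondecreasing} and \ref{lemma: WgreatRem}, and the same key observation that nodes placed at or after $v$ in the $\delta$-ordering of the new binary tree have heavy old pieces (your sibling bookkeeping in (iii) plays the role of the paper's auxiliary nodes $z$ and $C2$). However, step (ii) is a genuine error. Lemma~\ref{lemma: WgreatRem} bounds $W(T(v,q))$, the component containing $v$ once its healing neighbour $q$ is removed; it says nothing about the branch $T(q,v)$ hanging off $q$. Applied at $q$ it yields only $W(T(q,v)) \ge \rem(q) \ge 2^{\delta(q)/2}$, and an old healing neighbour of $v$ can perfectly well have $\delta(q)=0$: for instance, if at some earlier round $v$ became an internal node of an $\RT$ with a previously untouched leaf child $q$, the $q$-branch has weight $1$ while $\delta(v)$ is already $2$, so that branch is lighter than $x=2^{d/2}$. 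Hence your assertion that \emph{every} branch of $v$ has weight at least $x$ is false, and your final count breaks exactly in the $\Delta=1$ case (two $B$-branches) when the heaviest branch happens to be one of the two certified $B$-branches: what then remains certified is a single branch of weight $\ge x$ plus possibly light old-neighbour branches, which your stated reasoning cannot push up to $2x$.

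The gap is local, and the paper's own device closes it: bound the old-neighbour branches in aggregate rather than one by one. The quantity $w(v)+\sum_{q} W(T(q,v))$, summed over the surviving old healing neighbours $q \neq v'$ of $v$, equals $W(T(v,v'))$ (or $W(T_v)$ if $v'$ was not a healing neighbour of $v$), and this is at least $\rem^{\mathrm{old}}(v) \ge x$ by Lemma~\ref{lemma: WgreatRem} and the inductive hypothesis --- this is precisely the paper's term $W(T(v,x))$. Since the heaviest single branch of $v$ in the new tree lies inside at most one of the units consisting of the individual $B$-branches and this aggregate, and since (i) and (iii) always certify at least two $B$-branches of weight $\ge x$ whenever $\delta(v)$ increases, you obtain $\rem(v) \ge 2x = 2^{(d+2)/2}$ in every case. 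This substitution also subsumes your special case (iv) (whose separate treatment, incidentally, quietly relies on the true but unstated fact that $\delta(v)$ never exceeds the number of healing edges incident to $v$). With that one repair, your (i) and (iii) stand and the argument essentially coincides with the paper's proof.
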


\begin{proof} Let $t$ be the number of rounds of healing where a round is a single adversarial deletion followed by
self-healing by $\DASH$. We prove this lemma by induction on $t$.\\
 Let $\G_{ht}$, $\rem_t(v)$ and $\delta_{t}(v)$ be $\G_h$, $\rem(v)$ and $\delta(v)$ respectively at time $t$.\\

\emph{Base Case:} t = 0:
  In this case, all nodes $v$  have  $\delta(v) = 0$; $\rem(v) = 1$.  Thus, $\rem(v) \ge 2^{0}$.\\

 \emph{Inductive Step:}
 Consider the network at round $t$. We assume by the inductive hypothesis that for all nodes $v$ in $\G_h$,
$\rem_{t-1}(v) \ge 2^{\delta_{t-1}(v)/2} $. Our goal is to  show that $\rem_{t}(v) \ge 2^{\delta_{t}(v)/2} $.

 Suppose node $x$ was deleted at round $t$. According to our algorithm, some or all of the neighbors of $x$ will be
reconnected as a binary tree. Let us call this  tree $\RT$ (short for \emph{Reconstruction Tree}).  Let $T(x,y)$ be
the tree in $G_{h(t-1)}-y$ that contains $x$, and $T'(x,y)$ be the tree in $G_{ht}-y$ that contains $x$.

 Consider a surviving vertex $v$. If $v$ is not a part of $\RT$, then by a simple application of lemma \ref{lemma:
nondecreasing}, our induction holds. If $v$ is a part of $\RT$, there are 3 possibilities:

\begin{enumerate}
\item \emph{$v$ is a leaf node in $\RT$}

 The degree of $v$ did not change. Thus, $\delta_t(v) = \delta_{t-1}(v)$. By Lemma  \ref{lemma: nondecreasing},
$\rem_{t}(v) \ge \rem_{t-1}(v)$. Thus, using  the induction hypothesis, $\rem_{t}(v) \ge 2^{\delta_{t}(v)/2} $.

\item \emph{$v$ is the root of $\RT$}

\begin{figure}[h!]
\centering
\includegraphics[scale=0.55]{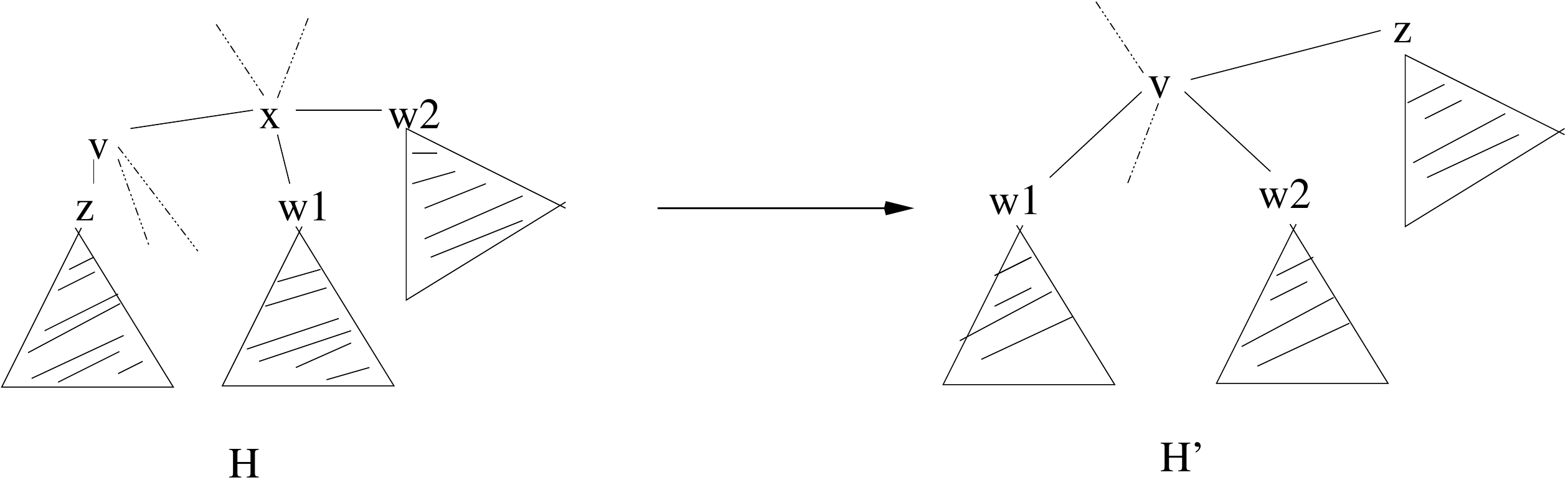}
\caption{node $v$ is the root, with 2 children}
\label{fig: vroottree}
\end{figure}

 If $v$ has only one child in $\RT$, then this is the same as the previous case with the parent and child role reversed
and the induction holds. Let us consider the case when $v$ has two children in $\RT$. Now, $\delta_t(v)$ has increased
by 1. Let $z$ be the neighbor of $v$ such that $W(T(z,v))$ is the largest among all neighbors of $v$ except $x$. Note
that $W(T'(z,v)) = W(T(z,v))$, since this subtree was not involved in the reconstruction.
Consider the possibly empty subtree of $v$ rooted at $z$.
  Let the two children of $v$ in $\RT$ be $w_1$ and $w_2$, as illustrated in figure \ref{fig: vroottree}. By our
algorithm, we
know that $\delta_{t-1}(w_1) \ge \delta_{t-1}(v)$ and $\delta_{t-1}(w_2) \ge \delta_{t-1}(v)$. Thus, using the inductive
hypothesis and lemma
\ref{lemma: WgreatRem}, we have that $\W(\T(w_1,x)) \ge \rem_{t-1}(w_{1}) \ge 2^{\delta_{t-1}(w_{1})/2}$ and $\W(\T(w_2,x))
\ge \rem_{t-1}(w_{2}) \ge 2^{\delta_{t-1}(w_{2})/2}$. 
 By lemma \ref{lemma: nondecreasing}, this implies that in $G_{ht}$,
  \begin{eqnarray*}
    W(\T'(w_{1},v)) & \ge  2^{\delta_{t-1}(w_{1})/2} & \ge 2^{\delta_{t-1}(v)/2}\\
    W(\T'(w_{2},v)) & \ge  2^{\delta_{t-1}(w_{2})/2} & \ge 2^{\delta_{t-1}(v)/2}
   \end{eqnarray*}
   
  Assume without loss of generality that   $\W(\T'(w_1,v)) \le \W(\T'(w_2,v))$. There are two cases:

\begin{enumerate}
\item $\W(\T(z,v)) < \W(\T'(w_1,v))$

  \qquad In this case $\rem_{t-1}(v)$ did not include $\W(\T(x,v))$.
 But  $\rem_t(v)$ will include $\W(\T'(w_1,v))$
 Hence,\\
\begin{eqnarray*}
 \rem_t(v) & \ge & \rem_{t-1}(v) + \W(\T'(w_1,v))\\
  & \ge & 2^{\delta_{t-1}(v)/2} + 2^{\delta_{t-1}(v)/2}\\
  & = & 2^{(\delta_{t-1}(v)+2)/2} \\
  & = & 2^{(\delta_{t}(v)+1)/2}
  \end{eqnarray*}

\item $\W(\T(z,v)) \ge \W(\T'(w_1,v))$

 \qquad In this case  $\rem_t(v)$ will include $\W(\T'(w_1,v))$ and the smaller of $\W(\T'(w_2,v))$ and  $\W(\T'(z,v))$.
  Note that by Lemmas \ref{lemma: WgreatRem} and \ref{lemma: nondecreasing}, the inductive hypothesis, and the fact
that $\delta_{t-1}(w_{1})
\ge
\delta_{t-1}(v) $,  $\W(T'(w_1,v)) \ge \rem_{t}(w_{1}) \ge \rem_{t}(w_{1}) \ge 2^{\delta_{t-1}(w_{1})/2} \ge 
2^{\delta_{t-1}(v)/2}$. \\ 
Also, since by assumption $\W(T'(w_2,v)) \ge \W(T'(w_1,v)) $, we know that $\W(T'(w_2,v)) \ge
2^{\delta_{t-1}(v)/2} $. \\ 
 Further, since $\W(T'(z,v)) =  \W(T(z,v)) \ge \W(T'(w_1,v))$ we know that $\W(T'(z,v)) \ge
2^{\delta_{t-1}(v)/2}$. 
  
Hence,
 \begin{eqnarray*}
 \rem_t(v) & \ge & 2^{\delta_{t-1}(v)/2} + 2^{\delta_{t-1}(v)/2}\\
  & = & 2^{(\delta_{t-1}(v)+2)/2} \\
  & = & 2^{(\delta_{t}(v)+1)/2}
  \end{eqnarray*}

\end{enumerate}

\item \emph{$v$ is an internal node in $T'$}
\label{case: deg1-intnode}
\begin{figure}[h!]
\centering
\includegraphics[scale=0.55]{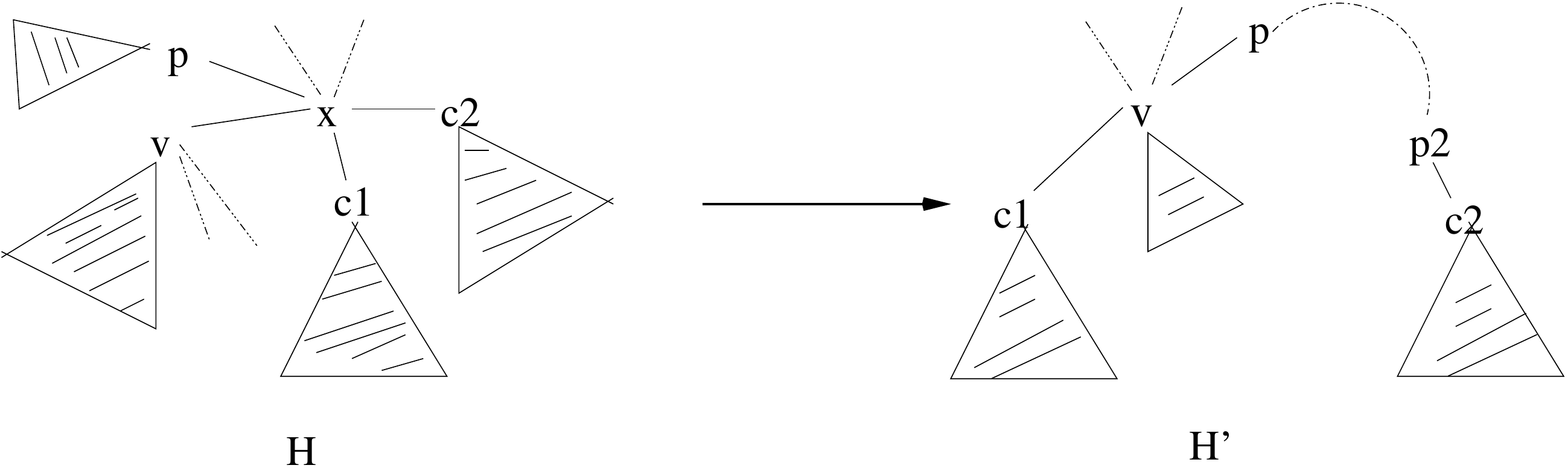}
\caption{Internal node $v$ with 1 child}
\label{fig: vinternalnode}
\end{figure}

\begin{figure}[h!]
\centering
\includegraphics[scale=0.6]{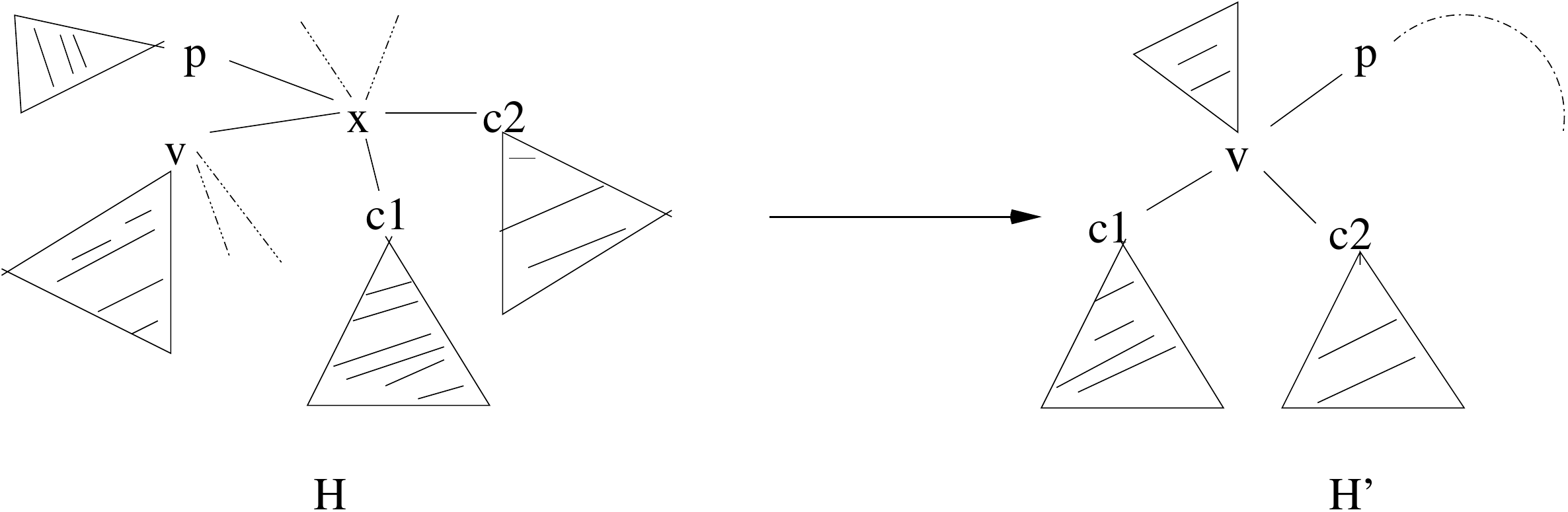}
\caption{Internal node $v$ with 2 children}
\label{fig: vdegplustwo}
\end{figure}

 For node $v$ to become an internal node, the deleted
neighbor $x$ must have at least three other neighbors.  Three neighbors of $x$  are shown as $C1$, $C2$ and $P$ in the
figures \ref{fig: vinternalnode} and \ref{fig: vdegplustwo}. 
Also, now $v$'s degree can increase by 1, as illustrated in figure \ref{fig: vinternalnode}, or by 2, as illustrated in figure  \ref{fig: vdegplustwo}.  Let us consider these cases separately:

\begin{enumerate}
 \item  $\delta_t(v) = \delta_{t-1}(v) + 1$

This can only happen when $v$ has a parent and a single child in $\RT$ as in figure \ref{fig: vinternalnode}. Let $P$ be
the parent of $v$ and
$C1$ the child of $v$.  $C1$ has to be a leaf node since the tree is complete and $v$ has only one child.  Observe that
there
exists at least one leaf node besides $C_{1}$ in the tree, accessible to $v$ only via $P$. Let this node be $C2$ and let
$P2$ be its parent. Note that $P2$ and $P$ may even be the same node. In our algorithm, any leaf node in $\RT$ has a
$\delta$ value no less than the $\delta$ value of any internal node. Thus,
 \begin{eqnarray*}
 \delta_{t-1}(C1) & \ge &\delta_{t-1}(v); \textrm{ and}\\
 \delta_{t-1}(C2) & \ge & \delta_{t-1}(v)
 \end{eqnarray*}

These inequalities, Lemmas \ref{lemma: nondecreasing} and \ref{lemma: WgreatRem}, and the  Inductive Hypothesis, imply
that
 \begin{eqnarray*}
\W(\T'(C1,v)) & \ge &  \rem_{t}(C1) \\
 &\ge & \rem_{t-1}(C1) \\
 &\ge & 2^{\delta_{t-1}(v)/2};\\
\W(\T'(C2,P2)) & \ge &  \rem_{t}(C2) \\
 &\ge & \rem_{t-1}(C2) \\
  &\ge & 2^{\delta_{t-1}(v)/2};\\
\W(\T(v, x)) & \ge & \rem_{t}(v)\\
  & \ge & \rem_{t-1}(v) \\
  &\ge &  2^{\delta_{t-1}(v)/2}.
 \end{eqnarray*}

 Since $\rem_t(v)$ can exclude at most one of $W(T'(C1,v))$, $W(T'(C2,P2))$ and $W(T(v, x))$, 
\begin{eqnarray*}
\rem_t(v) & \ge &  2^{\delta_{t-1}(v)/2} +  2^{\delta_{t-1}(v)/2} \\
  & = &  2^{(\delta_{t}(v) + 1)/2}
\end{eqnarray*} 

\item   $\delta_t(v) = \delta_{t-1}(v) + 2$
 
In this case $v$ has two children in $\RT$, $C1$ and $C2$, as illustrated in figure \ref{fig: vdegplustwo}. The analysis
is similar to the case above. The value  $\rem_t(v)$ can exclude at most one of $W(T'(C1,v))$, $W(T'(C2,v))$ and
$W(T(v,x))$ and we can show that all three of these values are at least  $2^{\delta_{t-1}(v)/2}$.
Thus, $ \rem_t(v)\ge 2^{(\delta_{t}(v))/2} $.
\end{enumerate}

\end{enumerate}

 Hence, the induction holds.
 
\end{proof}

 
 \pagebreak 

\begin{lemma} For all vertices $v$, $rem(v)$ is always no more than n. \end{lemma}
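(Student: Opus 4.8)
The plan is to exploit conservation of weight. First I would observe that the quantity $\sum_{v \in V} w(v)$ is invariant over the run of the algorithm: at initialization every vertex has weight $1$, so the total is $n$; and the only event that changes any weight is a deletion, which merely transfers $w(v)$ from the deleted node $v$ to one of its surviving neighbors in $G_h$, leaving the sum unchanged. Hence $\sum_{v\in V} w(v) = n$ at every time step. In particular, for every tree (connected component) $T$ of the forest $G_h$ (Lemma~\ref{lemma: forest}), we get $W(T) \le \sum_{v \in V} w(v) = n$, since the vertex set of $T$ is a subset of $V$ and all weights are nonnegative.

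Next I would rewrite $\rem(v)$ in the form already derived in the proof of Lemma~\ref{lemma: nondecreasing}. Let $T_v$ be the component of $G_h$ containing $v$. Deleting $v$ from $T_v$ splits $V(T_v) \setminus \{v\}$ into exactly the pieces $T(u,v)$ for $u \in N(v,G_h)$, so that $\sum_{u \in N(v,G_h)} W(T(u,v)) + w(v) = W(T_v)$, and therefore
\[
\rem(v) \;=\; W(T_v) \;-\; \max_{u \in N(v,G_h)} W(T(u,v)).
\]

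Finally, since every weight is at least $1 > 0$, the subtracted maximum is nonnegative, so $\rem(v) \le W(T_v) \le n$; and in the degenerate case $N(v,G_h) = \emptyset$ the definition gives $\rem(v) = w(v) \le n$ directly. Either way $\rem(v) \le n$, which is the claim.

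There is no serious obstacle here; the only points requiring a little care are (i) checking that the weight-transfer rule on deletion really does conserve $\sum_v w(v)$ regardless of which surviving neighbor receives the transferred weight, and (ii) handling the corner case in which $v$ has no incident healing edges, where the ``$\max$ over an empty set'' term in the definition of $\rem(v)$ must be read as $0$. Both are routine.
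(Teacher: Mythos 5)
Your proposal is correct and follows essentially the same route as the paper's proof: both rest on the conservation of total weight at $n$ under the deletion-transfer rule, together with the observation that $\rem(v)$ sums the weights of disjoint subtrees of $T_v$ (minus the maximum one) and hence is bounded by $W(T_v)\le n$. Your explicit rewriting of $\rem(v)$ as $W(T_v)-\max_u W(T(u,v))$ and the handling of the empty-neighborhood case are just slightly more careful versions of the same argument.
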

\label{lemma: remNorless}
\begin{proof} No vertex is counted twice in a $rem$ value since the subtrees of a vertex are disjoint. Since the number of vertices in the subtrees cannot be more than the number of vertices remaining, the $rem$ value is always no more than the sum of the weights of all undeleted vertices in $G_h$.

Define $W^*$ to be the sum of weights of all undeleted vertices in $G_h$.  After initialization, $W^* = n$, since there are $n$ vertices.  At each step of the algorithm, $W^* = n$ , since the weight of the deleted vertex is added to one of the remaining vertices.  Thus, for node $v$,  $rem(v) \le n$. 

\end{proof}

\begin{lemma}
\label{l:degree} 
$\DASH$ increases the degree of any vertex by at most $O(\log n)$. \end{lemma}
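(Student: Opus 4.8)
The plan is to simply chain together the two bounds on $rem(v)$ that have already been established. On one side, Lemma~\ref{lemma: deglimit} gives the lower bound $rem(v) \ge 2^{\delta(v)/2}$ for every surviving vertex $v$, where $\delta(v)$ is exactly the quantity we want to control (the degree increase of $v$ relative to its initial degree). On the other side, Lemma~\ref{lemma: remNorless} gives the matching upper bound $rem(v) \le n$, since the total weight in $G_h$ is conserved at $n$ and no vertex is double-counted across the disjoint subtrees defining $rem(v)$.

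First I would observe that these two inequalities hold simultaneously at every time step and for every undeleted vertex $v$, so that $2^{\delta(v)/2} \le rem(v) \le n$. Taking $\log_2$ of both ends then yields $\delta(v)/2 \le \log_2 n$, i.e.\ $\delta(v) \le 2\log_2 n$. Since this bound is uniform over all vertices and all time steps (and in particular holds even if the adversary deletes up to all the nodes, as the supporting lemmas were proved by induction over arbitrarily many deletion rounds), the degree of any vertex increases by at most $2\log n = O(\log n)$, which is the claim. This also recovers the explicit constant $2\log n$ promised in Theorem~\ref{theorem: DASH}.

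There is essentially no obstacle here: the lemma is an immediate corollary of the preceding development, and the only thing to be slightly careful about is stating that the inequality $rem(v)\ge 2^{\delta(v)/2}$ is invoked at the current (arbitrary) round rather than at a fixed time, so that the conclusion applies throughout the entire run of the algorithm.
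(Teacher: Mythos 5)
Your proposal is correct and matches the paper's own proof exactly: both chain Lemma~\ref{lemma: deglimit} with the bound $\rem(v)\le n$ and take logarithms to obtain $\delta(v)\le 2\log n$. Nothing further is needed.
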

\begin{proof} 

Every vertex $v$ starts with $\rem(v)=w(v)=1$. We know that $\rem(v)  \ge  2^{\delta(v)/2}$  \textrm{by Lemma \ref{lemma:
deglimit}}.  since $\rem(v)$ is at most n,  $2^{\delta(v)/2}  \le n$ . Taking $\log$ of both sides, $\delta(v)/2  \le 
\log n $.  Solving for $\delta(v)$ gives $ \delta(v)  \le  2 \log n  $. 

\end{proof}

\begin{lemma} 
\label{l:latency}
The latency to reconnect the network in $\DASH$ is $O(1)$.
\end{lemma}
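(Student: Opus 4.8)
The plan is to argue that rebuilding the reconstruction tree $\RT$ on the set $S = UN(v,G)\cup N(v,G_h)$ of nodes that reconnect after node $v$ is deleted requires no long-range communication at all: every node of $S$ can compute $\RT$ purely from information it already holds, so it can immediately request its (at most three) incident tree edges, and all these requests can be served in parallel.

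First I would observe that $\RT$ is a deterministic function of the labelled set $\{(\ID(u),\delta(u)) : u\in S\}$. It is the complete binary tree on $|S|$ nodes, with the nodes placed into positions $1,\dots,|S|$ in non-decreasing order of $\delta$, ties broken by $\ID$; the node in position $i$ is then adjacent precisely to the nodes in positions $\lfloor i/2\rfloor$, $2i$ and $2i+1$ whenever these lie in $[1,|S|]$. In particular every node of $\RT$ is incident to at most three tree edges, independent of $n$ and of $|S|$.

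Next I would show that the instant $v$ is deleted, each $u\in S$ already possesses this labelled set, so that agreeing on $\RT$ costs zero communication rounds. Here the neighbor-of-neighbor invariant does the work: for any two neighbors $a,b$ of $v$, $a$ is a neighbor-of-neighbor of $b$ through $v$, hence $a$ knows $b$; and $\DASH$ maintains, alongside the NoN information, each such node's $\ID$ (which identifies its connected component in $G_h$ and hence determines $UN(v,G)$) and its $\delta$-value and $G_h$-adjacency to $v$ (which determine $N(v,G_h)$ and the tree ordering). Since these labels are kept current throughout the run, no message exchange among the neighbors of $v$ is needed after the deletion — which is the point, because two neighbors of $v$ need not be adjacent once $v$ is gone, so a post-deletion consensus protocol could otherwise cost $\Omega(\diam)$ time. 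Thus each $u\in S$ independently computes its own position $i$, adds the corresponding at-most-three new edges, and these additions proceed in parallel, each an $O(1)$-time primitive of the model; so the reconstruction step of Algorithm~\ref{algo: dash} completes in $O(1)$ time. The subsequent $MINID$ propagation is a separate step whose (amortized) latency is accounted for elsewhere in Theorem~\ref{theorem: DASH}.

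The delicate point — and the one I would spend the most care on — is exactly the middle step: spelling out precisely what state each neighbor of $v$ holds at the moment $v$ disappears, and checking that reaching a consistent choice of $\RT$ genuinely uses only that local state and never a message path through the surviving network. Once that is pinned down, the $O(1)$ bound is immediate, since $\RT$ is a fixed deterministic function of data every participant already has and has maximum degree three.
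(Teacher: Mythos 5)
Your proposal is correct and takes essentially the same approach as the paper, whose entire proof is the one-line observation that reconnection in $\DASH$ requires communication only between nodes one hop away, hence $O(1)$ latency. Your version is a more careful elaboration of that same locality argument, correctly identifying (and resolving via the neighbor-of-neighbor invariant and locally held $\ID$/$\delta$ state) the one delicate point the paper glosses over.
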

\begin{proof}
During the reconnection process, $\DASH$ requires communication only between nodes one hop away, thus, the latency is just $O(1)$. 
\end{proof}

\pagebreak 

\begin{lemma} 
\label{l:messages}
The number of messages any node of initial degree $d$ sends out and receives is no more than $2(d + 2 \log n)\ln n $
\emph{with high probability} over all node deletions.
\end{lemma}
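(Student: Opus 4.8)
The plan is to fix an arbitrary surviving node $u$ of initial degree $d$ and split the messages it ever handles into two kinds: (i) messages exchanged while $u$ takes part in a reconstruction tree, and (ii) messages exchanged while an $\ID$ value is propagated through $u$. For part (i), note that $\DASH$ only ever adds edges, so $N(u,G)$ only grows over the run; it has $d$ vertices initially and, by Lemma~\ref{l:degree}, grows by at most $2\log n$, so $u$ has at most $d+2\log n$ distinct $G$-neighbours in total. When a vertex $x$ is deleted, everything the algorithm asks a node to do locally for the reconnection concerns vertices of $N(x,G)$ (indeed $UN(x,G)\cup N(x,G_h)\subseteq N(x,G)$), so $u$ is pulled into such a round only when a current $G$-neighbour of $u$ is deleted; since each vertex is deleted at most once, this happens in at most $d+2\log n$ rounds, and in each $u$ exchanges only $O(1)$ messages (it learns its slot and joins to its $\le 3$ neighbours in the complete binary tree). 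Thus part (i) contributes $O(d+2\log n)$ messages.

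For part (ii), the $\ID$-propagation step broadcasts $MINID$ through the whole $G_h$-component produced by the reconnection, so $u$ handles propagation messages in a round only if $u$'s $G_h$-component is one of those being merged, and then (for a reasonable implementation of the broadcast) $u$ forwards the new value to its $\le \deg_h(u)\le\deg_G(u)\le d+2\log n$ neighbours in $G_h$ and receives it a bounded number of times. Hence part (ii) costs at most $(d+2\log n)$ times the number of rounds in which $u$'s recorded $\ID$ changes, and the heart of the proof is to bound this count by $O(\log n)$ — in fact by roughly $2\ln n$ — with high probability. The key structural fact is that $u$'s component, viewed as a set of original vertices, only ever grows except for losing the single vertex deleted in a round: when a vertex $x\in C$ is deleted, the reconnection re-joins all subtrees of $x$ and also absorbs the other merging components, so $C\setminus\{x\}$ survives inside the new component. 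Consequently each strict decrease of $u$'s recorded $\ID$ is caused by a vertex whose $\ID$ is a new prefix-minimum among the (at most $n$) i.i.d.\ uniform $\ID$s that ever enter $u$'s component, while each strict increase is caused by deleting the vertex currently holding that minimum; charging each increase to the decrease that installed the deleted minimum-holder bounds the total number of changes by twice the number of such prefix-minima plus one. Finally, the number of prefix-minima (``records'') among $m\le n$ i.i.d.\ continuous random variables equals $\sum_{i=1}^{m}B_i$ for independent $B_i\sim\mathrm{Bernoulli}(1/i)$, with mean $H_m\le 1+\ln n$, so a Chernoff bound puts it below $c_C\ln n$ with probability at least $1-n^{-C}$ for any desired constant $C$. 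Combining parts (i) and (ii) yields at most $2(d+2\log n)\ln n$ messages with high probability, the part-(i) term being absorbed into the constants.

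I expect the main obstacle to be making the records argument rigorous. Two points need care. First, the charging of $\ID$-increases to $\ID$-decreases: one must verify that whenever $u$'s minimum-holder is deleted it was installed by an earlier decrease (or was $u$'s original minimum) and that the accounting is genuinely two-to-one, not circular. Second, and more delicate, is independence: the healing edges — hence which components merge, hence which original vertices enter $u$'s component and in what order — can in principle depend on the random $\ID$s (the delegate chosen for each component in $UN(\cdot,G)$ is the lowest-$\ID$ one), so one cannot simply quote the i.i.d.-records fact. The intended fix is to analyse an adversary that fixes its deletion schedule obliviously of the $\ID$s, and then argue that, conditioned on that schedule, the set of original vertices that ever enters $u$'s component is unaffected by the $\ID$s (or at least that the sequence of running minima is stochastically dominated by the number of records in $n$ i.i.d.\ uniforms); pinning this down is where the real work lies. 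Everything else is bookkeeping on top of Lemmas~\ref{lemma: forest}--\ref{l:degree}.
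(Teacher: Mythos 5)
Your argument is essentially the paper's own: it bounds the number of times a node's $\ID$ changes via the record-breaking problem (about $2\ln n$ times w.h.p.) and multiplies by the maximum possible degree $d+2\log n$, with the $O(1)$-per-round reconnection messages absorbed into the bound, which is exactly the paper's short proof (and the paper, like you, simply invokes the record-breaking analogy, implicitly assuming the adversary's deletion choices are oblivious to the random $\ID$s rather than formalizing that independence). One small correction: in $\DASH$ a stored $\ID$ never increases, because $MINID$ is copied to every member of the merged component and therefore survives the deletion of its original holder, so your charging of ``increases'' against earlier decreases addresses a case that cannot occur (harmless to the bound, but unnecessary).
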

\begin{proof}
 In $\DASH$, after the reconnections have been made, messages are sent out by nodes when the minimum $ID$ has to be
propagated. With similarity to the \emph{record breaking problem} \cite{Glick-RecordBreaking78}(Section~\ref{subsec: recordbreaking}), it is easily shown that \emph{w.h.p.}, a node
has its
$ID$ reduced no more than 2 $\ln n$ times, where the record is the node's $ID$. These are the only messages the node needs to 
 transmit or receive. Each time its $ID$ changes, the node sends this message to all its neighbors, Thus, it sends or receives
$O((d + \log n) \ln n)$ messages, since the final degree of the node is at most $d + 2 \log n$.

\end{proof}

\begin{lemma}
\label{l:latencyid}
 The amortized latency for  $ID$ propagation is $O(\log n)$ \emph{with high probability} over all node deletions.
\end{lemma}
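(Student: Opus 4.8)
The plan is to bound the \emph{total} time spent on $\ID$ propagation over a sequence of $\Theta(n)$ adversarial deletions by $O(n\log n)$ \emph{with high probability}, so that dividing by the number of deletions yields the amortized bound $O(\log n)$. This mirrors exactly the accounting used for messages in Lemma~\ref{l:messages}, replacing ``number of messages'' by ``propagation latency''.

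First I would analyze the latency of the $\ID$-propagation step of Algorithm~\ref{algo: dash} in a single round in which a node $v$ is deleted. Let $\RT$ be the reconstruction tree built on $UN(v,G)\cup N(v,G_h)$, and let $C_1,\dots,C_k$ be the connected components of $G_h-v$ that contain the representatives placed in $\RT$; by Lemma~\ref{lemma: forest} these are trees, and deleting $v$ does not split any surviving component, since every subtree of $v$'s old $G_h$-tree is reattached through its representative. Within each $C_j$ all nodes share a common $\ID$, so $MINID$ is just the minimum of these $k$ component-$\ID$s, and the only vertices whose $\ID$ changes are those lying in the $C_j$ that do \emph{not} already hold $MINID$. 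A natural implementation then costs: $O(\log n)$ time to convergecast the minimum of the (at most $n$) representative $\ID$s to the root of $\RT$ and broadcast it back down $\RT$ — which is a complete binary tree, hence of depth at most $\log n$ — plus, in parallel over the relabelled components, the time for the representative of each such $C_j$ to flood the update through $C_j$, which is at most $\diam(C_j)\le |C_j|-1$. Crucially, every component already holding $MINID$ (in particular the component $C_1$ of smallest $\ID$, which may be enormous) is never entered. Hence the latency of this round is at most
\[
O(\log n)\;+\;\max_{j\,:\,C_j\ \text{relabelled}}\bigl(|C_j|-1\bigr)\;\le\;O(\log n)\;+\;\sum_{j\,:\,C_j\ \text{relabelled}}|C_j|,
\]
and the final sum is exactly the number of vertices whose $\ID$ changes in this round.

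Next I would sum over all rounds. There are $\Theta(n)$ rounds, so the first term contributes $O(n\log n)$ in total. For the second term I would invoke the record-breaking analysis already established in Lemma~\ref{l:messages} (Section~\ref{subsec: recordbreaking}): since each vertex's $\ID$ only ever decreases and the $\ID$s are drawn uniformly at random, \emph{with high probability} every vertex has its $\ID$ reduced at most $2\ln n$ times over the whole run, and a union bound over the $n$ vertices keeps this \emph{w.h.p.}. Therefore the total number of $\ID$ changes over all rounds is at most $2n\ln n$ \emph{w.h.p.}, so the total $\ID$-propagation latency is $O(n\log n)$ \emph{w.h.p.}; dividing by $\Theta(n)$ gives amortized latency $O(\log n)$ \emph{w.h.p.}, as claimed.

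The main obstacle is the first step: one must observe that a single propagation decomposes into an $O(\log n)$-depth broadcast over the freshly built reconstruction tree, followed by independent floods into \emph{precisely the components being relabelled} — and that the (possibly huge) components that keep their $\ID$ need never be traversed. That is what lets the per-round latency be charged to the number of relabelled vertices, a quantity the record-breaking bound controls. Once that observation is in place, the remaining steps are a routine summation.
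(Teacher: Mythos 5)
Your proposal is correct and follows essentially the same route as the paper's proof: bound the total work of $\ID$ propagation over all $\Theta(n)$ deletions by $O(n\log n)$ via the record-breaking argument, then amortize. Your version is in fact more careful than the paper's — it makes explicit the step the paper leaves implicit, namely that the per-round latency decomposes into an $O(\log n)$ traversal of the reconstruction tree plus floods that only enter components whose $\ID$ actually changes, so that total latency can be charged to the total number of $\ID$ changes.
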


\begin{proof}
Again, with similarity to the record breaking problem,  a node  sends messages to its neighbors
(neighbors, by definition, are a single hop away) only $O(\log n)$ times with high probability. Thus, messages are
transmitted $O(n \log n)$ times over all the nodes. Over $O(n)$ deletions, this implies that the amortized latency for
messages (involving $ID$ propagation) is only $O(\log n)$ .
\end{proof}
 
 \subsection{The Record Breaking Problem}
\label{subsec: recordbreaking} Here we recap the well known record breaking problem. 
Given a sequence of deleted vertices, $v_1, v_2, ..., v_n$, we define 
$id(v_j)$ ($j \le n$) to be a record value if $id(v_j) < id(v_i)$ for all 
$1 \ge i < j$. \\  

Let $X_1, X_2, ..., X_n$ be indicator random variable:
\[
X_j = 
\begin{array}{cc} 1  & \textrm{if $id(v_j)$ is a record } \\
0  & \textrm{otherwise}
\end{array}
\]

The probability that $v_j$ is a record  is $P_j=\frac{(j-1)!}{j!} = 
\frac{1}{j}$. Therefore: $E[X_j] = 1/j$.

Let $X = \sum\limits_{j=1}^{n} X_j$. \\

By linearity of expectation: \\
$E[X] = \sum\limits_{j=1}^{n} E[X_j] = \sum\limits_{j=1}^{n} 1/j = \theta ( ln(n) )$ 

The variance for $X_j$ is $Var(X_j) = E[X_j^2] - E[X_j]^2$. We calculate $E[X_j^2]$ from the second derivative of the
moment generating 
function for $X_j$.

\begin{eqnarray*}
M''(t) & = & E[X_j^2 e^{tX_j}] \\
      & =  & \sum_j X_j^2 e^{tX_j} P_j \\
      & = & (1) (e^{t(1)})(1/j) + 0 \\
      & = & e^t/j
\end{eqnarray*}

\begin{eqnarray*}
   Var(X_j) & = & E[X_j^2] - (E[X_j])^2 \\
& = & M''(0) - (1/j)^2 \\
& = & 1/j - 1/j^2 \\
& = & (j-1)/j^2
\end{eqnarray*}

\subsection{Proof of Theorem~\ref{theorem: DASH}}

The proof of Theorem~\ref{theorem: DASH} now follows immediately from Lemmas
\ref{l:degree}, \ref{l:latency}, \ref{l:messages} and \ref{l:latencyid}.
  


\section{Lower bounds on Locality-aware algorithms}

\label{sec: lower}


To begin with, we give an insight as to why a healing strategy might need to keep track of connected components.

\subsection{Necessity of Component tracking for healing strategies}
\label{subsec: comptrack}

\begin{lemma}
\label{lemma: d+2} For  a tree, deletion of  a node of degree $d$ increases the sum total of degrees of its neighbors
by $d - 2$ for a locality-aware acyclic healing strategy.
\end{lemma}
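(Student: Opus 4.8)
The plan is a straightforward edge-counting argument. First I would fix the tree $G = (V,E)$, let $v$ be the deleted node, and let $u_1, \dots, u_d$ be its neighbors. Since $G$ is a tree, removing $v$ disconnects it into exactly $d$ connected components $T_1, \dots, T_d$ where, after relabeling, $u_i \in T_i$; in particular no two distinct neighbors of $v$ lie in the same component. The deletion of $v$ itself destroys precisely the $d$ edges $v u_1, \dots, v u_d$, so the quantity $\sum_{i=1}^{d} \deg(u_i)$ drops by exactly $d$ before any healing edges are added.

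Next I would count the healing edges. A locality-aware healing strategy may add edges only between nodes of $\{u_1,\dots,u_d\}$, and since these sit in pairwise distinct components $T_i$, every added edge joins two different components. Suppose the strategy adds $k$ such edges. Contracting each $T_i$ to a single vertex turns the remaining-plus-healing structure into a multigraph on $d$ vertices with $k$ edges: if the result is to be connected (the defining requirement of a healing strategy) then $k \ge d-1$, and if it is to be acyclic then $k \le d-1$ (a graph on $d$ vertices with no cycle has at most $d-1$ edges, and here no added edge can lie inside a single $T_i$). Hence exactly $k = d-1$ healing edges are added.

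Finally I would combine the two counts: each of the $d-1$ healing edges has both endpoints among $u_1,\dots,u_d$, so it contributes $+2$ to $\sum_{i=1}^{d} \deg(u_i)$, giving a net change of $-d + 2(d-1) = d-2$, as claimed. The step I expect to require the most care is pinning down that exactly $d-1$ edges are added — this is the crux, and it is precisely where ``acyclic'' supplies the upper bound, the implicit connectivity goal of a healing strategy supplies the lower bound, and ``locality-aware'' together with the tree structure rules out an added edge ever falling inside one component; everything else is bookkeeping.
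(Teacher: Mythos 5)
Your proof is correct and follows essentially the same edge-counting argument as the paper: the neighbors lose $d$ degrees from the deleted edges and must be reconnected by exactly $d-1$ new edges (a tree on the $d$ components), contributing $2(d-1)$, for a net change of $d-2$. Your justification of the count $k=d-1$ via contracting the components is slightly more explicit than the paper's, but it is the same argument.
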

\begin{proof}

A \emph{locality-aware acyclic healing strategy} will reconnect the neighbors of a deleted node without creating any cycles. If there were no cycles in the original graph involving the neighbors and not involving the deleted node, then such a strategy can only reconnect these neighbors as a tree to maintain their connectivity.

 A node of degree $d$ has $d$ neighbors. Since it was part of a tree, this node and its neighbors also constitute a tree. Let us call this the \emph{immediate subtree}. The \emph{immediate subtree} had $d$ edges and a total of $2d$ degrees. 
 These $d$ neighbors are now reconnected as a tree with $d-1$ edges and $2(d-1)$ degrees.
 Each of these neighbors lost a single degree due to the deletion of their edge to the deleted node.
 Thus, the total degrees gained on reconstruction are $2(d-1) - d = d - 2$. 
 
\end{proof}

 It is reasonable to assume that an efficient healing algorithm adds close to  the minimum possible edges at each step to maintain
connectivity of the neighbors of the deleted node. In $G_h$, if a deleted node $v$ had two neighbors which had an
alternate path between themselves not involving $v$, then the algorithm may need to use only one of them for
reconnection to other nodes. By extension, if there were many neighbors which had alternate connections between them, the algorithm
may need to use only one of these nodes. This is equivalent to stating that the algorithm  may need to use only one node from a  connected component. Knowing that certain nodes are in the same component would allow the algorithm to do this.
 $G_h$ is comprised only of edges added by the healing algorithm, and is always a forest. If the adversary mainly deletes nodes with degree greater than 2 and the algorithm does not use the component information, the sum total of degrees of the neighbors of the deleted nodes will increase by $ (d -2)$ i.e. at least 1, at each step. After many ($O(n)$) deletions,  only a few nodes will be left, and these will have $O(n)$ degree increase.

\subsection{A lower bound on healing by Degree-bounded locality-aware healing algorithms}
\label{subsec: lognlowerbound}

 We prove a result regarding the lower bounds for degree-bounded locality-aware algorithms in Theorem \ref{thm: lognbound}. We also show a lower bound which shows that any locality-aware healing algorithm (not necessarily degree-bounded) will increase node degree by at least $\log_2\log_3n$ in  Section~ \ref{subsec: loglognlower}.

  Our lower bound occurs on graphs that are originally trees. To state the proof, we need to prove some other lemmas. 
  
First, we define the following operation that the adversary can perform on trees, where we assume self-healing is applied after every deletion:
\begin{description}
\item[Prune (r,s)]: For a node $r$ and its subtree headed by node $s$, the $Prune$ operation on $s$ leads to deletion of
all the nodes in that subtree including $s$. This operation can be accomplished by repeatedly deleting leaf nodes in the
subtree till all the nodes including $s$ are deleted. 

\begin{figure}[h!]
\centering
\includegraphics{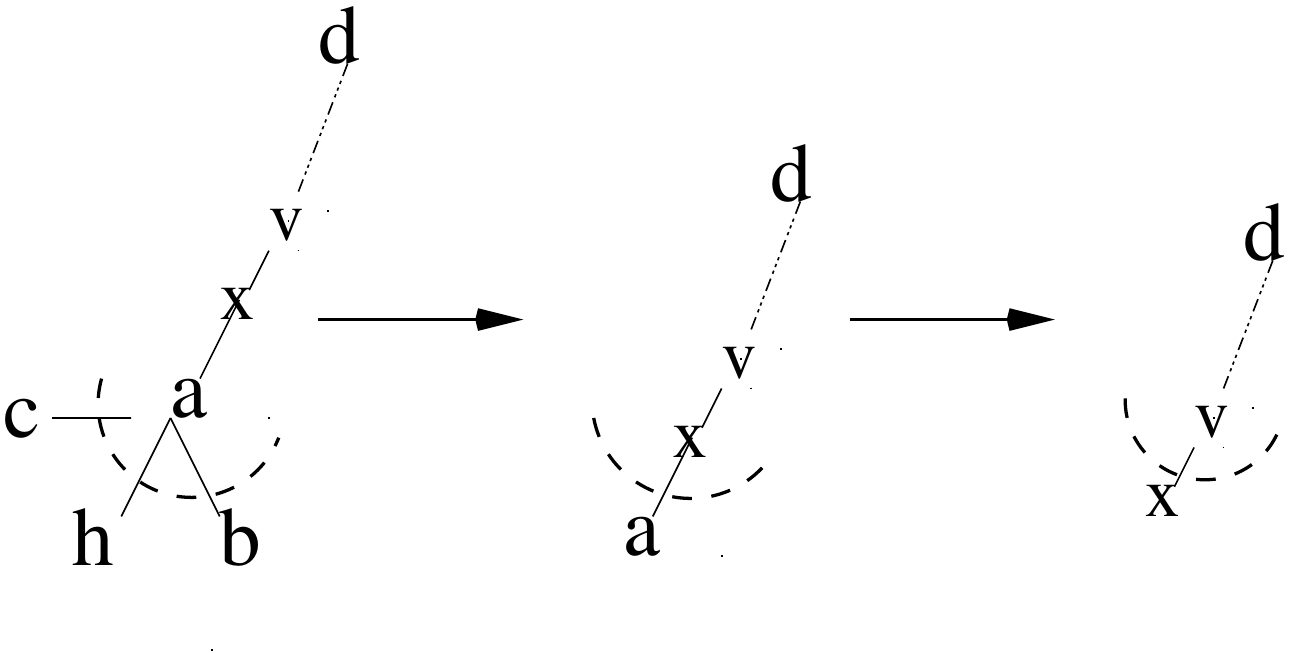}
\caption{Steps in Prune(v,x). Leaf nodes are deleted at each step. }
\label{fig: 3nbrs}
\end{figure}
\end{description}

\begin{lemma}
\label{lemma: degup} Deletion of a node with degree at least 3 increases the degree of at least one node by degree 1, no matter how the healing occurs.
\end{lemma}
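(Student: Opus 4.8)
The plan is to prove this by a global handshake (degree–sum) count, so that the conclusion holds for an arbitrary connectivity–preserving healing, not just an acyclic or locality-aware one. Fix the instant just before the adversary deletes a vertex $x$ of degree $d \ge 3$, and assume, as throughout this section, that the current network is a tree $T$ on $N$ vertices, so $|E(T)| = N-1$. Removing $x$ and its $d$ incident edges leaves a forest on $N-1$ vertices with exactly $d$ connected components, one containing each former neighbor $v_1,\dots,v_d$ of $x$. Any healing that restores connectivity must therefore add a set $F$ of new edges on the surviving vertices making the resulting graph connected, and reconnecting $d$ components requires $|F| \ge d-1$; equivalently, the healed graph on $N-1$ vertices has at least $(N-1)-1 = N-2$ edges.

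Next I would compare the total degree over the $N-1$ surviving vertices before the deletion and after healing. Before the deletion this total is $2|E(T)| - d = 2(N-1) - d$, since we discard the contribution $\deg(x)=d$. After healing the surviving graph is connected on $N-1$ vertices, hence has at least $N-2$ edges, so by the handshake lemma its degree sum is at least $2(N-2)$. Writing $\deg$ and $\deg'$ for degree before the deletion and after healing respectively, this gives
\[
\sum_{v \neq x}\bigl(\deg'(v) - \deg(v)\bigr)\;\ge\;2(N-2) - \bigl(2(N-1) - d\bigr)\;=\;d-2\;\ge\;1 .
\]
Each summand is an integer and the sum is positive, so at least one surviving vertex $v$ has $\deg'(v) - \deg(v) \ge 1$, i.e. its degree is increased by at least one. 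This is the same ``$d-2$'' bookkeeping as in Lemma~\ref{lemma: d+2}, only phrased globally; adding extra healing edges, or edges not incident to the $v_i$, only enlarges the left-hand side, so the bound survives ``no matter how the healing occurs''.

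The one place to be careful is the hypothesis that the network is a tree at the moment of deletion: this is exactly what forces $x$'s removal to split the graph into $d$ separate components and hence demand $\ge d-1$ new edges. In the lower-bound construction the graph starts as a tree and the relevant healing is acyclic, so this invariant is maintained step by step; for a general cycle-creating healing on a non-tree one would instead bound the number of components produced by deleting $x$, and the inequality weakens accordingly, but the tree case is all the lower bound needs. I expect this hypothesis-management, rather than the counting itself, to be the only delicate point.
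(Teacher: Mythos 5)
Your proof is correct, but it takes a genuinely different route from the paper's. The paper argues locally: since the $d\ge 3$ former neighbors of the deleted vertex lie in distinct components and must be reconnected, the new edges contain a $3$-node path as a subgraph, and the internal vertex of that path gains two edges while having lost only one, for a net increase of $1$. You instead run a global handshake count: connectivity of the healed graph on $N-1$ vertices forces at least $N-2$ edges, so the total degree over the survivors rises by at least $d-2\ge 1$, and since the summands are integers one of them is positive. Your version buys generality --- it does not care whether the healing is locality-aware or acyclic, it tolerates dropped edges (only the lower bound $2(N-2)$ on the final degree sum is used), and it recovers the exact $d-2$ bookkeeping of Lemma~\ref{lemma: d+2} as a by-product. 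What it gives up is localization: the paper's argument places the degree increase on a \emph{neighbor} of the deleted node, which is what the downstream proof of Lemma~\ref{lemma: 3level3tree} actually invokes; under a locality-aware healing this also follows from your count, since only the neighbors' degrees can change, but it deserves an explicit sentence. Finally, both proofs rest on the same unstated hypothesis, which you correctly isolate: the deleted vertex must be a cut vertex separating its $d$ neighbors (automatic in a tree), since otherwise --- e.g., deleting a vertex of $K_4$ --- no healing edges are needed and the claim fails.
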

\begin{proof}

\begin{figure}[h!]
\centering
\includegraphics[scale=0.5]{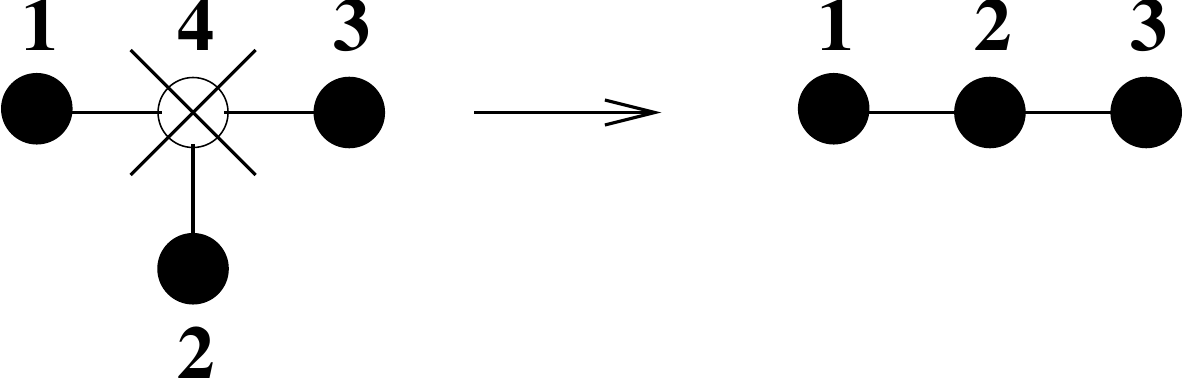}
\caption{An internal node  in  a 3-node line reconnection suffers a degree increase.}
\label{fig: 3nbrs}
\end{figure}

Any reconnection of more than two nodes has a 3-node line (as in figure \ref{fig: 3nbrs}) as a subgraph. Here the
internal node has a degree increase of $1$. Thus, at least one node increases it's degree by at least $1$.

\end{proof}


For further discussion, we define the following:
\begin{description}
\item[Degree-bounded / M-degree-bounded :] A healing algorithm is \emph{degree-bounded} or \emph{M-degree-bounded} if
any node can increase its degree by at most $M$ in a single round of deletion and healing.
\end{description}


\begin{lemma}
\label{lemma: M+3up1} Consider a M-degree-bounded locality-aware healing algorithm used on a tree. In such a
situation, deletion of a node $v$ with degree at least M+3 leads to degree increase for at least two neighbors of $v$.
\end{lemma}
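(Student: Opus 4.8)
The plan is to argue by contradiction: suppose that when $v$ (of degree at least $M+3$) is deleted, at most one neighbor of $v$ suffers a degree increase. The neighbors of $v$ must be reconnected into a tree (by the locality-aware acyclic reconnection, as in Lemma \ref{lemma: d+2}, since on a tree the neighbors of $v$ have no alternate path avoiding $v$). Let $d = \Degree(v) \ge M+3$, so the reconnection tree $\RT$ on the $d$ neighbors of $v$ has $d-1$ edges and total degree $2(d-1)$. Each neighbor already lost one unit of degree from the deleted edge to $v$, so the total degree \emph{gained} by the neighbors in $\RT$ is $2(d-1) - d = d-2$. The idea is to show that if only one neighbor is allowed to absorb degree increase, the $M$-degree bound forces $d-2 \le M$, contradicting $d \ge M+3$.

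First I would make precise how degree increase is distributed among the neighbors inside $\RT$. A neighbor that becomes a leaf of $\RT$ has degree $1$ there, exactly matching the edge it lost to $v$, so it incurs no net increase. A neighbor that becomes an internal node of $\RT$ with $\RT$-degree $k \ge 2$ incurs a net degree increase of $k - 1 \ge 1$. Hence the set of neighbors with positive degree increase is exactly the set of internal (non-leaf) nodes of $\RT$, and the total increase equals $\sum_{\text{internal } u} (\Degree_{\RT}(u) - 1) = d - 2$. If at most one neighbor has a degree increase, then $\RT$ has at most one internal node; a tree on $d \ge 3$ vertices with at most one internal node is a star, whose unique center $c$ has $\RT$-degree $d-1$ and therefore net degree increase $d-1-1 = d-2$.

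Now I would invoke the $M$-degree-bound: the single node $c$ has increased its degree by $d-2$ in this one round, so $d-2 \le M$, i.e. $d \le M+2$. This contradicts the hypothesis $d \ge M+3$. Therefore at least two neighbors of $v$ must be internal nodes of $\RT$, each incurring a degree increase of at least $1$; this is exactly the claim.

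The main obstacle I anticipate is purely a bookkeeping one: making airtight the claim that on a tree the neighbors of $v$ genuinely \emph{must} be reconnected as a tree among themselves (no cycles, and connectivity must be preserved only through these neighbors), so that the counting above applies — this is essentially the content already established in Lemma \ref{lemma: d+2} and Lemma \ref{lemma: degup}, and I would cite it rather than reprove it. A secondary subtlety is whether ``degree increase in a single round'' in the definition of $M$-degree-bounded refers to the increase caused by this single deletion (it does), which is precisely what the star-center computation measures, so the bound applies cleanly.
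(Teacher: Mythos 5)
Your proposal is correct and follows essentially the same route as the paper: both invoke the counting from Lemma~\ref{lemma: d+2} to get a total degree increase of $d-2 \ge M+1$ among the neighbors, and then observe that since no single node may absorb more than $M$, at least two neighbors must share the increase. Your contradiction framing (star-shaped $\RT$ with a single overloaded center) is just a slightly more explicit packaging of the paper's direct pigeonhole argument.
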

\begin{proof}
 Node $v$ has $M+3$ neighbors. By Lemma \ref{lemma: d+2}, the sum total of degree increase of neighbors is $M+1$, when
the
graph is a tree. Since one node can get a maximum degree increase of $M$, at least one node has to incur the rest of the
degree increase. Thus, at least two nodes have to increase their degrees.

\end{proof}

\begin{figure}[h!]
\centering
\includegraphics{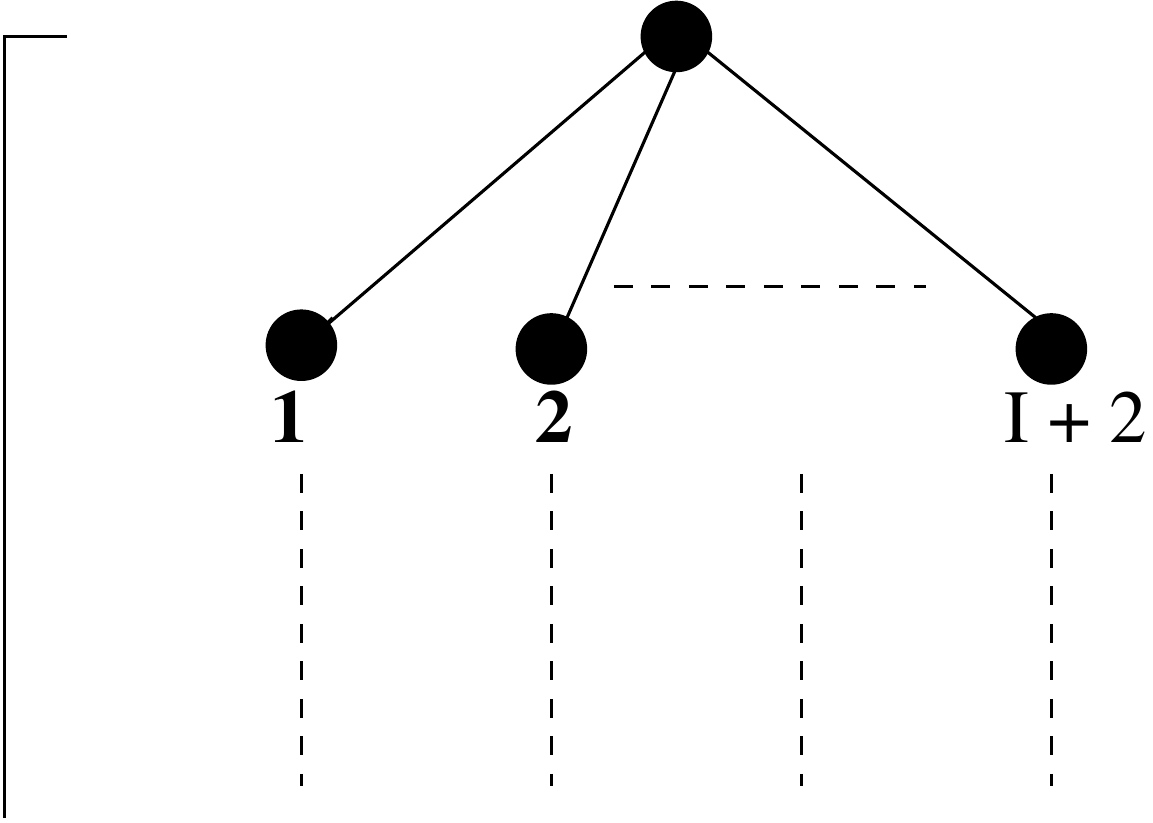}
\caption{M+2 -ary Tree}
\label{fig: Mplus2tree}
\end{figure}

\begin{algorithm}[h!]
\caption{\textsc{LevelAttack}: level-by-level attack on a (M+2)-ary tree}
\label{algo: levelattack}
\begin{algorithmic}[1]
\STMT Consider an (M+2)-ary tree $T$ of depth $D$ with levels numbered $0$ to $D$, the root being at level
$0$.
\STMT $i \leftarrow  D-1$
\WHILE {$i \ge 0$}
 \FOR{each node $v$ at level $i$} \label{levelalgo: forlevel}
 \STATE \label{levelalgo: prune} if $v$ has $c > M+2$ children  remove the excess $c - (M+2)$ nodes by deleting those
with least degree increases and their subtrees by using the \emph{Prune} operation, so that $v$ now has $M+2$ children.
\STATE delete $v$.
\ENDFOR
\STMT $i \leftarrow i-1$
\ENDWHILE
\end{algorithmic}
\end{algorithm}

\pagebreak

Here, we introduce a new attack strategy:\\
\begin{description}
\item{\textsc{LevelAttack}:}
This strategy is described in Algorithm \ref{algo: levelattack}. In brief, the adversary deletes nodes one level at a
time beginning one level above the leaves of a $M+2$-ary complete tree going up to the root. 
The reasoning behind the strategy is the following: If the adversary deletes a node of degree $M+3$ in a tree, this
ensures that a degree increase of at least 1 is passed to its children. What the adversary must do is to  ensure that 
$log n$ of these degree increases are credited to the same node.
\end{description}

\begin{lemma}
\label{lemma: levelupD} Assume a $(M+2)-ary$ tree $T$, a degree-bounded locality-aware healing algorithm and the
\textsc{LevelAttack} adversarial strategy. Then, when \textsc{LevelAttack} deleted a node at level $i$,
$0 < i < D$ some leaf node of the original tree  increases its degree by at least $ D - i$.
\end{lemma}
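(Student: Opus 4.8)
The plan is to prove the lemma by downward induction on $i$, running from $i=D-1$ to $i=1$, carrying along a strengthened invariant that controls not only the \emph{value} of the largest leaf degree-increase but also the \emph{location} of the leaves realizing it, so that the next level of the attack can push one of them up by one more. Before the induction I would record a structural fact maintained throughout \textsc{LevelAttack}: since the adversary processes levels from $D-1$ upward and the healing strategy is acyclic (so, by the reasoning behind Lemma~\ref{lemma: d+2}, the current graph stays a tree, rooted at the original root, which is untouched while $i\ge 1$), at the moment \textsc{LevelAttack} begins processing level $i$ the surviving original vertices are exactly those at levels $0,\dots,i$ together with some of the original leaves; the top $i+1$ levels still form the intact $(M{+}2)$-ary tree, every surviving leaf is a descendant of exactly one original level-$i$ vertex, and hence the non-parent neighbours of every surviving level-$i$ vertex are all leaves. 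The Prune operation only ever removes leaves (with their empty subtrees), so it preserves this picture.

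The invariant I would maintain is: \emph{just before \textsc{LevelAttack} processes level $i$, every surviving level-$i$ vertex is adjacent to at least $M+2$ leaves, each with degree increase at least $D-1-i$.} The base case $i=D-1$ is immediate, since before any deletion each level-$(D{-}1)$ vertex still has exactly its $M+2$ original leaf children, all of degree increase $0=D-1-(D-1)$.

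For the inductive step, assume the invariant at level $i\ge 1$ and take a surviving level-$i$ vertex $v$. Because Prune discards the children of smallest degree increase, after pruning $v$ has exactly $M+2$ children, all leaves of degree increase $\ge D-1-i$; together with its parent this makes $v$ a degree-$(M{+}3)$ vertex, so by Lemma~\ref{lemma: M+3up1} at least two neighbours of $v$ suffer a degree increase when $v$ is deleted. Only one neighbour (the parent) is not a leaf, so some leaf child of $v$ has its degree increased, reaching degree increase $\ge (D-1-i)+1=D-i$ --- the conclusion of the lemma for level $i$. To re-establish the invariant at level $i-1$, let $u$ be a surviving level-$(i{-}1)$ vertex with original children $v_1,\dots,v_{M+2}$ (all at level $i$, all deleted during this phase). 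For each $k$, the healing following the deletion of $v_k$ reconnects $\{u\}\cup\{\text{the }M{+}2\text{ pruned leaf children of }v_k\}$ acyclically into a tree $R_k$ on $M+3$ vertices; since the host graph was a tree, none of these edges existed before, so $u$'s degree changes by $\deg_{R_k}(u)-1$, which the $M$-degree bound forces to be at most $M$, i.e.\ $\deg_{R_k}(u)\le M+1$. Hence deleting $u$ from $R_k$ splits its $M+2$ leaves among fewer than $M+2$ components, so some neighbour of $u$ in $R_k$ has $R_k$-degree at least $2$: it is a leaf whose degree just increased and which therefore now has degree increase $\ge D-i$. Taking one such leaf from each of the $M+2$ reconnections (they are distinct, again by acyclicity, since a leaf shared by two $R_k$'s would close a cycle through $u$), we get that $u$ is adjacent to at least $M+2$ leaves of degree increase $\ge D-i=D-1-(i-1)$, which is the invariant at level $i-1$.

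I expect the main obstacle to be precisely this last move: not the arithmetic of degree increases, which Lemmas~\ref{lemma: d+2} and~\ref{lemma: M+3up1} already hand us, but guaranteeing that a just-boosted leaf is \emph{positioned} so it can be boosted again --- it must survive the Prune that precedes the deletion of its new parent, and that parent's reconnection must actually touch it. This is exactly what dictates the strengthened invariant (``at least $M+2$ good leaves, all adjacent to the level-$i$ vertex''): Prune retains the highest-degree-increase children, so a surplus of good leaves below each vertex is needed, and the $M$-degree bound (ruling out the parent becoming the centre of a star) together with acyclicity is what forces each vertex to inherit a boosted leaf from every one of its deleted children.
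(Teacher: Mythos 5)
Your proof follows the same level-by-level induction as the paper's, resting on the same counting facts (Lemmas~\ref{lemma: d+2} and~\ref{lemma: M+3up1}): each surviving level-$i$ vertex ends up with $M+2$ leaf children whose degree increase is already $D-1-i$, so deleting it (a degree-$(M+3)$ vertex) boosts one of them to $D-i$. The one place you go beyond the paper is the component/pigeonhole step showing that, since $u$'s degree in each reconnection tree $R_k$ is at most $M+1$ while $R_k$ contains $M+2$ leaves, some \emph{boosted} leaf is actually left adjacent to $u$ — the paper simply asserts that the boosted leaves from each child's deletion ``are now neighbors of $X$.'' Your strengthened invariant (at least $M+2$ boosted leaves adjacent to every surviving level-$i$ vertex, so that Prune cannot discard them all) is exactly what is needed to make that assertion precise and close the induction.
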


\begin{proof} The proof is by induction.

\emph{Base case:} In the \textsc{LevelAttack} strategy, the nodes at level $D-1$ are deleted first. Thus, a deletion of a node at
$D-1$ is our base case. A node at level $D-1$ has $M+3$ neighbors. By lemma \ref{lemma:
M+3up1}, there is at least one leaf node that increases its degree by 1 or more. Thus, the base case
holds.

\emph{Inductive step:} Assume the hypothesis holds for nodes at level $i+1$. We now show that it holds for nodes at level $i$.
 Consider a node, say X at level $i \ge 0$ .  It had $M+2$ children at level $i+1$. By the inductive hypothesis, each of
these deletions led to at least one node with degree  $D-(i+1)$. Moreover, $X$ is not among  these $M+2$ nodes.
Moreover, all of  these are now neighbors of $X$, since $X$ itself was involved in each of these deletions.
 The \emph{Prune} algorithm in step \ref{levelalgo: prune} retains only these $M+2$ as children of $X$. Each of these children has degree increase $D-(i+1)$
and was originally a leaf node of $T$. The adversary now deletes $X$.  By lemma \ref{lemma: M+3up1}, at least one of these
children incurs a degree increase.


\end{proof}

\begin{theorem}
\label{thm: lognbound}
Consider any locality-aware algorithm that increases the degree of any node after an attack by at most a
fixed constant.  Then there exists a graph and a strategy of deletions on that graph that will force the algorithm to
increase the degree of some node  by at least $\log n$.
\end{theorem}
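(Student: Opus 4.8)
The plan is to use the \textsc{LevelAttack} strategy on an $(M+2)$-ary complete tree together with Lemma~\ref{lemma: levelupD}, which has done essentially all the work. Given a locality-aware algorithm with a fixed constant degree-increase bound $M$ per round, I would first build an $(M+2)$-ary complete tree $T$ of depth $D$. This is a graph on $n = \frac{(M+2)^{D+1}-1}{M+1}$ nodes, so $D = \Theta(\log_{M+2} n) = \Theta(\log n)$ since $M$ is a fixed constant.

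Next I would run \textsc{LevelAttack} against the algorithm on $T$. By Lemma~\ref{lemma: levelupD}, when the adversary deletes a node at level $i$ (for $0 < i < D$), some leaf node of the original tree has had its degree increased by at least $D - i$. Applying this at level $i = 1$, which is the last nontrivial level processed before the root, yields a leaf node whose degree has increased by at least $D - 1 = \Theta(\log n)$. Strictly speaking I should double-check the indexing: Lemma~\ref{lemma: levelupD} gives the bound for $0 < i < D$, so $i=1$ is in range provided $D \geq 2$, which holds for $n$ large enough; the deletion of the root (level $0$) can only help, via Lemma~\ref{lemma: M+3up1} applied once more, but is not even needed. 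Thus there exists a graph (namely $T$) and a deletion strategy (namely \textsc{LevelAttack}) forcing degree increase $\Omega(\log n)$ on some node, which is the claim.

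One subtlety I would address carefully is the hypothesis matching: Lemma~\ref{lemma: levelupD} assumes a ``degree-bounded locality-aware healing algorithm,'' and the theorem hypothesizes that the degree increase per attack is at most a fixed constant --- these coincide with $M$ equal to that constant, so the $(M+2)$-ary tree is well-defined. I should also note that \textsc{LevelAttack}'s \emph{Prune} step only ever deletes leaves and their subtrees (reducing to repeated leaf deletions), so it is a legal adversarial strategy in our model, and after pruning, node $X$ at level $i$ has exactly $M+2$ surviving children, all of which are former leaves carrying the accumulated degree increase --- this is what makes the induction in Lemma~\ref{lemma: levelupD} go through. The main (and only real) obstacle is bookkeeping the constants: confirming that $n$ and $D$ are related by $D = \Theta(\log n)$ with the constant depending only on $M$, so that ``$D-1$'' is genuinely $\log n$ up to constants, and confirming the level indices line up so that Lemma~\ref{lemma: levelupD} can be invoked at a level giving a $\Theta(\log n)$ bound rather than an $O(1)$ one. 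Everything else is immediate from the lemmas already proved.
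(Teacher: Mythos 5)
Your proposal is correct and follows essentially the same route as the paper: an $(M+2)$-ary complete tree attacked with \textsc{LevelAttack}, with Lemma~\ref{lemma: levelupD} supplying the accumulated degree increase and $D = \Theta(\log n)$ closing the argument. Your extra care about the level-index range of Lemma~\ref{lemma: levelupD} (invoking it at level $1$ rather than at the root) is a minor tightening of the same proof, not a different approach.
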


\begin{proof}
 
 It is sufficient to give a graph and an attack strategy such that any degree-bounded  locality-aware healing algorithm
will have to increase a particular node's degree by $\log n$. 
Let $M$ be the constant degree increase that is the maximum that the healing algorithm can impose on any one node in
the graph. Then, for a graph which is a full (M+2)-ary tree ( Figure \ref{fig: Mplus2tree}),
 the adversary uses  \textsc{LevelAttack}.  

 Consider a (M+2)-ary tree $T$ of depth $D$ with levels numbered
$0$ to $D$. By lemma \ref{lemma: levelupD},  after the last deletion in the
adversary strategy, which is the deletion of the  root of $T$ i.e. the node at level $0$ there is at least one node
left which has a degree increase of $D$. Since $D$ is $O(log n)$, this adversary strategy achieves a degree increase of
at least $O(log n)$.

\end{proof}

\subsection{A general lower bound on healing by locality-aware algorithms}
\label{subsec: loglognlower}


For the  discussion that follows, consider the following structure:  Let $T$ be a top-level 3-level complete subtree, as
illustrated in figure \ref{sfig: t4l}. Top-level subtree implies that the root node has no parent but each of the leaf
node may themselves have other subtrees hanging off them. There are three levels labeled from $0$ to $2$. Let
$\delta(v)$ be the increase in degree experienced by node $v$.

We also define another operation called Graft, which uses the previously defined operation Prune. 

\begin{description}

\item[Graft (r,s)]: Given a node $r$ and another node $s$ in a subtree of $r$, the Graft operation makes $r$ and $s$
neighbors without changing the degree increase of either of them. This can be accomplished as follows: Take a node $x$
on the path between $r$ and $s$. \emph{Prune} all subtrees of $x$ except those containing $r$ and $s$, then delete $x$.
Repeat this process for all nodes on the path between $r$ and $s$.
\end{description}

\begin{algorithm}[h!]
\caption{ \textsc{(Root Node):} Increase degree by 2 for  a 3-level ternary subtree}
\label{algo: strategyup3}
\begin{algorithmic}[1]
\STATE If, at any point, any node has its degree increased by 2, stop.
\STATE  Delete all nodes at level $1$. \label{algostep: delL1}
\STATE  \FOR{Root Node $r$ (Level $0$)}  \label{algostep: L0prune}
 \WHILE{there is a neighbor $v'$ where $\delta(v') = 0$ }
 \STMT delete  $v'$
 \ENDWHILE
 \ENDFOR
 \STATE  \label{algostep: delroot} delete the Root Node (level $0$).
\end{algorithmic}
\end{algorithm}



\begin{figure}[h!]
\centering
\subfigure[3-Level complete ternary subtree $T$]{\label{sfig: t4l} \includegraphics[scale=0.75]{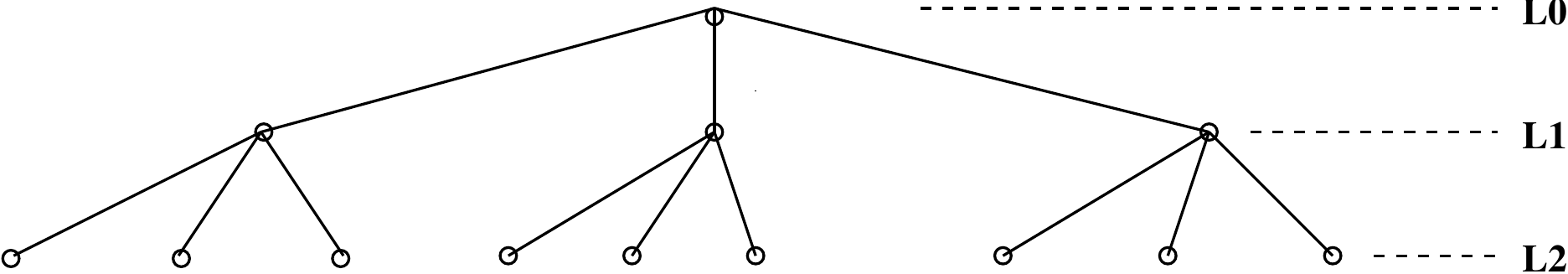}}\\
\subfigure[ $T$ after round1]{\label{sfig: t4lr1} \includegraphics[scale=0.8]{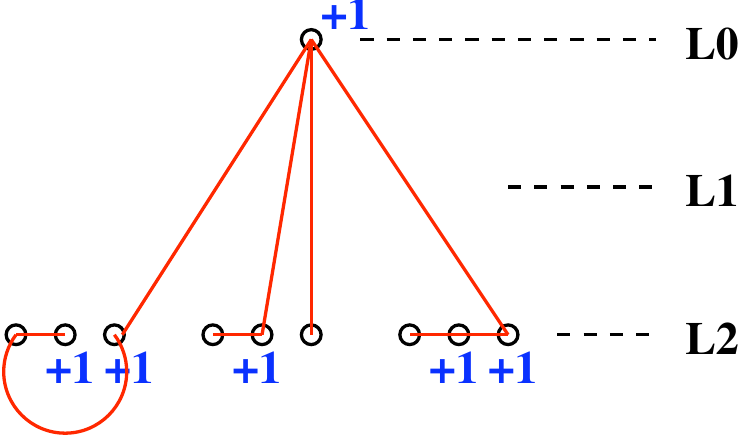}}\hspace*{35pt}
\subfigure[ $T$ after strategic deletion]{\label{sfig: t4lr2} \includegraphics[scale=0.8]{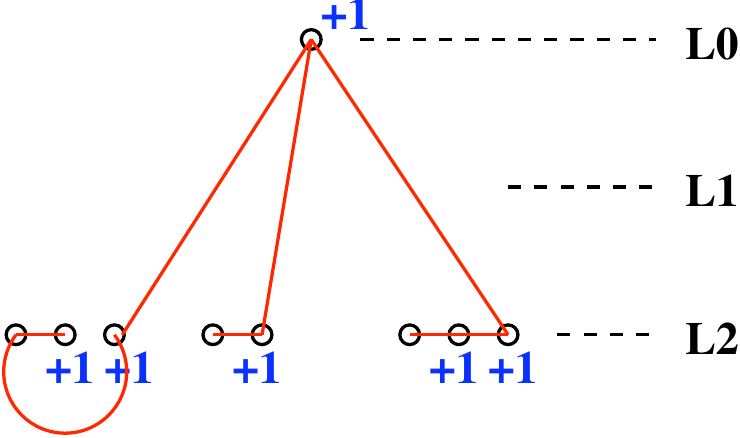}}\\
\subfigure[ $T$ after round 2]{\label{sfig: t4lr3a} \includegraphics{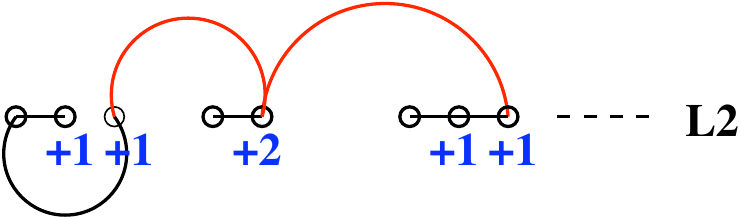}}
\caption{\emph{Strategy-1}}
\end{figure}

\begin{lemma}
\label{lemma: 3level3tree} For a top-level 3-level complete ternary subtree, for any locality aware algorithm, the
adversary strategy Algorithm \ref{algo: strategyup3} forces some node to increase it's degree by 2.
\end{lemma}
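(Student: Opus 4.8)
The plan is to argue by contradiction. Suppose the adversary runs Algorithm~\ref{algo: strategyup3} to completion without ever triggering the stopping rule in its first line; then at every moment each surviving vertex $v$ has $\delta(v)\le 1$. I will show that the final deletion of the root $r$ is nevertheless forced to raise some vertex's degree increase to $2$, a contradiction, so the stopping rule must fire earlier --- which is exactly the claim. Throughout I will use three facts. (a) Since a top-level subtree's root has no parent, $\deg(r)=3+\delta(r)$ as long as $r$ survives, so ``$\deg(r)\ge 3$'' is the same as ``$\delta(r)\ge 0$''; and $r$ does survive lines~\ref{algostep: delL1}--\ref{algostep: L0prune}, which only delete level-$1$ vertices and neighbours of $r$. (b) Because the starting graph is a tree, whenever a deletion occurs every disconnected piece is a subtree that must be reattached through the deleted vertex's neighbours, so no surviving vertex ever drops below its initial degree; hence $\delta(v)\ge 0$ always. (c) The \emph{gateway principle}: if a deleted vertex $x$ is a cut vertex and $r$ is a neighbour of $x$ lying in the unique ``large'' component, then at least one new edge at $r$ is added during healing.

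First I would analyse line~\ref{algostep: delL1}, the deletion of the level-$1$ vertices $c_1,c_2,c_3$. Each $c_i$ is a cut vertex whose removal isolates the three subtrees hanging from its children, so by (c) it contributes $k_i\ge 1$ new edges at $r$; hence after line~\ref{algostep: delL1} we have $\deg(r)=k_1+k_2+k_3\ge 3$, and by $\delta(r)\le 1$ actually $\deg(r)\in\{3,4\}$ with every neighbour of $r$ an original level-$2$ vertex. The quantitative heart of this phase is: if $k_i=1$, the unique new neighbour $n_i$ of $r$ coming from $c_i$ has $\delta(n_i)\ge 1$, because the other two children of $c_i$ can only reach the large component through $n_i$ (the lone edge at $r$ is already spent), forcing $n_i$ a second healing edge. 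Consequently, if $\deg(r)=3$ then all $k_i=1$ and every neighbour of $r$ already has $\delta\ge 1$, so line~\ref{algostep: L0prune} is vacuous; and if $\deg(r)=4$ then one $k_i=2$, and a short inspection of the (at most $4$-edge) healings of that $c_i$ shows $r$ gains at most one neighbour with $\delta=0$.

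Next I would analyse line~\ref{algostep: L0prune}, the repeated deletion of $\delta=0$ neighbours of $r$. A vertex deleted here has $\delta=0$, hence carries at most one healing edge and so lies on no healing-created cycle; thus if it has degree $\ge 2$ it is a cut vertex of a tree-like part of the graph, and by (c) its deletion does not lower $\deg(r)$ (and cannot raise it past $4$, by the standing assumption). The only way $\deg(r)$ decreases is deleting a \emph{pendant} $\delta=0$ neighbour, which drops $\deg(r)$ by $1$. I would then bound this to one occurrence: starting from the at-most-one $\delta=0$ neighbour left by line~\ref{algostep: delL1}, the process of deleting it and re-scanning $r$'s neighbourhood follows a single downward chain in the tree --- deleting an internal node with a unique child passes the role of ``$\delta=0$ neighbour of $r$'' to that child, deleting an internal node with $\ge 2$ children forces its replacement neighbour of $r$ to have $\delta\ge 1$ (the same count as before) and ends the chain, and the chain reaches a pendant vertex at most once. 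So when line~\ref{algostep: L0prune} terminates, $\deg(r)\ge 3$ and, by the loop guard together with (b), every neighbour of $r$ has $\delta\ge 1$.

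Finally, line~\ref{algostep: delroot} deletes $r$, now a vertex of degree $\ge 3$; by Lemma~\ref{lemma: degup} the healing raises the (net) degree of some neighbour of $r$ by $1$, and that neighbour already had $\delta\ge 1$, so it reaches $\delta\ge 2$ --- the contradiction we sought. I expect the main obstacle to be the claim of the preceding paragraph: showing that no legal sequence of healings during line~\ref{algostep: L0prune} can erode $\deg(r)$ below $3$ while a $\delta=0$ neighbour survives. The gateway principle (c) disposes of internal vertices, but one must also control the \emph{new} $\delta=0$ pendant neighbours of $r$ that healings can manufacture deeper in the tree; the chain argument above is what limits the total loss at $r$ to $1$, and it is precisely where the hypothesis that $T$ is \emph{ternary} enters (each $c_i$ has exactly three children, so $k_i=1$ leaves exactly two children to be bridged through $n_i$).
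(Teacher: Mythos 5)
Your proof takes the same route as the paper's: analyze the three rounds of Algorithm~\ref{algo: strategyup3} and finish by applying Lemma~\ref{lemma: degup} to the deletion of the root. The difference is one of rigor. The paper's write-up is a sketch --- it asserts without argument that after line~\ref{algostep: L0prune} the root has exactly three neighbours, each with $\delta=1$ --- whereas your $k_i$-accounting for line~\ref{algostep: delL1} (each deleted level-$1$ vertex must hand the root at least one healing edge, so $\deg(r)=k_1+k_2+k_3\in\{3,4\}$ under the standing assumption $\delta(r)\le 1$, and a $k_i=1$ healing forces the unique new neighbour to pick up a second healing edge) and your analysis of the pruning loop (at most one $\delta=0$ neighbour ever exists, and deleting it either drops $\deg(r)$ from $4$ to $3$ or replaces it by a $\delta\ge1$ neighbour) supply exactly the justification the paper omits. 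Those parts check out; the ``unique child'' case in your chain argument never actually arises, since every vertex of the ternary tree has zero or three children, but that is harmless.

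There is one soft spot, which your proof shares with (indeed inherits from) the paper's: the final invocation of Lemma~\ref{lemma: degup}. That lemma's proof needs the healing edges on the deleted vertex's neighbours to form a connected graph \emph{on those neighbours}, which is guaranteed only when the neighbours lie in pairwise distinct components. Your own round-$1$ analysis admits a four-edge, cycle-creating healing of a level-$1$ vertex $c_1$ with children $a,b,c$: add $r\!-\!a$, $r\!-\!b$, $a\!-\!c$, $b\!-\!c$, giving $\delta(r)=\delta(a)=\delta(b)=\delta(c)=1$. The healings of $c_2,c_3$ then each contribute one new neighbour $n_2,n_3$ of $r$ with $\delta\ge1$, the pruning loop is vacuous, and when $r$ (degree $4$) is deleted its four neighbours span only \emph{three} components, since $a$ and $b$ remain joined through $c$. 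A repair consisting of two disjoint edges, say $a\!-\!n_2$ and $b\!-\!n_3$, restores connectivity while giving every neighbour a net degree change of $0$, so no vertex ever reaches $\delta=2$. Hence the lemma as literally stated (``any locality aware algorithm'') needs either an acyclicity hypothesis on the healer --- which the paper does assume elsewhere, e.g.\ in Lemma~\ref{lemma: d+2} and in the forest invariant of $G_h$ --- or a separate argument disposing of the four-neighbour, three-component configuration. If you add the hypothesis that the healed graph remains a tree, every deleted vertex's neighbours automatically lie in distinct components and both your argument and the paper's go through.
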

\begin{proof}
  Let $T$ be the top-level 3-level ternary subtree
depicted in figure \ref{sfig: t4l}. There are 3 levels labeled $0$ to $2$. The adversary strategy Algorithm \ref{algo: strategyup3} consists of 3
possible rounds of deletions. As expected, a locality-aware self-healing algorithm does self-healing after every node
deletion. In the following, the steps refer to the steps of the algorithm \ref{algo: strategyup3}.

\begin{itemize}
\item \underline{Round 1; step \ref{algostep: delL1}:}
  The adversary deletes all nodes at level $1$. By lemma \ref{lemma: degup}, at least one neighbor of the deleted node
would get a degree increase of 1. Moreover, this node will now be a neighbor of the parent of the deleted node, at
level $0$. This is shown in figure \ref{sfig: t4lr1}.

\item \underline{Round 2; step\ref{algostep: L0prune}:}
 If no node got degree increase of 2 in the previous round, round 2 and 3 are initiated. In this round, the adversary
will delete all neighbors $v'$ of the root node (level $0$), where $\delta(v')$ = 0, if any. Now each neighbor of
$0$ has degree increase of 1.
 
\item \underline{Round 3; step \ref{algostep: delroot}:}
  After the previous round, the root node will have  3 neighbors, each with degree increase of 1. The adversary now
deletes the root node at level $0$. On Self-healing, one of the nodes will get a degree increase of 2.

\end{itemize}

\end{proof}
\begin{algorithm}[h!]
\caption{\textsc{DegreeUp(V,i):} Recursive Procedure to get a  node of degree increase $i+1$}
\label{algo: degreeup}
\begin{algorithmic}[1]
\STATE \emph{Init:} 
 Let $\delta(v)$ be the increase in degree experienced by node $v$. Let \emph{virgin} subtrees be subtrees none of whose
nodes have been involved in a deletion/self-healing process yet. 
\IF{$i$ = 0}
\STMT$V'$ = \emph{Strategy-1}($V$)
\ENDIF
\FOR{Each of the 3 virgin subtrees of $V$}
\STMT $V' \leftarrow$ root of virgin subtree.
\WHILE {$\delta(V') < i$}
\STMT $V' \leftarrow$ \textsc{DegreeUp} ($V', \delta(V')$)
\STMT \emph{Graft}($V,V'$)
\ENDWHILE
\ENDFOR
\STMT \emph{Prune} subtrees of $V$ not involved above.
\STMT delete $V$.
\STMT \emph{return} node (ex-neighbor of $V$) with highest degree increase.
\end{algorithmic}
\end{algorithm}

\begin{algorithm}[h!]
\caption{ Increase degree of a node by \emph{log log n}, for a $3.2^{a}$-level ternary tree, where $ a
\ge 0$ }
\label{algo: main}
\begin{algorithmic}[1]
\STATE \emph{Init:} Let $G$ be a complete ternary tree of $L$ levels, where $L$ is $3.2^{a}$, where $ a \ge 0$, and $n$
is the total number of nodes . Let $\delta(v)$ be the increase in degree experienced by node $v$. Let \emph{virgin}
subtrees be subtrees none of whose nodes have been involved in a deletion/self-healing process yet. 
\STMT $i \leftarrow 0$
\STMT $V = $ \emph{Strategy-1 (Root of $G$)}
 \WHILE{$\delta(V) < log_{2}log_{3}n$ }
 \STMT $V \leftarrow$ \textsc{DegreeUp ($V, \delta(V)$)}
 \ENDWHILE
\end{algorithmic}
\end{algorithm}

 
 \pagebreak
  
\begin{theorem}
\label{thm: loglognbound}
 There exists a graph $G$ such that for any locality aware algorithm on $G$ there exists an adversary strategy that
forces some node to increase it's degree by $\log \log n$, where $n$ is the number of nodes in $G$.
\end{theorem}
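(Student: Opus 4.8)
The plan is to take $G$ to be a complete ternary tree and to run on it the recursive adversary of Algorithm~\ref{algo: main}, which is itself assembled from \emph{Strategy-1} (Algorithm~\ref{algo: strategyup3}) and the recursion \textsc{DegreeUp} (Algorithm~\ref{algo: degreeup}). Since the statement quantifies over an \emph{arbitrary} locality-aware healing algorithm, we cannot use the $d-2$ bookkeeping of Lemma~\ref{lemma: d+2}, which presupposed acyclic healing; the facts we lean on are Lemma~\ref{lemma: degup} (deleting a node of degree at least $3$ raises the degree of at least one of its former neighbours by at least $1$, no matter how the healing is done) and Lemma~\ref{lemma: 3level3tree} (on a top-level $3$-level complete ternary subtree, \emph{Strategy-1} forces a degree increase of $2$). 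Concretely, I would take $G$ to be the complete ternary tree with $3\cdot 2^{a}$ levels, so that $n=\Theta(3^{\,3\cdot 2^{a}})$ and hence $a=\Theta(\log\log n)$; this $G$ has exactly enough ``depth budget'' to support the strategy up to degree increase $\log_{2}\log_{3}n$.

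The crux is an induction on $i$ establishing the correctness of \textsc{DegreeUp}, in the following form: \emph{for every locality-aware healing algorithm, if $V$ is the root of an untouched (``virgin'') complete ternary subtree of depth at least $d(i)$, then the execution of \textsc{DegreeUp} on $(V,i)$ never runs out of virgin subtrees, deletes only nodes of that subtree, and returns a former neighbour of $V$ whose degree has increased by at least $i+1$; moreover $d(i)=O(2^{i})$.} The base case $i=0$ is Lemma~\ref{lemma: 3level3tree} applied to the $3$-level subtree rooted at $V$. For the inductive step, the execution of \textsc{DegreeUp} on $(V,i+1)$ walks the three virgin ternary subtrees hanging off $V$; inside each one, the inner loop calls \textsc{DegreeUp} on $(\cdot,j)$ for the successive values $j=0,1,\ldots$, which, by the inductive hypothesis applied to the smaller parameter $j$, promotes the working node one notch (from degree increase $j$ to degree increase $j+1$) and then \emph{Graft}s it onto $V$. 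Because \emph{Graft} only ever deletes degree-$2$ path vertices and prunes leaves, it changes nobody's degree increase, so once all three subtrees have been handled $V$ has three neighbours each of whose degree has increased by at least $i+1$; the concluding \emph{Prune} removes every other neighbour of $V$, so $V$ has exactly those three neighbours and they are connected to the rest of the graph only through $V$. Deleting $V$ then invokes Lemma~\ref{lemma: degup}: the only degrees that can change are those of the three ``good'' neighbours, so one of them ends up with degree increase at least $i+2$, and that is the node returned.

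Granting the claim, the theorem follows from Algorithm~\ref{algo: main}: we set $V$ to the node produced by \emph{Strategy-1} on the root of $G$, so $\delta(V)\ge 2$, and repeatedly replace $V$ by the output of \textsc{DegreeUp} on $(V,\delta(V))$, each round increasing $\delta(V)$ by one. The bound $d(i)=O(2^{i})$, together with the choice of $3\cdot 2^{a}$ levels, guarantees that a deep enough virgin subtree is always available until $\delta(V)$ reaches $\log_{2}\log_{3}n$, at which point some surviving node of $G$ has had its degree increased by $\log\log n$.

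The step I expect to be the real work is the depth-and-virginity bookkeeping behind ``$d(i)=O(2^{i})$''. One must carry through the recursion an invariant pinning down exactly which complete ternary subtrees remain virgin after a call, and at which depths, and then verify that the three subtrees needed at level $i+1$ — each of which must in turn host the entire chain of calls \textsc{DegreeUp} on $(\cdot,0),(\cdot,1),\ldots$, with every successive return node still carrying three fresh, deep-enough virgin subtrees — can all be supplied by a ternary subtree only a constant factor deeper. Checking that the \emph{Graft} contraction paths and the \emph{Prune} debris never intrude on subtrees earmarked for later use, and that $V$ retains no surviving low-degree-increase neighbour (in particular no surviving parent) at the moment it is deleted, are the other places that need care; everything else is a routine assembly of Lemmas~\ref{lemma: degup} and~\ref{lemma: 3level3tree}.
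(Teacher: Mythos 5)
Your proposal follows essentially the same route as the paper's own proof: the same graph (a complete ternary tree on $3\cdot 2^{a}$ levels), the same recursive adversary built from \emph{Strategy-1}, \textsc{DegreeUp}, \emph{Graft} and \emph{Prune}, the same reliance on Lemmas~\ref{lemma: degup} and~\ref{lemma: 3level3tree}, and the same doubling of the level budget per unit of degree increase yielding $\log_{2}\log_{3}n$. If anything, you are more explicit than the paper about the bookkeeping that still needs to be verified (virgin-subtree depths, and ensuring $V$ has no surviving low-$\delta$ neighbour such as a parent when it is deleted), which the paper's proof also leaves at the level of a sketch.
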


\begin{proof}
 
 It is sufficient to give example of a graph and an attack strategy such that any healing algorithm will have to
increase a particular node's degree by $\log \log n$. Such a graph $G$ is complete ternary tree with $L$ levels where L
is $3.2^{a}$, where $ a \ge 0$ .

The adversary strategies are described in Algorithms \ref{algo: strategyup3} and \ref{algo: main}.
The intuition behind the adversary strategy is that the strategy has to force any locality-aware self-healing strategy
to have a degree increase. In particular, the adversary strategy wants to avoid the possibility of a node
\emph{surrogating} all the other neighbors of its deleted neighbor. Notice that if a node had four neighbors, three of
which had a degree increase of 2, and the fourth has no degree increase, this fourth neighbor could simply connect to
the three i.e.
\emph{surrogate} them
and incur a degree increase of only 2. Moreover, the resulting geometry makes it difficult to construct a strategy. 
The way around this in Algorithm \ref{algo: main} is that  for a node which is about to be deleted, have a parent with
a  degree increase higher or equal to that of it's three children.  This forces some neighbor to register the required
degree increase on self-healing. Algorithm \ref{algo: strategyup3} gives a method to get a degree increase of 2 for a
node in a 3-level ternary tree. Algorithm \ref{algo: main} uses this as a recursive subroutine and the idea of a
high-degree parent to obtain a certain node with degree increase of at least $O (\log \log n)$ . 

 Consider the following cases for $G$:

 \begin{enumerate}
\item \textbf{$L = 3$}\\
The adversary applies Algorithm \ref{algo: strategyup3}.
\item \textbf{$L > 3$}\\
 The adversary applies Algorithm \ref{algo: main}. To begin with, Algorithm \ref{algo: main} calls
Algorithm \ref{algo: strategyup3} to obtain a top-level node $x$ with degree increase of 2. To get a degree increase of
3, 
 Algorithm \ref{algo: main} calls the algorithm \textsc{DegreeUp} for each of its three children to get  a child of
degree increase 2 in each of these subtrees, using 3 more levels. Using the \emph{graft} operation, these nodes are
attached to $x$. The
\emph{prune} operation removes any other subtrees of  $x$. Now $x$ has exactly three neighbors of degree increase 2
each, and deletion of $x$ leads to a node of degree increase 3. To get a degree increase of 4, the strategy uses the
same strategy described above recursively for three \emph{virgin} children of this node, using 6 more levels. This
will give it three children with degree increase 3 and now we can obtain a node with degree increase 4. Notice, each 
subsequent degree increase involves exponentially larger number of levels in $G$.\\
 Thus,
\begin{eqnarray*}
\textrm{degree increase} & =  & O(\log_{2} (\textrm{number of levels in}) G)\\
 & = & O (\log_{2} \log_{3} n)
\end{eqnarray*}

 \end{enumerate}

\end{proof}

\section{Experiments}


\label{sec: empirical}
 We carried out a number of experiments to ascertain the performance of various healing algorithms. We used a number of
attack strategies to measure how different healing strategies performed with regard to degree increase and stretch,
where stretch is the maximum ratio of distance increase in the healed network compared to the original network, over all
pairs of nodes. Our empirical results on stretch and a heuristic for maintaining low stretch
are described in Section \ref{app: stretch}. 
  
\subsection{Methodology}
\label{subsec: methodology}
 Most of our experiments were
conducted  on random graphs. These graphs were generated by the \emph{Preferential Attachment} model proposed by Barabasi \cite{barabasi-1999-286, barabasi-2003-50}.  The experimental approach was the following:
\begin{itemize}
\item For each graph size, for a particular deletion and healing strategy, repeat for 30 random instances of the graph:
\begin{itemize}
 \item  Repeat while there are nodes in the graph:
  \begin{itemize}
   \item delete a single node according to the deletion strategy.
   \item repair according to the self-healing strategy.
   \item measure the statistics (e.g. maximum change of degree for any node) for the graph. 
  \end{itemize}
  \end{itemize}
  \item average the statistics for each graph size.
 \end{itemize}
 
\subsection{Attack Strategies}
\label{subsec: attack}

The aim of the adversary is to collapse the network by trying to overload a node beyond it's maximum capacity. There are
many possible attack strategies. One strategy is to delete the node with the maximum degree. We call this the $Max Node strategy$.
 It would seem that a strategy that leads to additional burden on an already high burden node would be a good strategy.
For the adversary,  one good adversarial strategy is to continuously attack/delete a randomly chosen neighbor of the
highest degree node in the network. We call this the $Neighbor of Max Strategy (NMS)$.
 This would also seem plausible as in a real network or the kind of networks we are looking at, it would be reasonable
that the hubs or the high degree nodes would be more well protected and resilient to attack while their less significant
neighbors should be easy to take down.

 \subsection{Healing strategies}
\label{subsec: healing}
 We attempted various locality-aware healing strategies, some of which are the following:
 \begin{itemize}
 \item \emph{Graph heal}: On each deletion, we reconnect the neighbors of the deleted node in a binary tree regardless
of whether we introduced any cycles in the graph formed by the new edges introduced for healing. This seems to be  a
naive algorithm since the nodes use more edges than what are required for maintaining connectivity. 
 \item \emph{Binary tree heal}: On each deletion, we reconnect the neighbors of the deleted node in a binary tree being
careful not to introduce any cycles in the graph formed by the new edges introduced for healing. This is done using
random IDs which can then be used to identify which tree a particular node belongs to. This is an improvement on the
previous algorithm but still naive since it does not take into consideration the previous degree increase suffered by
nodes during healing.
 \item \emph{$\DASH$ (Degree Assisted binary tree heal)}:
 DASH is  smarter than the previous algorithms as borne out by the results of the experiments.  The DASH algorithm has been earlier described in Section \ref{subsec: dash} and stated as Algorithm \ref{algo: dash}.
\item \emph{$\SDASH$ (Surrogate Degree Assisted binary tree heal)}: (described in Section \ref{subsec: SDaSH})
 A heuristic based on $\DASH$ that tries to both keep node degrees and path lengths small.
 \end{itemize}
 
\subsection{Connectivity}
\label{subsec: connectivity}

Figure~\ref{fig: DASHhealtimeline} shows a series of snapshots from a simulation of  $\DASH$ showing that the network stays connected, and no individual node seems to be getting a large number of healing edges during healing.

 \begin{figure}[h!]
\centering
\subfigure[single deletion]{
\includegraphics[scale=0.21]{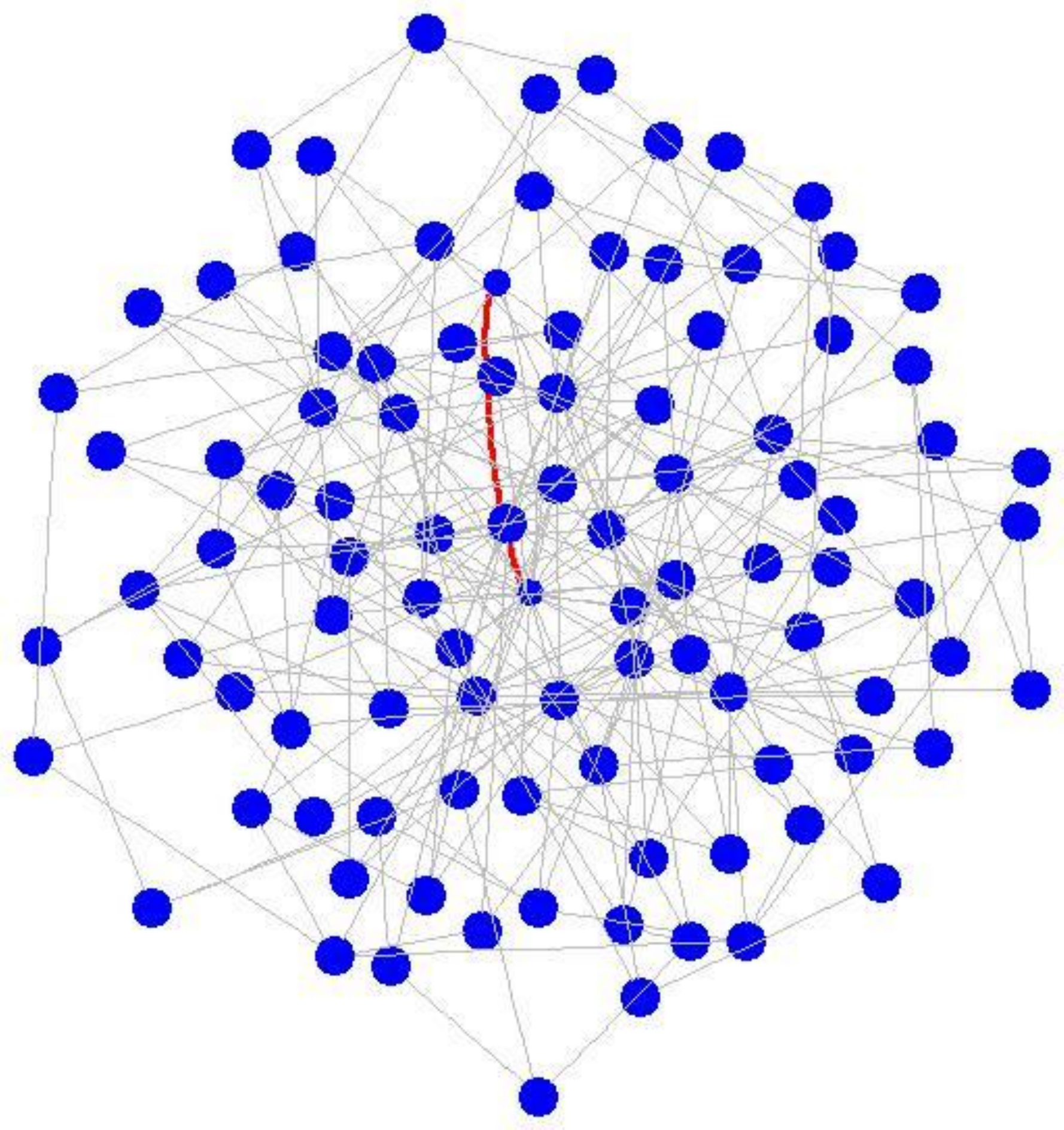}
 } 
\subfigure[10 deletions]{
\includegraphics[scale=0.22]{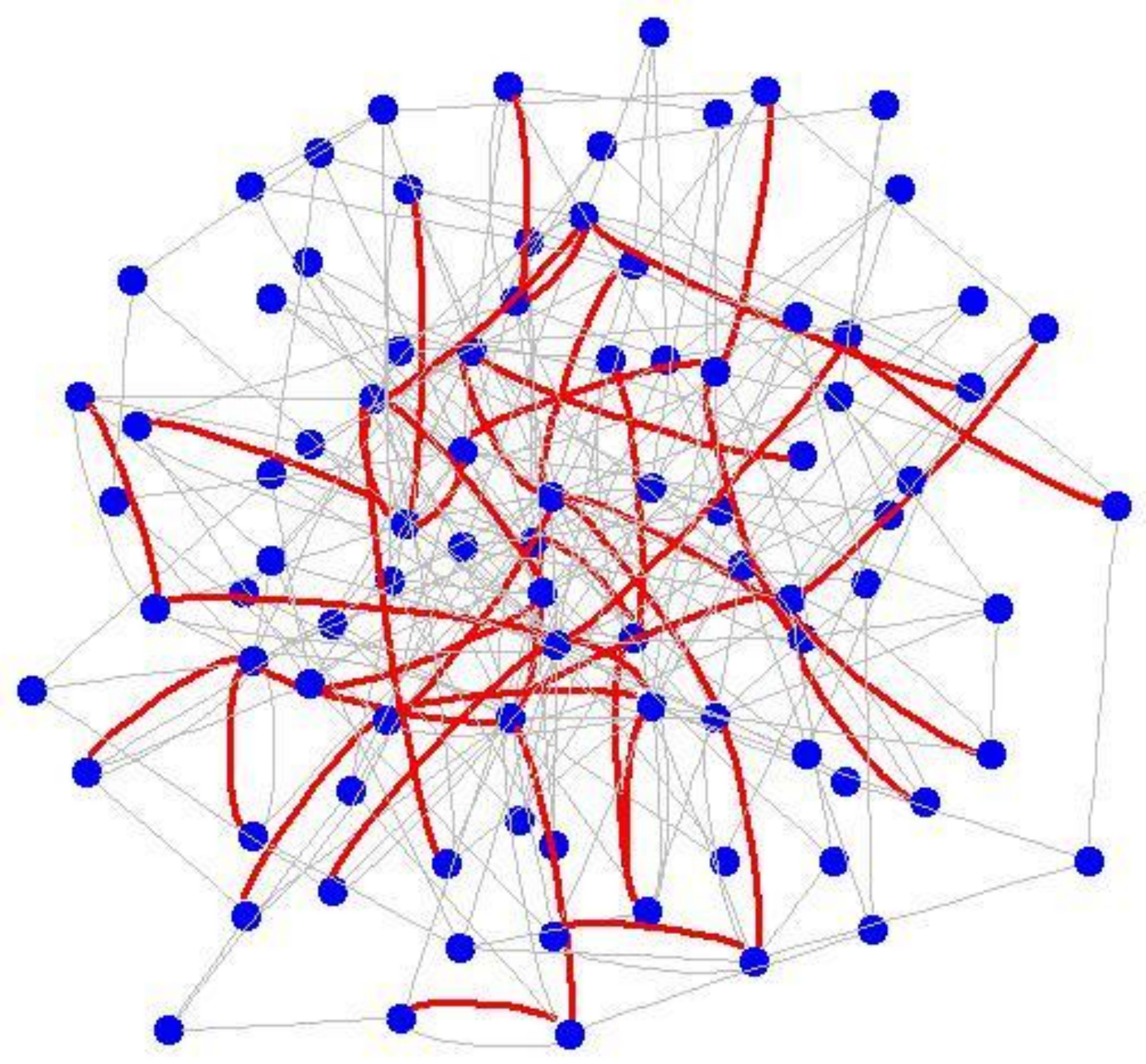}
 }
\subfigure[30 deletions]{
\includegraphics[scale=0.21]{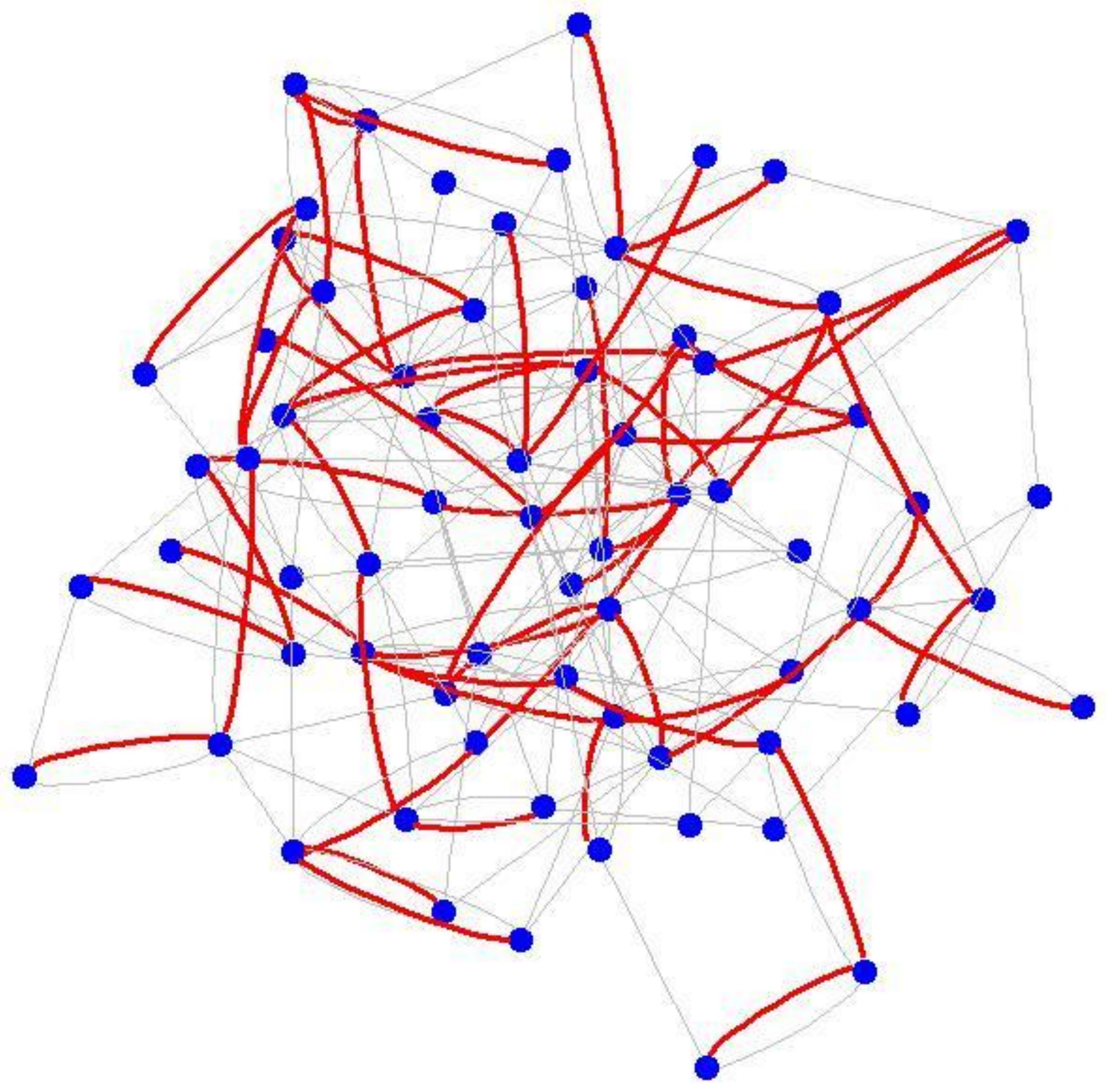}
}\\
\subfigure[40 deletions]{
\includegraphics[scale=0.18]{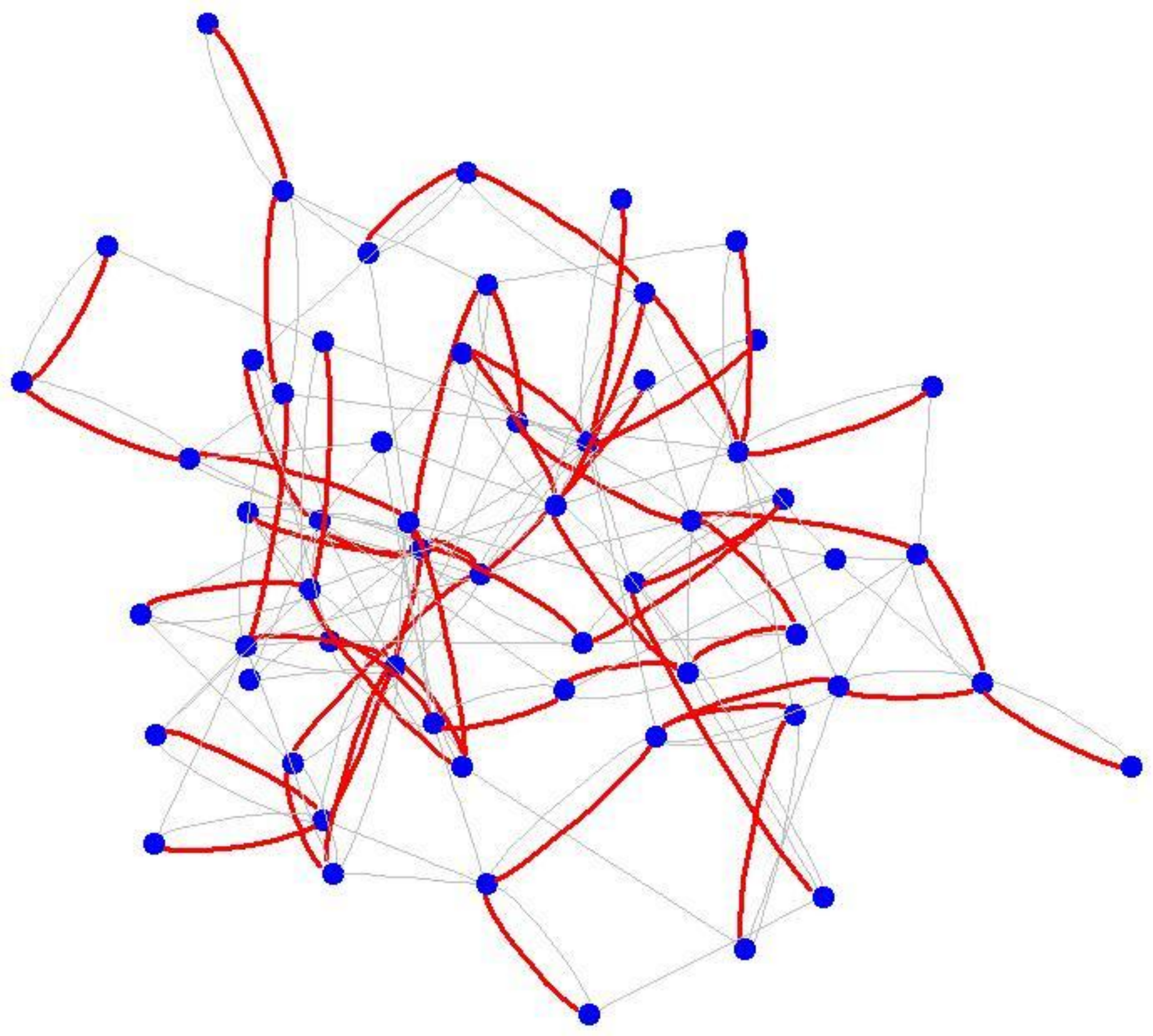}
 } 
\subfigure[50 deletions]{
\includegraphics[scale=0.17]{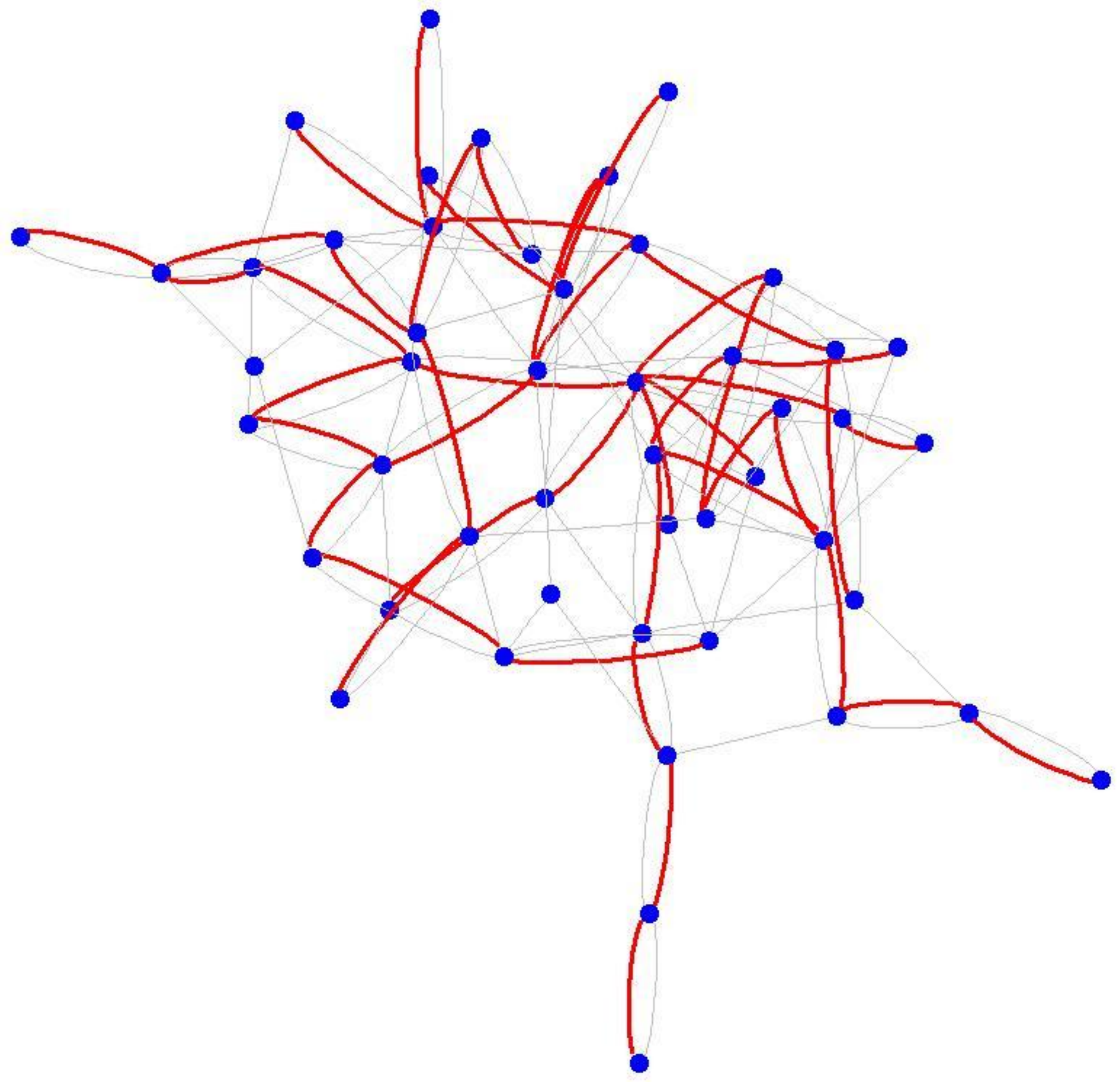}
 } 
 \subfigure[60 deletions]{
\includegraphics[scale=0.18]{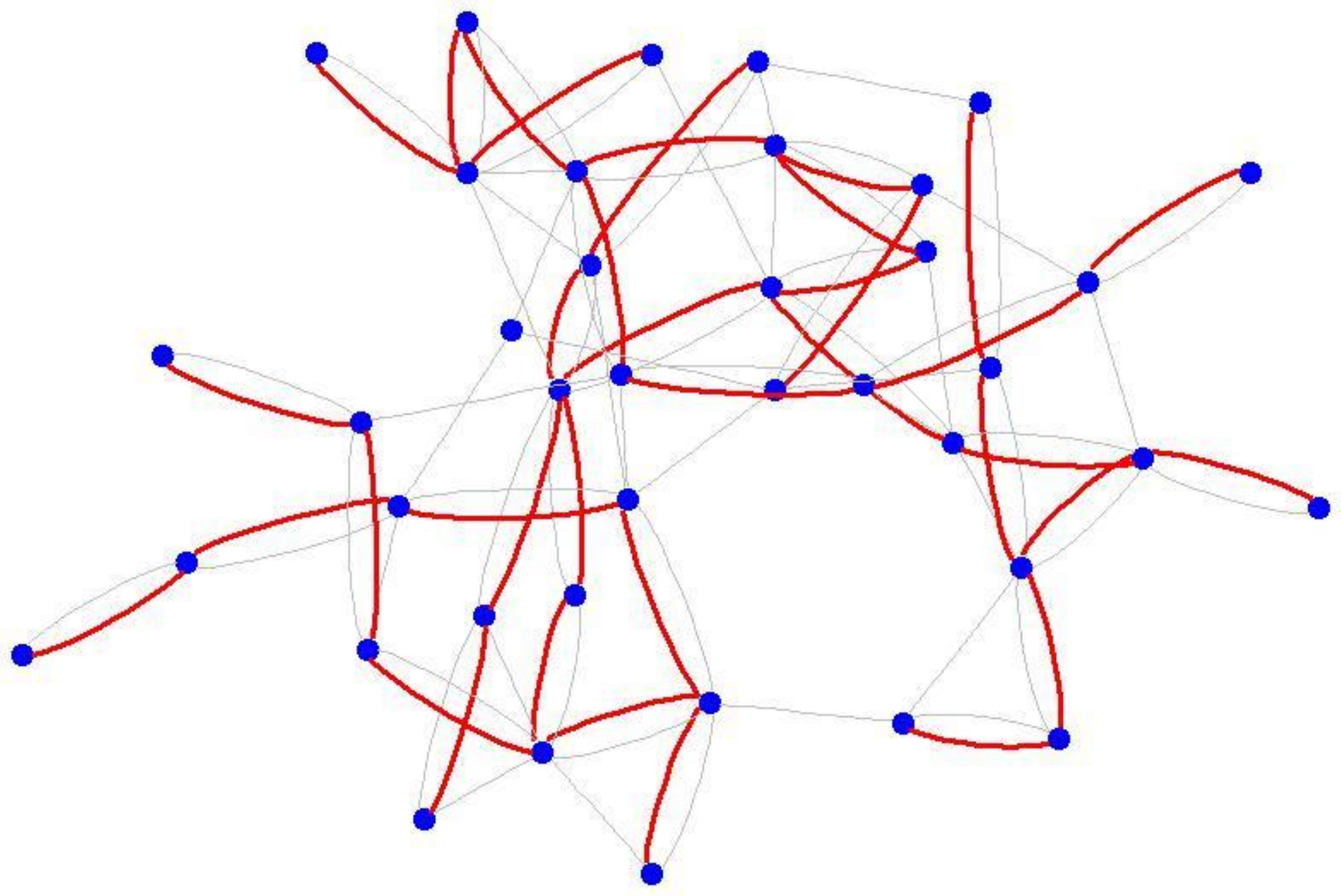}
 } \\
\subfigure[70 deletions]{
\includegraphics[scale=0.2]{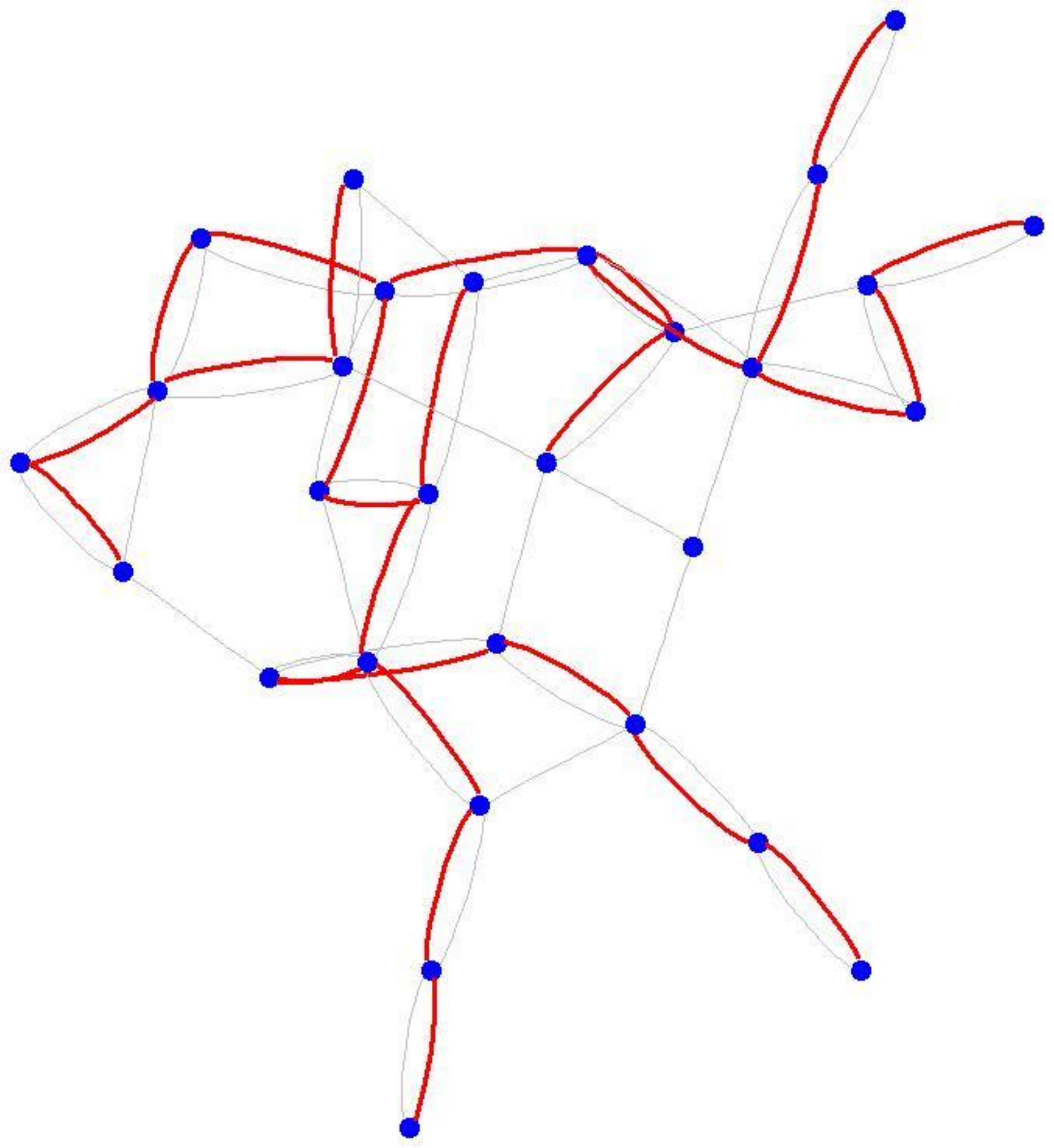}
 }
 \subfigure[80 deletions]{
\includegraphics[scale=0.19]{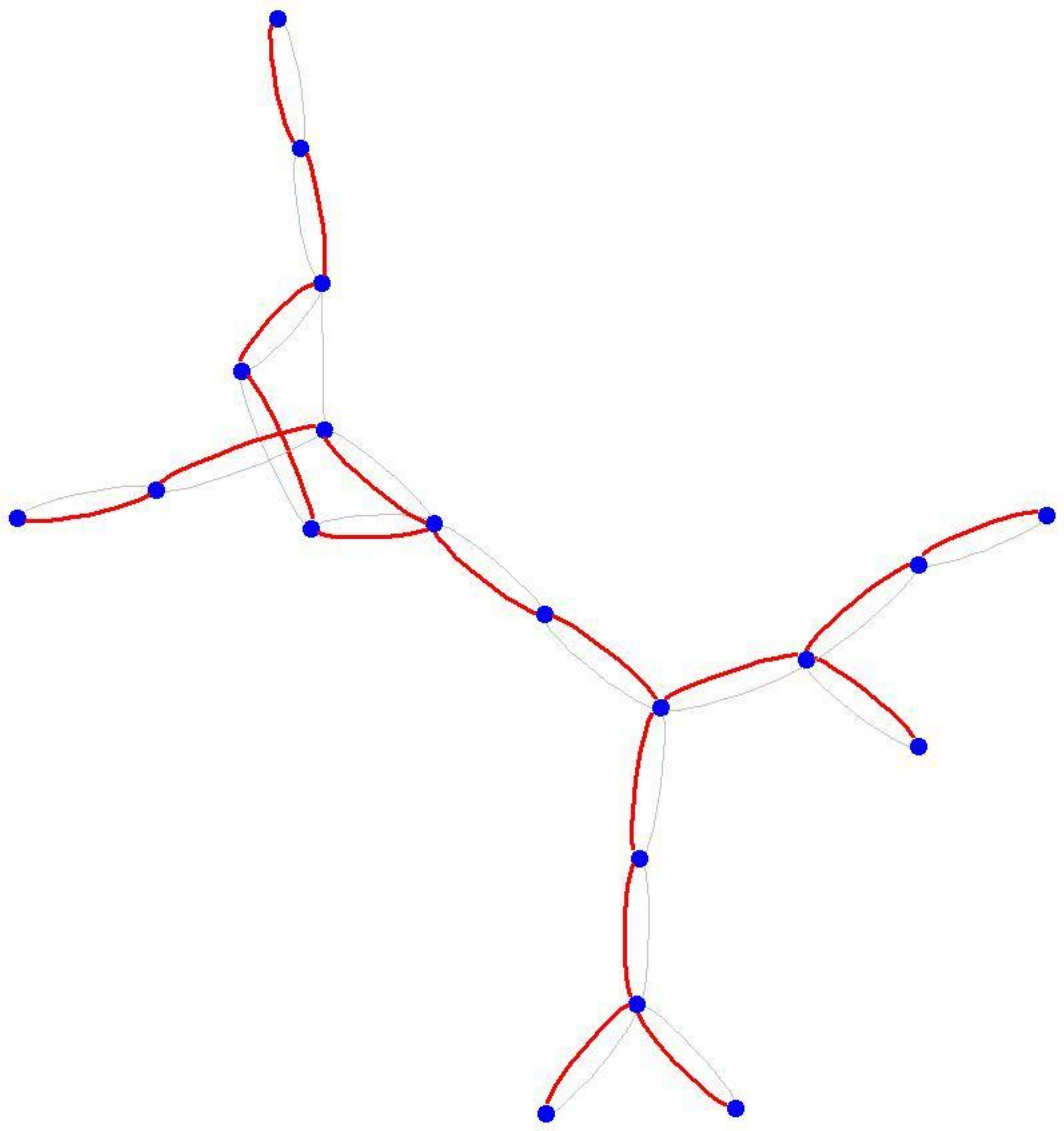}
 }
\subfigure[90 deletions]{
\includegraphics[scale=0.2]{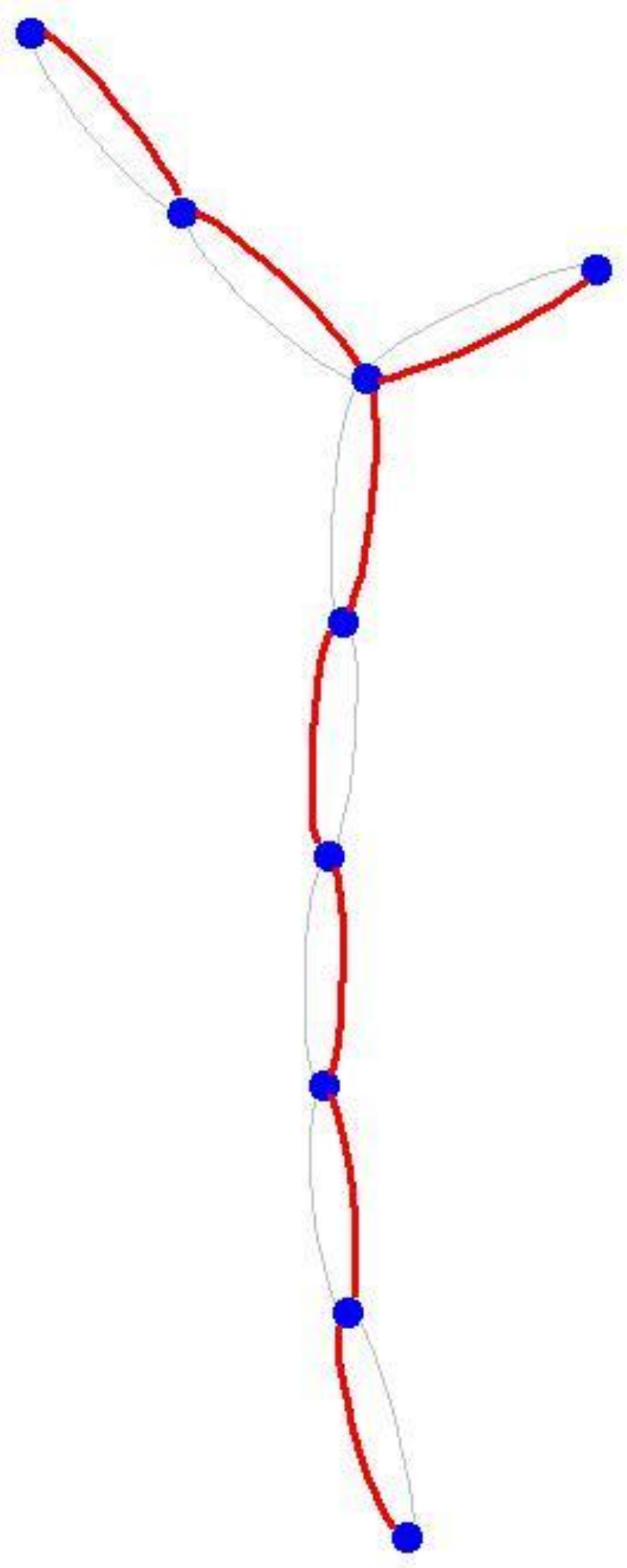}
}
\caption{A timeline of deletions and self healing in a network with 100 nodes. The gray edges are the original edges and
the red edges are the new edges added by our self-healing algorithm.}
\label{fig: DASHhealtimeline}
\end{figure}

\subsection{Degree increase}
\label{subsec: empirical-degree}

 The  $Neighbor of Max Strategy$ consistently resulted in higher degree increase, hence, we report results for only this attack strategy. Our experimental results clearly show that $\DASH$ and $\SDASH$  are good healing strategies. It performed well against both adversary strategies. Figure \ref{fig: Degplot} shows that $\DASH$ and $\SDASH$ have much lower degree increase than the other more naive strategies. Also, this degree increase was less than $\log n$, which is consistent with our theoretical results. $\SDASH$ has the additional nice property that it keeps path lengths small over multiple adversarial deletions.

\begin{figure}[h!]
\centering
\includegraphics{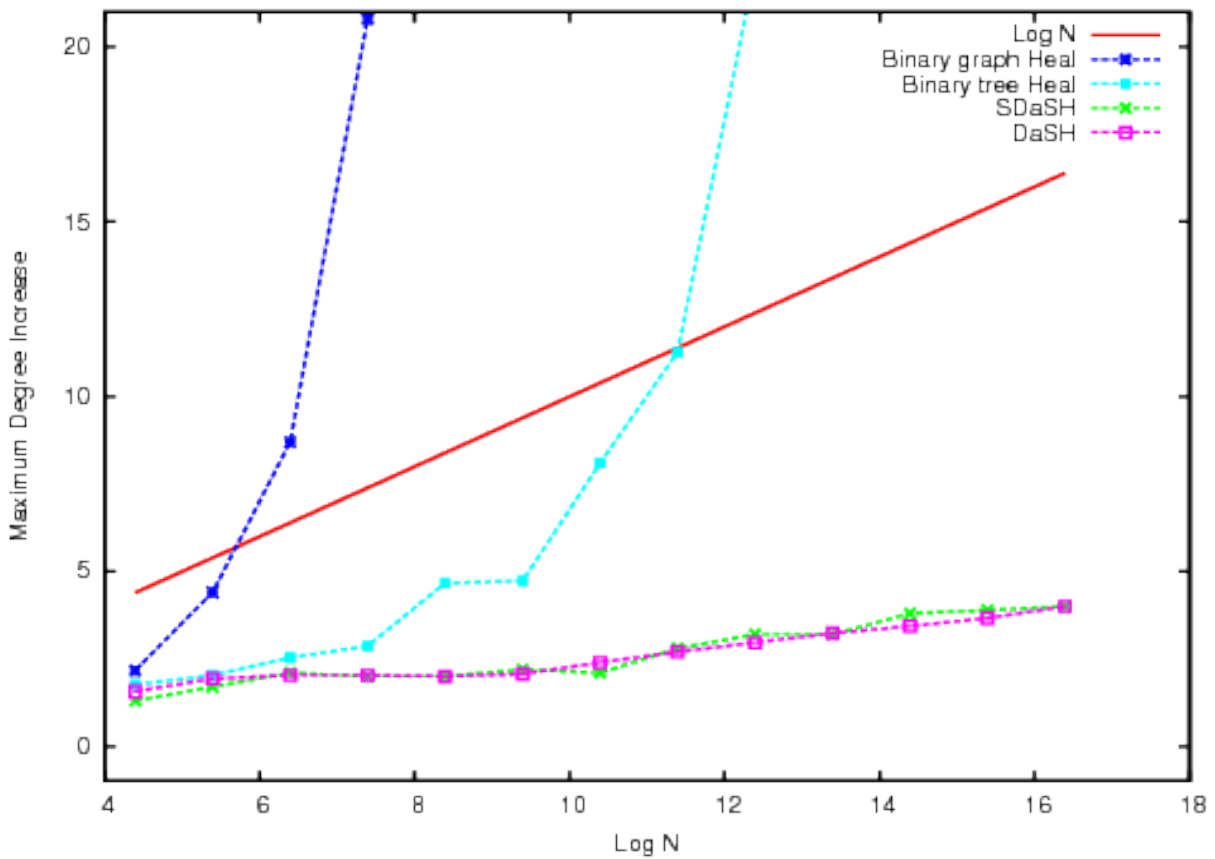}
\caption{Maximum Degree increase: DASH vs other algorithms}
\label{fig: Degplot}
\end{figure}

\subsection{Messages}
\label{subsec: empirical-messages}

Figure \ref{fig: nbrID} shows that the number of time a nodes $ID$ changes is less than $\log n$, as expected, for all healing strategies. Figure \ref{fig: msgs} shows the maximum number of messages a node sent out for the different strategies. Note that the number of messages a node sends out has to be less than or equal to the number of times a node changes ID times the degree of the node. Thus, algorithms with higher degree increase perform poorly.

\begin{figure}[h!]
   \centering
\includegraphics{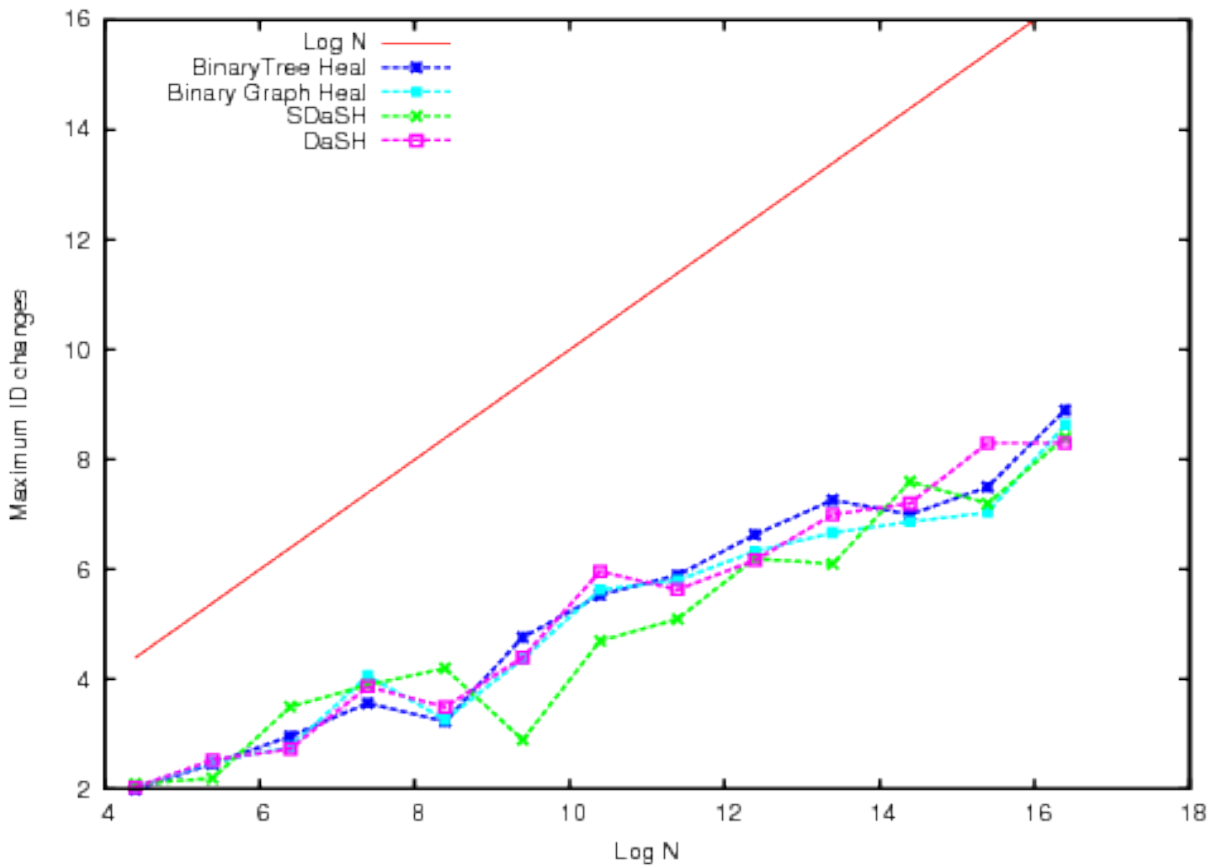}
     \caption{ID changes for nodes}
       \label{fig: nbrID}           
  \end{figure}

\begin{figure}[h!]
\includegraphics{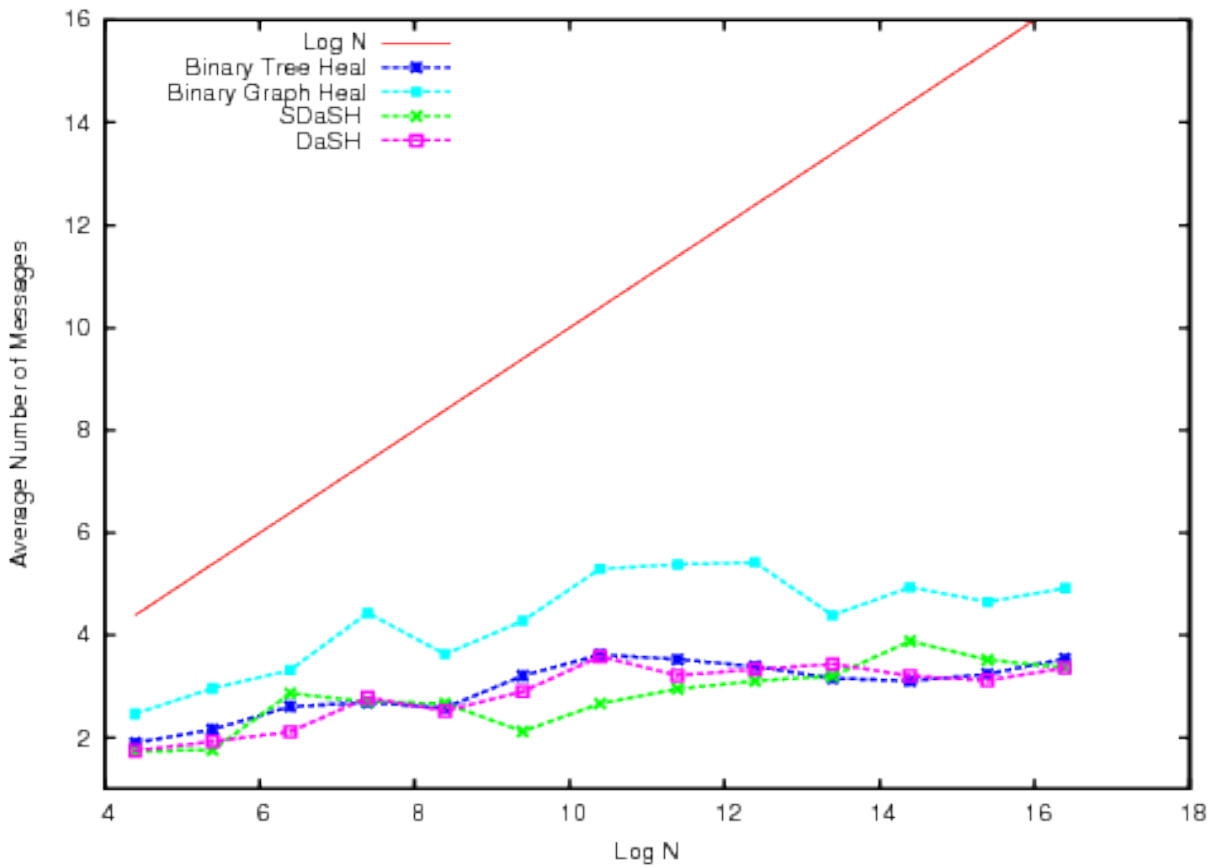}
    \caption{Number of messages exchanged for Component(ID) information  maintenance}
        \label{fig: msgs}           
\end{figure}


\subsection{Heuristics and experiments involving Stretch}
\label{app: stretch}

Stretch is  an important property we would also like our self-healing algorithms to minimize. The stretch for
any two nodes is the ratio between their distance in the new healed network and their distance in the original network. Stretch for the network is the maximum stretch over all pairs of nodes. Stretch is also closely related to the diameter of the network. In some sense, maintaining low degree increase and low stretch are contradictory aims since a high-degree node will lead to shorter paths and possibly lower stretch in the network.

\subsubsection{$\SDASH$: a strategy with good empirical results}
\label{subsec: SDaSH}
 $\SDASH$ is an algorithm we have devised which empirically has both low degree increase and low stretch. During
self-healing, we say a node \emph{surrogates} if it replaces its deleted neighbor in the network. i.e. it takes all the
connections of the deleted neighbor to itself. Surrogation never increases stretch since the paths never increase in
length. In certain situations, it turns out that surrogation can be done without degree increase. In such situations,
$\SDASH$ does surrogation else it simply applies $\DASH$. $\SDASH$ is described in Algorithm \ref{algo: sdash}.

\begin{algorithm}[h!]
\caption{\textbf{SDASH:} Surrogate Degree-Based Self-Healing}
\label{algo: sdash}
\begin{algorithmic}[1]
\STATE \emph{Init:} for given network $G(V,E)$, Initialize each vertex with a random number $ID$ between [0,1] selected
uniformly at random. 
\WHILE {true}
\STATE \emph{If a vertex $v$ is deleted, do}
\STATE Let $m \in UN(v,G) \cup N(v,G_h)$ be the node with Maximum degree increase ($\delta$) of all nodes in $UN(v,G) \cup N(v,G_h)$.   
\IF {$ w \in UN(v,G) \cup N(v,G_h)$ and $\delta(w) + | UN(v,G) \cup N(v,G_h) | -1 \le \delta(m)$ }
\STATE connect all nodes in $ UN(v,G) \cup N(v,G_h)$ to $w$. 
\ELSE 
\STATE Nodes in $UN(v,G) \cup N(v,G_h)$ are reconnected into a \emph{complete binary tree}. To connect the tree, go left
to right, top down, mapping nodes to the \emph{complete binary tree} in increasing order of $\delta$ value.
\ENDIF
\STATE Let $MINID$ be the minimum $ID$ of any node in $UN(v,G) \cup N(v,G_h)$.
 Propagate $MINID$ to all the nodes in the tree of $UN(v,G) \cup N(v,G_h)$ in $G_h$. All these nodes now set their $ID$ to
$MINID$.
\ENDWHILE
\end{algorithmic}
\end{algorithm}

 As can be seen in the figures that follow, $\SDASH$ seems to allow a degree increase  up to $O(\log n)$ and
stretch up to $O(\log n)$. We are working on proving theoretical properties of this algorithm.

\subsubsection{Stretch: empirical results}
\label{subsec: empirical-stretch}

 Figure \ref{fig: stretchplot} shows the performance of some of our algorithms for stretch.  We determined that the $Max Node strategy$ is most effective for the adversary when trying to maximize stretch and so our results in Figure \ref{fig: stretchplot} are against that adversarial strategy.   The more naive degree-control healing strategies do a good job of minimizing stretch.  However, it is important to keep in mind that these more naive algorithms increase the node degrees to a point where they are unlikely to be useful for many applications.  In contrast, our experiments show that $\SDASH$  does a good job of minimizing both stretch and degree increase.
  
\begin{figure}[h!]
\centering
\includegraphics{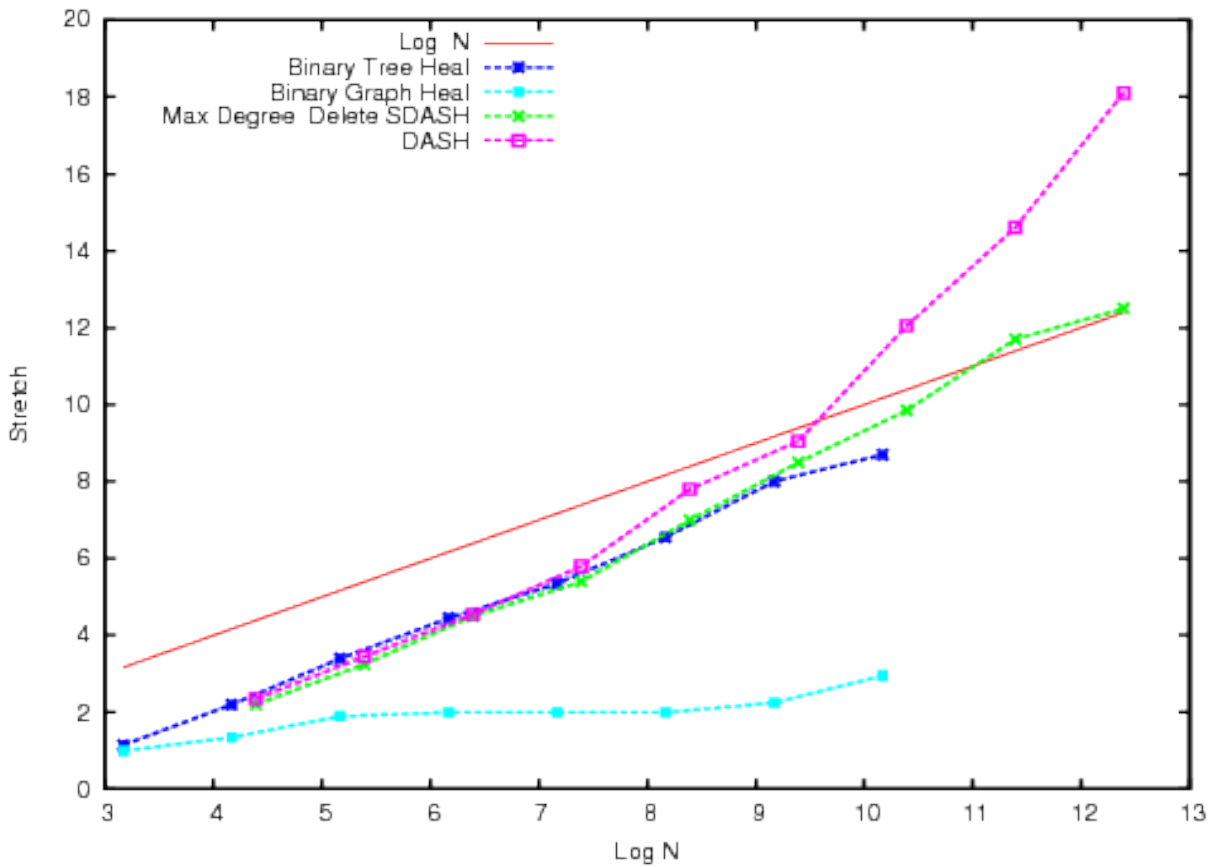}
\caption{Stretch for various algorithms}
\label{fig: stretchplot}
\end{figure}

\section{Conclusions and future work}
\label{sec: conclusions}

In this chapter, we have studied the problem of self-healing in networks that are reconfigurable in the sense that new edges can be added to the network.  We have described $\DASH$, a simple, efficient and localized algorithm for self-healing, that provably maintains network connectivity, even while increasing the degree of any node by no more than $O(\log n)$.  We have shown that $\DASH$ is asymptotically optimal in terms of minimizing the degree increase of any node.  Further, we have presented empirical results on power-law networks showing that $\DASH$ significantly outperforms the naive algorithms for this problem.

Several interesting problems remain open including the following:  Can we not only maintain connectivity, but also provably ensure that lengths of shortest paths in the graph do not increase by too much?  Can we remove the need for propagating IDs in order to maintain connected component information, or is such information strictly necessary to keep the degree increase small?  Can we use the self-healing idea to protect invariants for combinatorial objects besides graphs?  For example, can we provide algorithms to rewire a circuit so that it maintains essential functionality even when multiple gates fail?


\chapter{Forgiving Tree}
\label{chapter: FT}

\begin{epigraphs}
\qitem{\epitext{My roots are strong\\ 
My branches free,
But only because I'm a forgiving tree.}%
 }%
{%
\episource{The Forgiving Tree}\\ %
\epiauthor{Cheryl Merriweather}%
}

\end{epigraphs}

In this chapter, we present the algorithm $\FTree$ which first appeared in \emph{Principles of Distributed Computing 2008}~\cite{HayesPODC08}. We consider the problem of self-healing in peer-to-peer networks that
are under repeated attack by an omniscient adversary.  We assume that the following process continues for up to $n$ rounds where $n$ is the total number of nodes initially in the network: the adversary deletes an arbitrary node from the network, then the network responds by quickly adding a small number of new edges.

We present a distributed data structure that ensures two key
properties.  First, the diameter of the network is never more than
$O(\log \Delta)$ times its original diameter, where $\Delta$ is the
maximum degree of the network initially.  We note that for many
peer-to-peer systems, $\Delta$ is polylogarithmic, so the diameter
increase would be a $O(\log \log n)$ multiplicative factor.  Second,
the degree of any node never increases by more than $3$ over its
original degree.  Our data structure is fully distributed, has $O(1)$
latency per round and requires each node to send and receive $O(1)$
messages per round.  The data structure requires an initial setup
phase that has latency equal to the diameter of the original network,
and requires, with high probability, each node $v$ to send $O(\log n)$
messages along every edge incident to $v$.  Our approach is orthogonal
and complementary to traditional topology-based approaches to
defending against attack.

\section{Introduction}
\label{sec: FT-Intro}

 In Chapter~\ref{chapter: Intro}, we have made a case highlighting the need of using \emph{responsive} approaches for maintaining robustness and self-healing in networks.
 
 

In this chapter, we focus on a new, \emph{responsive} approach for
maintaining robust reconfigurable networks.  Our approach is responsive in the sense
that it responds to an attack (or component failure) by changing the
topology of the network.  Our approach works irrespective of the
initial state of the network, and is thus orthogonal and complementary
to traditional non-responsive techniques.  There are many desirable
invariants to maintain in the face of an attack.  Here we focus only on
 the simplest and most fundamental invariants: ensuring the diameter of the network 
and the degrees of all nodes do not increase by much.

\medskip
\noindent {\bf Our Model:} We now describe our model of attack and
network response.  We assume that the network is initially a connected
graph over $n$ nodes.  An adversary repeatedly attacks the
network.  This adversary knows the network topology and our
algorithms, and it has the ability to delete arbitrary nodes from the
network.  However, we assume the adversary is constrained in that in
any time step it can only delete a single node from the network.  We
further assume that after the adversary deletes some node $x$ from the
network, that the neighbors of $x$ become aware of this deletion and
that the network has a small amount of time to react by adding and deleting some
edges.  This adversarial model captures what can happen when a
worm or software error propagates through the population of nodes.
Such an attack may occur too quickly for human intervention or for the
network to recover via new nodes joining.  Instead the nodes that
remain in the network must somehow reconnect to ensure that the
network remains functional.

We assume that the edges that are added can be added anywhere in the
network.  We assume that there is very limited time to react to
deletion of $x$ before the adversary deletes another node.  Thus, the
algorithm for deciding which edges to add between the neighbors of $x$
must be fast. The detailed model used in $\FTree$ and its relation to the general model we described in  Section~\ref{sec: Intro-self-healingModel} is given in Section~\ref{sec: FT-Model}. 

\medskip
\noindent {\bf Our Results:}  A naive approach to this problem is
simply to 'surrogate' one neighbor of the deleted node to take on the
role of the deleted node, reconnecting the other neighbors to this
surrogate. However, an intelligent adversary can always cause this approach to 
increase the degree of some node by $\theta(n)$. On the other hand, we may try to keep the degree
increase low by connecting neighbors of the deleted node as a straight
line, or by connecting the neighbors of the deleted node in a binary tree.  However, for both of these techniques the diameter can increase by $\theta(n)$ over multiple deletions by an intelligent adversary~\cite{BomanSAS06, SaiaTrehanIPDPS08}.

In this chapter, we describe a new, light-weight distributed data structure that ensures that: 1) the diameter of the network never increases by more than $\log \Delta$ times its original diameter, where $\Delta$ is the
maximum degree of a node in the original network; and 2) the degree of
any node never increases by more than $3$ over over its original
degree.  Our algorithm is fully distributed, has $O(1)$ latency per
round and requires each node to send and receive $O(1)$ messages per
round.   The formal statement and proof of these results is in Section~\ref{subsec: upperbounds}.  Moreover, we show (in Section~\ref{subsec:
lowerbounds}) that
in a sense our algorithm is asymptotically optimal, since any algorithm that increases node degrees by no more than a
constant must, in some cases, cause the diameter of a graph to increase by a $\log \Delta$ factor.

The algorithm requires a one-time setup phase  to do the
following two tasks.  First, we must find a breadth first spanning
tree of the original network rooted at an arbitrary node.  In the synchnronous communication model, this can be
done with latency equal to the diameter of the original network, and,
with high probability, each node $v$ sending $O(\log n)$ messages
along every edge incident to $v$, as in the algorithm due to
Cohen~\cite{Cohen}. The second task required is to set up a simple
data structure for each node that we refer to as a will.  This will,
which we will describe in detail in the Section~\ref{sec:algorithm}, gives
instructions for each node $v$ on how the children of $v$ should
reestablish connectivity if $v$ is deleted.  Creating the will
requires $O(1)$ messages to be sent along the parent and children
edges of the global breadth-first search tree created in the first
task.
 

\medskip
\noindent {\bf Related Work:}

In this chapter, we build on earlier work in~\cite{BomanSAS06, SaiaTrehanIPDPS08}.

There have been numerous papers on dealing with adversarial atttacks in networks.  Kuhn et al \cite{Kuhn2006Blueprint, Kuhn2005Self-Repairing} describe efficient algorithms that provably ensure that node degree and network diameter stay small even in the case where an adversary can either add or delete up to a fixed number of nodes in any time step. 
They describe algorithms for the hypercube~\cite{Kuhn2005Self-Repairing} and pancake topology~\cite{Kuhn2006Blueprint} and suggest how their approach can apply to any recursively defined peer-to-peer topology.  In contrast, our algorithm does not handle adversarial insertions, but it is immediately applicable to any arbitrary reconfigurable network, even those that are not recursively defined.   


\section{Delete and Repair Model}
\label{sec: FT-Model}

We now describe the details of our delete and repair model.  Let $G = G_0$ be an arbitrary graph on $n$ nodes,
which represent processors in a distributed network.  One by one, the Adversary deletes nodes until none are left.  After each deletion, the Player gets to add some new edges to the graph, as well as deleting old ones.   The Player's goal is to maintain connectivity in the network, keeping the diameter of the graph small.  At the same time, the Player wants to minimize the resources spent on this task,
in the form of extra edges added to the graph, and also in 
terms of the number of connections maintained by each node
at any one time (the degree increase).  We seek an algorithm which gives performance guarantees 
under these metrics for each of the $n!$ possible deletion orders.

Unfortunately, the above model still does not capture the
behaviour we want, since it allows for a centralized Player
who ignores the structure of the original graph, and simply
installs and maintains a complete binary tree, using a leaf 
node to substitute for each deleted node. 

To avoid this sort of solution, we require a distributed
algorithm which can be run by a processor at each node.
Initially, each processor only knows its neighbors in $G_0$,
and is unaware of the structure of the rest of the $G_0$.
After each deletion (forming $H_t$), only the neighbors of the deleted
vertex are informed that the deletion has occurred.
After this, processors are allowed to communicate by
sending a limited number of messages to their direct 
neighbors.  We assume that these messages are always
sent and received successfully.  The processors may 
also request new edges be added to the graph to form $G_t$.
The only synchronicity assumption we make is that the
next vertex is not deleted until the end of this round
of computation and communication has concluded.
To make this assumption more reasonable, the per-node
communication should be $O(\log n)$ bits, and should 
moreover be parallelizable so that the entire protocol
can be completed in $O(1)$ time if we assume synchronous
communication.

We also allow a certain amount of pre-processing to be
done before the first deletion occurs.  This may, for
instance, be used by the processors to gather some
topological information about $G_0$, or perhaps to 
coordinate a strategy.  Another success metric is
the amount of computation and communication needed
during this preprocessing round.  Our full model is described as Model~\ref{algo: model-2}.

This model can be seen as a special case of our general model (Section~\ref{sec: Intro-self-healingModel}). We  assume we begin with a connected graph of $n$ vertices and do not explicitly discuss node insertions in $\FTree$.  Since only deletions happen, $n$ can only decrease.  For this reason, for our bounds, we need only compare our graph properties in the present graph at timestep $t$ ($G_t$), to the initial graph $G_0$ which has $n$ vertices.

\floatname{algorithm}{Model}

\begin{algorithm}[h!]
\caption{The Delete and Repair Model -- Distributed View.}
\label{algo: model-2}
\begin{algorithmic}
\STATE Each node of $G_0$ is a processor.  
\STATE Each processor starts with a list of its neighbors in $G_0$.
\STATE Pre-processing: Processors may send messages to and from
their neighbors.
\FOR {$t := 1$ to $n$}
\STATE Adversary deletes a node $v_t$ from $G_{t-1}$, forming $H_t$.
\STATE All neighbors of $v_t$ are informed of the deletion.
\STATE {\bf Recovery phase:}
\STATE Nodes of $H_t$ may communicate (in parallel) 
with their immediate neighbors.  These messages are never lost or
corrupted, and may contain the names of other vertices.
\STATE During this phase, each node may insert edges
joining it to any other nodes as desired. 
Nodes may also drop edges from previous rounds if no longer required.
\STATE At the end of this phase, we call the graph $G_t$.
\ENDFOR
\STATE {\bf Success metrics:} Minimize the following ``complexity'' measures:
 \begin{enumerate}
\item{\bf Degree increase.} $\max_{t<n} \max_{v} \Degree(v,G_t) - \Degree(v,G_0)$
\item {\bf Diameter stretch.} $\max_{t<n} \diam(G_t) / \diam(G_0)$
\item{\bf Communication per node.} The maximum number of bits sent by a single node in a single recovery round.
\item{\bf Recovery time.} The maximum total time for a recover round,
assuming it takes $1$ bit no more than $1$ time unit to traverse any edge and unlimited local computational power at each node.
\end{enumerate}
\end{algorithmic}
\end{algorithm}

\section{The Forgiving Tree algorithm}
\label{sec:algorithm}

At a high level, our algorithm works as follows. We begin with a
rooted spanning tree $T$, which without loss of generality may as well
be the entire network.

Each time a non-leaf node $v$ is deleted, we think of it as being
replaced by a balanced binary tree of ``virtual nodes,'' 
with the leaves of the virtual tree taking $v$'s
place as the parents of $v$'s children. Depending on certain
conditions explained later, the root of this ``virtual tree'' or
another virtual node (known as $v$'s \emph{heir}---this will be
discussed later) takes $v$'s place as the child of $v$'s
parent. This is illustrated in figure \ref{fig: RT}.
Note that each of the virtual nodes which was added is of
degree $3$, except the heir, if present.

\begin{figure}[h!]
\centering
\includegraphics[scale=0.5]{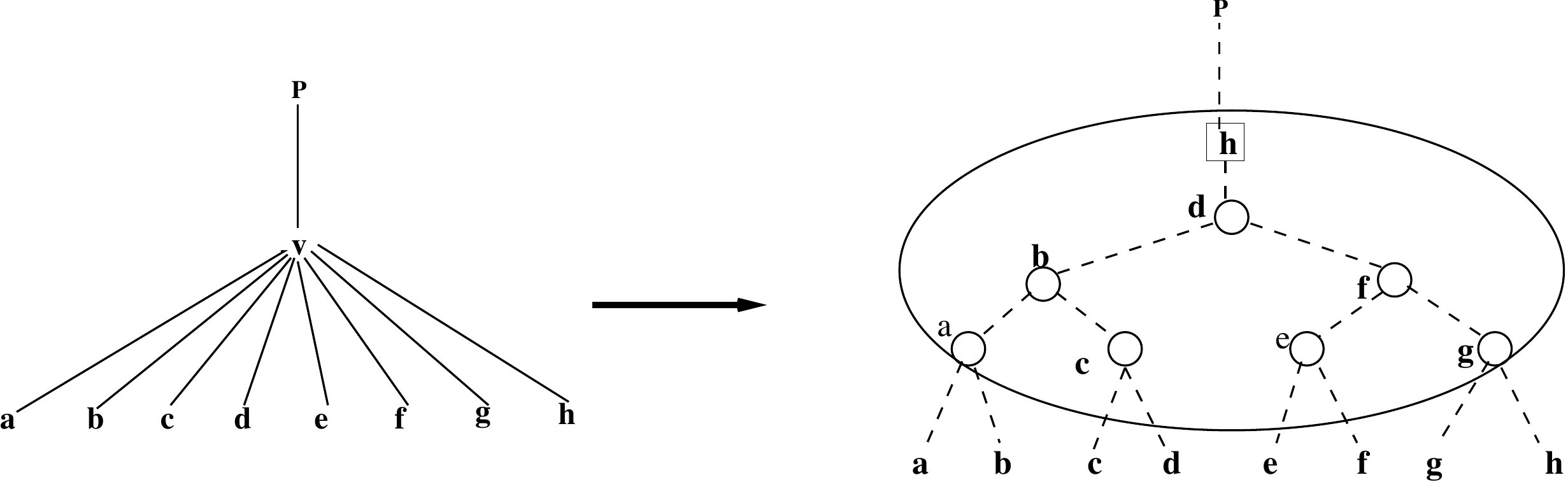}
\caption{Deleted node $v$ replaced by its Reconstruction Tree. The nodes in the oval are helper nodes. Regular helper nodes are depicted by circles and the heir helper node by a rectangle.}
 \label{fig: RT}
\end{figure}

When a leaf node is deleted,
we do not replace it.  However, if the parent of the deleted 
leaf node was a virtual node, its degree has now reduced
from $3$ to $2$, at which point we consider it redundant
and ``short-circuit'' it, removing it from the graph,
and connecting its surviving child directly to its parent.
This helps to ensure that, except for heirs, every virtual
node is of degree exactly $3$.

After a long sequence of such deletions, we are left with a
tree which is a patchwork mix of virtual nodes and original nodes.
We note that the degrees of the original nodes never increase
during the above procedure.
Also, because the virtual trees are balanced binary trees, the deletion of
a node $v$ can, at worst, cause the distances between its
neighbors to increase from $2$ to $2 \lceil \log d  \rceil$, where 
$d$ is the degree of $v$.  This ensures that,
even after an arbitrary sequence of deletions, the distance
between any pair of surviving actual nodes has not increased
by more than a $\lceil \log \Delta \rceil$ factor, 
where $\Delta$ is the maximum degree of the original tree.

Since our algorithm is only allowed to add edges and not nodes, 
we cannot really add these virtual nodes to the network.
We get around this by assigning each virtual node to an actual
node, and adding new edges between actual nodes in order to 
allow ``simulation'' of each virtual node.  More precisely,
our actual graph is the homomorphic image of the tree
described above, under a graph homomorphism which fixes 
the actual nodes in the tree and maps each virtual node
to a distinct actual node which is ``simulating'' it.
The existence of such a mapping is a consequence of the 
fact that all the virtual nodes have degree $3$, except heirs,
which have degree $2$ (and there are not too many of these), 
and will be proved later.
Note that, because each actual node ever simulates at most one
virtual node at a time, and virtual nodes have degree at most $3$,
this ensures that the maximum degree increase of our algorithm
is at most $3$.

The heart of our algorithm is a very efficient distributed algorithm 
for keeping track of which actual node is assigned to simulate
each virtual node, so that the replacement of each deleted node by
its virtual tree can be done in $O(1)$ time.  
We accomplish this using a system of ``wills,'' 
in which each vertex $v$ instructs each of its children (or their ``heirs'') 
in the event of $v$'s deletion, how to simulate the
virtual tree replacing $v$, and also the virtual node $v$ was
simulating (if any).

This will is prepared in advance, before $v$'s deletion, and entrusted
to $v$'s children or their surviving heirs.  An example of this is
shown in figure \ref{fig: RTbreakup}.  Certain events, such as the
deletion of one of $v$'s children, or a change in which virtual node
$v$ is simulating, may cause $v$ to revise its will, informing the
affected children or their surviving heirs.  As shall be seen, the
total number of messages and node IDs which must be sent is $O(1)$ per
deleted vertex; the number of bits sent is thus $O(\log n)$. In addition, there is a startup cost for communicating
the initial wills: this is $O(1)$ latency; and $O(1)$ messages and $O(\log n)$ bits per edge in the original network.

\subsection{Distributed implementation} 

To begin with, in Table~\ref{tab: nodedata} we list the data kept by
each real node $v$ required for the ForgivingTree algorithm. We have
four main classes of fields, according to the way they are used by the
node.  `Current fields' give a node's present configuration and status
in the tree. `Reconstruction fields' hold the data needed for a node
to reconstruct connections when one of its neighbors gets
deleted. `Helper fields' hold information with regard to the helper
node being simulated by this node. Each node also stores some special
flags with regard to its helper or heir status. In the description
that follows, we shall refer directly to these fields.

\begin{table}[h!]
\centering
\begin{tabular}{|l|p{0.65\textwidth}|}
\hline 
\textbf{Current fields}& Fields having  information about a node's current neighbors.\\ \cline{2-2}
  \texttt{parent(v)}& Parent of $v$.\\
  \texttt{children(v)}& Children of $v$.\\
  \texttt{SubRT(v)}& Stores the Reconstruction Tree ($\RT$) of $v$ minus a possible helper node
simulated by $\heir(v)$.  This tree of helper and real nodes shall replace $v$ if $v$ is deleted.\\
 \texttt{heir(v)} & The heir of $v$.\\
\hline
\textbf{Helper fields} & Fields specifying a node's role as a helper node.\\ \cline{2-2}
 \texttt{hparent(v)}& Parent of the helper node  $v$ may be simulating. \\
 \texttt{hchildren(v)}& Children of the helper node $v$ may be simulating.\\
\hline
\textbf{Reconstruction fields}& Fields used by a node to reconstruct its connections when its neighbor is deleted.\\ \cline{2-2}
 \texttt{nextparent(v)}& The node which will be the next $\parent$ of $v$. \\
 \texttt{nexthparent(v)}& The node which will be the next $\hparent$ of $v$. \\
 \texttt{nexthchildren(v)}& The node(s) which will be the next $\hchildren$ of $v$.\\
\hline
\textbf{Flags}& Specifying a node's helper or heir status.\\ \cline{2-2}
\texttt{ishelper(v)}& (boolean field). True if $v$ is simulating a \emph{helper} node, false otherwise.\\
\texttt{isreadyheir(v)}&  (boolean field). True if $v$ is simulating an heir in ready state, false otherwise (wait or deployed state). \\
\hline
\end{tabular}
\caption{The fields maintained by a node $v$}
\label{tab: nodedata}
\end{table}

%
%

At the top level, our algorithm is specified as Algorithm \ref{algo: forgiving}~: \textsc{Forgiving tree}. Algorithm \ref{algo: forgiving}~ uses Algorithms 2 to 9, which will be described at the appropriate places. As referred to earlier, \textsc{Forgiving tree} works on a tree which may be obtained from the original graph during a preprocessing phase. The next stage is an initialisation phase in which the appropriate data structures are setup. Once these are setup, the network is ready to face the adversarial attacks as and when they happen.

\begin{figure}[h!]
\centering
\includegraphics[scale=0.39]{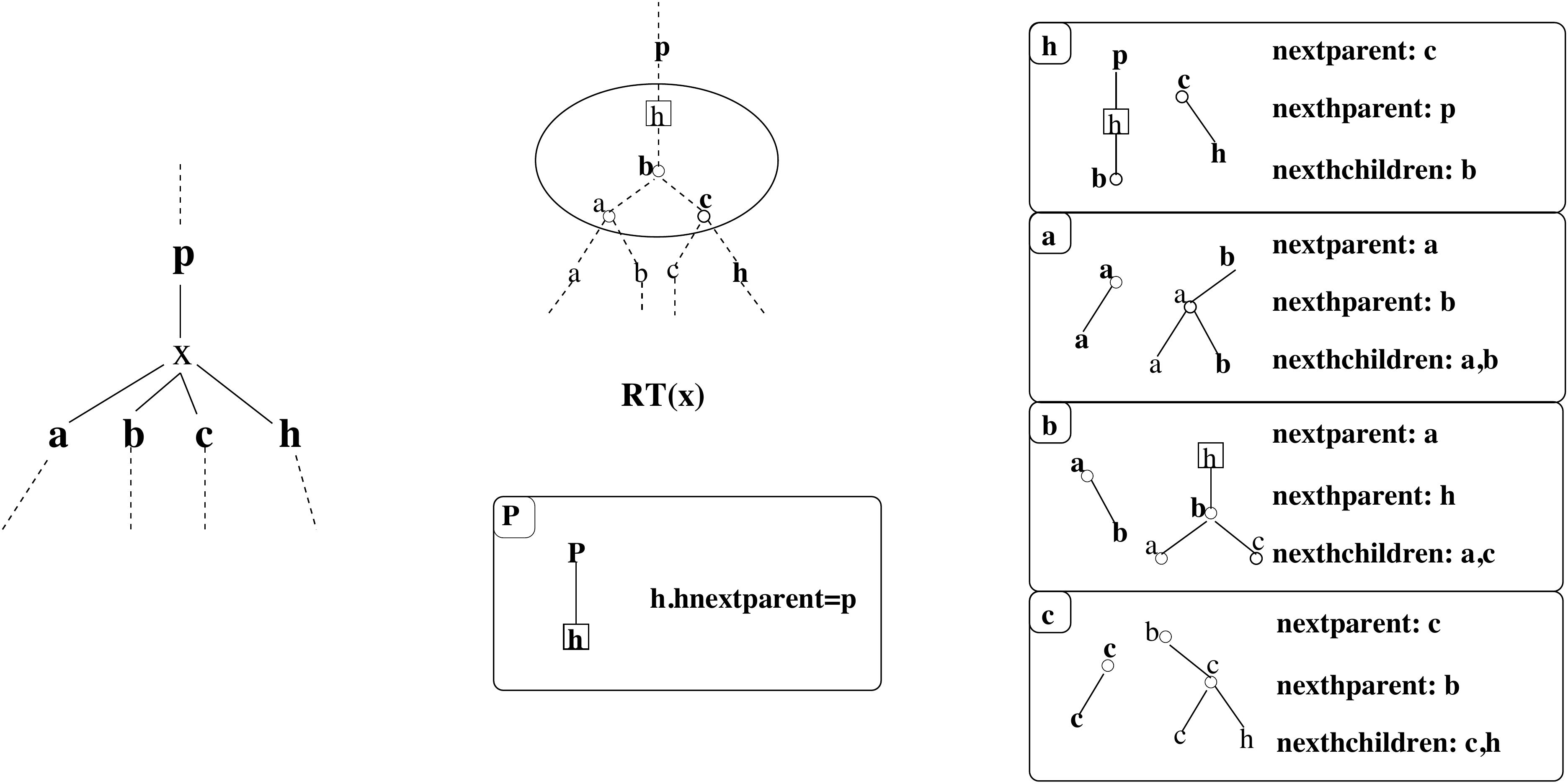}
\caption{The leftmost column shows a small segment of the network.  The RT(x) corresponding to this figure is shown. 
Every neighbor of node $x$ stores the portion of $\RT(x)$ relevant to it. Each rectangular box is labelled with a neighbor and shows the portions and the value of the corresponding fields . }
\label{fig: RTbreakup}
\end{figure}

  
 \begin{figure}[h!]
\centering
\includegraphics[width = 0.7\textwidth, height= 0.9\textheight]{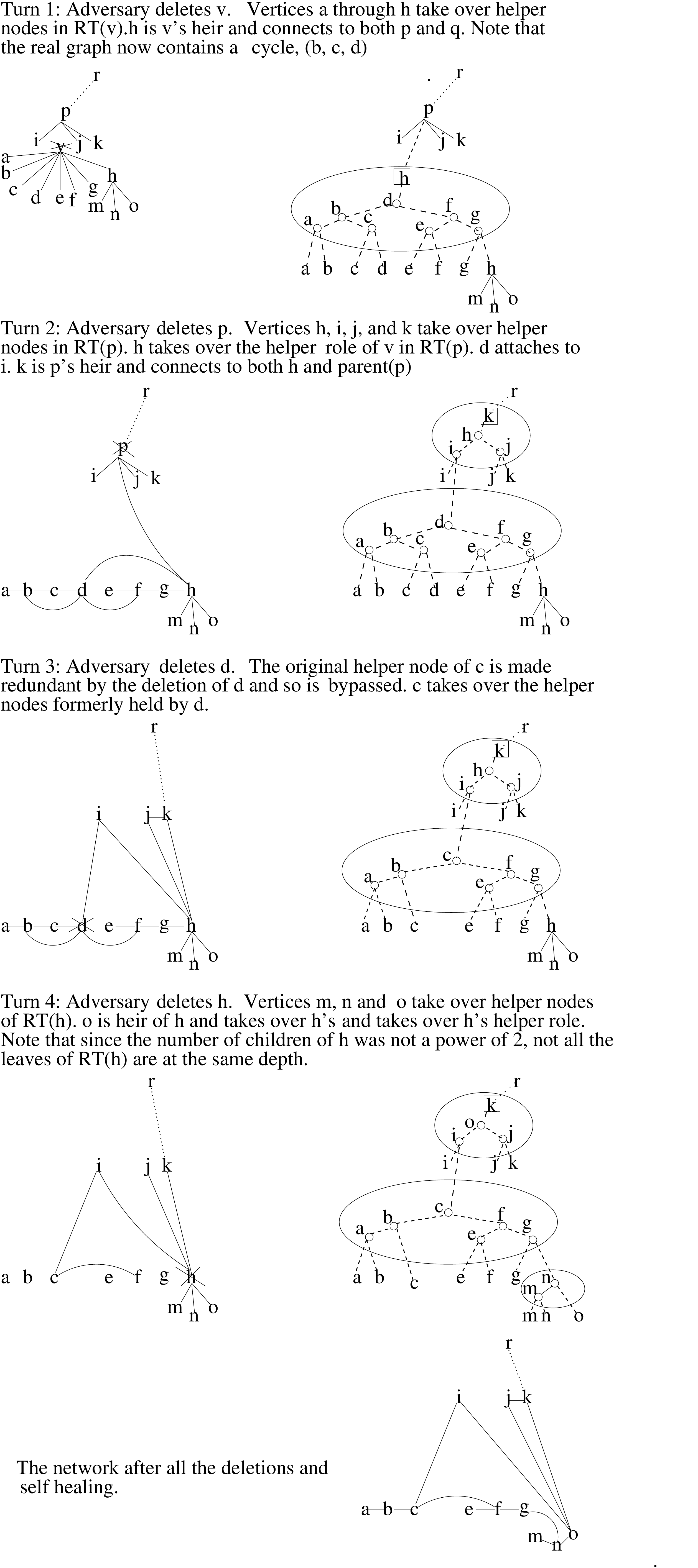}

\caption{An illustrative sequence of deletions and healings.}
\label{fig: story}
\end{figure}

\subsubsection{The Initialization phase}
This phase is specified in Algorithm \ref{algo: init}~: \textsc{Init()}. We assume each node $v$ has a unique
identification number  which we call $ID(v)$. Every node in the tree initializes the fields we have listed in Table~\ref{tab: nodedata}.   In our descriptions if no data is available or appropriate for a field, we set it to $\Empty$.
Since no deletion has happened yet and there are no helper nodes in the system, the helper fields are set to $\Empty$.
The current fields $\parent(v)$ and $\children(v)$ are assigned pointers to the parent and children of $v$. Of course,
if $v$ is a leaf node $\children(v)$ is $\Empty$ and if $v$ is the root of the tree $\parent(v)$ is $\Empty$. 

 As stated earlier, the heart of our algorithm is the system of
$\will$s created by nodes and distributed among its neighbors. 
The will of a node, $v$, has two parts: firstly, a Reconstruction Tree
($\SubRT(v)$), which will replace $v$ when 
it is deleted by the Adversary, 
and secondly, the delegation of $v$'s helper responsibilities 
(if any) to a child node, $\heir(v)$.
For concreteness, we initially designate the child of $v$ with
the highest $\ID$ as $\heir(v)$.  In the event that $\heir(v)$
is deleted, its role will be taken over by its heir, if any.
If $\heir(v)$ is a leaf when it is deleted, then $v$ will 
designate its new heir to be the surviving child whose helper
node has just decreased in degree from $3$ to $2$.

Algorithm \ref{algo: genRT}: \textsc{GenerateSubRT} computes
$\SubRT(v)$. If the node $v$ has no helper responsibilities, as is
during this phase, $\RT(v)$ is simply $\SubRT(v)$ with a helper node
simulated by $\heir(v)$ appended on as the parent of the root of
$\SubRT(v)$. Figure \ref{fig: RT} and Turn 1 in Fig~\ref{fig: story}
depict such Reconstruction Trees. If the node $v$ has helper
responsibilities $\RT(v)$ is the same as $\SubRT(v)$. Node $v$ uses
Algorithm \ref{algo: genRT} to compute $\SubRT(v)$ as follows: All the
children of $v$ are arranged as a single layer in sorted (say,
ascending) order of their $\ID$s. Then a set of helper nodes - one
node for each of the children of $v$ except the heir are arranged
above this layer so as to construct a balanced binary search tree
ordered on their $ID$s.

 The last step of the initialization process is to finalize the will and transmit it to the children. Each child is
given only the portion of the will relevant to it. Thus, only this portion needs to be updated whenever a will changes.
The division of $\RT$ into these portions is shown in figure \ref{fig: RTbreakup}.  There are fundamentally two
different kinds of wills : one prepared by leaf nodes who have helper responsibilities and the other by non-leaf nodes.
Obviously, during the initialization phase, only the second kind of will is needed. This is finalized and distributed as
shown in Algorithm \ref{algo: makewill}: \textsc{MakeWill}. 
The children of $v$ initialize their reconstruction fields
with the values from $\SubRT(v)$. If later $v$ gets deleted these values will be copied to present and helper fields
such that $\RT(v)$ is instantiated. Notice that the role the heir will assume is decided according to whether $v$ is a
helper node or not. Since $v$ cannot be a helper node in this phase, the heir node simply sets its reconstruction
fields so as to be between the root of $\SubRT(v)$ and $\parent(v)$. In this case when $\RT(v)$ will be instantiated,
the helper node simulated by $\heir(v)$ shall have only one child: we will say that $\heir(v)$ is in the ready phase
(explained later) and set the flag $\isreadyheir(v)$ to true. In the initialization phase both the $\isreadyheir$ and 
$\ishelper $ flags will be set to false. 

This completes the setup and initialization of the data structure. Now our network is ready to handle adversarial
attacks. In the context of our algorithm, there are two main events that can happen repeatedly and  need to be handled
differently:
\subsubsection{Deletion of an internal node}
\label{subsec: internaldel}
 The healing that happens on deletion of a non-leaf node is specified in Algorithm \ref{algo: fixnode}:
\textsc{FixNodeDeletion}. In our model, we assume that the failure of a node is only detected by its neighbors in the
tree, and it is these nodes which will carry out the healing process and update the changes wherever required. If the 
node $v$ was deleted, the first step in the reconstruction process is to put $\RT$ into place according to Algorithm
\ref{algo: makeRT}: \textsc{makeRT}. Note that all children of $v$ have lost their parent. Let us discuss the
reconstruction performed by non-heir nodes first. They make an edge to their new parent (pointer to which was
available as nextparent()) and set their current fields. Then they take the role of the helper nodes as specified in
$\RT(v)$ and Algorithm \ref{algo: makehelper}: \textsc{MakeHelper} and make the required edges and field changes to
instantiate $\RT(v)$.

To understand what the heir node does in this case,  it will be useful here to have a small discussion on the states of
a regular/heir node:\\

\begin{figure}[h!]
\centering
\includegraphics {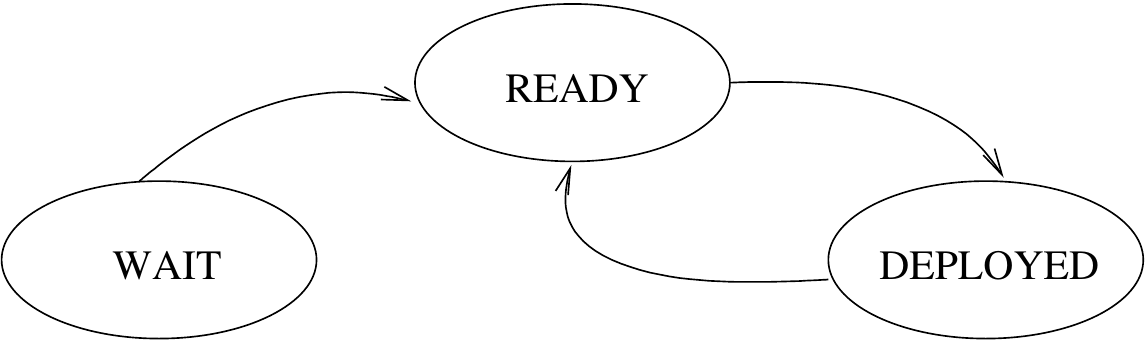}
\caption{The states of a node with respect to helper duties: Waiting, Ready and Deployed}
\label{fig: nodestates}
\end{figure}
  
  
\noindent {\bf States of a heir/regular node:} Consider a node $v$ and its heir $h$. From the point of view of $h$, we
can imagine $h$  to be in one of three states which we  call \emph{wait}, \emph{ready} and \emph{deployed}. These states
are illustrated in figure ~\ref{fig: nodestates}. For ease of discussion, let us call the helper node that a node is simulating
$\helper(node)$.
In brief, a node is considered to be in the wait state when it has no helper
responsibilities, in the ready state when  $\helper(node)$ with one child, and in the deployed state when  $\helper(node)$
 has  two children (which is the maximum possible).  Notice that the node can  be in the wait state only when  $v$ has
not been deleted and thus, $h$ has assumed no helper responsibilities. It only has the will of $v$ and is in limbo
with regard to helper duties. Now consider the case when $v$ gets deleted.  Following are the possibilities:

\begin{itemize}
\item \emph{node $v$ had no helper responsibilities}: This happens when $v$'s original parent was not deleted. Thus,
 $v$ could be a regular child or a heir in the wait state. On $v$'s deletion $h$ moves to the ready state and sets its
flag $\isreadyheir$ to True. This is the state in which  $\helper(h)$ has only one child i.e.
the root of $\SubRT(v)$. This happens when $h$ executes its portion of the will of $v$ using Algorithm \ref{algo: makeRT}:
\textsc{makeRT}. Note that this may not be the final state for the helper node of $h$, and is thus called the ready
state.

\pagebreak

\item \emph{node $v$ had helper responsibilities}: There are two further possibilities:
\begin{itemize}
\item \emph{$\helper(v)$ had one child}: This can only happen when $v$ was a heir node  in the ready state. Thus,
$v$'s flags $\ishelper$ and $\isreadyheir$ were both set to True. Node $h$
will take over the helper responsibilities of $v$ and thus, in turn, $h$ will now have one child i.e. will be in the
ready state and will set its flags $\ishelper$ and $\isreadyheir$ to True. Notice that if $v$ was an heir, $h$ will
now also take over those responsibilities, and on future deletions of $v$'s ancestors could move further up the tree
either as an heir in ready state or in a deployed state become a full helper node.
\item \emph{$\helper(v)$ had two children}: Node $v$ could be a regular child or heir. $h$ will fully take over the
helper responsibilities of $v$, and thus $\helper(h)$ shall acquire two children and move on to the deployed state.
Notice that previously $h$ could have been in either  wait or  ready state. Since  it is now not in the ready
state, it will set its $\isreadyheir$ flag to False and $\ishelper$ flag to True.
\end{itemize}
\end{itemize}
 It is easy to see that a regular i.e. non-heir node can be in either wait or deployed state.\\
 
 Here we also define the following operation, which is used in Algorithms  \ref{algo: fixleaf} and \ref{algo: makeRT}:
\begin{description}
\item[bypass(x):] \emph{Precondition}: $|\hchildren(x)| = 1$ i.e the helper node has a single child. \emph{Operation}:
\emph{Delete} $\helper(x)$ i.e. $\hparent(x)$ and $\hchildren(x)$  remove their edges with $x$ and make a
new edge between themselves. \\
 $\hparent(x) \leftarrow \Empty$; $\hchildren(x) \leftarrow \Empty$.
\end{description}

  We can now easily see how the heir of $v$, $h$ takes part in the reconstruction according to Algorithm
\ref{algo: makeRT}: \textsc{makeRT}. Node $h$ can be either in wait state or ready state. If it is in the wait
state it simply takes its helper responsibilities according to Algorithm \ref{algo: makehelper}: \textsc{MakeHelper},
as in turn 1 of figure~\ref{fig: story} .
Note that here $h$ checks  if it has moved to the ready state and sets its $\isreadyheir$ flag accordingly.
 If $h$ was already in ready state, it relinquishes its present helper role and moves on to the new helper
role. To relinquish its present role, node $h$ intimates  $\hparent(h)$ and $\hchildren(h)$, and together they accomplish
this as specified by the operation bypass($h$). Turn 2 in Fig~\ref{fig: story} illustrates this. 

Once $\RT(v)$ is in a place, there may be a need for the parent of $v$ to recompute its will. This happens only when
$v$ did not already have a helper role or equivalently when $\heir(v)$ moves to a ready state. Lines \ref{algline:
ND2} to \ref{algline: ND6} of Algorithm \ref{algo: fixnode} deals with this situation. Node $\parent(v)$ simply
replaces $v$ by $\heir(v)$ in its will and retransmits it.  At the end of this healing process, the children of the
deleted nodes check if they need to leave the second kind of will, which we call a \emph{LeafWill}. This will is
required only for those nodes which are leaves in our tree and have virtual responsibilities. Since they have no
children to take over their helper responsibilities they leave this responsibility to their parent. We will discuss
this in greater detail in the next section.

\subsubsection{Deletion of a leaf node}
\label{subsec: leafdel}
\begin{figure}[h!]
\begin{centering}
\subfigure[$\helper(v)$ is ancestor of $v$.]{\label{sfig: ldc1} \includegraphics[scale=0.6]{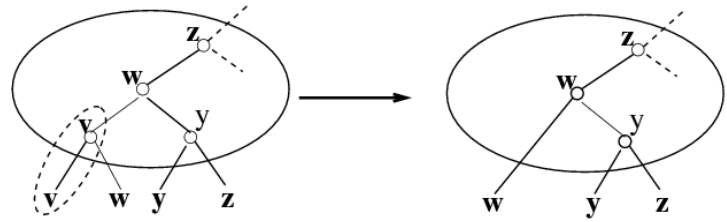}} \\ 
\subfigure[$w$ and $\helper(w)$ share a neighbor.]{\label{sfig: ldc2} \includegraphics[scale=0.6]{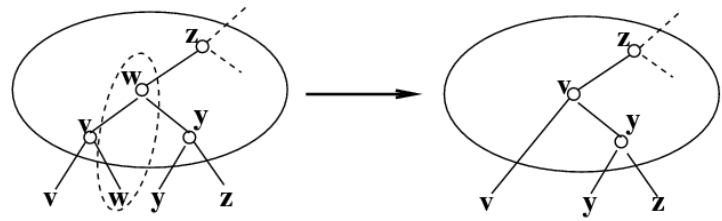}} \\
\subfigure[ $z$ and $\helper(z)$ do not share neighbors.]{\label{sfig: ldc3} \includegraphics[scale=0.6]{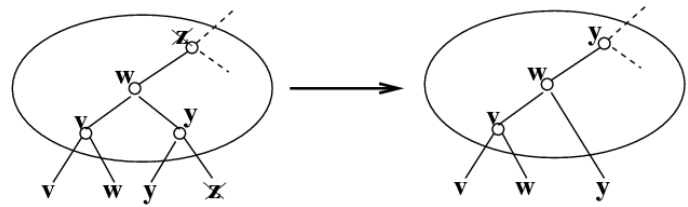}} \\ 
\subfigure[$z$ is an heir in Ready state.]{\label{sfig: lhdc0} \includegraphics[scale=0.6]{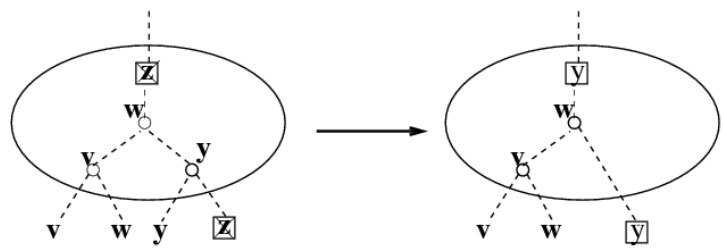}}
\caption{Various cases of Leaf deletions}
\label{fig: leafdels}
\end{centering}
\end{figure} 

 If the adversary removes a leaf node from the system, the healing is accomplished by its neighbors as specified in
Algorithm \ref{algo: fixleaf}: \textsc{FixLeafDeletion}. Let $v$ be the deleted leaf node and $p$ be its parent. Let us
consider the simple case first.  This is when the deleted node had no helper responsibility. This also implies its
original parent did not suffer a deletion. Node $p$ simply removes $v$ from the list of its children and then
recomputes and redistributes its will.\\
  Now, consider the situation where the deleted node had helper responsibilities.  In this case, the one node whose
workload has been reduced by this deletion is $p$. Using Algorithm \ref{algo: makeleafwill}:
\textsc{MakeLeafWill} $v$ hands over the list of  its helper responsibilities to $p$.  Here, a special
case may arise when $v$ is simulating a helper node which has $v$ itself as one of its $\hchildren$. Recall that $\parent(v)$ is $v$'s
ancestor closest to $v$ in the tree. This implies that $\parent(v) = \hparent(v) = p$. The only thing that $p$ needs
to do if $v$ is deleted is to remove $v$ from its $\hchildren$ and add itself (for consistency). This is the will
conveyed by $v$ to $p$. When $v$ is deleted, $p$ simply updates its helper fields.
 For other cases, $v$ simply sends its helper fields to $p$ to be copied to $p$'s helper reconstruction fields. In this
situation, when $v$ is actually deleted the following happens: the helper node that $p$ is simulating is deleted and
bypassed by the bypass operation defined earlier.  Node $p$ now simulates a new helper node that has the same helper
responsibilities previously fulfilled by $v$. In case the deleted leaf node was itself an heir in ready state, $p$
detects this and sets its flags accordingly. Again at the end of the reconstruction, the leaf nodes reconstruct their
wills. An example of such a leaf deletion is the deletion of  node $d$  at Turn 3 as shown in  figure \ref{fig: story}.

\par\noindent{\bf Important Note:}\quad 
When implementing the pseudocode for 
Algorithm~\ref{algo: makewill} \\ (\textsc{MakeWill}), 
it is important to bear in mind that when $\RT(v)$ is being
updated due to a node deletion, most of $\SubRT(v)$ will be
unchanged.  In fact, only $O(1)$ nodes will need to have their
fields updated.  These can be found and updated more efficiently
by a more detailed algorithm based on case analysis.



\floatname{algorithm}{Algorithm}

\begin{algorithm}[ph!]
\begin{algorithmic}[1]
\STATE Given a tree $T(V,E)$
\STATE \textsc{Init(T)}.
\WHILE {true}
\IF{a vertex $x$ is deleted}
\IF { $\children(x)$ is $\Empty$}
\STATE \textsc{FixLeafDeletion(x)}
\ELSE 
\STATE \textsc{FixNodeDeletion(x)}
\ENDIF
\ENDIF
\ENDWHILE
\end{algorithmic}
\caption{\textsc{Forgiving tree}: The main function.}
\label{algo: forgiving}
\end{algorithm}
 
 \begin{algorithm}[h!]
\begin{algorithmic}[1]
\REQUIRE{each node of T has a unique ID}
\FOR{each node  $v\in T$} 
\STATE $\children(v)\leftarrow $ children of $v$.
\STATE $\parent(v)\leftarrow $ if $v$ is root of T then $\Empty$ else parent of $v$.
\STATE $\isreadyheir(v)\leftarrow false$. 
\STATE $\ishelper(v)\leftarrow false$.
\STATE $\hparent(v)\leftarrow \Empty$. 
\STATE $\hchildren(v)\leftarrow \Empty$. 
\STATE $\heir(v)\leftarrow$ if v is a leaf node then $\Empty$ else child of $v$ with highest ID.
\STATE $\SubRT(v)\leftarrow \textsc{generateSubRT}(v)$.
\STATE \textsc{MakeWill}($v, \SubRT(v)$).
\ENDFOR
\end{algorithmic}
\caption{\textsc{Init(T)}: initialization of the Tree T} 
\label{algo: init}
\end{algorithm}

\begin{algorithm}[h!]
\caption{\textsc{FixNodeDeletion($v$)}: Self-healing on deletion of internal node }
\label{algo: fixnode}
\begin{algorithmic}[1]
\STATE \textsc{MakeRT}($\children(v),\parent(v)$).
\STATE \label{algline: ND2} let $h = \heir(v)$.Let $p = \parent(v)$
\IF {$\isreadyheir(h) =true$} 
\STATE  $\hparent(h)$ replaces $v$ by $h$ in $\SubRT(\hparent(h))$.
\STATE \textsc{MakeWill}($\hparent(h),\SubRT(\hparent(h))$).
\ENDIF \label{algline: ND6}
\FOR {each node $y\in \children(v)$}
 \IF {$\children(y)$ is $\Empty$}
\STATE \textsc{MakeLeafWill}($y$).
\ENDIF
\ENDFOR
\end{algorithmic}
\end{algorithm}

 \begin{algorithm}[h!]
\caption{\textsc{FixLeafDeletion($v$)}: Self-healing on deletion of leaf node}
\label{algo: fixleaf}
\begin{algorithmic}[1]
\STATE let $p=\parent(v)$
\IF {$\ishelper(p)=false$}
\STATE $p$ removes $v$ from $\children(p)$
\STATE $SubRT(p)\leftarrow \textsc{GenerateSubRT}(p)$
\STATE \textsc{MakeWill}($p$)
\ELSE
\STATE Let $z=\parent(v)$.
\IF {$z \neq \hparent(v)$}
\STATE bypass($z$).
\ENDIF
\STATE $z $ makes edges with $\nexthparent(z)$,$\nexthchildren(z)$.
\STATE $\hparent(z)\leftarrow \nexthparent(z)$.
\STATE $\hchildren(z)\leftarrow \nexthchildren(z)$.
\IF{$\left|\hchildren(z)\right|$=1}
\STATE $\isreadyheir(z)=true $ 
\ENDIF
\ENDIF
\FOR {each node $y\in \children(v)$}
 \IF {$\children(y)$ is $\Empty$}
\STATE \textsc{MakeLeafWill}($y$).
\ENDIF
\ENDFOR
\end{algorithmic}
\end{algorithm}

  \begin{algorithm}[h!]
\caption{\textsc{GenerateSubRT($v$):} Computes  the Reconstruction Tree ($\RT$) of $v$ minus a possible helper node
simulated by $\heir(v)$. }
\label{algo: genRT}
\begin{algorithmic}[1]
\STATE Let $Lset$ be a set of vertices representing all members of $\children(v)$, and $Iset$ be another set of vertices.
representing  all members of $\children(v)$ except the one with the highest $ID$.
\STATE Arrange $Lset$ in ascending order of their $ID$s.
\STATE Using the arranged $Lset$ as leaves and $Iset$ as the internal nodes construct a  Balanced Binary Search Tree
$\SubRT$  ordered on the nodes $ID$.
\RETURN $\SubRT$
\end{algorithmic}
\end{algorithm}

  \begin{algorithm}[h!]
\caption{\textsc{MakeWill($v,\SubRT(v))$:} Makes and distributes the will of v }
\label{algo: makewill}
\begin{algorithmic}[1]
\STATE Let p = $\parent(v)$. Let  $rv$ be root of $\SubRT(v)$.
\FOR {each node $y\in \children(v)$}
    \STATE let $ly$ be the leaf vertex representing $y$ in $\SubRT(v)$. Let $hy$ be the internal node in $\SubRT$ representing $y$. 
    \STATE \label{algline: parentdefined} If $hy$ is $ly$'s parent in $\SubRT(v)$ then $\nextparent(y)\leftarrow$ parent
of $hy$ in $\SubRT$ else $\nextparent(y)\leftarrow$ parent of $ly$ in $\SubRT$.
     \IF{ $y\neq heir(v)$}
     \STATE $\nexthchildren(y)\leftarrow$ children of $hy$ in $\SubRT$.   
     \STATE $\nexthparent(y)\leftarrow$ parent of $hy$ in $\SubRT$.  
       \ELSE
        \IF{$\ishelper(v) = true$}
          \STATE $\nexthchildren(y)\leftarrow \hchildren(v)$.
          \STATE $\nexthparent(y)\leftarrow \hparent(v)$.
          \STATE $\nexthparent(rv)\leftarrow p$.     
         \ELSE
                 \STATE $\nexthchildren(y)\leftarrow rv$.
                  \STATE $\nexthparent(y)\leftarrow p$.
                  \STATE $\nexthparent(rv)\leftarrow y$.
         \ENDIF
     \ENDIF
\ENDFOR
\end{algorithmic}
\end{algorithm}

\begin{algorithm}[h!]
\caption{\textsc{MakeLeafWill($v$)}:  Leaf node leaves a will for its parent.}
\label{algo: makeleafwill}
\begin{algorithmic}[1]
\STATE let $z=\parent(v)$.
\IF{$z = \hparent(v)$}
\STATE $\nexthparent(z)\leftarrow \hparent(v)$.
 \STATE $\nexthchildren(z)\leftarrow \hchildren(z)/\{v\} \cup \{z\}$. \COMMENT{$z$ will take on itself as a child of
its helper node.}
 \ELSE
\STATE $\nexthparent(z)\leftarrow \hparent(v)$.
\STATE $\nexthchildren(z)\leftarrow \hchildren(v)$.
\ENDIF
\end{algorithmic}
\end{algorithm}

\begin{algorithm}[h!]
\caption{\textsc{makeRT(children(v),parent(v))}: Replace the deleted node by its  $\RT$}
\label{algo: makeRT}
\begin{algorithmic}[1]
\FOR{each node $x\in$ $\children$(v)}
  \IF{$\isreadyheir(x)=true$}
  \STATE bypass($x$).     \COMMENT{ $\hparent(x)$ and $\hchildren(x)$ bypass $x$ and connect themselves. }
       \STATE \textsc{MakeHelper}($x$).
  \ELSE     
\STATE $x$ makes edge between itself and $\nextparent(x)$.
\STATE $\parent(x)\leftarrow \nextparent(x)$.
\STATE \textsc{MakeHelper}($x$).
\ENDIF
\ENDFOR 
\end{algorithmic}
\end{algorithm}

\begin{algorithm}[h!]
\caption{\textsc{MakeHelper($v$)}: $v$ takes over helper node responsibilities}
\label{algo: makehelper}
\begin{algorithmic}[1]
\STATE $v$ makes edges between itself and $\nexthchildren(v)$, and $\nexthparent(v)$.
\STATE $\hparent(v)\leftarrow \nexthparent(v)$.
\STATE $\hchildren(v)\leftarrow \nexthchildren(v)$.
\STATE $\ishelper(v) = true$.
\IF{$|\hchildren(v)| = 1$}
 \STATE $\isreadyheir(v) = true$.  \COMMENT{Only an 'unemployed' $\heir$ has a single child.}
 \ENDIF
\end{algorithmic}
\end{algorithm}

\pagebreak

\section{Results}
\label{subsec: Results}


\subsection{Upper Bounds}
\label{subsec: upperbounds}
Before considering the main theorem, we shall prove a couple of lemmas.
\begin{lemma}
\label{lemma: onehelper}
In the Forgiving Tree, a real node can simulate at most one helper node at a time.
\end{lemma}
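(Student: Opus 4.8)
The plan is to establish the invariant that at all times, the ``helper assignment'' — the partial map sending each virtual (helper) node to the real node simulating it — is a well-defined injection from the set of live helper nodes into the set of live real nodes. Lemma~\ref{lemma: onehelper} is exactly the statement that no real node is the image of two distinct helper nodes, so it suffices to prove this injectivity invariant holds after every operation the algorithm performs. I would proceed by induction on the number of adversarial deletions (equivalently, on the number of times \textsc{FixNodeDeletion} or \textsc{FixLeafDeletion} is invoked), since the helper assignment only changes during these healing rounds and during \textsc{Init}.

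First I would handle the base case: after \textsc{Init(T)}, there are no helper nodes in the system at all (the helper fields \texttt{hparent}, \texttt{hchildren} are $\Empty$, and \texttt{ishelper}$(v)=false$ for every $v$), so the claim is vacuously true. For the inductive step I would examine each way the set of simulated helper nodes can change. When an internal node $v$ is deleted, \textsc{FixNodeDeletion} installs $\RT(v)$: each non-heir child $y$ of $v$ takes over exactly the one helper node of $\SubRT(v)$ associated with it (via \textsc{MakeHelper}), and these helper nodes are pairwise distinct by construction of $\SubRT(v)$ in \textsc{GenerateSubRT} (one internal node per child except the heir). Crucially, each such child $y$ was previously either in the \emph{wait} state (simulating nothing) or needs to be shown it can safely drop whatever it was simulating — but a non-heir child is always in the wait or deployed state, and if it were deployed, then $y$ itself would be a helper node whose simulator is being asked to additionally simulate a new helper node; I must verify the algorithm never does this. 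The key observation is that the children of a \emph{deleted} node $v$ are precisely the real nodes that, immediately after deletion, attach to the fresh helper nodes of $\RT(v)$; and a real node that is currently simulating a helper node is, by the tree structure, an \emph{ancestor or descendant relationship} that the will mechanism respects — a node simulating a helper node of $v$'s virtual tree cannot simultaneously be a child of $v$ in the live tree. For the heir $h$: if $h$ was in the ready state, it first executes \texttt{bypass}$(h)$, which sets $\hparent(h)\leftarrow\Empty$ and $\hchildren(h)\leftarrow\Empty$, i.e. it \emph{relinquishes} its current helper node before taking on the new one via \textsc{MakeHelper}; so again $h$ simulates at most one at a time. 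The deletion of a leaf node is symmetric but simpler: \textsc{FixLeafDeletion} has the parent $p$ either do nothing to its helper role (if $\ishelper(p)=false$) or bypass its current helper node and adopt a new one via the \texttt{nexthparent}/\texttt{nexthchildren} fields — once more, release-then-acquire, so the count stays at most one.

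The main obstacle, and the place I would spend the most care, is pinning down why no real node is ever asked to be a child (in the live tree) of a node it is currently simulating a helper for, and more generally why the homomorphic image described in Section~\ref{sec:algorithm} is genuinely well-defined — i.e., why the children of a to-be-deleted node $v$ are always distinct from, and ``below,'' the at-most-one helper node each of them might already be simulating. I would make this precise by maintaining a second invariant alongside injectivity: each live helper node $h'$ being simulated by real node $u$ lies strictly between $u$ and $\parent(u)$ on the original-tree path, or more carefully, the helper nodes simulated by the children of $v$ all belong to virtual trees that were spawned by deletions of nodes strictly \emph{above} $v$ in the current tree, hence are disjoint from the brand-new virtual tree $\RT(v)$. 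Combined with the fact (Lemma~\ref{lemma: forest} analog here: the live structure is always a tree) that each real node has exactly one parent and the wills are entrusted only to children-or-their-heirs, this forces the ``relinquish before acquire'' discipline to be consistent, and the injectivity — and hence the lemma — follows. Finally I would note the clean corollary that since each real node simulates at most one helper node, and a helper node has degree at most $3$, the degree increase of any real node is at most $3$, matching the theorem statement.
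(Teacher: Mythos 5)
Your proposal is correct and follows essentially the same route as the paper's proof: a case analysis on the only two events that can hand a real node a new helper role (deletion of its parent, or of a leaf sibling in the tree), split according to the wait/ready/deployed states, with the $\bypass$ operation supplying the ``relinquish before acquire'' discipline and the deployed case handled by observing that such a node hangs off the helper node rather than the real node and so does not appear in the relevant will. Your injectivity-invariant-by-induction framing is just a more formal wrapper around the same argument, so no substantive difference to report.
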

\begin{proof}
A node simulates a new helper node if and only if its parent is deleted (Section~\ref{subsec: internaldel}) or
a sibling that is a leaf node in the Forgiving Tree is deleted (Section~\ref{subsec: leafdel}). We will show that whenever either
of the above events happens and the node has to simulate a new helper node, it no longer needs to simulate the
helper node it was simulating prior to these events occuring and thus it always simulates at most one helper node at a
time. Let us consider the cases in more detail. Consider a node $v$ and it's parent node $p$.
\begin{itemize}
\item \emph{Parent node $p$ is deleted:} There are three possibilities:
\begin{itemize}
\item \emph{Node $v$ is in Wait state (i.e. no previous helper role):} Node $v$ will now take over the role of exactly
one helper node as specified for $\RT(p)$ (Figure~\ref{fig: RT}).
\item \emph{Node $v$ is in  Ready heir:} Node $v$ will remove its previous helper node using operation $\bypass(v)$, and 
be redeployed in the  Ready state or as a Deployed node (Figure~\ref{fig: nodestates}). An example is node $h$ at turn 2 in
Figure~\ref{fig: story}.
\item \emph{Node $v$ is in Deployed state:} By construction and by definition of parent in the Forgiving Tree
(Line~\ref{algline: parentdefined}, Algorithm~\ref{algo: makewill}:\textsc{MakeWill}), $p$ is the parent of $v$ through
$\helper(p)$. This implies that $v$'s parent in the Forgiving Tree is $\helper(p)$ (not $p$ itself). Thus $v$ does not
feature in the will of $p$ and will not simulate a new helper node on deletion of $p$. 
\end{itemize}
\item \emph{In the Forgiving Tree, a leaf node sibling of $v$ is deleted:} Refer to Figure~\ref{fig: leafdels}, cases b,c and d, and node $c$ in turn 3,
Figure~\ref{fig: story}.
On deletion of a leaf node, exactly one helper node becomes redundant, and this can be removed.
If $v$ takes on the role of a new helper node, its old helper node is removed using $\bypass(v)$.
\end{itemize}

\end{proof}

Let $\FT_{i}$ be the Forgiving Tree which has undergone $i$ rounds of deletions and healings. A time step is a
single deletion followed by healing.

\begin{lemma}
\label{lemma: ancestry}
 If an original node $x$ is an ancestor of another original node $v$ in $\FT_{i}$ for some time step $i$, then node $x$
must also have been an ancestor of node $v$ in $\FT_{0}$.
\end{lemma}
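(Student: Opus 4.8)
The plan is to prove the contrapositive-flavored invariant by induction on the time step $i$. The key structural fact I would isolate first is a monotonicity statement about the ancestor relation restricted to original (real, non-virtual) nodes: \emph{a single deletion-and-healing round never creates a new ancestor--descendant relationship between two surviving original nodes}. Once this per-round claim is established, Lemma~\ref{lemma: ancestry} follows immediately by chaining it back from step $i$ to step $0$: if $x$ is an ancestor of $v$ in $\FT_i$ but not in $\FT_{i-1}$, that contradicts the per-round claim, so $x$ is an ancestor of $v$ in $\FT_{i-1}$, and by induction in $\FT_0$.

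For the per-round claim I would do a case analysis on the deleted node $u$ at step $i$, mirroring the two healing procedures. If $u$ is a leaf (Algorithm~\ref{algo: fixleaf}), the tree loses $u$ and possibly ``short-circuits'' one virtual helper node by the $\bypass$ operation; in both cases the path between any two surviving original nodes is either unchanged or obtained by deleting an intermediate (virtual) vertex, so no original node newly becomes an ancestor of another. If $u$ is an internal node (Algorithm~\ref{algo: fixnode}), it is replaced by its Reconstruction Tree $\RT(u)$: the leaves of $\RT(u)$ are exactly the former children of $u$, all of whose internal vertices are \emph{virtual} helper nodes, and the heir (or root of $\RT(u)$) reattaches to $\parent(u)$. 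The crucial observations are (i) every newly introduced vertex on any root-to-node path is virtual, not original, and (ii) the set of original nodes lying on the path from the root down to any surviving original node $v$ is, after healing, a subset of what it was before (we only removed $u$ and inserted virtual nodes). Hence if $x$ (original) is an ancestor of $v$ (original) after the round, $x$ was already on the root-to-$v$ path before the round. I would also invoke Lemma~\ref{lemma: onehelper} to be sure that original nodes are never relocated to simulate helper nodes in a way that would reshuffle the original-node skeleton of the tree---a real node simulating a helper is attached via its $\hparent$/$\hchildren$ edges, which sit among virtual vertices and its former neighbors, so this does not inject an original node into a new position in the real-node hierarchy.

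The main obstacle I anticipate is bookkeeping the homomorphic-image subtlety: the ``tree'' $\FT_i$ is conceptually a tree of real and virtual nodes, but the actual graph identifies each virtual node with some real node simulating it, so ``ancestor'' must be interpreted in the conceptual tree, and I must argue the statement is about the original nodes' positions in that conceptual tree, then note this is exactly the relation one reads off the real graph when restricted to original nodes. I would state carefully at the outset that ``ancestor'' refers to the rooted-tree structure $\FT_i$ (real plus virtual nodes) and that an original node's ancestors among original nodes is the relevant object; with that convention fixed, the case analysis above is routine. A secondary care point is the heir mechanism climbing the tree on successive deletions: I must check that when an heir $h$ (a real node) moves up to take over a deleted ancestor's helper role, it does so by replacing a \emph{virtual} position, never by being inserted above an original node that was previously below it---this follows from the construction of wills (Line~\ref{algline: parentdefined} of Algorithm~\ref{algo: makewill}) and from Lemma~\ref{lemma: onehelper}.
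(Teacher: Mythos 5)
Your proposal is correct and follows essentially the same route as the paper's proof: induction on the time step, with the per-round observation that healing only ever inserts \emph{virtual} helper nodes as internal vertices (the deleted node's original children all land on the leaf layer of its Reconstruction Tree) and that removal or bypass of a helper node never promotes an original node above another. Your treatment is somewhat more explicit about the leaf-deletion case, the heir mechanism, and the real-versus-virtual bookkeeping, but these are elaborations of the same argument rather than a different approach.
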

\begin{proof}
We will prove this by induction on  time step $i$.

\emph{Base case:} $i=0$: This is trivially true. 

\emph{Inductive step:} Let $v_{x}$ be the node deleted at time step $i$, and let $v$ be an arbitrary node in $\FT_{i}$.
By the inductive hypothesis, we need only show that the deletion of $v_{x}$ will not violate the invariant.
Note that if $v_{x}$ has a helper node then when $\helper(v_{x})$ is deleted, no new original node become an ancestor of
$v$ in $\FT_{i}$, since either a new helper node takes the place of $\helper(v_{x})$ or $\helper(v_{x})$ is bypassed.

We also note that when $v_{x}$ is deleted, no new original node can become the ancestor of $v$ in $\FT_{i}$. To see
this, note that when the deletion of $v_{x}$ creates an $\RT$ no \emph{original} node that was a child of $v_{x}$ can
becomes a new ancestor of $v$ in $\FT_{i}$.

\end{proof}

\pagebreak

 Let $\Delta$ be its maximum degree of a node in $\FT_{0}$.
 
\begin{lemma}
\label{lemma: deldepth}
 Let $\delanc_{i}(v)$ be the number of ancestors of $v$ in $\FT_{0}$ that have been deleted by time step $i$.
 
  For all nodes $v$, 
 \[
 \depth_{i}(v) \le \depth_{0}(v) + \log \Delta \times \delanc_{i}(v)
 \]
 
\end{lemma}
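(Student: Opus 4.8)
The plan is to prove this by induction on the time step $i$, tracking how the depth of an original node $v$ in the Forgiving Tree changes with each deletion. The key structural fact is that when an internal node $w$ is deleted, it is replaced by its Reconstruction Tree $\RT(w)$, a balanced binary tree on helper nodes whose leaves take over as parents of $w$'s children. So a child of $w$ that was at depth $\depth(w)+1$ can drop to depth at most $\depth(w) + \lceil \log d \rceil$, where $d = |\children(w)| \le \Delta$; more generally, any node in the subtree below $w$ has its depth increased by at most $\lceil \log\Delta\rceil - 1 \le \log\Delta$. Crucially, this increase only happens to nodes whose ancestor $w$ (an original node) gets deleted, so each original ancestor of $v$ that is deleted contributes at most one additive $\log\Delta$ term, which is exactly the statement.

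First I would set up the base case $i=0$, where $\depth_0(v) = \depth_0(v)$ and $\delanc_0(v)=0$, trivially true. For the inductive step, suppose node $v_x$ is deleted at step $i$, and let $v$ be an arbitrary surviving original node; assume the bound holds at step $i-1$. I would split into cases according to whether $v_x$ is (a) not an ancestor of $v$ in $\FT_{i-1}$, (b) an original ancestor of $v$, or (c) a helper node on the path from the root to $v$. In case (a), by Lemma~\ref{lemma: ancestry} deleting $v_x$ does not change any ancestor relationship along the root-to-$v$ path, and the short-circuit/bypass operations only ever decrease depth, so $\depth_i(v) \le \depth_{i-1}(v)$ and $\delanc_i(v) = \delanc_{i-1}(v)$, giving the bound. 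In case (c), when a helper node is deleted it is either bypassed (depth of $v$ decreases by one) or replaced by another single helper node (depth unchanged); either way $\depth_i(v)\le\depth_{i-1}(v)$ and $\delanc$ is unchanged. In case (b), $v_x$ is replaced by $\RT(v_x)$: the child of $v_x$ on the path to $v$ moves from depth $\depth_{i-1}(v_x)+1$ to at most $\depth_{i-1}(v_x) + \lceil\log(\Degree(v_x)) \rceil \le \depth_{i-1}(v_x) + \log\Delta$, and everything strictly below keeps the same offset, so $\depth_i(v) \le \depth_{i-1}(v) + \log\Delta$, while $\delanc_i(v) = \delanc_{i-1}(v) + 1$. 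Combining with the inductive hypothesis in each case closes the induction.

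The main obstacle I anticipate is case (b): making precise that the balanced binary tree $\RT(v_x)$ really has height at most $\lceil \log d\rceil$ where $d$ is the \emph{current} number of children of $v_x$ in $\FT_{i-1}$, and then relating that current child-count back to the \emph{original} degree $\Delta$. This requires observing, via Lemma~\ref{lemma: ancestry} and the fact that the algorithm never promotes an original node to be a parent of another original node, that the number of children of any node in any $\FT_{i}$ is bounded by $\Delta$ (children are either original children, which only decrease in number, or a bounded collection of helper/heir nodes governing them — and in fact the relevant binary tree is built on exactly the current children set, whose size is at most the original degree). I would also need the small bookkeeping remark that $\lceil \log d\rceil \le \log\Delta$ is being used in the loose form stated (treating $\log$ as $\log_2$ and absorbing ceilings), consistent with how the paper states Lemma~\ref{lemma: deldepth}. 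Once that child-count bound is in hand, the depth accounting is routine.
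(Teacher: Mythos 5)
Your proposal is correct and follows essentially the same route as the paper: induction on the time step, dispatching the removal of the deleted node's helper node as never increasing depth, and then splitting on whether $v$ lies in the deleted node's subtree, in which case Lemma~\ref{lemma: ancestry} guarantees the deleted node was an original ancestor so that $\delanc$ increments by one while the balanced $\RT$ adds at most $\log\Delta$ to the depth. Your extra remark about justifying that the current child count never exceeds $\Delta$ is a detail the paper leaves implicit, but it does not change the argument.
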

\begin{proof}
We shall prove this by induction on time $i$.

\emph{Base case:} $i$ = 0: This is trivial since there have been no deletions so far.

\emph{Inductive step:} Let $v_{i}$ be the node deleted at the $i^{th}$ deletion. Consider an arbitrary original node
$v$ in $\FT_{i}$.  First, observe that the removal of $\helper(v_{i})$, if it exists  ($\helper(v_{i}$) will be removed
on
deletion of node $v_{i}$), never increases the depth of any node. This is because the helper node is either  replaced by
another helper node or it is removed in the $\bypass$ operation, which will never increase the depth of any node. We now
consider the deletion of the original node $v_{i}$. There are two cases for node $v$:
\begin{itemize}
\item \emph{Node $v$ is not in the subtree rooted at $v_{i}$:}
 Here,\\
  $\depth_{i}(v) \le \depth_{i-1}(v)$ and thus the induction holds. 
\item \emph{Node $v$ is in the subtree rooted at $v_{i}$:} 
By lemma~\ref{lemma: ancestry}, node $v_{i}$ must have been an ancestor of $v$ in $\FT_{0}$. Since Algorithm~\ref{algo:
fixnode} replaces $v_{i}$ with $\RT(v_{i})$, which is a balanced binary tree, we know that,
\[
\depth_{i}(v) \le \depth_{i-1}(v) + \log \Delta
\]
Also, by the Inductive hypothesis,
\[
\depth_{i-1}(v) \le \depth_{0}(v) + \log \Delta \times \delanc_{i-1}(v)
\]
These two equations imply that
\begin{eqnarray*}
\depth_{i}(v) & \le & \depth_{0}(v) + \log \Delta \times \delanc_{i-1}(v) + \log \Delta\\
    & \le & \depth_{0}(v) + \log \Delta \times \delanc_{i}(v) 
\end{eqnarray*}
\end{itemize}

\end{proof}

Now, we prove our main theorem. Let $\FT_{0}$ be  the original tree, and let $D$ be its diameter.
\begin{theorem}
The Forgiving Tree has the following properties:
\label{theorem: forgiving}
\begin{enumerate}
\item\label{th: degree} 
  The Forgiving Tree increases the degree of any vertex by at most $3$.
\item\label{th: diameter} 
  The Forgiving Tree always has diameter $O(D \log \Delta)$.
\item \label{th:cost} 
  The latency per deletion and number of messages sent per node per 
  deletion is $O(1)$; each message contains $O(1)$ node IDs and thus $O(\log n)$ bits.
\end{enumerate}
\end{theorem}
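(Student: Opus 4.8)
The theorem is an assembly of the three structural facts already established, so the plan is to prove each part in turn with little extra work. For Part~\ref{th: degree} I would argue that the degree of a surviving \emph{original} node $v$ in the actual graph $G_{i}$ is the sum of its tree-degree in $\FT_{i}$ and the degree of the (at most one) helper node $v$ simulates. The tree-degree of an original node in $\FT_{i}$ never exceeds its tree-degree in $\FT_{0}$: whenever $v$'s parent is deleted, $v$ acquires exactly one new parent ($\nextparent(v)$); whenever an internal child $c$ of $v$ is deleted, $c$ is replaced by $\RT(c)$, whose root (or heir) becomes $v$'s unique replacement child; and deletion of a leaf child only lowers $v$'s degree. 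Since the \emph{original} children of any deleted node are placed only on the leaf layer of its reconstruction tree, $v$ never gains an additional original child. Next, $G_{i}$ is the homomorphic image of $\FT_{i}$ under the map identifying each helper node with the actual node simulating it; this map is well defined because the $\RT$ of a deleted node has $|\children|-1$ helper nodes and at least that many non-heir children available to simulate them (the heir simulates at most the additional root node). By Lemma~\ref{lemma: onehelper} each actual node simulates at most one helper node at a time, and every helper node has degree at most $3$, so $\Degree(v,G_{i}) \le \Degree(v,\FT_{0}) + 3$. The point that needs care is that the (at most) three edges of the simulated helper node go to three distinct actual nodes other than $v$, so they really do add to $v$'s tree edges rather than being absorbed into them.

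For Part~\ref{th: diameter} I would first bound depths. By Lemma~\ref{lemma: deldepth}, for every original node $v$ we have $\depth_{i}(v) \le \depth_{0}(v) + \log\Delta \cdot \delanc_{i}(v)$, and since $\delanc_{i}(v) \le \depth_{0}(v) \le D$ this yields $\depth_{i}(v) = O(D\log\Delta)$. Along any root-to-node path of $\FT_{i}$, Lemma~\ref{lemma: ancestry} shows the original nodes occur in their $\FT_{0}$ ancestry order (so there are at most $D$ of them), and between two consecutive ones there are at most $\lceil\log\Delta\rceil$ helper nodes, all belonging to a single reconstruction tree; hence \emph{every} node of $\FT_{i}$ has depth $O(D\log\Delta)$ and $\diam(\FT_{i}) = O(D\log\Delta)$. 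Finally the homomorphism $\FT_{i}\to G_{i}$ sends a path to a walk of the same length, so $\diam(G_{i}) \le \diam(\FT_{i}) = O(D\log\Delta)$. The delicate point is extending the depth bound past original nodes to helper nodes even though reconstruction trees can be nested, which is exactly what the $\delanc_{i}$ term in Lemma~\ref{lemma: deldepth} is built to handle.

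Part~\ref{th:cost} is a direct inspection of \textsc{FixNodeDeletion} and \textsc{FixLeafDeletion} (Algorithms~\ref{algo: fixnode} and~\ref{algo: fixleaf}) and the subroutines they call. On any single deletion, the only nodes that perform any action are the former neighbours of the deleted node, together with the $O(1)$ additional nodes whose will-portions change (by the Important Note following Algorithm~\ref{algo: makewill}, recomputing a will updates only $O(1)$ of its portions, not all $\Theta(\Delta)$). Each such node makes a constant number of edge insertions and deletions (via \textsc{MakeRT}, \textsc{MakeHelper}, and $\bypass$) and a constant number of field updates, and communicates only with a constant number of nodes that are $O(1)$ hops away in the current graph. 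Hence the latency per deletion is $O(1)$, each node sends and receives $O(1)$ messages per deletion, and since each message names $O(1)$ vertices it consists of $O(\log n)$ bits. The main obstacle in the whole proof is Part~\ref{th: degree}: one must simultaneously invoke the structural invariant on reconstruction trees and Lemma~\ref{lemma: onehelper}, and check that the simulation map is genuinely well defined, in order to conclude that passing to the homomorphic image cannot inflate any original node's degree by more than $3$.
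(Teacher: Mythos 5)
Your proposal is correct and follows essentially the same route as the paper: Part~\ref{th: degree} via Lemma~\ref{lemma: onehelper} together with the facts that reconstruction trees are binary and that original nodes only ever appear on the leaf layer of an $\RT$, Part~\ref{th: diameter} via Lemma~\ref{lemma: deldepth} with $\delanc_{i}(v)$ bounded by the original depth and the homomorphism onto the actual graph, and Part~\ref{th:cost} by direct inspection of the constant-size local updates. The extra care you take in spelling out why the simulation map is well defined and why the depth bound extends to helper nodes is consistent with, and slightly more explicit than, the paper's argument.
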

\begin{proof}
 Parts~\ref{th: degree} and~\ref{th:cost} follow directly by
construction of our algorithm.  For part~\ref{th: degree}, we note
that for a node $v$, any degree increase for $v$ is imposed by its
edges to $\hparent$($v$) and $\hchildren(v)$. By lemma~\ref{lemma: onehelper}, node
$v$ can play the role of at most one helper node at any time and the
number of $\hchildren$ is never more than $2$, because the
reconstruction trees are binary trees.  Thus the total degree increase
is at most $3$.  Part~\ref{th:cost} also follows directly by the
construction of our algorithm, noting that, because the virtual nodes all have degree at most $3$, healing one deletion 
results in at most $O(1)$ changes to the edges in each 
affected reconstruction tree.  In fact, the changes to 
$\RT(w)$ for an affected node $w$ do not require new information,
which allows these messages to be computed and distributed in parallel.

\hspace{-1em}We next show Part~\ref{th: diameter}, that the diameter of the Forgiving Tree  is always $O(D \log \Delta)$.
Consider the Forgiving Trees $\FT_{0}$ and $\FT_{i}$.  Let their respective heights be $h$ and $\h_{i}$. Consider a
node $x$ in $\FT_{i}$ which has the maximum depth, equal to $\h_{i}$. By lemma~\ref{lemma: deldepth},
\begin{eqnarray*}
\h_{i} & \le &\depth_{0}(x) + \log \Delta \times \delanc_{i}(x)\\
  & \le & \h + \log \Delta \times \delanc_{i}(x)
\end{eqnarray*}
 Since, node $x$ can have at most $\h$ ancestors,
 \begin{eqnarray*}
 \h_{i} & \le & \h + \log \Delta \times h\\
  & \le & \log \Delta \times (h + 1)
 \end{eqnarray*}

 Since the diameter of a tree can at most be twice the height of the tree, the diameter of $\FT_{i}$ is at most $2 (\h +
1)\log \Delta$, or $O(D \log \Delta)$.

\end{proof}

\subsection{Lower Bounds}
\label{subsec: lowerbounds}

\begin{theorem}
Consider any self-healing algorithm that ensures that: 1) each node
increases its degree by at most $\alpha$, for some $\alpha \geq 3$;
and 2) the diameter of the graph increases by a multiplicative factor
of at most $\beta$.  Then for any positive $\Delta$, for some initial
graph with maximum degree $\Delta$, it must be the case that
$\beta  \geq \frac{1}{2} [\log_{\alpha + 1} \Delta - 1]$.
\end{theorem}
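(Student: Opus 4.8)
The plan is to exhibit a single worst-case initial graph — a star — together with a one-move adversary, and then invoke an elementary ball-growth (Moore-type) bound. Take $G_0 = K_{1,\Delta}$, the star with one center $c$ and $\Delta$ leaves; it has maximum degree exactly $\Delta$ and, provided $\Delta \ge 2$, diameter exactly $2$. (For $\Delta \le 1$ the quantity $\tfrac12(\log_{\alpha+1}\Delta - 1)$ is non-positive, so the statement is vacuous and we may assume $\Delta \ge 2$; the hypothesis $\alpha \ge 3$ guarantees the base $\alpha+1$ of the logarithm exceeds $1$.) The adversary simply deletes $c$ in the first round. After this deletion the surviving graph $H_1$ is an independent set of $\Delta$ vertices, so any healing response — which by the model may add edges anywhere in the network — can only add edges among these $\Delta$ former leaves.

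Next I would bound the degrees of $G_1$, the healed graph. Each former leaf had degree $1$ in $G_0$, so hypothesis (1) forces its degree in $G_1$ to be at most $1 + \alpha$; moreover every edge of $G_1$ joins two former leaves. Hence $G_1$ is a graph on exactly $\Delta$ vertices with maximum degree at most $\alpha+1$, and (a disconnected $G_1$ would have infinite diameter, trivially satisfying the claim) we may assume it is connected. Now the key estimate: a connected graph on $N$ vertices with maximum degree $D \ge 2$ has diameter at least $\log_D N - 1$. This is the standard breadth-first-search count — the number of vertices within distance $r$ of a fixed vertex is at most $1 + D + D^2 + \cdots + D^r < D^{r+1}$ — applied with $r = \diam(G_1)$: since all $\Delta$ vertices lie within distance $\diam(G_1)$ of any fixed vertex, $\Delta < (\alpha+1)^{\diam(G_1)+1}$, and therefore $\diam(G_1) \ge \log_{\alpha+1}\Delta - 1$.

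Finally, combining with $\diam(G_0) = 2$, the diameter stretch of the algorithm on this instance is at least $\diam(G_1)/\diam(G_0) \ge \tfrac12\bigl(\log_{\alpha+1}\Delta - 1\bigr)$, which is exactly the claimed bound on $\beta$. I do not expect a genuine obstacle here: the instance forces the algorithm's hand completely, and the counting is routine. The only points needing care are (i) dispensing with small $\Delta$ so that the logarithm and the inequalities are meaningful, and (ii) tracking that the degree-increase budget in hypothesis (1) is measured against $G_0$ — where each leaf has degree $1$, not $0$ — which is why the maximum degree of $G_1$ is bounded by $\alpha+1$ rather than $\alpha$; using $\alpha+1$ is precisely what makes the base of the logarithm match the statement.
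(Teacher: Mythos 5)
Your proposal is correct and follows essentially the same route as the paper: delete the center of a star on $\Delta+1$ vertices, observe that every surviving node has degree at most $\alpha+1$ in the healed graph, and apply a Moore/BFS ball-growth bound to force diameter at least $\log_{\alpha+1}\Delta - 1$ against an original diameter of $2$. The only (immaterial) difference is that the paper counts the BFS tree slightly more carefully ($\alpha+1$ children at the root, $\alpha$ elsewhere) before relaxing to the same $(\alpha+1)^{h+1}\ge\Delta$ estimate you use directly.
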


\begin{proof}

\begin{figure}[h!]
\centering
\includegraphics[scale=0.8]{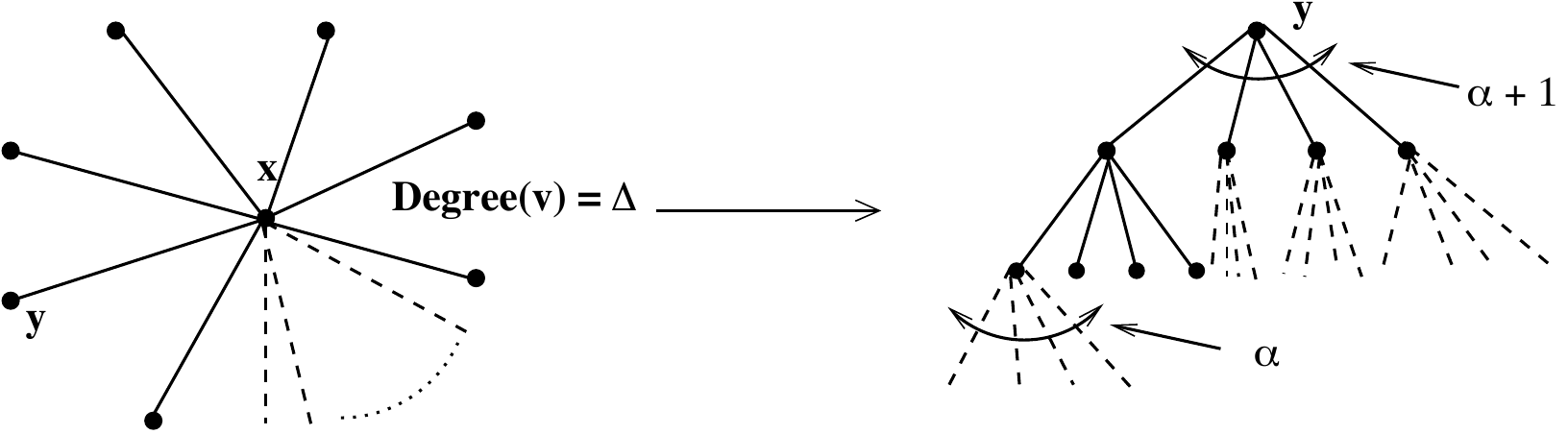}
\caption{Deletion of the central node $v$ of a star leads to an increase in the diameter. Here, the healing algorithm increases the degree of any node by at most $\alpha$.}
\label{fig: lowerboundFT}
\end{figure}

Let $G$ be a star on $\Delta + 1$ vertices, where $x$ is the root node, and $x$ has $\Delta$ edges with each of the
other nodes in the graph.  Let $G'$ be the graph created after the adversary deletes the node $x$.  Consider a breadth
first search tree, $T$, rooted at some arbitrary node $y$ in $G'$.  We know that the self-healing algorithm can increase
the degree of each node by at most $\alpha$, thus the root node in $T$ can have at most $\alpha+1$ children, and other nodes can have at most $\alpha$ children.  Let $h$ be
the height of $T$.  Then we know that $1 + (\alpha + 1) \sum_{i=0}^{h-1}
\alpha^{i} \geq \Delta$.  This implies that $(\alpha+1)^{h+1} \geq \Delta$ for $\alpha \geq 3$, or ${h+1} \geq \log_{\alpha + 1} 
\Delta$.  Since the diameter of $G$ is $2$, we know that $\beta \geq h/2$, and thus $2\beta + 1 \geq \log_{\alpha + 1}
\Delta$. 
Rearranging, we get $\beta  \geq \frac{1}{2} [\log_{\alpha + 1} \Delta - 1]$. This is illustrated in figure~\ref{fig: lowerboundFT}.
\end{proof}

\medskip
\noindent
We note that this lower-bound compares favorable with the general
result achieved with our data structure.  The Forgiving Tree can be
modified so that it ensures that 1) the degree of any node increases
by no more than $\alpha$ for any $\alpha \geq 3$; and that the
diameter increases by no more than a multiplicative factor of $\beta
\leq 2 \log_{\alpha} \Delta + 2$.

\section{Conclusion}

In this chapter, we have presented a distributed data structure that withstands
repeated adversarial node deletions by adding a small number of new
edges after each deletion.  Our data structure ensures two key
properties, even when up to all nodes in the network have been
deleted.  First, the diameter of the network never increases by more
than $O(\log \Delta)$ times its original diameter, where $\Delta$ is
the maximum original degree of any node.  For many peer-to-peer
systems, $\Delta$ is at most polylogarithmic, and so the diameter
would increase by no more than a $O(\log \log n)$ multiplicative
factor.  Second, no node ever increases its degree by more than $3$
over its original degree.

Several open problems remain.    For example, how do we extend our model and algorithm to  handle insertions of nodes
and multiple deletions? Can we protect other invariants? Can we extend our distributed data structure to ensure that
the stretch
between any pair of nodes increases by no more than a certain amount? Can we design our algorithms so they can work
directly on graphs instead of spanning trees of those graphs?  We have some preliminary positive results 
answering the above questions that build on this work.
We can also consider extending self-healing beyond our present model. For example, Can we design algorithms for less
flexible networks such as sensor networks? 
 Can we extend the concept of self-healing to other objects besides graphs? For example, can we design algorithms to rewire a circuit so that it
maintains its functionality even when multiple gates fail? Can our approach be used  to better understand self-healing
in biological systems such as the human brain?


\chapter{Forgiving Graph}
\label{chapter: FG}

\begin{epigraphs}
\qitem{\epitext{The weak can never forgive. Forgiveness is the attribute of the strong.}%
 }%
{%
\epiauthor{Mahatma Gandhi.}%
}
\end{epigraphs}

In this chapter, we present the final of our algorithms discussed in this Dissertation. To recap,  we consider the problem of self-healing in peer-to-peer networks that are under repeated attack by an omniscient adversary. Here, we will assume that, over a sequence of rounds, an adversary either inserts a node with arbitrary connections or
deletes an arbitrary node from the network. The network responds to each such change by quick ``repairs," which consist of adding or deleting a small number of edges.

These repairs essentially preserve closeness of nodes after adversarial deletions, without increasing node degrees by too much, in the following sense.   At any point in the algorithm, nodes $v$ and $w$ whose distance would have been $\ell$ in the graph formed by considering only the adversarial insertions (not the adversarial deletions), will be at distance at most $\ell \log n$ in the actual graph, where $n$ is the total number of vertices seen so far. Similarly, at any point, a node $v$ whose degree would have been $d$ in the graph with adversarial insertions only, will have degree at most $3d$ in the actual graph.  Our distributed data structure, which we call the Forgiving Graph, has low latency and bandwidth requirements.

The Forgiving Graph improves on the Forgiving Tree distributed data structure from Chapter~\ref{chapter: FT}, ~\cite{HayesPODC08},  in the following ways: 1) it ensures low stretch over all pairs of nodes, while the Forgiving Tree only ensures low diameter increase;
2) it handles both node insertions and deletions, while the Forgiving Tree only handles deletions; 3) it does not
require an initialization phase, while the Forgiving Tree initially requires construction of a spanning tree of the
network.



\section{Introduction}

In Chapter~\ref{chapter: Intro}, we have made case for the need of using \emph{responsive} approaches in reconfigurable networks for maintaining robustness and self-healing in networks.   In this chapter, we describe a distributed data structure for maintaining invariants in a reconfigurable network.  We note that our approach is responsive in the sense that it responds to an attack by changing the network topology.  Thus, it is orthogonal and complementary to traditional non-responsive techniques for ensuring network robustness.


This work builds significantly on results achieved in~\cite{HayesPODC08} (Presented in Chapter~\ref{chapter: FT}), which presented a responsive, distributed data structure called the \emph{Forgiving Tree} for maintaining a reconfigurable network in the face of attack. 
Over a complete run of Forgiving Tree: 1) The diameter of the network can never exceed its original diameter by more than a multiplicative factor of $O(\log \Delta)$ where $\Delta$ is the maximum degree in the graph; and 2) the total  increase in the  degree of any node can never be more than $3$.
The Forgiving Tree ensured two invariants: 1) the diameter of the network never increased by more than a multiplicative
factor of $O(\log \Delta)$ where $\Delta$ is the maximum degree in the graph; and 2) the degree of a node never
increased by more than an additive factor of $3$.  

In the following pages, we present a new, improved distributed data structure called the \emph{Forgiving Graph}.  The improvements of the Forgiving Graph over the Forgiving Tree are threefold.  First, the Forgiving Graph maintains  low stretch i.e. it ensures that the distance between any pair of nodes $v$ and $w$ is close to what their distance would be even if there were no node deletions.  It ensures this property even while keeping the degree increase of all nodes no more than a multiplicative factor of $3$.  Moreover, we show that this tradeoff between stretch and degree increase is asymptotically optimal.  Second, the Forgiving Graph handles both adversarial insertions and deletions, while the Forgiving Tree could only handle adversarial deletions (and no type of insertion).  Finally, the Forgiving Graph does not require an initialization phase, while the Forgiving Tree required an initialization phase which involved sending $O(n \log n)$ messages, where $n$ was the number of nodes initially in the network, and had a latency equal to the initial diameter of the network.  Additionally, the Forgiving Graph is divergent technically from the Forgiving Tree, it makes significant use of a novel distributed data structure that we call a Half-full Tree or ``haft''. $\haft$s are discussed in Section~\ref{sec: hafts}. Our main algorithm is described in Section~\ref{sec: FGalgorithm} and Section~\ref{sec: FGdetail}.

\medskip
\noindent {\bf Our Model:} We remind the reader about the model we have been using in this work. We assume that the network is initially a connected graph over $n$ nodes.  An adversary repeatedly attacks the network. This adversary knows the network topology and our algorithm, and it has the ability to delete arbitrary nodes from the  network or insert a new node in the system which it can connect to any subset of the nodes currently in the system.   However, we assume the adversary is constrained in that in any time step it can only delete or insert a single node.
The detailed model is described in Section~\ref{sec: FGmodel}.

\medskip
\noindent {\bf Our Results:}  For a peer-to-peer network that has both insertions and deletions, let $G'$ be the graph consisting of the original nodes and inserted nodes without any changes due to deletions. Let $n$ be the number of
nodes in $G'$. The Forgiving Graph ensures that: 1) the distance between any two nodes of the actual network never increases by more than $\log n$ times their distance in $G'$; and 2) the degree of any node in the actual network never increases by more than $3$ times its degree in $G'$.  Our algorithm is completely distributed and resource efficient.  Specifically, after deletion, repair takes
$O(\log d\log n)$ time and requires sending $O(d\log n)$ messages, each of size $O(\log n)$ where $d$ is the degree
of the node that was deleted. The formal statement and proof of these results is in Section~\ref{subsec: upperbounds}. 

\medskip
\noindent {\bf Related Work:} 
Our work significantly builds on work in~\cite{HayesPODC08} as described above.   Our model of attack and repair builds on earlier work in~\cite{BomanSAS06, SaiaTrehanIPDPS08} (The later is presented in Chapter~\ref{chapter: DASH}).


\section{\hspace*{-10pt}Node Insert, Delete and Network Repair Model}
\label{sec: FGmodel}
 
 We now describe the details of our node insert, delete and network repair model.  Let $G = G_0$ be an arbitrary graph on $n$ nodes, which represent processors in a distributed network.  In each step, the adversary either deletes or adds a node.  After each deletion, the algorithm gets to add some new edges to the graph, as well as deleting old ones.  At each insertion, the processors follow a protocol to update their information.
The algorithm's goal is to maintain connectivity in the network, keeping the distance between the nodes small.  At the same time, the algorithm wants to minimize the resources spent on this task, especially keeping node degree small.  

Initially, each processor only knows its neighbors in $G_0$, and is unaware of the structure of the rest of $G_0$.
After each deletion or insertion, only the neighbors of the deleted or inserted vertex are informed that
the deletion or insertion has occurred. After this, processors are allowed to communicate by sending a limited number
of messages to their direct  neighbors.  We assume that these messages are always sent and received successfully.  The
processors may also request new edges be added to the graph. The only synchronicity assumption we make is that no
other  vertex is deleted or inserted until the end of this round of computation and communication has concluded.
To make this assumption more reasonable, the per-node communication cost should be very small in $n$ (e.g. at most logarithmic).

We also allow a certain amount of pre-processing to be done before the first attack occurs.  This may, for instance,
be used by the processors to gather some topological information about $G_0$, or perhaps to 
coordinate a strategy.  Another success metric is the amount of computation and communication needed during this
preprocessing round.  Our full model is described in Figure~\ref{algo:model-2}.

For our success metrics, at any time $T$, we compare the actual graph $G_T$ to the graph $G'_{T}$ which is the graph with only the original nodes (those at $G_0$) and insertions without regard to deletions and healing. This is the graph which would have been present if the adversary was not doing any deletions and (thus) no self-healing algorithm was active. This is the natural graph for comparing results. Notice if there were no insertions happening in our model, we could have compared $G_T$ to $G_0$ but since insertions are happening, $G_T$ may not even have the same nodes as $G_0$ rendering a node-based comparison impossible. Figure~\ref{fig: ComparisonGraphs} shows an example of $G'_T$ and a corresponding $G_T$. The figure also shows, in  $G'_T$,  the nodes and edges inserted  and deleted,  and in $G_T$, the edges inserted by the healing algorithm, in different colors, as the network evolved over time. Figure~\ref{fig: GraphCompDegrees} shows how the two graphs compare with regards to degree of a particular node $v$, and figure~\ref{fig: GraphCompStretch} shows how the healing algorithm effects the distance between two nodes, $u$ and $v$. Our algorithm gaurantees our invariants on the 'complexity' measures at every time step that the algorithms is in execution.


\begin{figure}[t]
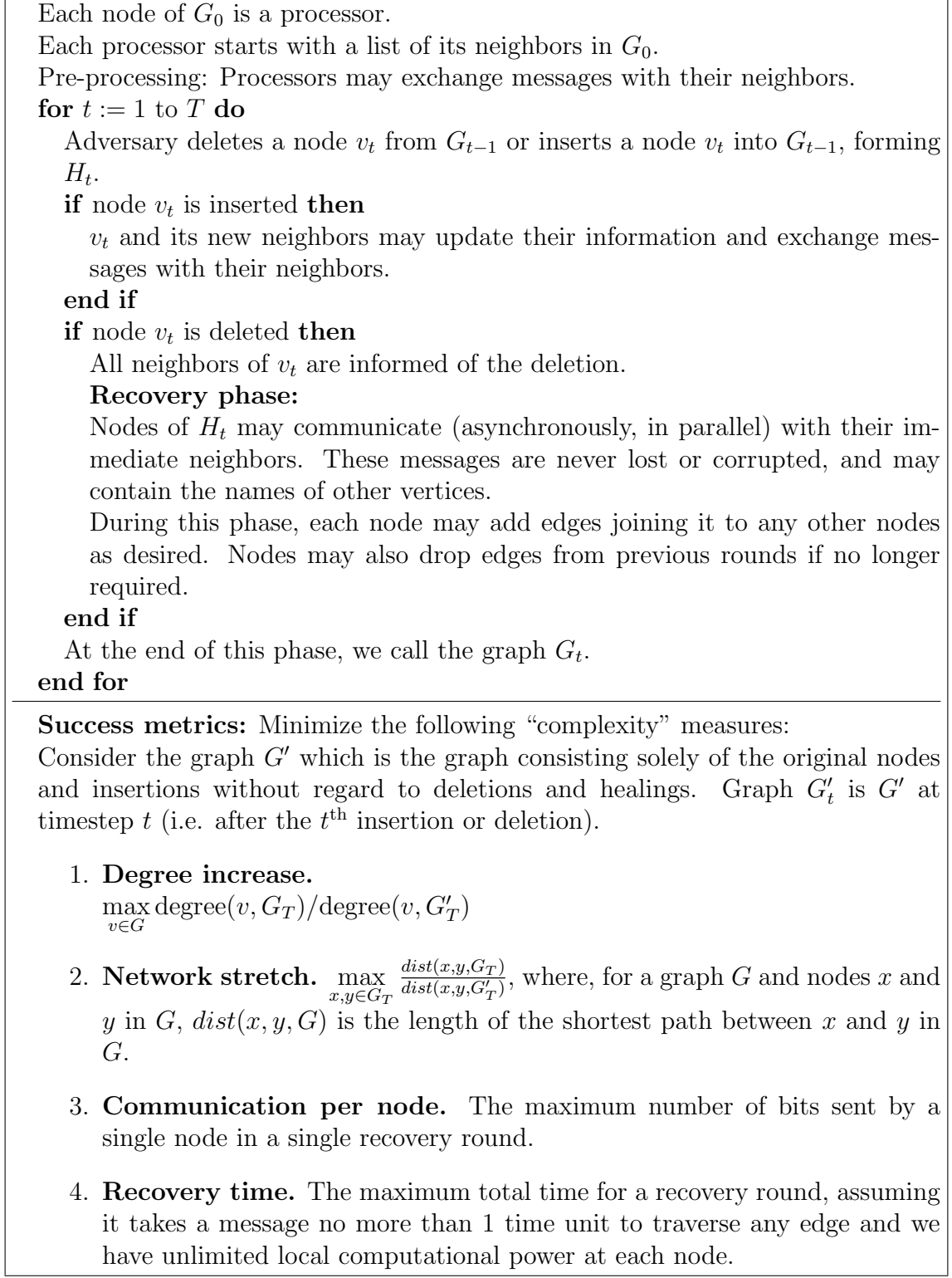

\caption{The Node Insert, Delete and Network Repair Model -- Distributed View.}
\label{algo:model-2}
\begin{boxedminipage}{\textwidth}
\begin{algorithmic}
\STATE Each node of $G_0$ is a processor.  
\STATE Each processor starts with a list of its neighbors in $G_0$.
\STATE Pre-processing: Processors may exchange messages with their neighbors.
\FOR {$t := 1$ to $T$}
\STATE Adversary deletes a node $v_t$ from $G_{t-1}$ or inserts a node $v_t$ into $G_{t-1}$, forming $H_t$.
\IF{node $v_t$ is inserted} 
\STATE  $v_t$ and its new neighbors may update their information and exchange messages with their neighbors.
\ENDIF
\IF{node $v_t$ is deleted} 
\STATE All neighbors of $v_t$ are informed of the deletion.
\STATE {\bf Recovery phase:}
\STATE Nodes of $H_t$ may communicate (asynchronously, in parallel) 
with their immediate neighbors.  These messages are never lost or
corrupted, and may contain the names of other vertices.
\STATE During this phase, each node may add edges
joining it to any other nodes as desired. 
Nodes may also drop edges from previous rounds if no longer required.
\ENDIF
\STATE At the end of this phase, we call the graph $G_t$.
\ENDFOR
\vspace{5pt}
\hrule
\STATE
\STATE {\bf Success metrics:} Minimize the following ``complexity'' measures:\\
Consider the graph  $G'$ which is the graph consisting solely of the original nodes and insertions without regard to
deletions and healings. Graph $G'_{t}$ is $G'$ at timestep $t$ (i.e. after the $t^{\mathrm{th}}$ insertion or deletion).
 \begin{enumerate}
\item{\bf Degree increase.} \\
  $\max_{v \in G} \Degree(v,G_T) / \Degree(v,G'_T)$
\item {\bf Network stretch.} $\max_{x, y \in G_{T}} \frac{dist(x,y,G_{T})}{dist(x,y,G'_{T})}$, where, for a graph $G$ and nodes $x$ and $y$ in $G$, $dist(x,y,G)$ is the
length of the shortest path between $x$ and $y$ in $G$.
\item{\bf Communication per node.} The maximum number of bits sent by a single node in a single recovery round.
\item{\bf Recovery time.} The maximum total time for a recovery round,
assuming it takes a message no more than $1$ time unit to traverse any edge and we have unlimited local computational power at each node.
\end{enumerate}
\end{algorithmic}
\end{boxedminipage}
\end{figure}

\clearpage

\begin{figure}[h!]
\centering
\subfigure[$G'_T$: Nodes in red (dark gray in grayscale) deleted, and nodes in green (patterned) inserted, by the adversary.]{ \label{sfig: GraphOrigInserts}
\includegraphics[scale=0.6]{images/FG/GraphOrigInserts} }
\hspace{30pt}
\subfigure[$G_T$: The actual graph. Edges added by the healing algorithm shown in gold (light shaded in grayscale) color.]
{ \label{sfig: GraphHealed} \includegraphics[scale=0.6]{images/FG/GraphHealed} }
\caption{ \emph{Graphs at time T}. $G'_T$: The graph of initial nodes and insertions over time, $G_T$: The actual healed graph.}
\label{fig: ComparisonGraphs}
\end{figure}

\begin{figure}[h!]
\centering
\includegraphics[scale=0.6]{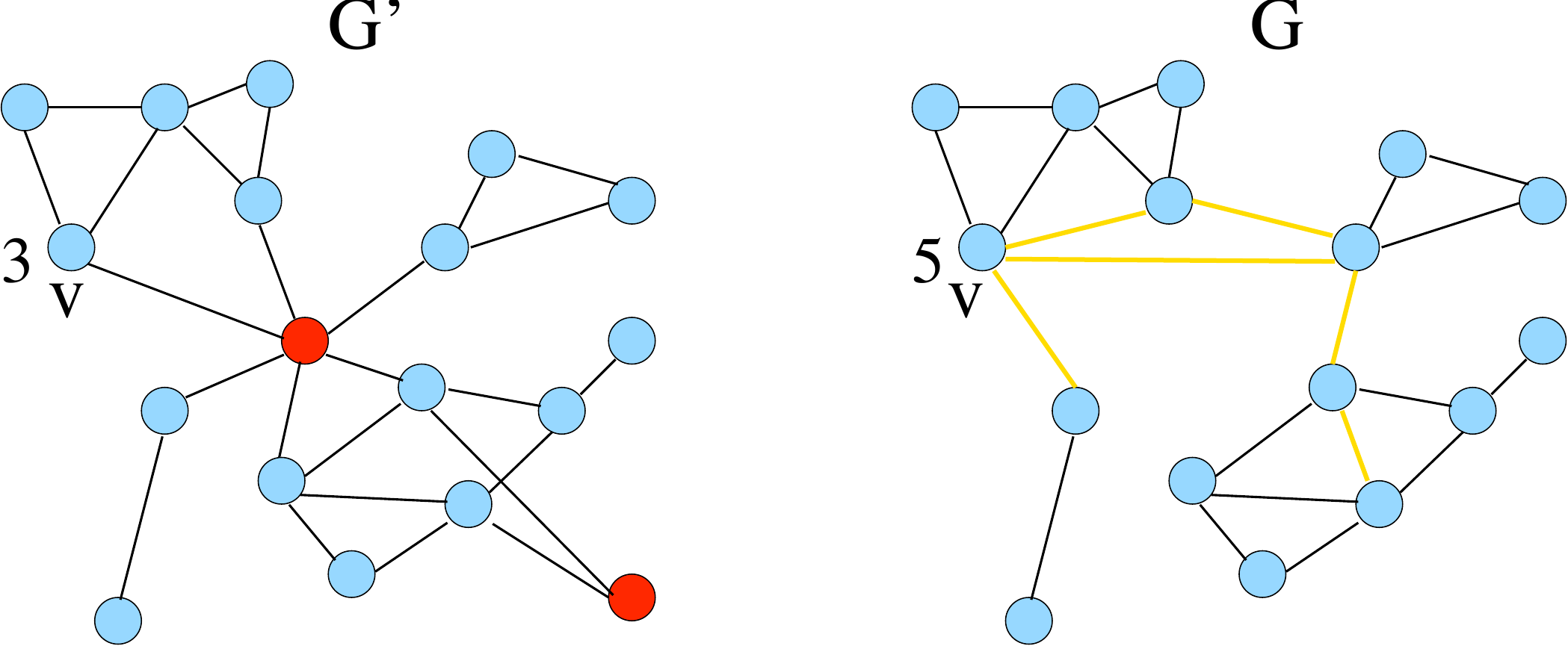} 
\caption{Comparing degrees: In the figure the degree of node $v$ in graph of only original and inserted nodes is 3, and in the actual healed network it is 5. The nodes in red (dark gray in grayscale) were deleted by the adversary and the golden (light shaded) edges were the ones added by the healing algorithm.}
\label{fig: GraphCompDegrees}
\end{figure}

\begin{figure}[h!]
\centering
\includegraphics[scale=0.6]{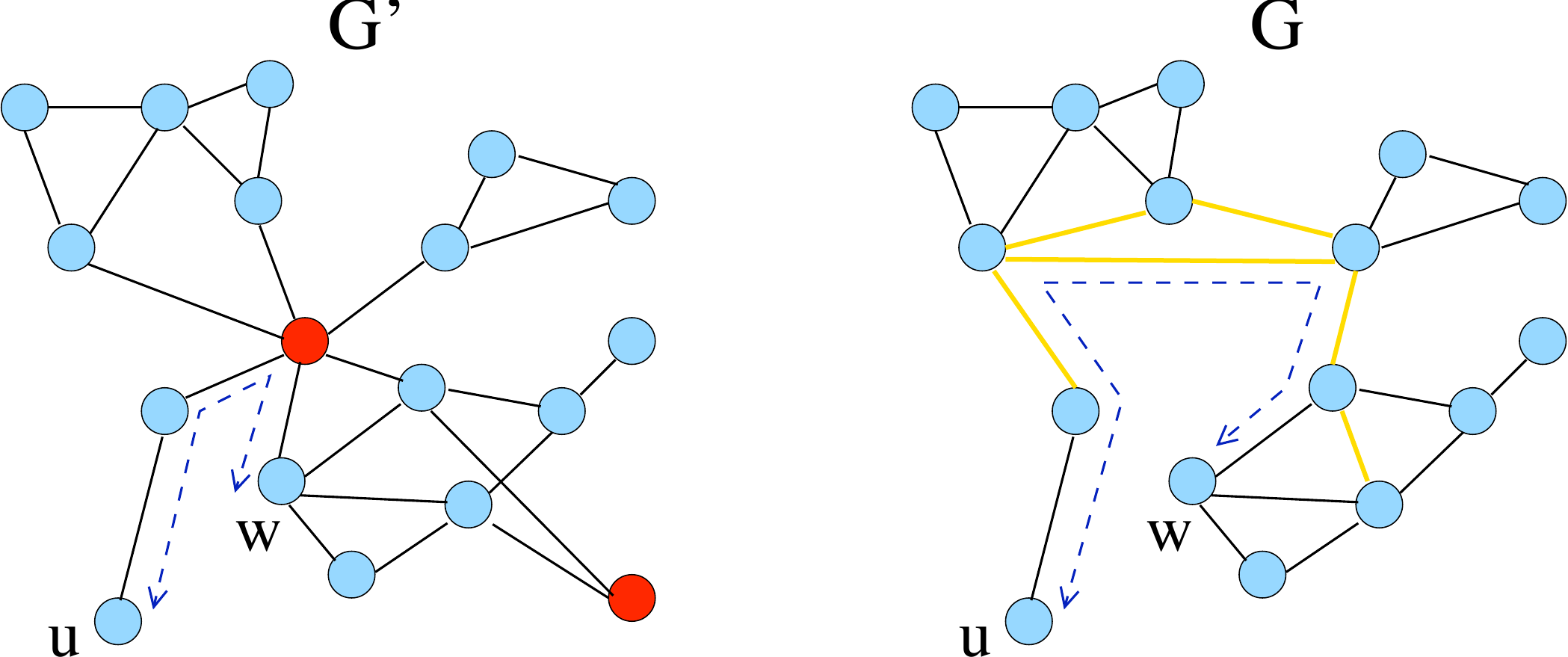} 
\caption{Comparing distances: In the figure nodes $u$ and $w$ have their distance increased to 5 in the actual healed network compared to their distance of $3$ in the graph of only original and inserted nodes. The nodes in red (darker in grayscale) were deleted by the adversary and the golden edges (lighter shade) are the ones added by the healing algorithm}
\label{fig: GraphCompStretch}
\end{figure}


\section{The Forgiving Graph algorithm}
\label{sec: FGalgorithm}

\begin{figure}[h!]
\centering
\includegraphics[scale=0.55]{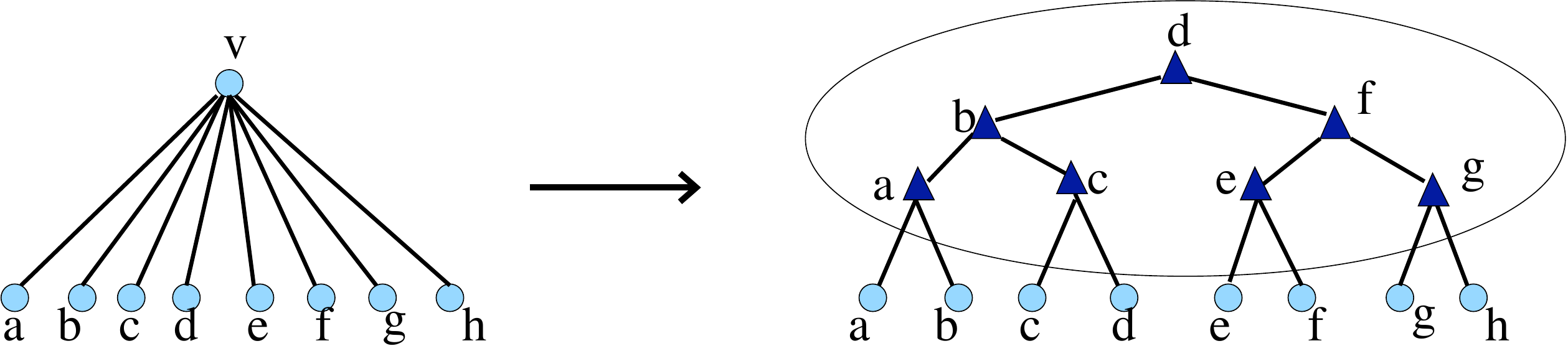}
\caption{Deleted node $v$ replaced by its Reconstruction Tree. The  triangle shaped nodes are 'virtual' helper nodes simulated by the 'real' nodes which are in the leaf layer.}
 \label{fig: RT}
\end{figure}

Here, we give a high level description of our algorithm.
An adversary can effect the network in one of two ways: inserting a new node in the network or deleting an
existing node from the network. Node insertion is straightforward and is dependent on the specific policies of the
network. When an insertion happens, our incoming node and its neighbors update the data structures that are
used by our algorithm. We will also assume that nodes maintain some neighbor-of-neighbor information. There are many
ways to maintain neighbor of neighbor information~\cite{MNW, Naor04knowthy}.  Maintaining neighbor of neighbor information requires regular updates, and may be used for 
other purposes such as routing, thus, we do not explicitly include this maintenance cost in our analysis.


Each time a node $v$ is deleted, we can think of it as being replaced by a Reconstruction Tree ($\RT(v)$, for short) which is a  \textit{haft} (defined in  Section~\ref{sec: hafts}) having ``virtual'' nodes as internal nodes and
neighbors of $v$ (which we call real nodes) as the leaf nodes. Note that each virtual node has a degree of at most  $3$. A single real node itself is a trivial $\RT$ with one node.
 $\RT(v)$ is formed by merging all the neighboring $\RT$s of
$v$ using the strip and merge operations from Section~\ref{sec: hafts}. Thus, following a deletion, we may have a graph with both real and virtual nodes.  After a long sequence of such insertions and deletions, this  graph is a patchwork mix of virtual nodes and real nodes. Let us call this graph $\FG$ (short for $\FGraph$). As for the other graphs, $FG_T$ is the graph $\FG$ at time $T$.

Also, because the virtual trees (hafts) are balanced binary trees, the deletion of a node $v$ can, at worst, cause the
distances between its neighbors to increase from $2$ to $2 \lceil \log d  \rceil$ by traveling through its $\RT$, where 
$d$ is the degree of $v$ in $\G'$ (the graph consisting solely of the original nodes and insertions without regard to
deletions and healings). 
However, since this deletion may cause many $\RT$s to merge and the new $\RT$ formed may
involve all the nodes in the graph, the distances between any pair of actual surviving nodes may increase by no
 more than a $\lceil \log n \rceil$ factor.
 
Since our algorithm is only allowed to add edges and not nodes, 
we cannot really add these virtual nodes to the network.
We get around this by assigning each virtual node to an actual
node, and adding new edges between actual nodes in order to 
allow ``simulation'' of each virtual node.  More precisely,
our actual graph is the homomorphic image of the graph
described above, under a graph homomorphism which fixes 
the actual nodes in the graph and maps each virtual node
to a distinct actual node which is ``simulating'' it. Figure~\ref{fig: homomorphism} shows this homomorphism where the graph $\FG$  is mapped to the graph $\G$.  We discuss this homomorphism and its relationship to our results in more detail in Section~\ref{sec: Results} .

Note that, because each actual node simulates at most one
virtual node for each of its deleted neighbors, and virtual nodes have degree at most $3$,
this ensures that the maximum degree increase of our algorithm
is at most $3$ times the node's degree in $\G'$. 


\section{Half-full Trees (``HAFTS'')}
\label{sec: hafts}

\epigraph{\epitext{Is the glass half full, or half empty? It depends on whether you're pouring, or drinking.} }%
{\epiauthor{Bill Cosby}}

\begin{figure}[h!]
\centering
\subfigure[The first seven hafts. The nodes marked by a circle are the primary roots, and those in boxes are
the spine nodes.]{ \label{sfig: haftexample}
\includegraphics[scale=0.9]{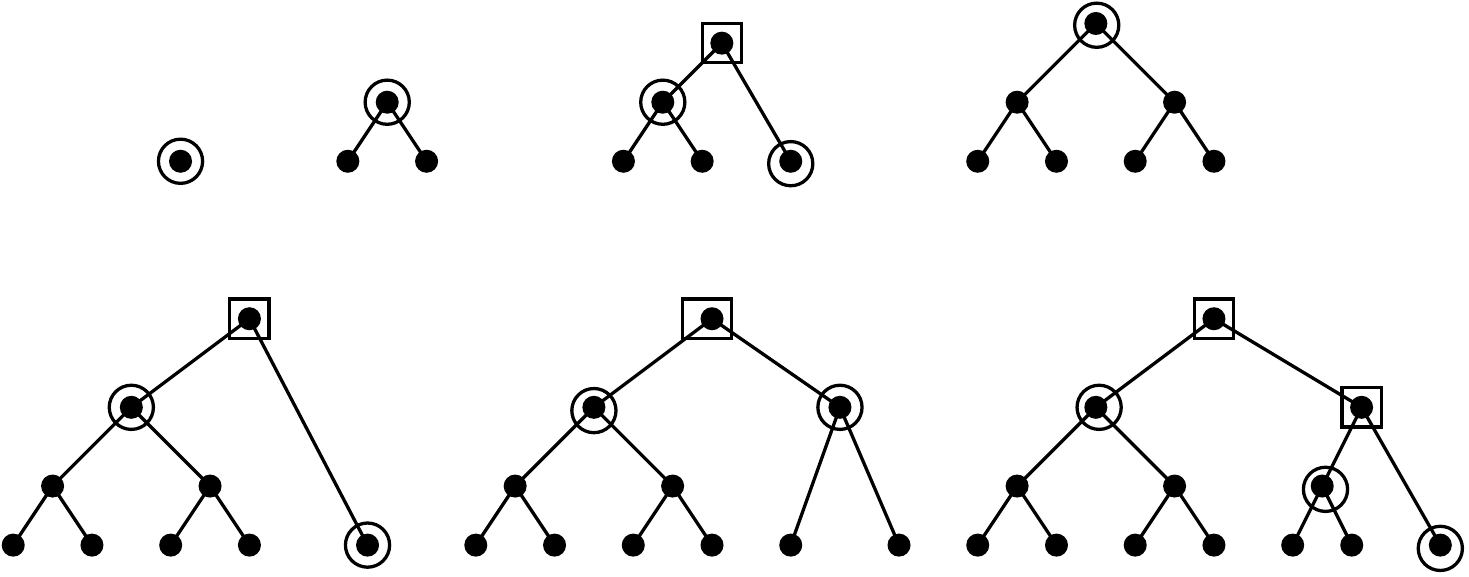} }
\hspace{0.4in}
\subfigure[
Structure of a haft. Each $T_i$ is a complete binary tree, 
with 
$|T_1| > |T_{2}| > \dots > |T_{k}|$. The spine nodes are the nodes in red (darker in grayscale). The left child of each spine node, and  the right child of the rightmost spine node are the primary roots, shown in green (lighter in grayscale).]
{ \makebox[0.9\textwidth][c]{\includegraphics{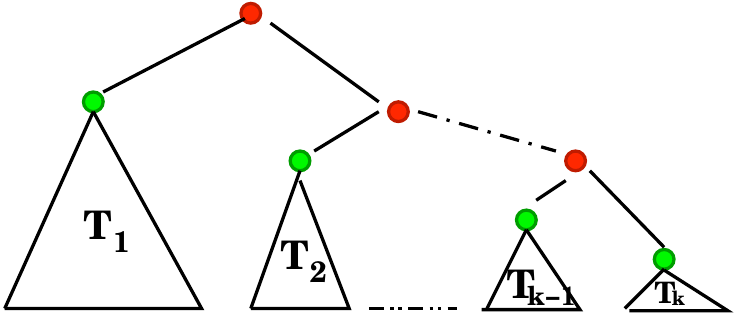}  \label{sfig: haft-as-join} }
}
\caption{haft (half-full tree)}
\end{figure}

 In this section, we define half-full trees (or hafts, for short), 
and describe their most important properties for our present application.
This type of tree has been studied before, by Vaucher~\cite{Vaucher-Essays04}, who called them ``staircase trees.''  However, our presentation will be
self-contained.



\begin{description}
\item[Half-full tree:] A \emph{half-full tree}, or \emph{haft}, is a rooted binary tree in which every non-leaf node $v$
has the following properties:
\begin{itemize}
 \item $v$ has exactly two children.
 \item The left child of $v$ is the root of a complete binary subtree
that contains at least half of $v$'s  descendants.
\end{itemize}
\pagebreak
\item[Primary root:] A \emph{primary root} is a node in a $\haft$ such that:
\begin{itemize}
\item It is the root of a complete subtree.
\item Its parent, if it has one, is not the root of a complete subtree.
\end{itemize}
\item[Spine:] A \emph{spine node} is the parent of a primary root. 
Equivalently, it is a node in a $\haft$ which is not the root
of a complete subtree.
The \emph{spine} of a $\haft$ is the set of
all spine nodes.  We observe that, if non-empty, 
the spine consists of the vertices of a path,
with the root of the $\haft$ as one endpoint.
\end{description}

\noindent Figure \ref{sfig: haftexample} shows several examples of hafts. 
We now give a simple structural lemma which completely characterizes 
any haft as a function of the number of its leaves.  This will
be useful later when we wish to perform merging operations on the
hafts used by our algorithm.

\begin{lemma}[Binary representation of Hafts]
\label{lemma: hftproperties}
Let $\ell$ be a positive integer. Then there is a unique haft $T$
having $\ell$ leaves.
Moreover, let $h$ be the number of ones in the binary representation of $\ell$, and suppose
$x_1 > x_2 > \dots > x_h$ are the indices of these ones, so that
 \[
\ell = \sum 2^{x_{i}} .
 \]
Then either
\begin{itemize}
\item $h=1$, and $T$ is a complete tree of depth $x_{1}$, or
\item $h \ge 2$, and $T$ consists of $h-1$ spine nodes $s_{1}, \dots s_{h-1}$,
together with $h$ complete binary trees $T_{1},\dots,T_{h}$, where
\begin{itemize}
\item $s_{1}$ is the root of $T$,
\item each $T_{i}$ has depth $x_{i}$,
\item each $s_{i}$ has the root of $T_{i}$ as its left child
\item for $1 \le i \le h-2$, $s_{i}$ has $s_{i+1}$ as its right child
\item $s_{h-1}$ has the root of $T_{h}$ as its right child
\end{itemize}
\end{itemize}
\end{lemma}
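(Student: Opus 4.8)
The plan is to prove existence and uniqueness together by strong induction on $\ell$, the number of leaves, since the haft structure is rigid enough that the left-child/right-child decomposition at the root is forced. First I would dispose of the base case: if $\ell = 1$ the unique haft is a single node ($h=1$, $x_1 = 0$, a complete tree of depth $0$), and if $\ell = 2^k$ is a pure power of two, I would check by an easy subinduction that the unique haft with $2^k$ leaves is the complete binary tree of depth $k$ — this uses the defining property that the left subtree of the root is complete and contains at least half the descendants, so it has exactly $2^{k-1}$ leaves, and then by induction the right subtree (also with $2^{k-1}$ leaves, being forced to carry the remaining half) is complete of depth $k-1$, forcing the whole tree to be complete of depth $k$. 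These pure-power cases are exactly the $h=1$ alternative.

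For the inductive step with $h \ge 2$, write $\ell = 2^{x_1} + r$ where $r = \sum_{i \ge 2} 2^{x_i}$ and $0 < r < 2^{x_1}$. In any haft $T$ with $\ell$ leaves, the root $s$ has a left child that is the root of a \emph{complete} subtree containing at least $\lceil \ell/2 \rceil$ leaves; I would argue that this complete subtree must have exactly $2^{x_1}$ leaves: a complete binary tree has $2^j$ leaves for some $j$, we need $2^j \ge \ell/2$, and $2^j \le \ell$ since it is a subtree, and the only power of two in the half-open interval $[\ell/2, \ell]$ consistent with leaving a valid (nonempty, $< 2^{x_1}$-leaf) right subtree is $2^{x_1}$ — here I would be careful to check that $2^{x_1}$ is indeed $\ge \ell/2$ (true since $\ell < 2^{x_1+1}$) and that no larger power works (a power $> 2^{x_1}$ would exceed $\ell$). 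Hence the left subtree of the root is forced to be the complete tree $T_1$ of depth $x_1$, and the right subtree is a haft with $r$ leaves. By the induction hypothesis applied to $r$ (whose binary expansion has ones at $x_2 > \dots > x_h$), the right subtree is the unique haft described by the lemma for $r$: if $h = 2$ it is the complete tree $T_2$ of depth $x_2$ (so $s = s_1$ is a spine node with $T_1$ as left child and the root of $T_2$ as right child — matching the "$s_{h-1}$ has root of $T_h$ as right child" clause with $h-1 = 1$); if $h \ge 3$ it is the spine $s_2, \dots, s_{h-1}$ with trees $T_2, \dots, T_h$, and prepending $s = s_1$ with left child the root of $T_1$ and right child $s_2$ gives exactly the claimed structure. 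Uniqueness propagates because every choice along the way was forced.

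The one point needing genuine care — the main obstacle — is verifying that the root of the left complete subtree has \emph{exactly} $2^{x_1}$ leaves rather than merely at least half; this is where the interplay between "complete subtree" (forcing a power of two) and "at least half the descendants" (a lower bound) and "proper subtree" (an upper bound of $\ell$, and strictly less since the right subtree is nonempty) pins down the value uniquely. I would also need to confirm the spine-node bookkeeping: that a node is a spine node (not the root of a complete subtree) precisely when it has a non-complete subtree hanging as its right child, so that the $s_i$ constructed above are exactly the spine, and that the primary roots are the roots of the $T_i$ — but this follows directly from the definitions once the structural decomposition is in hand. Finally I would note the degenerate consistency check that when $h \ge 2$ there are $h - 1$ spine nodes and $h$ complete trees, which matches: one spine node is consumed at each "merge" of a complete tree onto the spine except the last tree attaches directly.
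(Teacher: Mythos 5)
Your proposal is correct and follows essentially the same route as the paper's proof: induction on $\ell$, with the power-of-two case handled by a forced half/half split at the root, and the $h \ge 2$ case handled by showing the left subtree must be the complete tree with exactly $2^{x_1}$ leaves and recursing on the right subtree with $\ell - 2^{x_1}$ leaves. Your explicit justification that $2^{x_1}$ is the only admissible power of two (being the unique one in $[\ell/2,\ell)$) fills in a step the paper states more tersely, but the argument is the same.
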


\begin{corollary}
Let $T$ be a haft having $\ell$ leaves.  Then the depth of $T$ equals
$\lceil \log \ell \rceil$.
\end{corollary}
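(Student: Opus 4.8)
The plan is to derive the corollary directly from Lemma~\ref{lemma: hftproperties} (the binary-representation lemma), by a short two-case analysis on $h$, the number of $1$'s in the binary expansion $\ell = \sum_{i=1}^{h} 2^{x_i}$ with $x_1 > x_2 > \dots > x_h \ge 0$.

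First I would dispose of the case $h = 1$. Then $\ell = 2^{x_1}$, so $\log \ell = x_1$ is an integer and $\lceil \log \ell \rceil = x_1$; the lemma says $T$ is a complete binary tree of depth $x_1$, so the claim holds immediately.

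Next comes the case $h \ge 2$. By the lemma, $T$ has spine nodes $s_1, \dots, s_{h-1}$ forming a path down from the root $s_1$, with the root of a complete tree $T_i$ of depth $x_i$ hanging off $s_i$ as its left child (for $1 \le i \le h-1$) and the root of $T_h$ (depth $x_h$) hanging off $s_{h-1}$ as its right child. Taking $s_i$ to sit at depth $i-1$, the leaves of $T_i$ sit at depth $i + x_i$ for $i \le h-1$, and the leaves of $T_h$ sit at depth $(h-1) + x_h$, so the depth of $T$ is $\max\bigl\{\max_{1 \le i \le h-1}(i + x_i),\ (h-1)+x_h\bigr\}$. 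Since the $x_i$ are strictly decreasing nonnegative integers we have $x_i \le x_1 - (i-1)$, hence $i + x_i \le x_1 + 1$ with equality at $i=1$, and $(h-1)+x_h \le x_1$; therefore the depth of $T$ equals $x_1 + 1$. To finish, I would observe that because $h \ge 2$ and $x_1 > x_h \ge 0$, we have $2^{x_1} < \ell \le \sum_{j=0}^{x_1} 2^j < 2^{x_1+1}$, so $x_1 < \log \ell < x_1 + 1$ and thus $\lceil \log \ell \rceil = x_1 + 1$, matching the depth.

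I do not expect a genuine obstacle here; the only points that need care are the off-by-one bookkeeping in locating the leaves of each $T_i$ relative to the root (counting edges, and where the spine path starts), and checking that the lower bound $2^{x_1} < \ell$ is strict, which is exactly where the hypothesis $h \ge 2$ is used.
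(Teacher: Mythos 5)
Your proof is correct and follows exactly the route the paper intends: the corollary is stated immediately after Lemma~\ref{lemma: hftproperties} with no separate proof, being treated as a direct consequence of the structural description, and you have simply supplied the routine depth bookkeeping and the bound $2^{x_1} < \ell < 2^{x_1+1}$ that make this explicit. The case analysis, the identification of the maximum depth as $x_1+1$ via the leaves of $T_1$, and the use of $h \ge 2$ to make the lower bound strict are all sound.
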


\begin{proof}[Proof of Lemma~\ref{lemma: hftproperties}]
We will prove the detailed structure of $T$, from which the 
uniqueness is apparent.

First, consider the case $h=1$ (\emph{i.e.,} $\ell$ is a power of $2$).
If $\ell = 1$, there is nothing to prove.  Assume $\ell > 1$.
Now the left subtree of $T$ is complete, and hence has number of
leaves equal to a power of two.  Since at least half of the leaves
are on the left subtree, this power of two is at least $\ell/2$.
Since the root of $T$ has two children, not all of the leaves
are on the left subtree, and hence there are exactly $\ell/2$
leaves on the left subtree, and thus also $\ell/2$ leaves on the
right subtree.  Since it is immediate from the definition
that any subtree of a haft is also a haft, it follows by induction
on $\ell$ (being a power of two) 
that the right subtree is also a complete subtree.
Thus, $T$ is complete.

Now, suppose $h \ge 2$.  Let us denote the root of $T$ by $s_1$.
Because $\ell$ is not a power of two, $s_1$ must be
a spine node.
Since the left subtree, $T_1$, is complete and contains between
$\ell/2$ and $\ell$ leaves, it must have depth $x_1$.
Since the right subtree is a haft having number of leaves equal to
\[
\ell - 2^{x_1} = \sum_{i=2}^h 2^{x_i}
\]
it follows by induction on $\ell$ (being any positive integer) 
that it has the claimed structure.  Thus, $T$ is also as claimed.
\end{proof}

\subsection{Operations on Hafts}
We Define the following operations on hafts:
\begin{enumerate}
\item \emph{Strip}: Suppose $T$ is a haft with $h$ ones in its binary representation. The Strip operation removes 
$h-1$ nodes from $T$ returning  a forest of $h$ complete trees. 

\item \emph{Merge}: The Merge operation joins  hafts together using additional isolated single nodes,
to create a single new haft.
\end{enumerate}

\pagebreak

\noindent We now describe these operations in more detail:

\subsubsection{Strip}
\label{subsec: haftstrip}
By Lemma~\ref{lemma: hftproperties}, if we remove the spine from a
haft, $T$, we are left with a forest of $h$ complete binary trees,
where $h$ is the number of ones in the binary representation of
the number of leaves of $T$.
The operation $\Strip(T)$ returns this forest.

The $\Strip$ operation works as follows: 
 If  $T$ is a complete tree, then return $T$ itself. Note that the root of the $T$ is the only primary root in this
case. If $T$ is not a complete tree, then $F$ is obtained as follows. Starting from the root of $T$, traverse the 
direct path towards the rightmost leaf of $T$. Remove a node if it is not a primary root. Stop when a primary root or a
leaf node (which is a primary root too) is discovered. 
In figure \ref{sfig: haft-as-join} the $\Strip$ operation removes the nodes indicated by the square boxes. \\
We now give intuition as to why the Strip operation works.
\begin{lemma}
The Strip operation returns the subtrees rooted at all primary roots in the input $\haft$.
\end{lemma}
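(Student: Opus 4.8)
The plan is to lean on Lemma~\ref{lemma: hftproperties} (binary representation of hafts), which already pins down the exact structure of $T$, and to reduce the claim to a direct inspection of which nodes the $\Strip$ traversal visits and which it deletes. First I would dispatch the degenerate case $h=1$: then $T$ is complete, $\Strip$ returns $T$ unchanged, and the only primary root of $T$ is its root (every internal node has, as parent, another root of a complete subtree, so it fails the primary-root condition). So the returned forest $\{T\}$ is exactly the family of subtrees headed by primary roots.

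For $h \ge 2$, I would invoke Lemma~\ref{lemma: hftproperties} to write $T$ in terms of spine nodes $s_1,\dots,s_{h-1}$ (a path with root $s_1$) and complete binary trees $T_1,\dots,T_h$, where $\mathrm{root}(T_i)$ is the left child of $s_i$ for $i \le h-1$, $s_{i+1}$ is the right child of $s_i$ for $i \le h-2$, and $\mathrm{root}(T_h)$ is the right child of $s_{h-1}$. Two things then need to be established. First, I would check that the primary roots of $T$ are exactly $\mathrm{root}(T_1),\dots,\mathrm{root}(T_h)$ and the subtrees they head are exactly $T_1,\dots,T_h$: each $T_i$ is complete and its parent is a spine node, which by definition is not the root of a complete subtree; conversely, spine nodes are not roots of complete subtrees, and a node strictly inside some $T_i$ has its parent inside $T_i$, which is the root of a complete subtree, so it is not a primary root. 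Second, I would show that $\Strip$ deletes precisely $s_1,\dots,s_{h-1}$: the rightmost root-to-leaf path of $T$ begins $s_1,s_2,\dots,s_{h-1},\mathrm{root}(T_h),\dots$ (continuing down the rightmost path of $T_h$), so $\Strip$ removes $s_1,\dots,s_{h-1}$ — none is a primary root — and halts at $\mathrm{root}(T_h)$, which is a primary root (or a leaf, when $T_h$ is a single node, which is still a primary root). Removing $s_1,\dots,s_{h-1}$ detaches the left subtrees $T_1,\dots,T_{h-1}$ as separate components and leaves $T_h$ as the residual hanging below $s_{h-1}$, so the forest returned is $\{T_1,\dots,T_h\}$, which by the first point is the set of subtrees rooted at primary roots.

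The same argument can equivalently be run as an induction on $h$ (or on the number of leaves $\ell$): $\Strip(T) = \{T_1\} \cup \Strip(T')$, where $T'$ is the right subtree of the root, a haft with $h-1$ ones in its binary representation, and the primary roots of $T$ are $\mathrm{root}(T_1)$ together with the primary roots of $T'$. I expect the only real friction, in either phrasing, to be bookkeeping rather than depth: one must verify that ``being a primary root'' is evaluated consistently when passing between $T$ and its right subtree $T'$ (a node of $T'$ other than $\mathrm{root}(T')$ has the same parent in $T$ as in $T'$, and $\mathrm{root}(T') = s_2$ is never a primary root unless $h=2$, in which case $T' = T_2$ is complete and the base case applies), and one must handle the boundary case where $T_h$ (resp.\ $T'$) is a single leaf, so that the traversal stops on a leaf that is itself a primary root.
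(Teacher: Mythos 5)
Your proof is correct and rests on the same idea as the paper's: the paper gives a terse recursive argument (a spine node's left child is always a primary root, so peel it off and recurse on the right child), which is exactly the inductive phrasing you sketch at the end, while your main write-up just unrolls this via the structural decomposition of Lemma~\ref{lemma: hftproperties}. Your version is more explicit about why no node strictly inside a $T_i$ can be a primary root and about the boundary case where the traversal ends on a leaf, but it is the same proof in substance.
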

\begin{proof}
By the definitions of  $\haft$ and primary root, if a vertex is not the root of a complete subtree, its left
child is guaranteed to be a primary root. Thus, either the root of the $\haft$ is a primary root or its left child is. If
the left child is a primary root, there can be no other primary root in the left subtree, so we we return the tree rooted at that
child.  Recursively applying the same test to the right child, we get all the primary roots. 
\end{proof}

\begin{figure}[h!]
\centering
\includegraphics[scale=0.8]{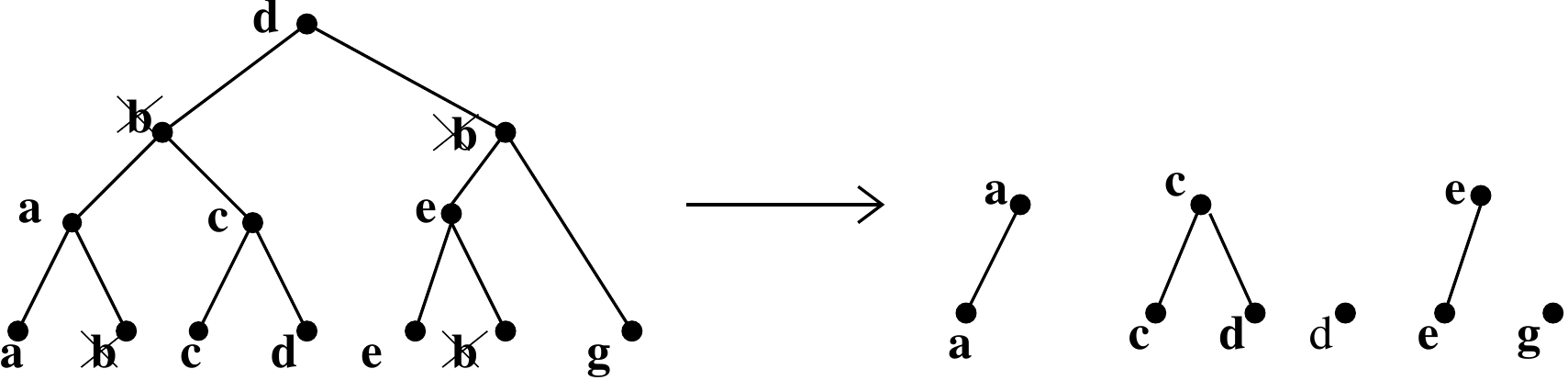}
\caption{Deletion of a node and its helper nodes lead to breakup of RT into components. The Strip operation or a simple variant (for non-hafts) returns a set of complete trees, which can then be merged.} 
\label{fig: HaftStripMerge}
\end{figure}

\subsubsection{Merge}
\label{subsec: haftmerge}
 By Lemma~\ref{lemma: hftproperties}, every $\haft$ is completely characterized by its number of leaves.
Merging $\haft$s is analogous to binary addition of these numbers. 
The new binary number obtained is the number of leaves in the
haft produced by the Merge operation.
This is illustrated in figure~\ref{fig: haftmergebinary}.\\
 The first step of the $\Merge$ operation is to apply the $\Strip$  operation on the input trees. This gives a forest of
complete trees.  These complete trees can be recombined with the help of extra nodes to obtain a new $\haft$. 
Let $Size(X)$ be the number of nodes in a tree $X$. Consider two complete trees
$T_1$ and $T_2$ (Size($T_1) > Size(T_2$)), with roots $r_{1}$ and $r_{2}$ respectively, and an extra node $v$. To merge
these trees, make $r_{1}$ the left child and $r_{2}$ the right child of $v$ by adding edges between them. The merged
tree is always a $\haft$.
 Thus, the merge operation $\Merge(\haft_1,\haft_2, \ldots)$ is as follows:\\

\begin{enumerate}
\item Apply $\Strip$ to all the hafts to get a forest of  complete trees.
  \item Let $T_{1} , T_{2}, \ldots, T_{k}$ be the $k$ complete trees sorted in ascending order of their
size. Traverse the list from the left, let $T_{i}$ and $T_{i+1}$ be the first two adjacent trees of the same size and $v$ be a
single isolated vertex, join $T_{i}$ and $T_{i+1}$ by making $v$ the parent of the root of $T_{i}$ and the root of
$T_{i+1}$, to give a new tree. Reinsert this tree in the correct place in the sorted list. Continue traversal of the
list from the position of the last merge, joining pairs of trees of equal sizes. At the end of this traversal, we are
left with a sorted list of complete trees, all of different sizes.
\item Let $T_1, T_2, \ldots,  T_l$   be the sorted list of complete trees obtained after the previous step. Traverse
the list from left to right, joining adjacent trees using single isolated vertices.  Let $w$ be a single isolated
vertex. Join $T_1$ and $T_2$ by making the root of $T_2$ the left child and the root of $T_{1}$ the right child of
$w$, respectively. This gives a new haft. Join this haft and $T_3$  by using another available isolated
vertex, making the larger tree ($T_{3}$)  its left child. Continue this process till there is a single haft.
\end{enumerate}

\begin{figure}[h!]
\centering
\includegraphics[scale=0.7]{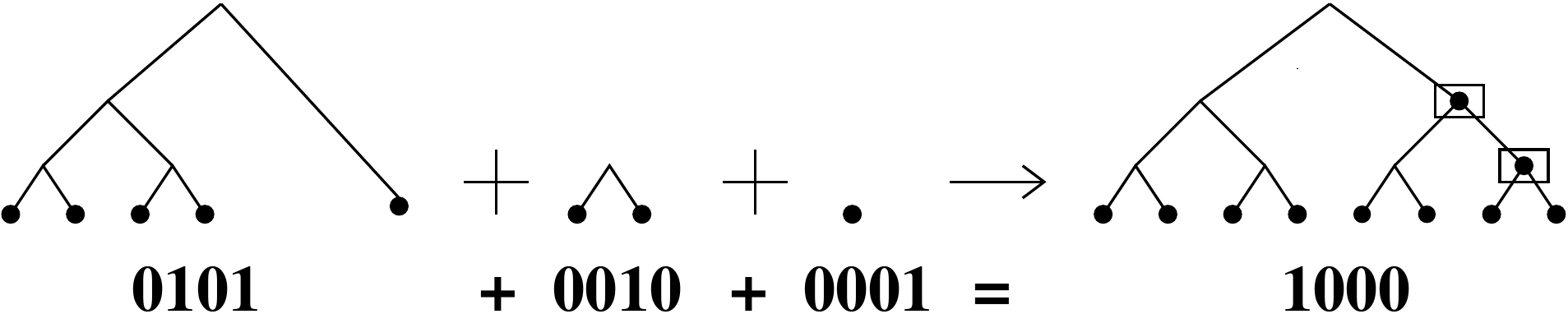}
\caption{Merging three hafts. The vertices in the square boxes are the new isolated vertices used to join the complete
The square shaped vertices are the isolated vertices used to join the complete trees. Merging is analogous to binary
number addition, where the number of leaves are represented as binary numbers.}
\label{fig: haftmergebinary}
\end{figure}

\section{FG: Distributed implementation} 

\label{sec: FGdetail}

 \begin{figure}[h!]
\centering
\subfigure[The original graph. Node $v$ attacked.]{
\makebox[0.45\textwidth][c]{\includegraphics[scale=0.5]{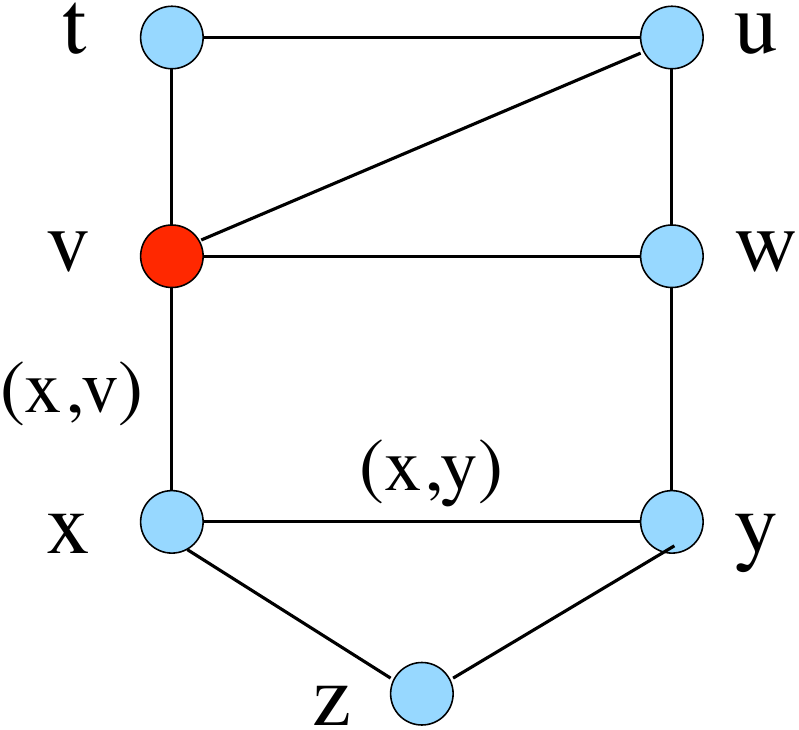}}
 } 
\subfigure[Healed graph. The new nodes inside ellipse are helper nodes.]{
\makebox[0.45\textwidth][c]{\includegraphics[scale=0.5]{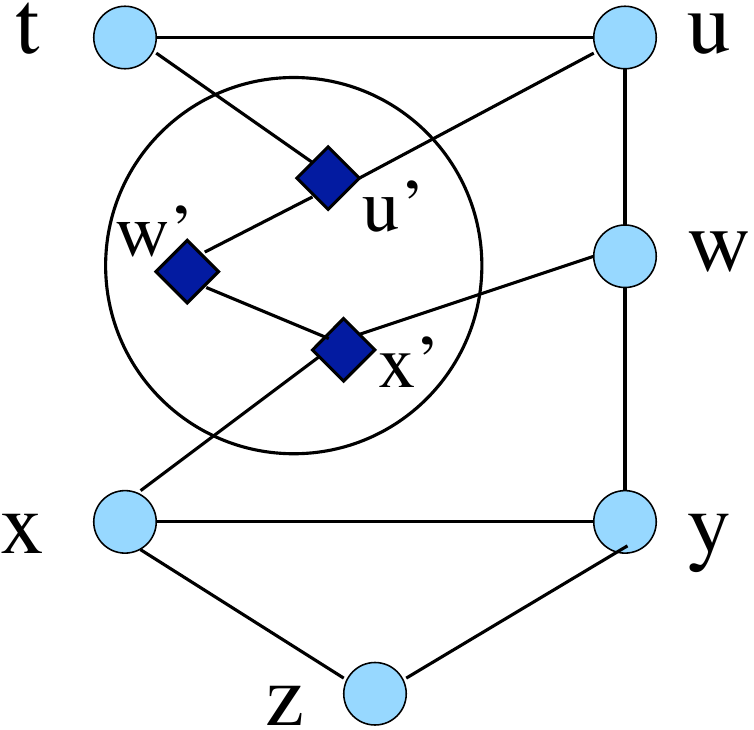}}
 }\\
\subfigure[Node $y$ attacked.]{
\makebox[0.45\textwidth][c]{\includegraphics[scale=0.5]{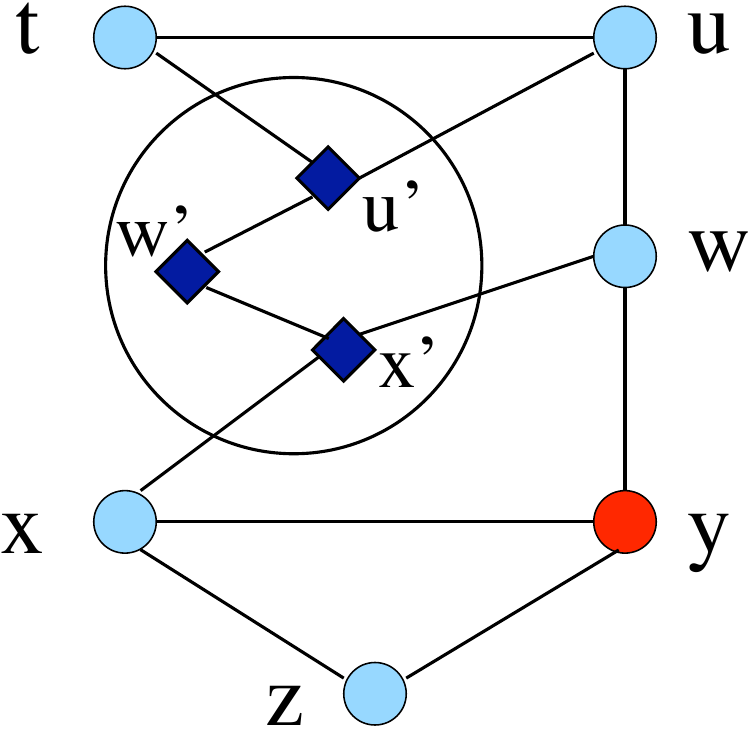}}
}
\subfigure[Healed Graph. Notice two $\RT$s with common leaf nodes.]{
\makebox[0.45\textwidth][c]{\includegraphics[scale=0.5]{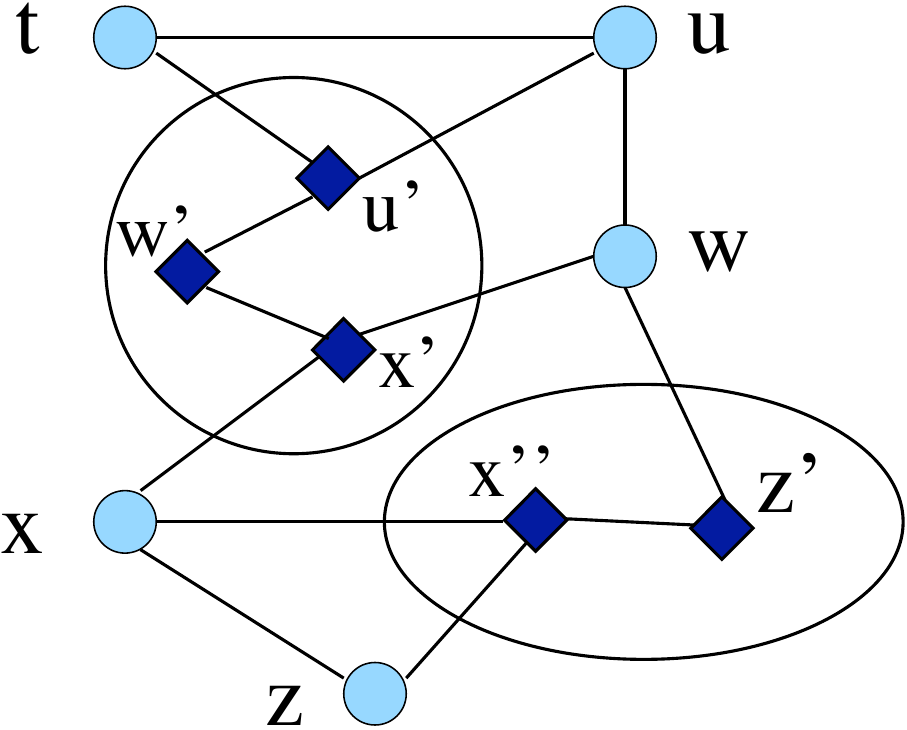}}
 }\\
\subfigure[Node $w$ attacked: notice $w$ is a common leaf of both $\RT$s]{
\makebox[0.45\textwidth][c]{\includegraphics[scale=0.5]{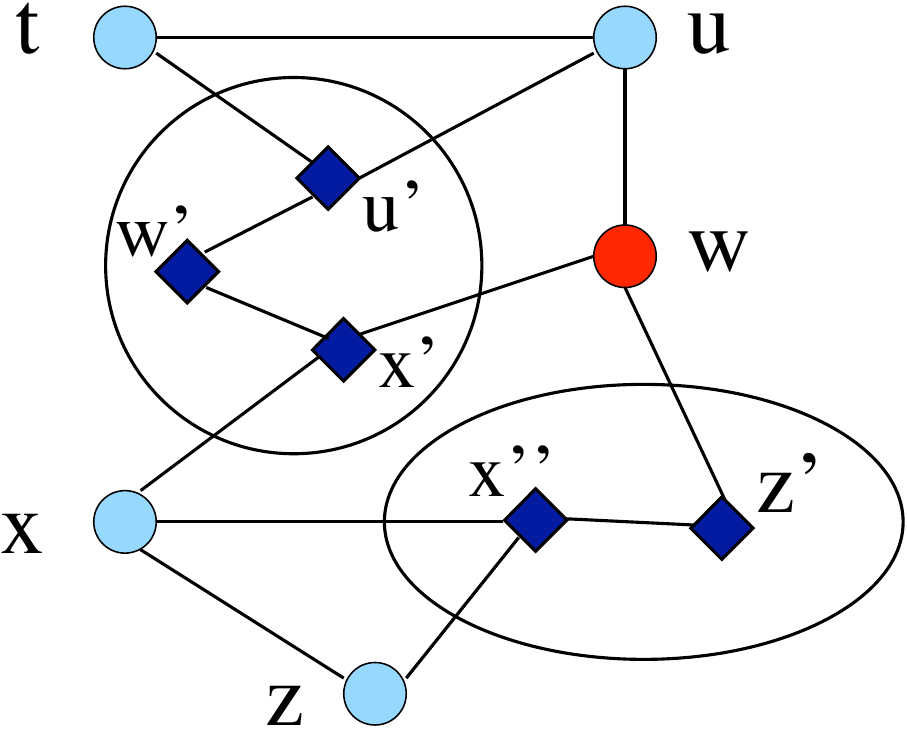}}
 }  
 \subfigure[Healed Graph. The $\RT$s have merged. Some of the leaf nodes ($x$'s, $u$'s) are
identical (so the picture no longer shows the  $\RT$ resembling a $\haft$. However, refer figure~\ref{fig: RTrepresentations}). ]{
\makebox[0.45\textwidth][c]{\includegraphics[scale=0.5]{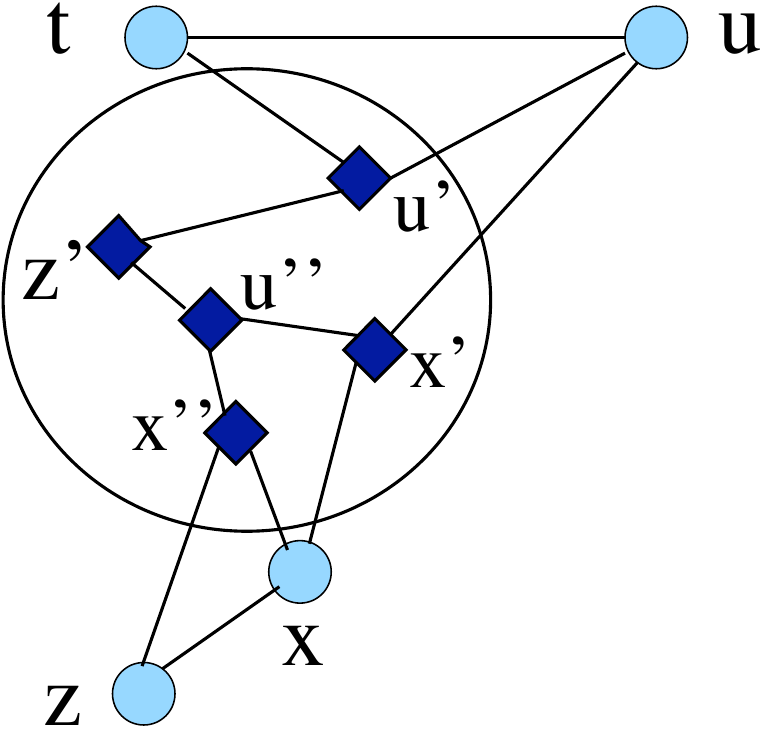}}
\label{sfig: fgseries6} 
 } 
\caption{Effect of 3 deletions on a graph.  The $\RT$ for each deleted node consists of the helper nodes, plus
the neighbors of the deleted node which form the leaves of the tree. In this example, the deleted nodes form an
independent set, so the structure of the $\RT$s does not depend on the deletion order. }
\label{fig: fgseries}
\end{figure}

\begin{figure}[h!]
\centering
 \subfigure[From figure~\ref{sfig: fgseries6}. Nodes $x$ and $u$ have two edges each going into the haft corresponding to two of their deleted neighbors. ]{
\makebox[0.4\textwidth][c]{\includegraphics[scale=0.5]{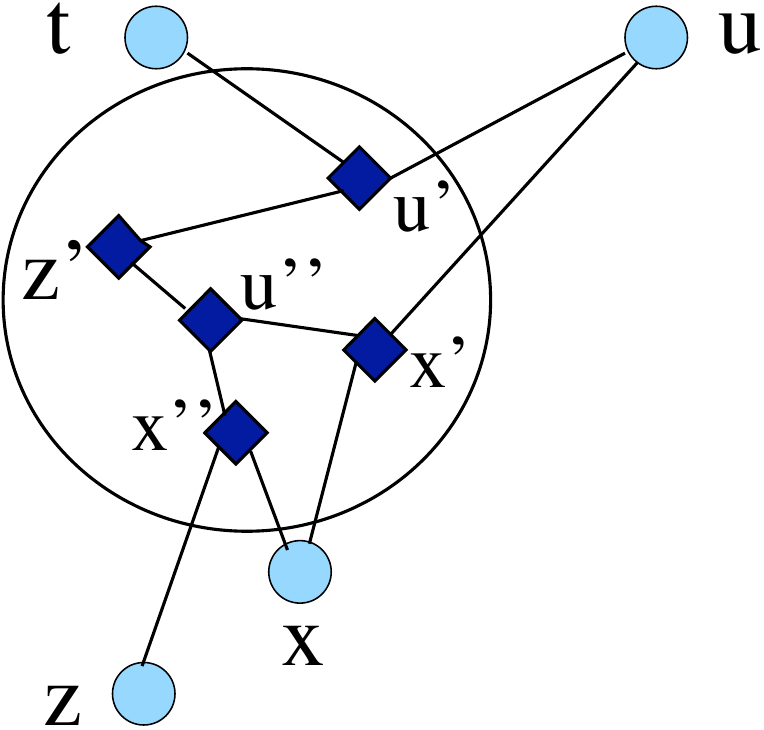}}
 } \hspace{0.1\textwidth}
 \subfigure[Nodes $x$ and $u$ repeated as leaf nodes of $\RT$s with edges corresponding to their deleted neighbors. This shows the $\haft$ structure of the $\RT$. ]{
 \makebox[0.4\textwidth][c]{\includegraphics[scale=0.5]{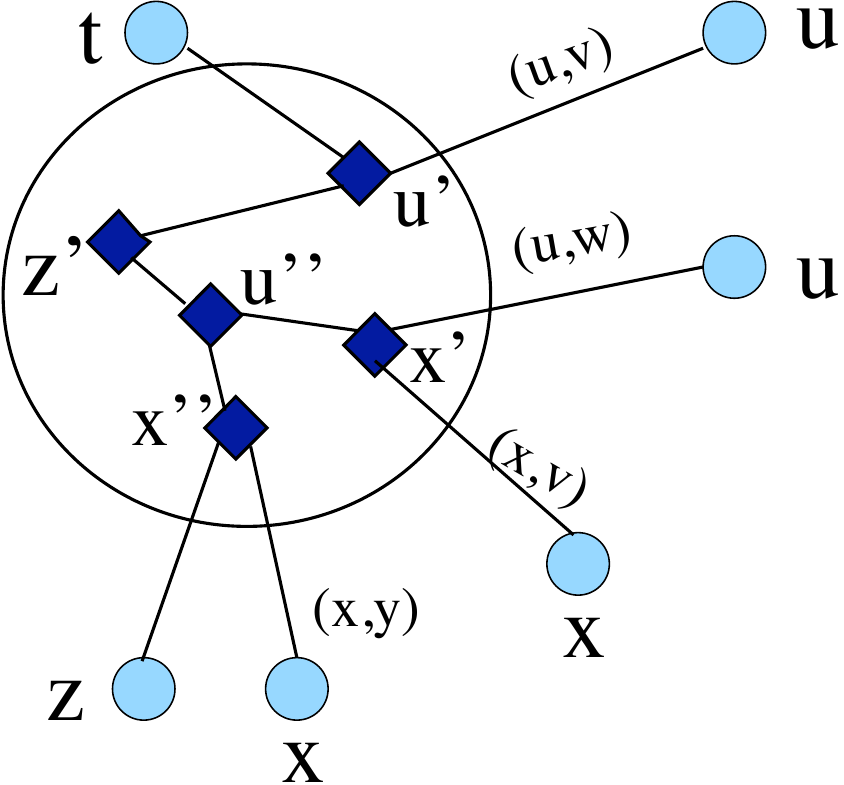}}
 } 
\caption{Equivalent Representations of a $\RT$.}
\label{fig: RTrepresentations}
\end{figure}


\begin{table}[h!]
\begin{tabular}{|l|p{3.7in}|}
\hline 
\textbf{Processor v: Edge(v,x)}&\\ \cline{2-2}
\hline 
\textbf{Real node fields}&\\ \cline{2-2}
\hline
  \texttt{Endpoint}& The node that represents the other end of the edge. For edge(v,x) this will be node $x$ if $x$ is
alive or $\RTparent$ if $x$ is not. \\
  \texttt{hashelper}& (boolean field). True if there is a helper node simulated by $v$ corresponding to this edge.\\
  \texttt{RTparent} & Parent of $v$ in $\RT$. Non NULL only if $x$ has been deleted.\\
   \texttt{Representative}& This is $v$ itself. Field used during merging of $\RT$s. \\
\hline
\textbf{Helper node fields} & Fields for helper node corresponding to the edge. Non NULL only if the helper node exists.
Sometimes, we will refer to a helper field as \emph{edge.helper.field}\\
\cline{2-2}
\hline
 \texttt{hparent}& Parent of helper node. \\
 \texttt{hrightchild}& Right Child of helper node. \\
 \texttt{hleftchild}& Left Child of helper node. \\
 \texttt{height} &  Height of the helper node.\\
 \texttt{descendantcount} & The number of descendants of the helper node.\\
 \texttt{Representative}& The unique leaf node of the subtree of $(v,x).\helper$ in  $(v,x).\helper$'s $\RT$ that does
not have a helper node in that subtree. This node is used during merging of $\RT$s.\\
\hline
\end{tabular}
\caption{The fields maintained by a processor $v$ for edge$(v,x)$, which is an edge in $G'$, the graph of only original nodes and  insertions. Here $\RT$ refers to the reconstruction tree of which  $v: edge(v,x)$ is a part.}
\label{tab: nodedata}
\end{table}

As mentioned earlier, deletion of  a node $v$ leads to it being replaced by  a Reconstruction Tree ($\RT(v)$, for short) in $G$ (Refer to Table~\ref{algo:model-2} for definitions). The $\RT$ is a  $\haft$ (discussed in  Section~\ref{sec: hafts}) having ``virtual'' nodes as internal nodes  and real neighbors of $v$ as the leaf nodes. 
The virtual nodes are called helper nodes. Recall that the graph $G'$ is the graph consisting of solely the original
nodes and insertions (Table~\ref{algo:model-2}). 

Figure~\ref{fig: fgseries} shows a small series of deletions and repairs by the $\FGraph$ algorithm. Notice that after healing on the third deletion some nodes are occuring as leaf nodes multiple times (figure~\ref{sfig: fgseries6}). Here, edge information is useful for differentiating between these nodes.  A node takes part in a $\RT$ only if one of its neighbors got deleted. It can only have two edges into a $\RT$ if two of its neighbors have already been deleted. Each edge from a real node into a $\RT$ corresponds to a deleted neighbor. We can imagine this edge never got deleted and just that its other endpoint got replaced by a helper node. Thus, if there was an edge between nodes $x$ and $y$, and node $y$ got deleted, we can keep this edge labelled as $(x,y)$. Alternatively, the edge is labelled with it's name in $G'$, which will always be $(x,y)$ since $\G'$ has no deletions.  For convenience, when a  node occurs as a leaf node multiple times in a $\RT$, we will often consider each occurance as a seperate node and depict it as such. Figure~\ref{fig: RTrepresentations} shows this alternate representation. Notice that it is easy to see the haft structure in this representation and we stay in the realm of trees.  Thus, when we refer to a leaf node of a $\RT$, we will mean a real node augmented with the edge information. Thus, when we state that there is at most one helper node corresponding to a leaf node of a $\RT$, this is equivalent to saying that there is at most one helper node in a $\RT$ corresponding to an edge  in the graph $\G'$ .

 The actual processor or entity in the network in which we are executing the algorithm is the one which has to keep track of its real nodes, edges and helper nodes.  In Table~\ref{tab: nodedata},  we list the information each processor $v$ requires for each of its edges in $\G'$ in order to execute the ForgivingGraph algorithm.  For node $v$, the end point of the edge is stored in the field  $v.\Endpoint$. For an edge $(v,x)$, if $x$ is a real
node (i.e. not a helper node) then the field $v.\Endpoint$ is simply the node $x$. When one of the nodes of the edge gets deleted, in $\FG$, a helper node from the new $\RT$ may take place of the previous  node. We will still refer to this edge as $(v,x)$ i.e. by its name in $G'$ but update the fields $\Endpoint$ and $\RTparent$. Moreover, the processor may now simulate a helper node corresponding to this edge. Since each edge is uniquely identified, the real nodes and helper nodes corresponding to that edge can also be uniquely identified. This identification is used by the processors to pass messages along the correct paths. The \textsc{Forgiving graph} algorithm is given in pseudocode form in Algorithm~\ref{algo: forgiving} along with the
required subroutines. 

\begin{figure}[h!]
\centering
\includegraphics[scale=0.35]{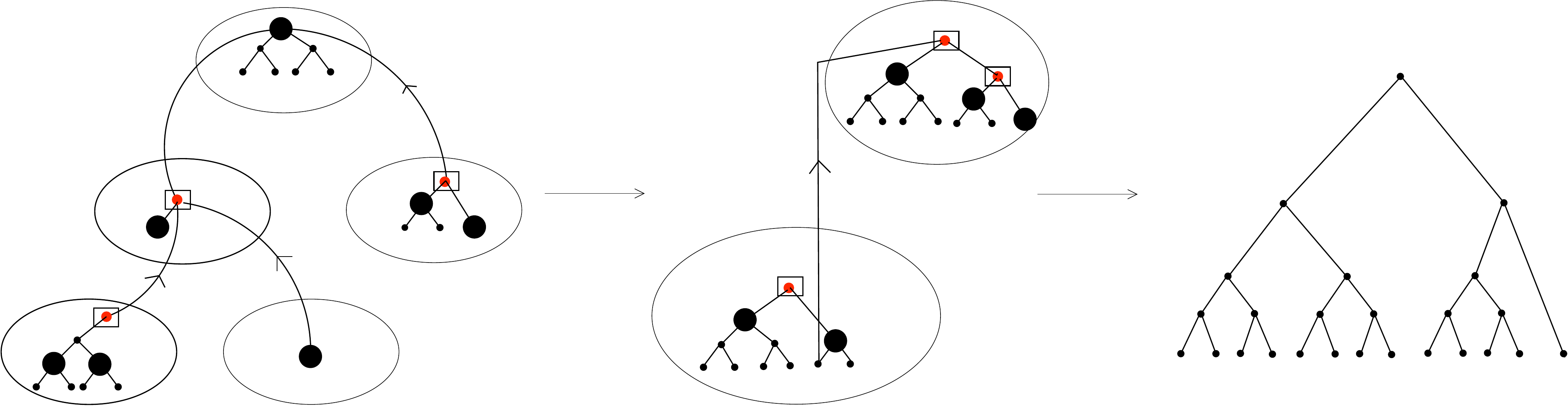}
\caption{On deletion of a node $v$, The $\RTfragment$s to be merged are connected by a binary tree $BT_{v}$. 
The leaf $\RTfragment$s merge with their parents till a single $\RT$ is left. The solid circles are the primary roots.
The (red color) nodes in the square boxes are spine nodes removed at each step.}
\label{fig: Anchormerge}
\end{figure}

\begin{figure}[h!]
\centering
\includegraphics[scale=0.75]{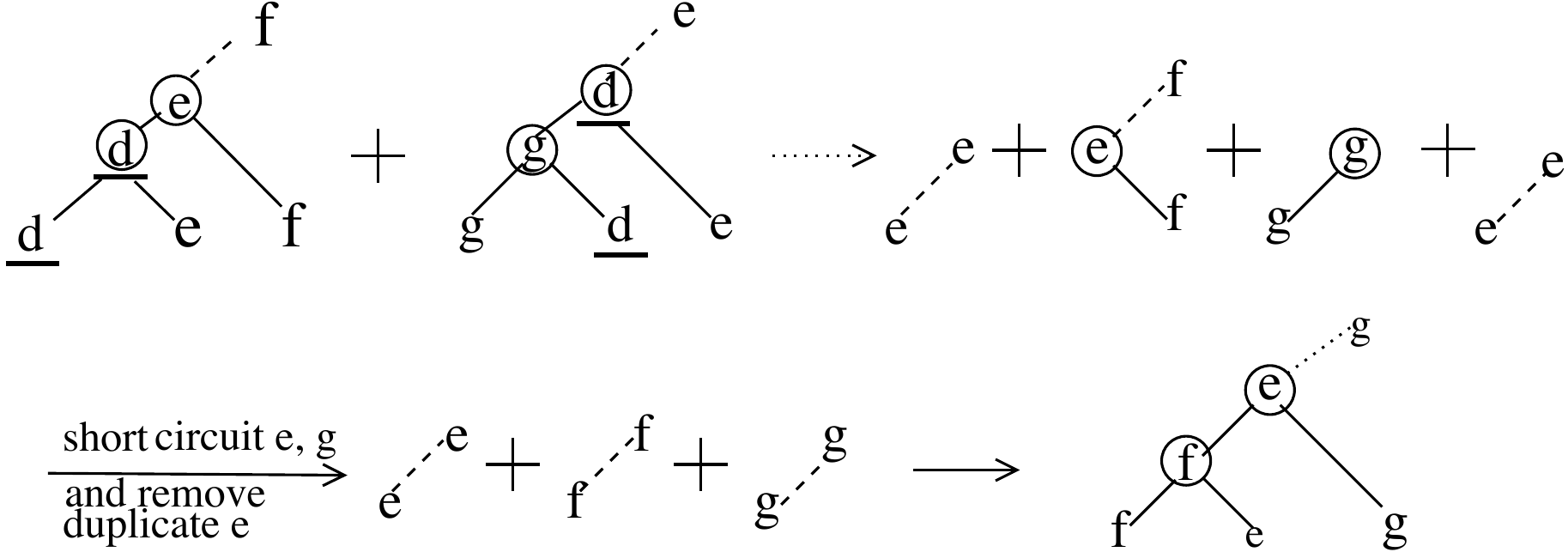}
\caption{The underlined node $d$ and corresponding helpers are deleted. This leads to the graph breaking into
components which are then merged using $BT_{d}$ (the binary tree of anchors) and the primary roots in the components. The dashed edges
show the representative for that node.}
\label{fig: RTmerge}
\end{figure}


On deletion of a node, the repair proceeds in two phases. The first phase is a quick $O(1)$ phase in which the neighbors
of the deleted node connect themselves in the form of a binary tree (Algorithm~\ref{algo: fixnode},Figure~\ref{fig: Anchormerge}). Consider the
effect of the deletion of $v$ on one of the $\RT$s of which $v$ is a leaf. Removal of this leaf and of the helper node
corresponding to that leaf (if any) splits this $\RT$ into connected components.
We select particular nodes which were neighbors of the deleted nodes from  each of these components. Let $Nset$ be the
collection of all these nodes together with any undeleted neighbors of $v$ in $\FG$. We shall call a component
taking part in the merge process (irrespective of whether it is a $\haft$ or not) as  a  \emph{RTfragment}, to
distinguish it from the final $\RT$ formed at the end of the merge process.
 In phase 2,  the $\RTfragment$s  are merged (Figure~\ref{fig: Anchormerge}). 
 Before we can reconnect these $\RTfragment$s into a single $\haft$, we need to further break them up into $\haft$s (we actually break them into complete trees) so that we can merge them. We now go into details of the communication
protocol that achieves this merge.  Let $v$ be the processor deleted. Then, the nodes in $Nset$ connect in the form 
of a binary tree we call $BT_{v}$. We call the nodes forming $BT_{v}$ as \emph{anchors}. Formally, we define an anchor as follows:
\begin{description}
\item[Anchor]: An anchor is a designated node in a $\RTfragment$ that takes part in the binary
tree $BT_{v}$.
\end{description}
  The anchors send probe messages to discover the primary roots which head these complete trees
(Algorithm~\ref{algoline: findprroots}). This is similar to the Strip operation described in Section~{\ref{subsec:
haftstrip}}. The nodes maintain information about their height and number of their children in their $\RT$ or
$\RTfragment$. Thus, they are able to identify themselves as primary roots. At the same time, the nodes outside the
complete trees are identified and marked for removal. It is possible that a $\RTfragment$ may appear more than once in a $\BT_{v}$ through multiple nodes acting as anchors. However, we want one complete tree to take part only once in the merge. This is accomplished as follows: Every anchor sends probe messages to discover the primary roots in its $\RTfragment$. Nodes further pass on these probes till they reach a primary node. However, if an anchor receives a probe message originating from another anchor, it will reject the message and return it to the sender, which will send it back towards the source anchor.   This ensures that a primary root (thus, a complete tree) will be discovered by only one anchor.
The complete trees are then merged pairwise in a bottomup fashion till only a
single haft remains. This is illustrated in figure \ref{fig: Anchormerge}. At each round, every leaf $\RT$ in $BT_{v}$
will merge with its parent $\RT$. This can be done in parallel, so that the  number of rounds of merges will be
equivalent to the height of the tree. For two trees to merge, as shown in the Merge operation (Section~\ref{subsec:
haftmerge}), an additional node is needed that will become the parent of these two trees. This node must be simulated by a real node that is not already simulating a helper node in the trees. Since the number of internal  nodes in a tree is one less than the leaf nodes, there is exactly one such leaf node for each tree. The roots of these two trees have the identity of this node for their tree. This node is called a \emph{Representative} (of the root node).  For merging, we use an algorithm that we call the  \emph{representative mechanism}. The formal definition of a representative and details of the representative mechanism are given in Section~\ref{sec: representative}. Each node keeps the identity of its representative stored in the field \emph{Representative} (Table~\ref{tab: nodedata}).

Now, we briefly describe merging using  representatives. When two trees (Note that a tree may even be a single node) are merged (Algorithm~\ref{algo: makeRT} and Algorithm~\ref{algo: ComputeHaft}), the representative of the root of the bigger tree (or of one of the trees, if they have the same size) instantiates a new helper node, and makes the two roots its children. To  make the new structure a $\haft$, the root of the bigger tree shall become the left child of the new helper node. The new helper node will now inherit as its representative the representative of the
root of its right subtree, since this is the node in the merged tree that does not have a helper node. An
example of merging using this algorithm is shown in Figure~\ref{fig: RTmerge}.

 At the end of each round, we have a  new set of leaf $\RT$s. Each new leaf is now a
merged haft of the previous leaves and their parent. We need a new anchor for this haft. We can continue having the anchor of the parent $\RTfragment$ as the anchor. However, this
 node may be one of the extra nodes marked for removal. In this case, the anchor designates one of the nodes that was a primary root in its $\RTfragment$ as the new anchor, passes on its links and removes itself.
 The newly formed leaf $\haft$s may have primary roots which are different from those of the previous ones. The new anchor will send probe messages and gather the relevant information and inform the new primary roots of their role. This process will continue till we are left with a single $\RT$. This is shown in Figure~\ref{fig: Anchormerge}.

\floatname{algorithm}{Algorithm}

\begin{algorithm}[ph!]
\begin{algorithmic}[1]
\STATE given a Graph $G(V,E)$
\REQUIRE{each node of G has a unique ID}
\FOR{each node  $v\in G$} 
\STATE \textsc{Init($v$)}
\ENDFOR
\WHILE {true}
\IF{a vertex $v$ is inserted}
\STATE vertex $v$ and new neighbors add appropriate edges
\STATE \textsc{Init($v$)}
\ELSIF{a vertex $v$ is deleted}
\STATE \textsc{DeleteFix($v$)}
\ENDIF
\ENDWHILE
\end{algorithmic}
\caption{\textsc{Forgiving graph}: The main function.}
\label{algo: forgiving}
\end{algorithm}

 \begin{algorithm}[ph!]
 \caption{\textsc{Init($v$)}: initialization of the node $v$} 
\label{algo: init}
\begin{algorithmic}[1]
\FOR{each $\edge (v,x)$}
\STATE $(v,x).\representative = v$
\STATE set other fields to NULL
\ENDFOR
\end{algorithmic}
\end{algorithm}

\begin{algorithm}[ph!]
\caption{\textsc{DeleteFix($v$)}: Self-healing on deletion of a node }
\label{algo: fixnode}
\begin{algorithmic}[1]
\STATE $\Nset = \{ \}$
\FOR{each $\edge (v,x)$}
\IF{$(v,x).hashelper = \True $}
\STATE $\Nset = \Nset \cup (v,x).\hparent \cup (v,x).\hrightchild$
\ENDIF
\STATE $\Nset = \Nset \cup  (v,x).\Endpoint$
\ENDFOR
\STATE \label{algoline: BTquickfix} Nodes in $\Nset$ make new edges to make a balanced binary tree $\BT_{v} (\Nset,E_{v})$
\STATE \textsc{BottomupRTMerge($\BT_{v},v$)}
\STATE delete the edges $E_{v}$
\end{algorithmic}
\end{algorithm}

\begin{algorithm}[ph!]
\caption{\textsc{BottomupRTMerge($\BT_{v}, v$)}: The nodes of $\BT_{v}$ merge  their $\RT$s starting from the leaves going up forming a new $\BT_{v}$. }
\label{algo: bottomupmerge}
\begin{algorithmic}[1]
\IF{$\BT_{v}$ has only one node}
\STATE return
\ENDIF
\FOR{$y \in \BT_{v}$}\label{algoline: findprroots}
 \IF{$y$ is a real node}
 \STATE let $y.\PrRoots \gets y$
 \ELSIF{$y = (v,x).\Endpoint$}
 \STATE $y.\PrRoots \gets$ \textsc{FindPrRoots($y, 1,(v,x), \True, y$ )}
 \ELSIF {$y.\helper.hparent = v$ OR $y.\helper.hleftchild = v$ OR $y.\helper.hrightchild = v$}
  \STATE let $y.\PrRoots \gets$ \textsc{FindPrRoots($y, $ $ v.\descendantcount,v.\helper,
\True, y$)}
  \ELSE
  \STATE let $y.\PrRoots \gets$ \textsc{FindPrRoots($y, $ $v.\descendantcount$$,v.\helper, \False, y$)}
  \ENDIF
\ENDFOR
\FOR{all nodes $y$ s.t. node $y$ is a parent of a leaf in $\BT_{v}$}
\IF{$y$ has two children in $BT_{v}$}
\STATE \textsc{Haft\_Merge}($y$, $y$ 's left child in $BT_{v}$, $y$ 's right child in $BT_{v}$)
\ELSE
\STATE \textsc{Haft\_Merge}($y$, $y$'s left child, NULL)
\ENDIF
\ENDFOR
\STATE \textsc{BottomupRTMerge($\BT_{v}$)} \COMMENT{New leaf nodes merge again till only one is left.}
\end{algorithmic}
\end{algorithm}


\begin{algorithm}[ph!]
\caption{\textsc{FindPrRoots}($y$, numchild, sender, Breakflag, origin): Find primary roots in the $\RTfragment$
(Section~\ref{sec: FGdetail} of main text) containing node $y$. If Breakflag is set, the tree is  a $\RTfragment$ formed due to the deletion of the node prior to any merges and the nodes need to adjust their descendant count. }
\label{algo: findprroots}
\begin{algorithmic}[1]
\IF{$y$ is an Anchor node AND $y \neq origin$}
\STATE return NULL \COMMENT{Anchors reject probe messages from other anchors.}
\ENDIF
\IF{Breakflag = $\True$ AND (sender = $y.\hrightchild$ OR sender = $y.\hleftchild$ )}
  \STATE $y.\descendantcount = y.\descendantcount$ - numchild
\ENDIF
\IF{$y.\descendantcount = 2^{y.\height}$} 
  \IF {\textsc{TestPrimaryRoot($y$)} = $\True$} 
  \STATE return \{$y$,\textsc{FindPrRoots}($y.\hparent, 0, y$, Breakflag, origin) \}
  \ELSE
  \STATE return \{\textsc{FindPrRoots}($y.\hparent, 0, y$, Breakflag, origin) \} \COMMENT{Node itself not a primary
root but parent maybe.}
  \ENDIF
\ELSE
\STATE mark node red
  \IF{exists($y.\hleftchild$) AND sender $\neq$ $y.\hleftchild$}
   \STATE \textsc{FindPrRoots}($y.\hleftchild, y.\descendantcount, y$, Breakflag, origin)
   \ELSIF{exists($y.\hrightchild$) AND sender $\neq$ $y.\hrightchild$}
   \STATE \textsc{FindPrRoots}($y.\hrightchild, y.\descendantcount, y$, Breakflag, origin)
   \ELSIF{exists($y.\hparent$) AND sender $\neq$ $y.\hparent$}
   \STATE \textsc{FindPrRoots}($y.\hparent, y.\descendantcount, y$, Breakflag, origin)
  \ENDIF
\ENDIF
\end{algorithmic}
\end{algorithm}


\begin{algorithm}[ph!]
\caption{\textsc{TestPrimaryRoot($y$)}: Tell if helper node $y$ is a primary root in $\RT$ }
\label{algo: testprimaryroot}
\begin{algorithmic}[1]
\IF{$y.\descendantcount = 2^{y.\height}$} 
 \IF{$y.\hparent = NULL$}
\STATE return $\True$
\ELSIF{$y.\hparent.\descendantcount \neq  2^{y.\hparent.\height}$}
\STATE return $\True$
 \ENDIF
 \ENDIF
\STATE return $\False$ 
\end{algorithmic}
\end{algorithm}

\begin{algorithm}[ph!]
\caption{\textsc{Haft\_Merge($p,\ell,r$)}: Merge the hafts mediated by anchors $p,\ell$ and $r$}
\label{algo: haftmerge}
\begin{algorithmic}[1]
\STATE Nodes $p,\ell\ \mathrm{and}\ r$ exchange $p.\PrRoots,$ $ \ell.\PrRoots(l),$ $ \PrRoots(r)$
\STATE let $\RT \gets $ \textsc{MakeRT($\PrRoots(p),\ell.\PrRoots, r.\PrRoots$)}
  \IF{$p$ is marked red} 
  \STATE $p$ transfers its edges in $BT_{v}$ to one of $p.\PrRoots$ \COMMENT{$p$ needs to be removed, $BT_{v}$ needs
to be maintained}
  \ENDIF
 \STATE remove all helper nodes marked red \COMMENT{Some helper nodes marked red may have been reused and unmarked by
\textsc{MakeRT}}
\end{algorithmic}
\end{algorithm}

\begin{algorithm}[ph!]
\caption{\textsc{MakeRT}(PRoots1, PRoots2, PRoots3): The sets of Primary roots make a new RT }
\label{algo: makeRT}
\begin{algorithmic}[1]
\FOR{all $y \in (\mathrm{PRoots1} \cup \mathrm{PRoots2} \cup \mathrm{PRoots3}) $ }
  \STATE let $T \gets$ \textsc{ComputeHaft}(PRoots1, PRoots2, PRoots3)
  \STATE make helper nodes and set fields and make edges according to  $T$
 \ENDFOR
\end{algorithmic}
\end{algorithm}

\begin{algorithm}[ph!]
\caption{\textsc{ComputeHaft($\mathrm{PRoots1, PRoots2, PRoots3})$}: (Implementation of Haft Merge) The primary roots
compute the new haft}
\label{algo: ComputeHaft}
\begin{algorithmic}[1]
\STATE let $R =  \mathrm{PRoots1} \cup \mathrm{PRoots2} \cup \mathrm{PRoots3} $
\STATE let $L =  R$ sorted in ascending order of number of children, NodeID
\STATE suppose $L$ is $(r_{1}, r_{2}, \ldots, r_{k})$ where the $r_{i}$ are the $k$ ordered primary roots.
\STATE set $ctr = 1, count = k$
\WHILE{$ctr < count$}
  \IF{$r_{ctr}.\numchildren = r_{ctr+1}.\numchildren$ }
  \STATE \label{algocode: mergeeqrep} Make helper node $\helper(r_{ctr}.\representative)$. Initialize fields
to NULL.
  \STATE make  $\helper(r_{ctr}.\representative)$ the parent of $r_{ctr}$ and $r_{ctr + 1}$
  \IF{$r_{ctr}$ is a real node}
   \STATE  set $\helper(r_{ctr}.\representative).\height = 1$ 
   \ELSE
    \STATE set $\helper(r_{ctr}.\representative).\height = 2 r_{ctr}.height$ 
  \ENDIF
  \STATE set $\helper(r_{ctr}.\representative).\representative = r_{ctr+1}.\representative$ 
  \STATE remove $r_{ctr}, r_{ctr + 1}$, insert  $\helper(r_{ctr}.\representative)$ in correct place in $L$.
  \STATE set $ctr \gets ctr - 1$, $count \gets count - 1$
  \ENDIF
  \STATE set  $ctr \gets ctr + 1$,
\ENDWHILE
\STATE set $ctr = 1$
\WHILE{$ctr < count$}
\STATE \label{algocode: mergeneqrep} make helper node $\helper(r_{ctr+1}.\representative)$. Initialise its fields to NULL
\STATE set  $\helper(r_{ctr + 1}.\representative).\hleftchild = r_{ctr+1}$
\STATE set  $\helper(r_{ctr + 1}.\representative).\hrightchild = r_{ctr}$
\STATE set $\helper(r_{ctr + 1}.\representative).\height =  r_{ctr+1}.\height + 1$ 
\STATE set $\helper(r_{ctr + 1}.\representative).\representative =  r_{ctr}.\representative$ 
\STATE In $L$, replace $r_{ctr + 1}$ by  $\helper(r_{ctr + 1}.\representative)$ 
\ENDWHILE
\end{algorithmic}
\end{algorithm}

%

\subsection{Representative mechanism}
\label{sec: representative}

In this section, we discuss representatives and their use in merging in more detail.
 Formally, we define a representative as follows:
\begin{description}
\item[Representative:]
In the Forgiving Graph $\FG$, given a node $y$, the representative of $y$ is a real node, decided as follows:
\begin{itemize}
 \item If $y$ is a real node, then $y$ itself.
 \item If $y$ is a helper node, then the unique leaf node that is a descendant of $y$ and does not have a helper node in
the subtree headed by $y$.
\end{itemize}
\end{description}

Recollect that one of our objectives is to maintain an invariant that a real node simulate at most one helper node. Moreover, this has to happen in the dynamic environment of nodes getting deleted, inserted, $\RT$s breaking and merging.  The representative mechanism allows us to do this in an efficient manner, as we shall show. Intuitively, a representative is a real node who we know is not simulating a helper node yet and so is available for providing a helper node. Each node in the Forgiving Graph $\FG$ has a representative. Formally, for a node $y$ in $\FG$, if $y$ is a real node, $y$ is its own representative.  This makes sense since 
$y$ is the root of a $\RT$ (a single node $\RT$) and  not simulating a real node. If node $y$ is a helper node its representative is the unique leaf node that is $y$'s descendant in $y$'s subtree that is not simulating a helper node. Notice that there is exactly one such leaf node in any subtree since the number of internal nodes are one less than the number of the leaf nodes, and as a consequence of our invariant, all other leaf nodes are simulating exactly one helper node each in that subtree. Due to the way our merge operations operate, each helper node gets assigned a representative when the helper node is created and moreover it never changes its representative during its lifetime. This is a very useful property as we shall see later.

 \begin{figure}[h!]
\centering
\includegraphics[scale=0.8]{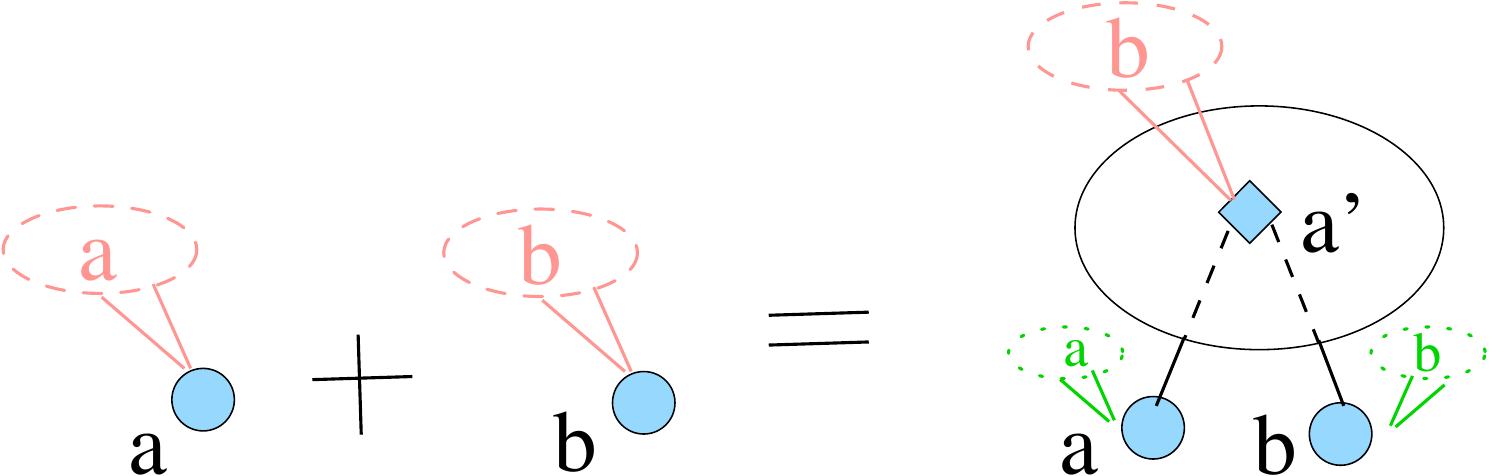}
\caption{Merging with representatives: Two singleton hafts of real nodes $a$ and $b$ merge.  Here $a$ creates the parent helper node, and this helper node inherits the representative of its right child ($b$) as its representative. Notice $b$ is the unique real node in $a.\helper$'s subtree that is not simulating a helper node. With regard to merging, the  root nodes representatives are 'active'  (shown in pink, dashed outline), while others are 'dormant' (shown in green, dotted outline).   }
\label{fig: RepMerge}
\end{figure}

First, let's see how  representatives are used to merge hafts. The simplest example is shown in figure~\ref{fig: RepMerge}: two real nodes (a real node is a singleton haft) merge using their representatives. 
To recollect, when two hafts merge, a new helper node is needed to become the parent of both. We choose this node to be simulated by the representative of the root of the bigger haft. If the hafts are of the same size, either can be selected. 
 The chosen representative is informed: it instantiates a new helper node and makes the two roots its children. To  make the new structure a $\haft$, the root of the bigger tree shall become the left child of this new helper node. The new helper node now needs a representative of its own. The obvious choice is the representative of its right child, since that leaf node still has not supplied a helper node. This is consistent with the definition of a representative (this can be verified for the small example of figure~\ref{fig: RepMerge}). This is the conceptual picture. In the distributed implementation, as described earlier, this communication takes place through the anchors which exchange information among the merging anchors. This information consists of the identity of the primary roots, their height and representative information. Each anchor is then able to run the merge algorithm in its memory, and it directly contacts the nodes with which it has to make edges. If this is a new node it is also provided with the identity of its representative.

\begin{figure}[h!]
\centering
\includegraphics[scale=0.6]{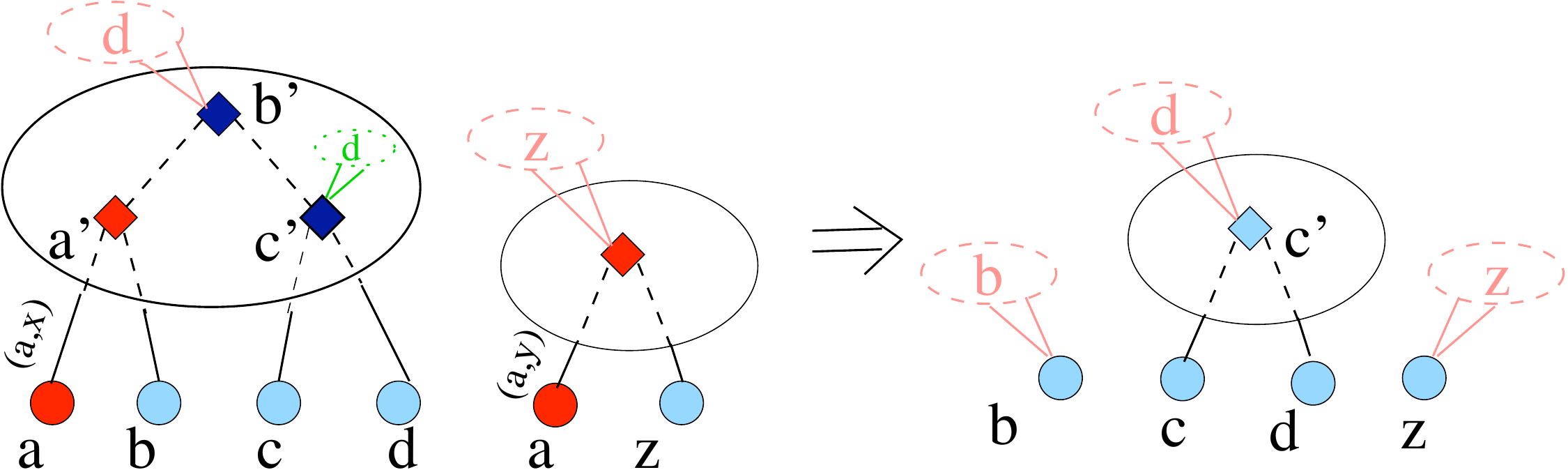}
\caption{Reusing representative information: $\RT$s split into complete trees on deletion of node $a$. A node always has a  representative assigned to it at birth and it never changes its representative. In the figure, node $c'$ has $d$ as its representative:- 'dormant' before the split (green, dotted outline),  'active' afterwards (pink, dashed outline).}
\label{fig: RepSplit}
\end{figure}

 What happens when a deletion happens and a $\RT$ splits into smaller complete trees? To merge back, we need to find the representatives of the roots of these trees. Should we traverse the subtree of these roots to find the representative? Obviously, this is expensive. Fortunately, the representative mechanism renders this unnecessary. To recall, merging happens using primary roots, which are the roots of complete trees. After a split, we are only left with complete trees. Obviously, complete trees have not had a deletion in their subtree, thus, none of the nodes in these trees need to change their representatives. Since only the nodes of the complete trees will be merging (via their roots) we need only worry about  their representative information. This implies that no node need ever change its representative. This is shown in figure~\ref{fig: RepSplit}. As shown in the picture, we can imagine that the representatives of the primary roots are in an 'active' state i.e. they will be used for the upcoming merge, whereas representatives of all internal nodes are in a 'dormant' state meaning though they are not required at the present stage, they may be utilized in the future.

\section{Real graph from the Forgiving Graph}
\label{sec: homomorphism}

\begin{figure}[h!]
\centering
\includegraphics[scale=0.7]{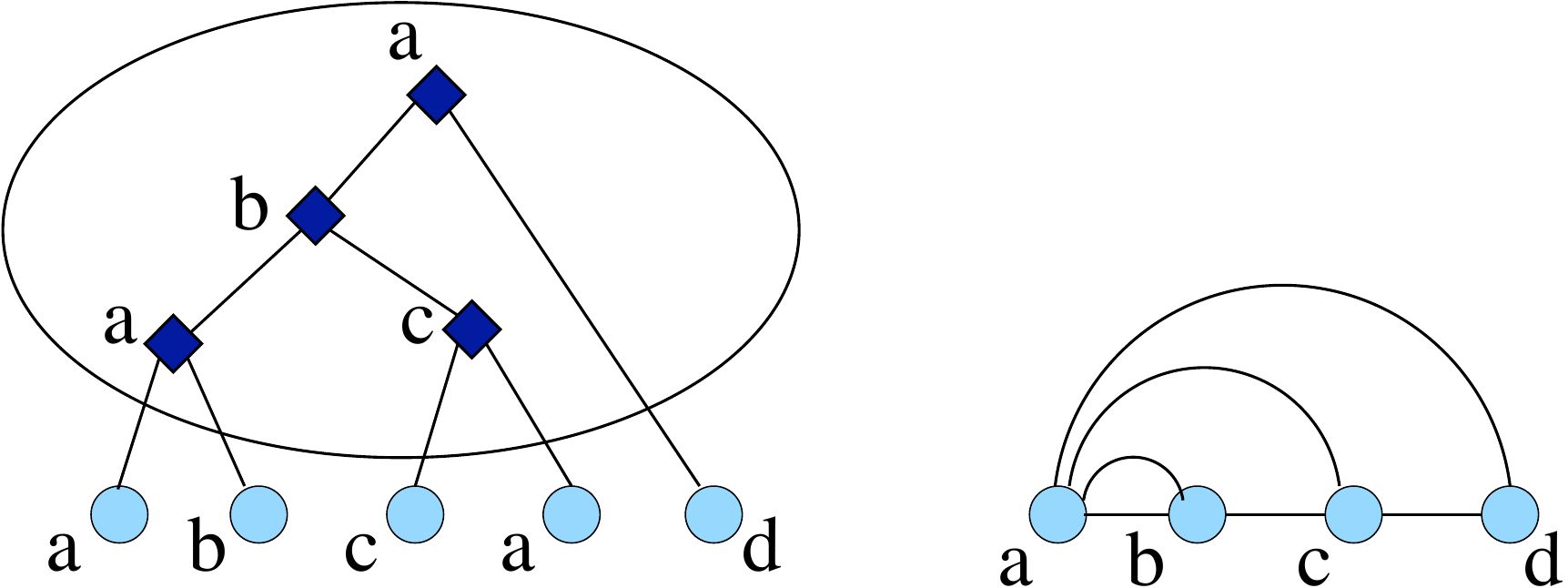}
\caption{The actual graph $\G$ (on the right) is a homomorphic image of the Forgiving Graph $\FG$ (left) where the helper nodes are mapped to the nodes simulating them. Note both the node degrees and distances between nodes in the real graph cannot be more than those in the Forgiving Graph.}
\label{fig: homomorphism}
\end{figure}

It is easy to see that the Forgiving Graph $\FG$ maps to the real graph $G$ in a straightforward way: map all the helper nodes to the real nodes simulating them.  Figure~\ref{fig: homomorphism} shows an example. 
More formally, $\G$ is a homomorphic image of $\FG$. Consider two graphs $G_1 = (V_1, E_1)$, and $G_2 = (V_2, E_2)$.
In this context, a homomorphism may be defined as follows: 
A homomorphism is a function $f: V_1 \rightarrow V_2$ such that if undirected edge $\{v,w\}$  is in $E_1$ (the edge set of $G_1$) this implies that the edge $\{f(v), f(w) \}$ is  in $E_2$. Moreover, we say that $G_2$ is the homomorphic image of $G_1$ under $f$ if the edges of $G_2$ are exactly the images of the edges of $G_1$ under the homomorphism.
We know that, in $\FG$,  there can be multiple real and helper nodes corresponding to a processor in the network  that performs all the functions required of those nodes. Each node is identified by its processor and some additional information. For node $v$ in $\FG$, let $Processor(v)$ be  the name of that processor. Also, in the graph $\G$, there is only one node per processor and consider this node to be labelled with the name of that processor. Then,  our homomorphism $H:V(\FG) \rightarrow V(\G)$ is simply $H(v) = Processor(v)$. 

Let us make the following observations about homomorphisms which will be useful to us in proving our results (Section~\ref{sec: Results}).

\begin{observation}
\label{obs: homodegree}
For any graph homomorphism $F: G_1 \rightarrow G_2$, for all nodes $u, v$ in $V$, $dist_{G_2}(F(u), F(v)) \le dist_{G_1}(u,v)$ where $dist_{G}(x,y)$ is the distance between two nodes $x$ and $y$ in a graph $G$. 
\end{observation}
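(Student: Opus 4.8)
The statement to prove is the following observation: for any graph homomorphism $F: G_1 \rightarrow G_2$ and any two nodes $u, v$ in $G_1$, we have $dist_{G_2}(F(u), F(v)) \le dist_{G_1}(u,v)$. The plan is to argue directly by taking a shortest path in $G_1$ and pushing it forward along $F$ to obtain a walk in $G_2$ of the same length connecting $F(u)$ to $F(v)$; since distances are the lengths of shortest paths, and every walk can be shortened to a path of no greater length, this immediately gives the desired inequality.

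First I would set $k = dist_{G_1}(u,v)$ and fix a shortest path $u = w_0, w_1, \dots, w_k = v$ in $G_1$, so that each $\{w_{i-1}, w_i\}$ is an edge of $G_1$. By the definition of a graph homomorphism (as stated in Section~\ref{sec: homomorphism}: if $\{v,w\} \in E_1$ then $\{F(v), F(w)\} \in E_2$), each pair $\{F(w_{i-1}), F(w_i)\}$ is an edge of $G_2$ --- except possibly in the degenerate case where $F(w_{i-1}) = F(w_i)$, in which case no edge is needed and the corresponding step can simply be omitted. In either case, the sequence $F(w_0), F(w_1), \dots, F(w_k)$ is a walk in $G_2$ from $F(u)$ to $F(v)$ of length at most $k$. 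Since the distance between two nodes in a graph is the minimum length over all walks (equivalently, paths) joining them, we conclude $dist_{G_2}(F(u), F(v)) \le k = dist_{G_1}(u,v)$.

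There is essentially no obstacle here: the only subtlety worth stating carefully is the handling of edges that collapse under $F$ (when $F$ identifies adjacent vertices), which is exactly the situation that arises in the Forgiving Graph application, where many helper nodes map to the same simulating real node. The cleanest way to phrase the argument is therefore in terms of walks rather than paths, so that collapsed edges are harmless. A one-line remark that a walk of length $\le k$ between two vertices implies those vertices are at distance $\le k$ completes the argument. No induction is really needed, though one could equivalently phrase it as an induction on $dist_{G_1}(u,v)$ if a more elementary presentation is preferred.
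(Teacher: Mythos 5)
Your proof is correct: pushing a shortest path forward under $F$ yields a walk in $G_2$ of length at most $dist_{G_1}(u,v)$, which bounds the distance, and your explicit handling of collapsed edges (where $F$ identifies adjacent vertices, as the paper's $H(v)=\mathrm{Processor}(v)$ indeed does for a helper node adjacent to the real node simulating it) is exactly the right care to take. The paper states this as an observation with no proof at all, so there is nothing to compare against; your argument is the standard and complete justification.
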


\begin{observation}
\label{obs: homostretch}
If the graph $G_2$ is the homomorphic image of  graph $G_1$ under a graph homomorphism $F: G_1 \rightarrow G_2$,  then for all nodes $v'$ in $G_2$, $deg_{G_2}(v') \le \sum_{v \in F^{-1}(v')} deg_{G_1}(v)$, where $deg_{G}(x)$ is the degree of the node $x$ in a graph $G$.
\end{observation}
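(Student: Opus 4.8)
The plan is to prove the two observations about homomorphisms directly from the definitions, since both are purely combinatorial facts that do not depend on any structure of the Forgiving Graph. For Observation~\ref{obs: homodegree} (the distance claim), I would take a shortest path $v = u_0, u_1, \dots, u_k = u$ in $G_1$ of length $k = dist_{G_1}(u,v)$, and apply $F$ edge by edge: since each $\{u_{i}, u_{i+1}\}$ is an edge of $G_1$, the pair $\{F(u_i), F(u_{i+1})\}$ is an edge of $G_2$ (or $F(u_i) = F(u_{i+1})$, in which case we can simply skip it). Hence $F(v) = F(u_0), F(u_1), \dots, F(u_k) = F(u)$ is a walk of length at most $k$ in $G_2$ connecting $F(u)$ to $F(v)$, so the shortest-path distance $dist_{G_2}(F(u),F(v))$ is at most $k = dist_{G_1}(u,v)$.

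For Observation~\ref{obs: homostretch} itself (the degree claim), I would fix a node $v'$ in $G_2$ and bound its degree by counting edges incident to the fibre $F^{-1}(v')$. The key point is the hypothesis that $G_2$ is the homomorphic \emph{image} of $G_1$ under $F$: every edge of $G_2$ is $\{F(x), F(y)\}$ for some edge $\{x,y\}$ of $G_1$. So for each edge $\{v', w'\}$ incident to $v'$ in $G_2$, pick a preimage edge $\{x,y\}$ of $G_1$ with $\{F(x),F(y)\} = \{v',w'\}$; without loss of generality $F(x) = v'$, so $x \in F^{-1}(v')$, and $\{x,y\}$ is an edge incident to the fibre. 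This assigns to each of the $\deg_{G_2}(v')$ edges at $v'$ a distinct edge of $G_1$ incident to $\bigcup_{v \in F^{-1}(v')}\{v\}$ (distinct because the images are distinct), and the number of such edges of $G_1$ is at most $\sum_{v \in F^{-1}(v')} \deg_{G_1}(v)$. Therefore $\deg_{G_2}(v') \le \sum_{v \in F^{-1}(v')} \deg_{G_1}(v)$.

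The only subtlety worth being careful about — and the step I'd expect to need the most attention — is the edge-to-edge injectivity in the degree argument: one must check that two different edges $\{v', w_1'\}$ and $\{v', w_2'\}$ at $v'$ cannot be forced to the same preimage edge of $G_1$, and that a loop or multi-edge cannot sneak in. Since we are in simple graphs and the chosen preimage edge $\{x,y\}$ maps onto $\{v', w_i'\}$, its image determines $w_i'$, so the assignment is injective; a self-loop would require $F(x) = F(y) = v'$ which does not correspond to an edge of the simple graph $G_2$ at all. Once this bookkeeping is pinned down, both observations follow immediately, and they will then be combined in Section~\ref{sec: Results} with the facts that (i) $\G$ is the homomorphic image of $\FG$ under $H$, (ii) each processor simulates at most one helper node per deleted neighbour so fibres $H^{-1}(v')$ are small, and (iii) hafts have depth $\lceil \log \ell \rceil$, to yield the $3d$ degree bound and the $\log n$ stretch bound.
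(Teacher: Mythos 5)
Your argument for Observation~\ref{obs: homostretch} is correct: since every edge of $G_2$ is by hypothesis the image of some edge of $G_1$, each of the $\deg_{G_2}(v')$ edges at $v'$ can be assigned a preimage edge of $G_1$ with at least one endpoint in the fibre $F^{-1}(v')$, the assignment is injective because distinct edges of $G_2$ cannot share a preimage edge, and the number of edges of $G_1$ meeting the fibre is at most $\sum_{v \in F^{-1}(v')} \deg_{G_1}(v)$ (edges internal to the fibre are even counted twice, which only loosens the bound in the right direction). The paper itself states both observations without proof, treating them as evident, so there is no alternative argument to compare against; your edge-counting verification, including the remark that a preimage edge with both endpoints mapping to $v'$ would correspond to a loop and so cannot arise from an edge of the simple graph $G_2$, is exactly the bookkeeping one would supply.
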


\section{Results}
\label{sec: Results}

\subsection{Upper Bounds}
\label{subsec: upperbounds}


As earlier, let $\G$ be the graph of the network, $\FG$ the Forgiving Graph, and  $\G'$  the graph consisting solely of the original nodes and insertions without regard to deletions and
healings. Let $\G_{T}$, $\FG_{T}$ and $\G'_{T}$ be these graphs at time $T$. 

\begin{lemma}
\label{lemma: hnodecounts}
Given the edge $(v,x)$ in $\G'_{T}$,
\begin{enumerate}
\item \label{lmpart: onehelper} There can be at most one helper node in $\FG_T$ corresponding to $(v,x)$.
\item  \label{lmpart: twohelper} During the Repair phase, there can be at most two helper nodes corresponding to the edge $(v,x)$. Moreover, one of these could also be an anchor in $BT_{v}$.
\end{enumerate}
\end{lemma}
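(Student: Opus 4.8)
The plan is to prove both parts by induction on the number of adversarial operations (node insertions and node deletions), carrying along a strengthened structural invariant about representatives. For Part~\ref{lmpart: onehelper} the invariant I would maintain, at the end of every repair phase, has two clauses: (i) for each edge $(v,x)$ of $\mathrm{G}'_T$ there is at most one helper node of $\FG_T$ attributed to that edge (equivalently, the processor of the surviving endpoint keeps at most one helper record for that edge); and (ii) for every reconstruction tree, and more generally for the root of every complete tree returned by a $\Strip$, its $\representative$ is a \emph{real} leaf node of that (sub)tree which currently simulates no helper node belonging to that (sub)tree. Part~\ref{lmpart: onehelper} is immediate from clause (i).

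The base case $\mathrm{G}_0$ is trivial (no helper nodes exist, and every edge has both endpoints alive), and an insertion creates no helper nodes and adds only edges with both endpoints alive, so it preserves the invariant trivially. The work is in the deletion step. There the only place new helper nodes are created is inside \textsc{ComputeHaft}, and the only place helper nodes are destroyed is the removal of spine and red-marked helper nodes during \textsc{FindPrRoots}/$\Strip$ and \textsc{Haft\_Merge}. I would argue three points. First, \textsc{FindPrRoots}/$\Strip$ removes only spine nodes, leaves every surviving helper-to-representative binding intact, and returns complete trees whose roots inherit representatives already satisfying clause (ii). Second, each merge step of \textsc{ComputeHaft} creates exactly one new helper node $h$, has it simulated by the $\representative$ of the larger root (which by (ii) was a real leaf simulating nothing in that subtree, so the assignment is legal and adds no second helper for that leaf's edge), and sets $h.\representative$ to the $\representative$ of the smaller root---a real leaf of the merged subtree simulating no helper of that subtree---thereby restoring (ii) at the new root. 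Third, since in a haft every internal node has exactly two children, the number of internal nodes is one less than the number of leaves, so the helper-node count attributed to the edges contributing leaves to the new $\RT$ grows by exactly the amount that keeps it at most one per edge. I also need the auxiliary fact that a processor occurs as a leaf of a reconstruction tree at most once per deleted neighbour, so that distinct occurrences correspond to distinct edges of $\mathrm{G}'$ and are tracked in separate records; this follows by an induction on time in the spirit of Lemma~\ref{lemma: ancestry} of the $\FTree$ chapter.

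For Part~\ref{lmpart: twohelper} I would look inside the repair phase, between the moment the old structure breaks and the moment the final $\RT$ is installed (Algorithms \textsc{DeleteFix}, \textsc{BottomupRTMerge}, \textsc{Haft\_Merge}, \textsc{ComputeHaft}). By Part~\ref{lmpart: onehelper}, before the deletion each edge $(v,x)$ carries at most one helper node $H_{\mathrm{old}}$. During the re-merge $H_{\mathrm{old}}$ may be marked red and scheduled for removal while, in the meantime, its simulator is drafted---via its $\representative$---to simulate a freshly created helper node for the same edge; between these two events both helper nodes coexist, which gives the bound of two, and $H_{\mathrm{old}}$ is then deleted by the ``remove all helper nodes marked red'' step of \textsc{Haft\_Merge}, after which Part~\ref{lmpart: onehelper} holds again. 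The anchor claim I would read off from how $\Nset$ is built in \textsc{DeleteFix}: when $(v,x).hashelper = \True$, the nodes $(v,x).\hparent$ and $(v,x).\hrightchild$ are put into $\Nset$ and hence become vertices of $\BT_v$, i.e. anchors; since $(v,x).\hparent$, when not \Empty, is a helper node attributed to edge $(v,x)$, one of the at most two helper nodes for that edge can simultaneously be an anchor of $\BT_v$.

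The step I expect to be the real obstacle is making clause (ii) precise enough to survive \emph{every} branch of the merge---the equal-size merges in the first loop of \textsc{ComputeHaft}, where the left/right choice and the height doubling occur, and the unequal-size merges in the second loop, where the larger tree becomes the left child---while simultaneously keeping straight that one processor can be a leaf, and possibly a representative, in several reconstruction trees at once, so that ``simulates no helper'' must always be read relative to the right subtree and the right edge. Once (ii) is pinned down, the remainder is bookkeeping against the pseudocode.
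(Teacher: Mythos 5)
Your proposal is sound, and for Part 1 it takes a genuinely different route from the paper. The paper argues by contradiction: it supposes two helper nodes $v'$ and $v''$ exist for the same edge and splits into three cases (different $\RT$s; same $\RT$ but disjoint subtrees; same $\RT$ with $v'$ inside the subtree of $v''$), refuting the first two because the edge would then have to occur as a leaf twice, and the nested case because once $v'$ exists its simulator can no longer serve as anyone's representative, so $v''$ could never have been created. Your forward induction instead makes explicit exactly the invariant that the paper's contradiction argument silently relies on --- that the stored \texttt{Representative} field of every primary root really is a leaf occurrence (a processor--edge pair) not yet simulating a helper in that subtree --- a property the paper only asserts informally in its discussion of the representative mechanism (``each helper node gets assigned a representative when the helper node is created and \dots{} never changes its representative during its lifetime''). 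What your approach buys is rigor precisely where the paper is thinnest (case 2(b), where ``however, this is not possible'' carries the whole argument); what it costs is the branch-by-branch verification of \textsc{ComputeHaft} and the leaf-multiplicity bookkeeping that you yourself flag as the hard part. Your treatment of Part 2 and of the anchor claim coincides with the paper's: the old red-marked helper and the freshly created one coexist only until the ``remove all helper nodes marked red'' step of \textsc{Haft\_Merge}, and the old helper may simultaneously be the fragment's anchor in $\BT_v$.
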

\begin{proof}
 There is only one `real' node in $\FG_T$ corresponding to an edge in $\G'_T$ (Figure~\ref{fig: fgseries}). Let us refer to this node  as simply $v$. Moreover, $v$ can only be a leaf node of a $\RT$, and a helper node can only be an internal node. 

We prove  part~\ref{lmpart: onehelper} by contradiction. Suppose there are two helper nodes in $\FG_T$ corresponding to the real node $v$. Let us call these nodes $v'$ and $v''$. The following cases arise:

	\begin{enumerate}[i.]
	
	\item \emph{$v'$ and $v''$ belong to different $\RT$s:}
	\label{pfcase: sameRT}
	
	\begin{figure}[h!]
	\centering
	\includegraphics[scale=0.85]{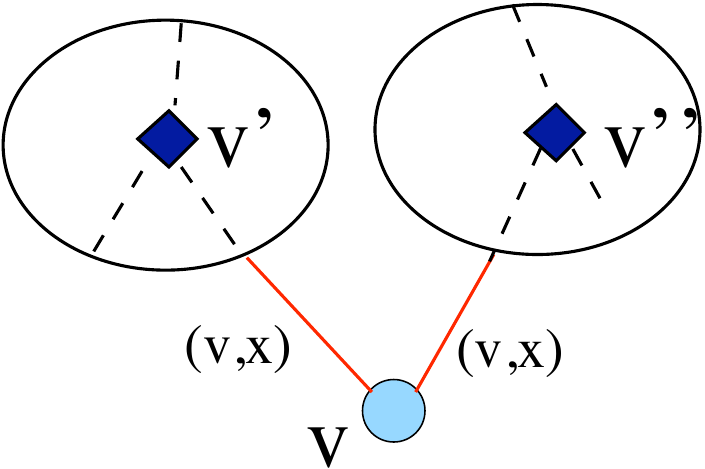}
	\caption{Proof by contradiction: Case 1. Two helper nodes in different $\RT$s.}
	\label{fig: OneHelperCase1}
	\end{figure}
	
	This case is depicted in figure~\ref{fig: OneHelperCase1}. We assume that both $v'$ and $v''$ exist but that they are in different $\RT$s.
	By the representative mechanism, a helper node is created only if the real node that simulates it is the representative of a node (e.g. in line~\ref{algocode: mergeeqrep} in Algorithm~\ref{algo: ComputeHaft}). By definition, the representative of a node is a unique leaf node in the subtree headed by that node in its $\RT$. If both $v'$ and $v''$ exist and  belong to different $\RT$s, this implies that node $v$ exists as a leaf node in two different $\RT$s. This is a contradiction.
	 
	\item \emph{$v'$ and $v''$ belong to the same $\RT$:}
	
	 Without loss of generality, assume that the  $v''. height \ge v'.height$. The following cases arise:
	 
		 \begin{enumerate}
		  \item \emph{$v'$ is a node not in the subtree headed by $v''$:}
		
		 \begin{figure}[h!]
		\centering
		\includegraphics[scale=0.85]{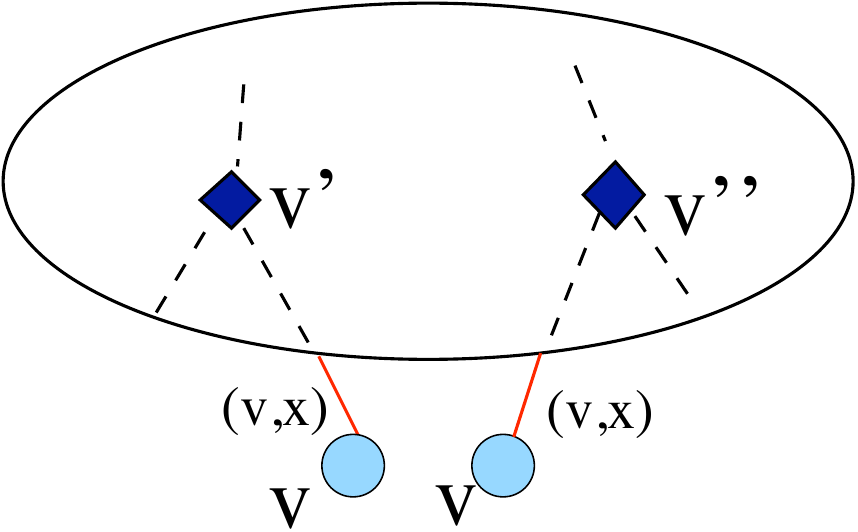}
		\caption{Proof by contradiction: Case 2(a). Two helper nodes in same  $\RT$, but in different subtrees.}
		\label{fig: OneHelperCase2}
		\end{figure}

		  This case is  shown in figure~\ref{fig: OneHelperCase2}. We assume  that both $v$ and $v''$ exist, and that they are in the same $\RT$ but in different subtrees i.e. $v''$ is not an ancestor of $v'$. The proof  is similar to that of case~\ref{pfcase: sameRT}.   The representative mechanism and definition of a representative implies that node $v$ was a
		representative in two non-intersecting subtrees in the same $\RT$. This implies that node $v$ occurs as a leaf
		twice in that $\RT$. This is not possible. 
		 
		  \item \emph{$v'$ is a node in the subtree headed by $v''$:}
		  
		  \begin{figure}[h!]
		\centering
		\includegraphics[scale=0.85]{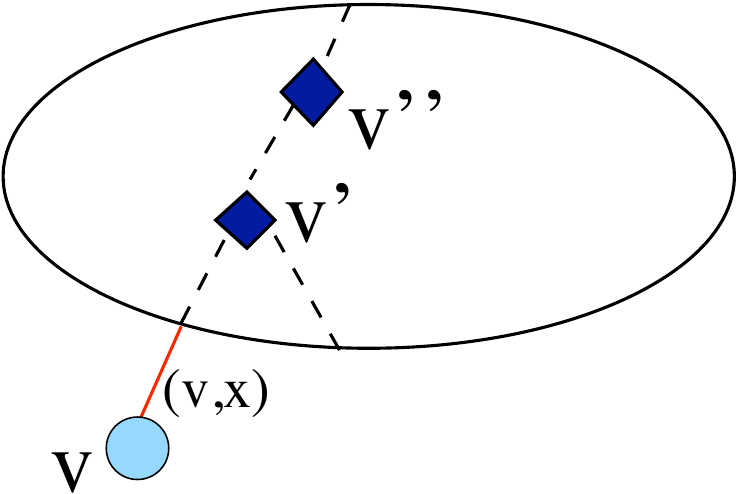}
		\caption{Proof by contradiction: Case 2(b). Two helper nodes in the same subtree.}
		\label{fig: OneHelperCase3}
		\end{figure}
		
		     This case is shown in figure~\ref{fig: OneHelperCase3}. We assume  that both $v$ and $v''$ exist, and that they are in the same $\RT$ and moreover  $v''$ is not an ancestor of $v'$. Note that by the representative mechanism, when two nodes are to be joined, the representative of one of them provides the single node that will be their parent. This new node inherits the other (unused) representative as its representative. The tree gets built up bottom up with available representatives propagating upwards. Thus, node $v'$ will be created before node $v''$.
		 By definition of a representative, neither $v'$ nor any of its ancestors can now have $v$ as a representative since $v$ is now already simulating a helper node. Thus, $v''$ was created without any of its children having $v$ as a representative. However, this is not possible.   
		 \end{enumerate}
	\end{enumerate}

 Now, we prove part~ \ref{lmpart: twohelper}. As stated earlier, at each stage of the merge procedure,
$\RTfragment$s in $\BT_{v}$ will merge with their parent. Suppose that $v'$ is a helper node simulated by real node $v$,
and $v'$ is not part of any complete subtree in such a $\RTfragment$. This means that $v'$ will be marked red
and removed when this stage of merge is completed (Refer Figure~\ref{fig: Anchormerge}). Let node $y$ be the root of the
 complete subtree (i.e. a primary root in that $\RTfragment$) that has $v$ as a leaf node. 
Node $v'$ is an ancestor of node $y$ since $v'$ cannot be $y$'s descendant. By definition, $y.\representative = v$, since
$v$ will be the unique leaf node in $y$'s subtree not simulating a helper node in that subtree.
 When the trees are being merged, $v$ may be asked to create another helper node. Thus, $v$ may have two helper nodes.
 Also, each $\RTfragment$ has exactly one anchor node. This anchor may be $v'$ or another node. Thus, in the
repair phase, a real node may simulate at most two helper nodes, and one of these helper nodes may be an anchor.
 However, node $v'$ will be removed as soon as this stage is completed, and if $v'$ was an anchor, a new anchor is
chosen from the existing nodes. Since at the end of the merge, $\BT_{v}$  collapses to leave one $\RT$, the extra helper
nodes and the edges from the anchor nodes are not present in $\FG_T$, thus, not contradicting part~\ref{lmpart: onehelper}.
\end{proof}

\begin{lemma}
\label{lemma: cost}
After each deletion, the repair phase requires the sending of at most $O(d \log n)$ messages, each of
length $O(log^2 n)$.  Moreover, this can be done in parallel by the neighbors of the deleted node, in time $\mathrm{polylog}(d,n)$.
\end{lemma}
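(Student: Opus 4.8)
The plan is to trace the two phases of the repair procedure (Algorithm~\ref{algo: fixnode}) together with its subroutines, bound the messages generated by each, and sum. First I would bound $|\Nset|$: the deleted node $v$ has degree $d$ in $\G'$, hence at most $d$ incident edges $(v,x)$, and by Lemma~\ref{lemma: hnodecounts}(\ref{lmpart: onehelper}) each such edge carries at most one helper node; since building $\Nset$ contributes at most $(v,x).\hparent$, $(v,x).\hrightchild$, and $(v,x).\Endpoint$ per edge, we get $|\Nset| \le 4d = O(d)$. Phase~1 (line~\ref{algoline: BTquickfix}) then builds a balanced binary tree $\BT_{v}$ on $\Nset$; every neighbor of $v$ knows $\Nset$ from neighbor-of-neighbor information, so each locally computes the canonical balanced tree (sorting by ID), and this phase costs $O(|\Nset|)=O(d)$ messages and $O(1)$ parallel time.

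Next I would analyse Phase~2, the call to \textsc{BottomupRTMerge}. Removing $v$ and its (at most one) helper node from each $\RT$ in which $v$ was a leaf splits that $\RT$ into at most three components, so the number of initial $\RTfragment$s is $O(d)$; each recursion level merges every leaf $\RTfragment$ of $\BT_v$ with its parent, roughly halving the leaf count, so the recursion has depth $O(\log d)$ and the number of $\RTfragment$s ever created, summed over all levels, is $O(d)$. For each $\RTfragment$ the anchor runs \textsc{FindPrRoots}; a probe travels from the anchor toward the primary roots essentially along the spine, and by the Corollary to Lemma~\ref{lemma: hftproperties} a haft with $\ell$ leaves has depth $\lceil \log \ell \rceil$, so, as the total number of leaves over all fragments is at most $n$, each probe path has length $O(\log n)$, while the anchor-rejection rule (a probe from a foreign anchor is bounced back) ensures each fragment's spine is traversed only $O(1)$ times. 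Hence \textsc{FindPrRoots} contributes $O(d)\cdot O(\log n)=O(d\log n)$ messages. Each merge of a fragment with its parent (via Algorithms~\ref{algo: haftmerge}--\ref{algo: ComputeHaft}) strips the hafts into $O(\log n)$ complete trees and recombines them binary-addition style, creating $O(\log n)$ new helper nodes and edges and hence $O(\log n)$ messages; with $O(d)$ merges in total this is a further $O(d\log n)$ messages. The only messages of nontrivial size are those in which anchors exchange their $\PrRoots$ lists, and such a list has $O(\log n)$ entries (the number of ones in the binary representation of the fragment's leaf count), each an ID of $O(\log n)$ bits, so message length is $O(\log^2 n)$. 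Finally, each of the $O(\log d)$ recursion levels takes $O(\log n)$ time for its probe-and-merge round plus local computation, and all fragments at a level act in parallel, giving total time $O(\log d \log n)=\mathrm{polylog}(d,n)$.

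The main obstacle I expect is the amortized bookkeeping: the naive product $O(\log d)$ levels $\times$ $O(d)$ fragments $\times$ $O(\log n)$ spine length gives only $O(d\log d\log n)$, so to reach the claimed $O(d\log n)$ one must exploit the geometric decay of the leaf count of $\BT_v$ across levels (so that fragments and merges sum to $O(d)$, not $O(d)$ per level) and carefully verify that the anchor-rejection mechanism in \textsc{FindPrRoots} genuinely bounds the number of times any one fragment's spine is probed when several anchors fall in the same fragment. A secondary point requiring care is checking that the \textsc{FindPrRoots} walk, including the descendant-count adjustments made when \texttt{Breakflag} is set, never strays more than $O(\log n)$ hops from its anchor.
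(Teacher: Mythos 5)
Your proposal is correct and follows essentially the same route as the paper's proof: bound the number of $\RTfragment$s/merges by $O(d)$ (the paper uses $\size(\BT_v)\le 3d$ and counts $O(d)$ total merges, which is exactly your "geometric decay" resolution), bound the per-merge cost by $O(\log n)$ probe messages along the haft's spine plus $O(1)$ list exchanges and $O(\log n)$ anchor-to-root messages, bound message length by $O(\log n)$ primary-root IDs of $O(\log n)$ bits each, and obtain $O(\log d \log n)$ time from $O(\log d)$ parallel levels of $\BT_v$. The only discrepancies are immaterial constants (e.g.\ your $|\Nset|\le 4d$ versus $3d$), and the concerns you flag at the end are resolved exactly as you anticipate.
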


\begin{proof} There are mainly two types of messages exchanged by the algorithm. They are the probe messages sent by the
\textsc{FindPrRoots()} (Algorithm~\ref{algo: findprroots}) within a $\RT$ and the messages containing the information
about the primary roots exchanged by the anchors in $\BT_{v}$ and among the primary roots themselves (Algorithm~\ref{algo:
haftmerge}: \textsc{ComputeHaft()}). Let $\size(\BT_{v})$ be the number of $\RT$s of $\BT_{v}$. Since a helper
node can split a $\RT$ into maximum 3 parts, and there can be at most $d$ helper nodes, where $d$ is the degree of the deleted node $v$, $\size(\BT_{v}) \le 3d$.
Now, let us calculate the number of messages:
\begin{itemize}
\item \emph{Probe messages (Algorithm~\ref{algo: findprroots})}: A probe message is generated by an anchor of a $\RT$. This is similar to the \emph{Strip} operation
(Section~\ref{subsec: haftstrip}). The path that the probe message follows is the direct path from the originating
node to the rightmost node of the $\RT$. At most 2 messages can be generated for every node on the way.
Each node waits for a reply to its message. If it had a neighbor as a primary root, it will hear back from it with the root's identity. If it had an anchor as a neighbor, it will get an 'end of path' message. This node will then reply back to the message it had received from the its neighbor on the path from the requesting anchor.  Thus, each message generated by the request from the anchor will get a reply back with identities of one or multiple primary roots or end of path messages. By the property of hafts, each node on this path will have a primary root as a neighbor, thus, the longest path a message can take is equal to the diameter of the tree, which is the longest path in the tree.
Let $numnodes$ be the number of nodes and $numprobes$ be number of probe messages sent in a single $\RT$. The length of  the longest path is $2\log\ numnodes$.
Thus, 
\begin{eqnarray*}
 numprobes & \le & 2.2.2 \log numnodes  \\
   & \le & 8 \log n  
\end{eqnarray*}

\item \emph{Exchange of primary roots lists (Algorithm~\ref{algo: haftmerge})}:  At each step of Algorithm~\ref{algo:
bottomupmerge} (\textsc{BottomupRTMerge()}), leaves in $BT_{v}$ merge with their parents. Let $rtlistmsgs$
be the number of messages exchanged for every such merge.  The anchors of the leaves of $BT_{v}$ send their primary
roots lists to the parent, which in turn can send both it's list and the sibling's list to the child.  Thus,
$rtlistmsgs = 4$. In addition, every anchor will send this list to the primary roots in its $\RT$, generating at most
another $\log n$ messages (Let us call this $AtoRmsgs$).
\end{itemize}
As stated earlier, in the $BT_{v}$, leaves merge with their parents. The number of such merges
before we are left with a single $\RT$ is  $\lceil\size(BT_{v})/2 - 1\rceil$. Also, at most 3 $\RT$s are involved in
each merge. Let $totmessages$ be the total number of messages exchanged. Hence,

\begin{eqnarray*}
totmessages & = & \lceil \size(BT_{v})/2 - 1 \rceil\\
& &( 3 ( numprobes  + AtoRmsgs) + rtlistmsgs )\\
 & \le & \lceil 3d/2 - 1 \rceil ( 27 \log n + 4 )\\
 & \in & O(d \log n)
\end{eqnarray*}
 In $BT_{v}$, leaves and their parents merge. This can be done in parallel such that each time the level of $BT_{v}$
reduces by one. Within each $\RT$, the time taken for message passing is still bounded by $O(\log n)$
assuming constant time to pass a message along an edge. Since there are at most $\lceil log d \rceil$ levels, the time
taken for passing the messages is $O(\log d \log n)$ i.e $\mathrm{polylog}(d,n)$. The biggest message exchanged may have information
about the primary roots of upto two $\RT$s. This may be the message sent by a parent $\RT$ in $BT_{v}$ to its children
$\RT$. Since there can be at most $O(\log n)$ primary roots, the size of messages containing their ID is $O(\log^{2} n)$.
\end{proof}

We now state our main result.  Recall that $\G_T$ is the graph produced after $T$ steps of our algorithm, while $\G'_T$ is the graph resulting from the insertions only, with no deletions or repairs.
 \begin{theorem}
 The Algorithm $\FGraph$ has the following properties:
\label{theorem: forgiving}
\begin{enumerate}
\item\label{th: degree} 
 \emph{Degree increase:} 
 For any node $v$ in $V(G_T)$, after any number of time steps, $T$, the degree of $v$ in $G_T$ is at most 3 times the degree of $v$ in $G'_T$.
\item\label{th: stretch} \emph{Stretch:} 
For any nodes $x,y$ in $V(G_T)$, after any number of time steps, $T$, the distance between $x$ and $y$ in $G_{T}$ is at most
$\log(n)$ times the distance in $G'_{T}$.
\item \label{th: cost} 
\emph{Cost:} After each deletion, the repair phase requires the sending of at most $O(d \log n)$ messages, each of
length $O(log^2 n)$.  Moreover, this can be done in parallel by the neighbors of the deleted node, in time $\mathrm{polylog}(d,n)$.
\end{enumerate}
\end{theorem}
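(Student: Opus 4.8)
The third part (the cost bound) is exactly Lemma~\ref{lemma: cost}, so nothing further is needed there. For the first two parts the engine is the graph homomorphism $H\colon \FG_T \to \G_T$ of Section~\ref{sec: homomorphism} together with Observations~\ref{obs: homodegree} and~\ref{obs: homostretch}: the plan is to bound the degree and the stretch \emph{in $\FG_T$}, where the haft structure makes everything transparent, and then push the bounds down to $\G_T$ through $H$.

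For the degree bound (Part~\ref{th: degree}), I would start from Observation~\ref{obs: homostretch}: for a processor $v$ surviving at time $T$, $\deg_{\G_T}(v) \le \sum_{u \in H^{-1}(v)} \deg_{\FG_T}(u)$. The preimage $H^{-1}(v)$ is the single real node $v$ together with the helper nodes $v$ is currently simulating. The real node $v$ has exactly one incident $\FG_T$-edge per edge of $\G'_T$ at $v$ (to a surviving neighbour, or a leaf edge into a reconstruction tree), so $\deg_{\FG_T}(\text{real }v) = \deg_{\G'_T}(v)$; each helper node is an internal node of some haft and hence has $\FG_T$-degree at most $3$; and by part~\ref{lmpart: onehelper} of Lemma~\ref{lemma: hnodecounts} there is at most one helper per edge of $\G'_T$ incident to $v$, so $v$ simulates at most $\deg_{\G'_T}(v)$ helpers. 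The naive sum is $\le 4\deg_{\G'_T}(v)$; to reach the claimed factor $3$ I would account for the $\FG_T$-edges that lie \emph{inside} $H^{-1}(v)$: the helper $h$ that $v$ simulates for an edge $(v,x)$ sits directly above the leaf occurrence of $v$ for that edge in its reconstruction tree (this is the invariant the representative mechanism of Section~\ref{sec: representative} maintains, since $v$ owns $h$ only as its representative), so the leaf edge of $v$ into that $\RT$ coincides with an edge of $h$ and is contracted to a self-loop by $H$. Each helper therefore contributes a net increase of at most $2$, giving $\deg_{\G_T}(v) \le \deg_{\G'_T}(v) + 2\deg_{\G'_T}(v) = 3\deg_{\G'_T}(v)$.

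For the stretch bound (Part~\ref{th: stretch}), fix surviving nodes $x,y$. Since $H$ fixes all real nodes, Observation~\ref{obs: homodegree} gives $dist_{\G_T}(x,y) \le dist_{\FG_T}(x,y)$, so it suffices to bound distances in $\FG_T$. Take a shortest path $P = (x = w_0, w_1, \dots, w_\ell = y)$ in $\G'_T$, with $\ell = dist_{\G'_T}(x,y)$, and cut it at its surviving vertices into maximal blocks of deleted vertices. Consecutive deleted vertices $w_i, w_{i+1}$ were adjacent in $\G'$, so deleting the second forced its reconstruction tree to merge with that of the first; iterating, every $\RT$ of a maximal deleted block $w_p,\dots,w_q$, together with the two surviving nodes $w_{p-1},w_{q+1}$ bracketing it, lies in one merged $\RT$, which is a haft with at most $n$ leaves and hence, by Lemma~\ref{lemma: hftproperties}, of depth at most $\lceil \log n\rceil$. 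Thus $w_{p-1}$ and $w_{q+1}$ are joined in $\FG_T$ by a path of length at most $2\lceil\log n\rceil$, while their $\G'_T$-distance along $P$ is $q-p+2 \ge 2$; a path edge with both endpoints surviving is still present in $\FG_T$. Summing $2\lceil\log n\rceil$ over the blocks (each of $\G'_T$-length $\ge 2$) and $1$ over the surviving edges yields a walk in $\FG_T$ from $x$ to $y$ of length at most $\lceil\log n\rceil\cdot\ell$, giving the stretch bound.

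The main obstacle is the sharpening in the degree argument: the bare homomorphism count only gives the factor $4$, and reaching $3$ forces one to pin down exactly how a real node and the helpers it simulates are wired together in $\FG_T$ — namely, to establish and then exploit the invariant that the helper a node simulates for one of its edges is the immediate $\RT$-parent of that node's leaf occurrence, so that one incident edge per helper is absorbed under $H$. A secondary point needing care in the stretch argument is verifying that adjacency in $\G'$ plus deletion really does force the two reconstruction trees to merge, so that an entire block of consecutive deleted path vertices ends up inside a single haft rather than in several separate ones.
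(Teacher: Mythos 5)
Your Part~3 is exactly the paper's route (it simply \emph{is} Lemma~\ref{lemma: cost}), and your Part~2 follows the paper's argument while filling in the step the paper leaves implicit: you justify, by walking along a shortest $\G'_T$-path and observing that consecutive deleted path-vertices were $\G'$-adjacent so their reconstruction trees are forced to merge, that each maximal deleted block is bridged inside a single haft of depth at most $\lceil\log n\rceil$, and then you sum over blocks. That is correct and is in fact more careful than the paper's one-line version of the same idea.

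The genuine gap is in your sharpening of the degree bound from $4$ to $3$. The invariant you invoke --- that the helper node $v$ simulates for the edge $(v,x)$ is the immediate $\RT$-parent of the leaf occurrence of $v$ for that edge --- is false. Under the representative mechanism, $v$ instantiates a helper $h$ precisely when $v$ is the representative (the free leaf) of the root $r_1$ of the larger of the two trees being merged, and $h$ is made the parent of $r_1$, not of the leaf $v$; that leaf may lie arbitrarily deep inside $r_1$'s subtree, so $h$ is an ancestor of it but generally not its parent. Concretely, build the complete $8$-leaf tree by pairwise merging of singletons $\ell_1,\dots,\ell_8$: in one consistent execution its root is simulated by $\ell_4$ (the representative of the left $4$-leaf subtree), while the leaf $\ell_4$'s parent is the node joining $\ell_3$ and $\ell_4$, simulated by $\ell_3$, and the root's two children are simulated by $\ell_2$ and $\ell_6$. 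If the $\RT$ has $9$ leaves, a spine node simulated by $\ell_8$ sits above that root, and the four $\FG_T$-edges incident to $H^{-1}(\ell_4)$ map to the four distinct processors $\ell_3,\ell_2,\ell_6,\ell_8$; no self-loop and no coincidence of endpoints occurs, so your accounting yields degree $4$ against a $\G'_T$-degree of $1$. The absorption you rely on happens only when the subtree of which $v$ was the representative has at most two leaves. You have in fact put your finger on a step the paper's own proof also passes over silently (it charges each $\G'$-edge one helper with at most three incident edges and never separately charges the leaf's own edge into the $\RT$); but the invariant you propose does not close that gap, and as written your argument only delivers the factor $4$.
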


\begin{proof}
 Part~\ref{th: degree} follow directly by construction of our algorithm.   Note that for a real
node $v$ in $\FG_T$, any degree increase for $v$ is imposed by the  edges of its helper node to $\hparent$($v$) and $\hchildren(v)$. From lemma~\ref{lemma: hnodecounts} part~\ref{lmpart:
onehelper}, we know that, in $\FG_T$, node $v$ can play the role of at most one helper node for any of its neighbors in $\G'_T$ at  any time (i.e. equal to the degree of $v$ in $\G'_{T}$ ). The number of $\hchildren$ of a helper node are never more than $2$, because the reconstruction trees are binary trees. 
Thus the total degree of $v$ in $\FG_{T}$ is at most $3$ times its degree in $\G'_T$. From observation~\ref{obs: homodegree} and noting that $\G_T$ is a homomorphic image of $\FG_T$, we can see that  the degree of $v$ in $\G_{T}$ is at most $3$ times its degree in $G'_T$ .


We next show Part~\ref{th: stretch}.  We show  that the stretch of the Forgiving Graph $\FG_T$ is  $O(D \log n)$, where $n$ is the number of nodes in $\G_{T}$. The distance between any two nodes $x$ and $y$ cannot increase by more than the factor of the longest path in the largest $\RT$ on the path between $x$ and $y$. Since the number of nodes in $\FG_T$ is $O(n)$, This factor is $\log n$ at the maximum. Since there is a homomorphism from the graph $\FG_T$ to $\G_T$, the result  follows directly from observation~\ref{obs: homostretch}.

The proof of Part~\ref{th: cost} 
 follows from Lemma~\ref{lemma: cost}. Note that besides the communication of the messages
 discussed, the other operations can be done in constant time in our algorithm.
 \end{proof}

\pagebreak

\subsection{Lower Bounds}

\label{subsec: lowerbounds}

\begin{theorem}
Let $n$ be a positive integer, $\alpha \ge 3$ and  $\beta = \frac{1}{2} (\log_{\alpha } (n-1) - 1)$.  Then there exists a
graph on $n$ vertices and a vertex deletion  such that any way of repairing this deletion
 under our model must either increase the degree of some node by more than a factor of $\alpha$, or it must increase the distance between some pair of nodes by at least a factor of $\beta$.

\end{theorem}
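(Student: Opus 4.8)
# Proof Proposal

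The plan is to exhibit a single hard instance — a star — and argue that after deleting its center, any valid repair is forced into the stated tradeoff. This mirrors the lower bound for the Forgiving Tree (Theorem in Section~\ref{subsec: lowerbounds} of Chapter~\ref{chapter: FT}), but now the metric we compare against is distance in $\G'$, the graph of original nodes and insertions without deletions. Concretely, first I would let $G = G_0$ be the star on $n$ vertices: one center node $x$ adjacent to all $n-1$ leaves $u_1, \dots, u_{n-1}$, and no other edges. Then the adversary's single move is to delete $x$. Note that since no insertions occur, $\G'_1$ is just $G_0$ with $x$ removed, i.e.\ $n-1$ isolated vertices $u_1, \dots, u_{n-1}$; in particular $\deg(u_i, \G'_1) = 0$ for every leaf. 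To make the degree comparison meaningful (avoiding division by zero), I would instead take the comparison graph to be $\G'$ at time $0$, or equivalently observe that the relevant bound is on the absolute degree any $u_i$ can acquire relative to its original degree $1$ in $G_0$ — matching how the Forgiving Tree lower bound is phrased. The cleanest route is to phrase it exactly as the FT lower bound: the healing algorithm increases each node's degree by at most a factor $\alpha$ over its degree in $G_0$ (which is $1$ for each leaf), so each $u_i$ ends with degree at most $\alpha$.

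The core counting argument then proceeds as in the Forgiving Tree case. After the deletion and repair, let $G_1$ be the healed graph on the vertex set $\{u_1, \dots, u_{n-1}\}$, which must be connected (connectivity is a maintained invariant). Build a BFS tree $T$ of $G_1$ rooted at an arbitrary leaf node $y$. Since each node has degree at most $\alpha$ in $G_1$, the root $y$ has at most $\alpha$ children and every other node has at most $\alpha - 1$ children in $T$; a crude bound suffices, so I would just say every node of $T$ has at most $\alpha$ children. If $h$ is the height of $T$, counting nodes level by level gives
\[
n - 1 \le 1 + \alpha + \alpha^2 + \cdots + \alpha^h \le \alpha^{h+1},
\]
hence $h + 1 \ge \log_\alpha(n-1)$, so $h \ge \log_\alpha(n-1) - 1$. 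The diameter of $G_1$ is therefore at least $h \ge \log_\alpha(n-1) - 1$, and in fact there exist two leaves at distance $\ge h$, or at least $\ge h$ from $y$; picking the deepest node $z$ in $T$, $\operatorname{dist}(y, z, G_1) \ge h$.

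The final step is translating this diameter lower bound into a stretch lower bound against $\G'_1$. In $\G'_1$ the nodes $y$ and $z$ (both original leaves, never deleted) are isolated — their distance is $\infty$ — which would trivially make stretch unbounded, so I need to be slightly careful about the definition. The honest fix, and the one consistent with how the paper sets things up, is to compare against $\G'$ at the moment \emph{before} the deletion, i.e.\ $G_0$, where $\operatorname{dist}(y, z, G_0) = 2$ (through the center $x$). Then the stretch between $y$ and $z$ is at least $h/2 \ge \tfrac{1}{2}(\log_\alpha(n-1) - 1) = \beta$. Thus either the degree-increase factor exceeds $\alpha$ at some node, or the distance between $y$ and $z$ increased by a factor at least $\beta$, which is exactly the claimed dichotomy. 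I would also remark, as in the FT chapter, that this matches the $O(\log n)$ stretch upper bound of Theorem~\ref{theorem: forgiving} up to the choice of logarithm base.

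The main obstacle is purely definitional rather than combinatorial: the statement says ``increase the distance between some pair of nodes by at least a factor of $\beta$,'' and on a star the relevant pair starts at distance $2$ and ends at distance $\ge h$, so I must make sure the comparison baseline used is the one where that distance is finite and small. The counting argument itself is a one-line geometric-series estimate and presents no difficulty; the care needed is only in stating precisely which graph plays the role of the denominator in the stretch ratio, and in handling the degenerate ``degree $0$ / distance $\infty$'' artifacts of $\G'_1$ by working with $G_0$ instead. Once that is pinned down, the proof is short.
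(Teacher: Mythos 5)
Your proof is correct and is essentially the paper's own argument: the same star instance, the same BFS-tree counting that yields $\alpha^{h+1} \ge n-1$ and hence $h+1 \ge \log_\alpha(n-1)$, and the same conclusion $\beta \ge h/2$ via a deepest leaf $z$. The definitional worry you devote a paragraph to is actually moot: by the model's definition $\G'$ is the graph \emph{without regard to deletions}, so the deleted center remains in $\G'_1$, meaning $\G'_1 = G_0$ for this instance --- the leaves still have degree $1$ and pairwise distance $2$ there, and no special-casing or switch of baseline is needed.
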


\begin{proof}

\begin{figure}[h!]
\centering
\includegraphics[scale=0.8]{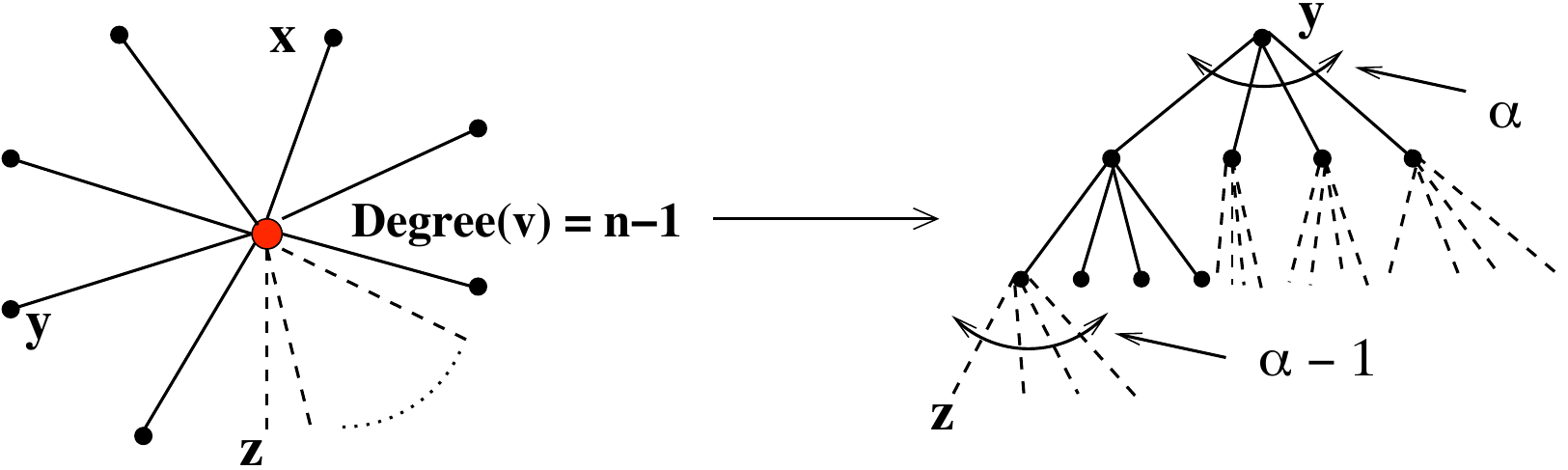}
\caption{Deletion of the central node $v$ of a star leads to an increase in the stretch. Here, the healing algorithm can increase the degree of any node by at most a factor of $\alpha$.}
\label{fig: lowerboundFG}
\end{figure}

Let $G$ be a star on $n$ vertices, where $x$ is the root node, and $x$ has an edge with each of the
other nodes in the graph. The other nodes (besides $x$) have a degree of only 1. Let $G'$ be the graph created after the
adversary deletes the node $x$.  Consider a breadth
first search tree, $T$, rooted at some arbitrary node $y$ in $G'$.  We know that the self-healing algorithm can increase
the degree of each node by at most a factor of $\alpha$, thus every node in $T$ besides $y$ can have at most $\alpha-1$ children. 
Let $h$ be the height of $T$.  Then we know that $1 + \alpha \sum_{i=0}^{h-1}
(\alpha-1)^{i} \geq n-1$.  This implies that  $(\alpha)^{h+1} \geq n-1$ for $\alpha \geq 3$, or ${h + 1} \geq
\log_{\alpha}(n-1)$.  Let $z$ be a leaf node in $T$ of largest depth. Then, the distance between $y$ and $z$ in $G'$ is $h$ and the distance 
between $y$ and $z$ in $G$ is 2. Thus, $\beta \ge h/2$, and  $2\beta 
\geq \log_{\alpha} (n-1) - 1$, or $\beta \geq \frac{1}{2}(\log_{\alpha}( n-1) - 1)$. This is illustrated in figure~\ref{fig: lowerboundFG}.
\end{proof}
\noindent
Note that the upper bound on the degree increase and stretch of our algorithm is within a constant factor of matching
this lower bound.

\section{Conclusion}
In this chapter, we have presented a distributed data structure that withstands
repeated adversarial node deletions by adding a small number of new
edges after each deletion.  Our data structure is efficient and ensures two key
properties, even in the face of both adversarial deletions and adversarial insertions.
First, the distance between any pair of nodes never increases by more than a $\log n$ multiplicative factor than what the distance would be without the adversarial deletions.  Second, the degree of any node never increases by more than a $3$ multiplicative factor.

Several open problems remain including the following. Can we design algorithms
for less flexible networks such as sensor networks?  For example, what
if the only edges we can add are those that span a small distance in
the original network?  Can we extend the concept of
self-healing to other objects besides graphs?  For example, can we
design algorithms to rewire a circuit so that it maintains its
functionality even when multiple gates fail?




\chapter{Future Directions}
\label{chapter: FD}

\begin{epigraphs}
%
\qitem {\epitext{There is no such thing as a failed experiment, only experiments with unexpected outcomes}}%
{\epiauthor{Richard Buckminster Fuller}}
\end{epigraphs}

\noindent In this chapter, we point out some related open problems and discuss the future directions in which this research can be extended. 


\section{Empirical study of self-healing algorithms beyond assumptions}

How would our algorithms perform beyond the assumptions of the model we have used? There are certain assumptions our algorithms make and we would like to know how well our algorithms perform even when those assumptions don't hold e.g. our model assumes single failure before each recovery. How would our algorithms perform if there are multiple failures in close physical or temporal proximity? Akin to an ecological disaster, we would see how the algorithms perform if a set of nodes (a clump of species) are simultaneously deleted or there are cascading failures. We will also place restrictions on the topology of the network and add additional rules to the algorithm to simulate different networks found in nature.  In particular, we have already begun work on simulating $\FGraph$ for these purposes.  

\section{Routing in Self-healing structures}
\label{sec: FD-Routing}
Can we implement efficient updates to routing tables? Small changes to the network, e.g., deletion of an edge or a node can lead to major changes in the tables. Can our algorithms keep track of these? Self-healing routing is an important research question especially given the dynamic nature of modern networks~\cite{Branch07SelfHealingRouting, Gui03short:self-healing, Meyer08Self-healingMultipathRouting}. We will like to propose solutions in our framework which incorporate routing in addition to the invariants we already maintain. This could involve proposing efficient routing schemes to go with our self-healing structures or developing new structures that help routing.

\section{Load balanced Self-healing}
\label{sec: FD-LoadBalancing}
Trees are not the best structure for effective load balancing e.g. in a balanced tree, half of the paths will go through the root.  Can we improve the load balancing using a different self-healing data structure? Good load balancing would be ensured if upon healing there are not likely to be bottlenecks for communication traffic. There has been some previous work on load balancing in structured P2P systems~\cite{Karthik03loadbalancing}. This may also be related to the earlier question (Section~\ref{sec: FD-Routing}) on self-healing Routing. There are many interesting ideas we are looking at out there which may potentially contribute to a solution that we are looking at e.g. The Chord P2P structure~\cite{Stoica03Chord}, Skip graphs~\cite{Stoica03Chord}, and Small-world network models~\cite{Kleinberg00SmallWorld}.

\section{Self-healing in Sensor Networks}
\label{sec: FD-Sensor}
 Directional antennas are increasingly becoming important in sensor networks e.g. ~\cite{Huang02topologycontrol}. They also allow us to use our concept of self-healing where we have an edge in the underlying graph for two nodes in communication with each other.  In a wireless ad-hoc network multi-hop connectivity can be easily lost when a transceiver
goes silent and does not relay messages any longer. Successful
pairwise communication occurs in a wireless network only in the
absence of interference which is usually achieved by
frequency or time or code division multiplexing, i.e., by assigning
non-interfering channels (colors) to the pairwise links (edges) that
are necessary for global connectivity. Therefore, to restore
connectivity after a node failure, it is important to also restore an
interference-free channel assignment for the pairwise links in the
repaired network.

In the disk graph model of a wireless ad-hoc network, there are $n$ transceivers and the transmission and reception range  of a transceiver $u$ is a disk $D(u)$ centered at $u$ with radius $r(u)$. The transceivers are vertices of a directed graph $G=(V,E)$. The directed edge $u \rightarrow v$ belongs to $E$ if and
only if $v$ is in $D(u)$. Two transceivers $u$ and $v$ can communicate
directly (without intermediate hops) if and only if both directed edges $u  \rightarrow v$ and $v \rightarrow u$ are present.  We say that a pair of transceivers $(u,v)$ are connected if there exists a path between $u$ and $v$
consisting of bidirectional edges. Two edges $(u,v)$ and $(p,q)$ may exhibit \emph{primary interference} if either $u \in \{p,q\}$ or $v \in \{p,q\}$, i.e., if they have a common vertex. They may exhibit \emph{secondary interference}  if they share a common edge i.e. there is an edge whose one end-point is either $u$ or $v$ and the other is either $p$ or $q$.


The question of maintaining this interference-free communication graph can then be reduced to maintenance of strong edge coloring where each edge is assigned a color such that no interfering edge shares a color, in the presence of an adversary. We can again assume that the adversary removes one node at a time and the neighbors are alerted of this. One possible approach is to use  the self-healing idea such as the notion of "wills" (as in Forgiving Tree) to compute an efficient, local way to repair the network by re-connecting (a subset of) the nodes in the neighborhood of the deleted vertex. We can then use a distributed, randomized algorithm (i.e., a protocol), a la Luby \cite{Luby-STOC85}, to implement the repair. In \cite{thite-percom06} Barett at al adapted Luby's algorithm to the case of distance-2 coloring and showed that it was sufficient for each node to know the so-called ``active degree" to determine its wake-up probability. Much of the ideas in this section were from discussion with Shripad Thite (Google).

\section{Self-healing/ Behavioral robustness in Social Networks}
\label{sec: Social}

There are some interesting avenues to explore in the context of robustness in social networks. Some of the questions in this context may involve achieving behavioral robustness as opposed to topological invariance like we have used so far in our self-healing work.
One of  the question we want to explore is the following:  Does a phenomena like the minority game which normally achieves equilibrium achieves homeostasis even in the presence of an adversary. This has the flavor of behavioral invariance.
In \cite{Kets-Minority2007}, Willemien et al show that a
learning process in which players best-reply to a history of limited
length and in which they have a preference for more recent best
replies ("recency bias") eventually settles down in one of the pure
Nash equilibria (optimal anti-coordination) if the memory length of
players is at least 2. The proof uses the fact  that
you can construct a path from any initial history to a state where
players play according to a pure Nash equilibrium in each following
time period, and that such a path will occur with probability 1 in the
long run. This gives  an algorithm for reaching  an
anti-coordination equilibrium if players best-reply to beliefs based
on a limited history of play and they have a recency bias.

Our idea is to study what will happen to the equilibrium if an adversary (in the sense of external perturbations) is
introduced into the mix. Will we still achieve the anti-coordination equilibrium?, and what are the implications?


\section{Self-* problems}
\label{sec: FD-Self*}
Can we go beyond Self-healing (demand even stronger guarantees)? There is strong interest in the so-called self-* algorithms. We have earlier discussed these properties in Section~\ref{sec: Intro-Self-*}.
One such objective is self-stabilization. A distributed system that is self-stabilizing will end up in a correct state no matter what state it is initialized with. This is a highly desirable property for distributed systems, and worth investigating. Another direction would be too look at the network layers themselves. Our work is based on overlay networks. However, it may be beneficial to consider what happens below that layer, at the physical layer itself, to come up with practical, efficient and robust network designs.

\section{Evolution of social and computer networks\\ and study of group formation}
\label{sec: FD-dynamicnets}
 
  It is important to study the mechanisms behind the  formation and evolution of networks, particularly, networks like the Internet, and social networks, in particular with regards to their stability and self-* properties.
 Techniques from various areas like  game theory can often be profitably applied here.  There are many models seeking to explain network formation e.g. ~\cite{LSSTOC09}; Some models seek to explain the formation of networks in a game theoretic manner by having nodes as players making connections (edges) with other players to maximize their utility function~\cite{FLMPSPODC03}.
 
There is  interest in discovering mechanisms for formation of groups.  Our attempts at simple toy models suggest this is a difficult problem. However, there has been interesting research in this area incorporating both theoretical and experimental (including field observations) work.
 Dan Rubenstein, an ecologist from  Princeton, collected data on the social structure graphs of the thriving plains zebra and the endangered Grevy's zebras, from the plains of Africa.  He and Tanya Berger-Wolff, from University of Illinois Chicago, then modeled the Zebra's group behavior by looking at the network of their social behavior to find interesting patterns~\cite{TanyaZebraScienceNews, TantiSIGKDD07}.  Jared Saia and Tanya Berger-Wolff have also proposed mathematical and computational framework that enables analysis of dynamic social networks and that explicitly makes use of information about when social interactions occur~\cite{TanyaJaredKDD06}.
 There are also many interesting data sets  e.g. on mobile phone usage patterns~\cite{MITRealityDataSet}, which can help investigate such questions as  routing of messages, group formation and social motivation. 
 
 There are many interesting questions: How do groups self-heal i.e are groups sensitive to perturbation, leaving and joinings of agents? What are the mechanisms that explain formation and dissolution of groups in real networks? Can we propose such game theoretic cost functions? When agents cooperate to form a group, how does that influence formation of other groups? 

\section{Byzantine agreement: Distributed computing in  presence of byzantine faults}
\label{sec: FD:-Byzantine}
This section owes itself to  discussions with Professor Valerie King. The failure models we have considered so far ignore byzantine faults (e.g. by adversarial code corruption), but it is important for the network to be able to function/self-heal in presence of these faults. 
A fundamental problem in distributed computing is that of coordinating behavior by processors in the presence of an adversary who controls a constant fraction of processors. At its most basic, it is
formulated as the  Byzantine Agreement Problem.  Each of $n$ processors are given an input bit; they execute a protocol, at the end of which all output the same bit equal to one of their input bits. 

This problem, in the asynchronous model (an adversary controls the order in which messages are delivered), is known to be impossible to solve deterministically in the full information model, i.e., if the adversary has access to all messages sent and there are no cryptographic assumptions made.
A randomized protocol exists in which each processor has private random bits but it requires an exponential number of messages. Both of these results were shown in the 1980's. Last year, Kapron, Kempe, King, Saia and Sanwalani~\cite{KKKSSSODA08} showed a polylogarithmic time protocol which succeeds with high probability for this problem if the choice of corrupt processors is made independently of the random bits, and the adversary is ``non-adaptive". In addition, King and Saia showed that $\tilde{O}(n^{3/2})$ total  bits of communication suffice~\cite{KingSDISC09}. Without the assumption, all known Byzantine Agreement protocols in the synchronous model (where messages are delivered in rounds) and the asynchronous model use $\Omega(n^2)$ messages, even with private channels and cryptographic assumptions.

Several intriguing problems remain open, in decreasing order of difficulty:

\begin{enumerate}
\item
Can we close the gap between the lower bound of $n^2$ and the upper bound of exponential time for asynchronous Byzantine agreement (with an adaptive adversary) in the full information model?

\item
Can we do Byzantine agreement with cryptography or private channels with an adaptive adversary in $o(n^2)$ bits per processor? Is it possible to prove a nontrivial lower bound here? Is there a practical protocol for this? (Recently, King and Saia have published an algorithm which solves this problem in the synchronous model~\cite{KingSaia10BreakingBarrierArxiv}. Their algorithm assumes private channels and takes only  $\tilde{O}(\sqrt{n})$ bits per processor  and has polylog latency).

\item
Can we load balance the Byzantine agreement problem so that no processor uses more than $o(n)$ bits with the assumption, for the synchronous model? For the asynchronous model?

\item
Can we enhance the protocols designed by King and Saia, so that they are robust to an adversary who can also
remove and insert new nodes, some of which are corrupt, still in the full information model? 

\end{enumerate}

Techniques for proving lower bounds for randomized distributed problems like this are scarce and may involve techniques from communication complexity. There is a well known method for deterministic lower bounds in distributed computing using algebraic topology, but there is no known extension to randomized algorithms. This would be interesting to explore. Any answer to the first question will be a major breakthrough in a widely studied problem area that has been open for over 25 years.

\end{document}